\documentclass[a4paper,fleqn]{cas-dc}

\usepackage[numbers]{natbib}
\usepackage{mathtools, amsfonts, amssymb, amsthm, bm}
\usepackage{graphicx}
\usepackage{adjustbox}
\usepackage{booktabs, tabularx, multirow}
\usepackage{enumitem}
\usepackage{ragged2e}
\usepackage{microtype}
\usepackage{placeins}
\usepackage{aliascnt}
\usepackage{float}
\floatstyle{ruled}
\newfloat{algorithm}{tbp}{loa}
\floatname{algorithm}{Algorithm}
\newlist{algsteps}{enumerate}{1}
\setlist[algsteps]{label=\footnotesize\arabic*:,leftmargin=1.9em,
  itemsep=1pt,topsep=2pt}
\usepackage[nameinlink,noabbrev,capitalize]{cleveref}
\usepackage{xr-hyper}

\usepackage{subcaption}
\usepackage{lastpage}
\allowdisplaybreaks[2]
\renewcommand{\arraystretch}{0.96}
\newcommand{\suppmat}{Supplementary Material}
\AtBeginDocument{}

\setlist{nosep}
\numberwithin{equation}{section}

\newcommand{\E}{\mathbb{E}}
\newcommand{\Q}{\mathbb{Q}}

\newcommand{\R}{\mathbb{R}}

\newcommand{\1}{\mathbf{1}}
\newcommand{\dd}{\,\mathrm{d}}
\newcommand{\ee}{\mathrm{e}}

\newcommand{\scrF}{\mathcal F}
\newcommand{\scrG}{\mathcal G}

\newcommand{\Act}{\operatorname{Act}}

\newcommand{\Cov}{\operatorname{Cov}}

\newcommand{\Var}{\operatorname{Var}}
\newcommand{\diag}{\operatorname{diag}}

\newcommand{\Real}{\operatorname{Re}}
\newcommand{\argmin}{\operatorname*{arg\,min}}
\newcommand{\brac}[2]{\langle #1,#2\rangle}
\newcommand{\norm}[1]{\lVert #1\rVert}

\newcommand{\mcarma}{\operatorname{MCARMA}}
\newcommand{\logit}{\operatorname{logit}}

\newcommand{\eps}{\varepsilon}
\theoremstyle{plain}
\newtheorem{theorem}{Theorem}[section]
\newaliascnt{proposition}{theorem}
\newtheorem{proposition}[proposition]{Proposition}
\aliascntresetthe{proposition}
\newtheorem{lemma}[theorem]{Lemma}

\crefname{proposition}{Proposition}{Propositions}
\Crefname{proposition}{Proposition}{Propositions}
\theoremstyle{definition}

\newtheorem{assumption}[theorem]{Assumption}
\crefname{assumption}{Assumption}{Assumptions}
\Crefname{assumption}{Assumption}{Assumptions}
\theoremstyle{remark}
\newtheorem{remark}[theorem]{Remark}
\crefname{algorithm}{Algorithm}{Algorithms}
\Crefname{algorithm}{Algorithm}{Algorithms}

\graphicspath{{empirical/}}

\ExplSyntaxOn

\cs_set:Npn \__first_footerline:
  {
    \group_begin:
      \small
      \sffamily
      \ifnum \theblind > 0\relax
      \else
        \texttt{contact:}~%
        {\rmfamily\itshape
          \texttt{k.chatziandreou@uva.nl},\,%
          \texttt{sven@karbach.org}%
        }%
      \fi
    \group_end:
  }

\ExplSyntaxOff

\begin{document}

\shorttitle{Renewable PPA Pricing and Semi-static Hedging}
\shortauthors{Chatziandreou and Karbach}
\title[mode=title]{Pricing and Semi-static Hedging of Green Pay-as-produced Power Purchase Agreements}

\author[1,2]{Konstantinos Chatziandreou}
\cormark[1]
\ead{k.chatziandreou@uva.nl}

\author[1,2]{Sven Karbach}
\ead{sven@karbach.org}

\affiliation[1]{
  organization={Korteweg-de Vries Institute for Mathematics, University of Amsterdam},
  addressline={Science Park 105--107},
  city={Amsterdam}, postcode={1098 XG}, country={The Netherlands}}

\affiliation[2]{
  organization={Informatics Institute, University of Amsterdam},
  addressline={LAB42, Science Park 900},
  city={Amsterdam}, postcode={1098 XH}, country={The Netherlands}}

\cortext[1]{Corresponding author.}

\begin{abstract}
Pay-as-produced power purchase agreements (PPAs) expose buyers and sellers to the joint risk of power prices and renewable production. This paper develops a theoretical framework for hedging this exposure using a semi-static strategy: liquid futures hedge traded price risk dynamically, while a fixed portfolio of renewable-linked claims targets residual volume and covariance risk. The pricing and hedging decomposition is model-free, whereas the empirical implementation for German wind and solar generation uses a calibrated stochastic model. Conditional on a valuation measure, the fair strike is a production-weighted expected spot price. We show that it decomposes exactly into the baseload forward level, a deterministic production-profile correction, and a stochastic price--volume covariance correction, where the covariance term measures the pricing effect of renewable cannibalisation. The static hedge is selected through a finite-dimensional variance projection onto claims linked to renewable volume, delivery-period average prices, and price-volume covariance. We estimate a Lévy-driven bivariate MCARMA state-space model with state-dependent price spikes using hourly German data for 2023--2024 and apply it to monthly PPAs over the January--December 2025 delivery horizon. The results distinguish deterministic profile risk from stochastic covariance risk and show how sparse static overlays reduce residual exposures that
fixed-volume futures cannot hedge. The selected portfolios also indicate which claim types are most effective for hedging residual renewable shape risk.
\end{abstract}

\begin{keywords}
power purchase agreements \sep volumetric risk \sep cannibalisation \sep
capture price \sep semi-static hedging \sep MCARMA \sep wind power \sep solar power \sep sparse hedging

\vspace{27mm}

\sep JEL: Q41, Q42, G13, C58
\end{keywords}

\maketitle
\hypersetup{
  pdftitle={Pricing and Semi-static Hedging of Green Pay-as-produced Power Purchase Agreements},
  pdfauthor={Konstantinos Chatziandreou and Sven Karbach},
  pdfkeywords={power purchase agreements, volumetric risk, cannibalisation, capture price, semi-static hedging, MCARMA, wind power, solar power}
}

\section{Introduction}\label{sec:intro}

The expansion of wind and solar generation has changed the risk profile of
electricity contracting. Renewable output is stochastic, strongly seasonal and
only partly forecastable; electricity prices are non-storable, mean-reverting,
seasonal and prone to spikes. A green pay-as-produced power purchase agreement
(PPA) combines both variables in a single contract: the buyer receives the
realized renewable production and pays a fixed strike per MWh, while the seller
receives the strike and gives up the spot or market-indexed revenue. The
financial exposure is therefore neither a pure price exposure nor a pure volume
exposure but a product of both, and the contract cannot be reduced to a strip of
power forwards unless production is deterministic.

The price-volume interaction follows directly from the payoff. Let $S_t$ denote the spot power
price and $q_t$ the renewable production rate. Over a delivery horizon $[0,T]$,
the floating-minus-fixed PPA payoff is
\begin{equation}\label{eq:intro-ppa-payoff}
  H^K=\int_0^T q_u(S_u-K)\dd u ,
\end{equation}
and conditioning gives the basic pricing identity
\begin{equation}\label{eq:intro-covariance-identity}
  \E_t[q_uS_u]=\E_t[q_u]\,\E_t[S_u]+\Cov_t(q_u,S_u).
\end{equation}
A PPA is thus a price--volume covariance claim. When high renewable output tends to
occur during low-price periods, the covariance term is negative and the
production-weighted capture price lies below the baseload price. This covariance
term is the pricing counterpart of renewable cannibalisation, whose economics are
documented by \citet{Hirth2013}: rising renewable penetration depresses the
market value of renewable output relative to the baseload price. Both wind and
solar exhibit the effect, but through different channels. Wind output exhibits persistent
negative price-volume covariance because weather systems can raise generation and
depress prices for several days. Solar combines a pronounced deterministic intraday profile
with stochastic cloud-cover shortfall risk: generation is concentrated around
midday, when high system-wide solar output depresses wholesale prices. We therefore
separate deterministic profile effects from
stochastic covariance effects using the same
definitions for wind and solar.

The paper first formulates the green-PPA hedging
problem by distinguishing risks that
can be hedged dynamically with liquid instruments from risks that require
a fixed portfolio of structured claims. The first part gives
a model-free decomposition of pricing and
hedging. We define the PPA payoff, delivery-period power futures, production-index futures and auxiliary over-the-counter (OTC)
claims, all defined on a common filtered probability space. 

Under a chosen valuation measure, the PPA fair
strike decomposes exactly into a level term, a profile term and a covariance
term. The conditional price, production, and price-production moments also determine the hedge. Because renewable volume is
not spanned by exchange-traded futures, the market is incomplete, and we
use quadratic hedging as the optimality criterion: variance-optimal hedging in the sense of
F\"ollmer--Sondermann local risk minimization under a martingale measure, built
on the Galtchouk--Kunita--Watanabe (GKW) decomposition
\citep{Galchouk1976,KunitaWatanabe1967,FollmerSondermann1985,Schweizer1992,Schweizer1994,Pham2000,CernyKallsen2007}.
Following the semi-static literature of
\citet{SemistaticHedging,semistaticsparse,chatziandreou2026semistaticvarianceoptimalhedgingcovariance}, every target and auxiliary claim is
first projected dynamically onto the gains of the traded futures stack, and the static portfolio
is then chosen by projecting the remaining orthogonal residual onto the residuals of
the auxiliary claims. The delivery-period feature of power futures is handled explicitly:
in the convention of \citet{BenthDetering2015}, trading in a delivery future
stops at the start of its delivery period and the last position is held to
settlement, which creates a continuous--discrete quadratic hedging problem
analogous to restricted hedging of Asian options; an idealized during-delivery
convention extends rebalancing through delivery. 

The second part estimates these components for German wind and
solar PPAs. The spot price is written as an additive
deterministic seasonal level, a continuous mean-reverting stochastic component
and a mean-reverting spike component whose negative-jump intensity increases
with renewable penetration. The bivariate continuous state formed by the
transformed renewable variable and the deseasonalized spot residual follows a
L\'evy-driven multivariate continuous-time autoregressive moving-average
($\mcarma$) process
\citep{Brockwell2001CARMA,MarquardtStelzer2007,SchlemmStelzer2012}. The MCARMA
specification differs from the renewable-covariate models of
\citet{RowinskaVeraartGruet2021} in three respects relevant to PPA
pricing: it preserves exact continuous-time pricing for delivery-period claims, it admits
exact sampled-state quasi-maximum-likelihood estimation
\citep{SchlemmStelzerQMLE2012}, and it allows recovery of the driving L\'evy
increments from the fitted state \citep{BrockwellSchlemm2013}, so that a
heavy-tailed multivariate normal-inverse Gaussian driver can be estimated and
used for simulation.

\paragraph{Contribution.}
This paper makes four main contributions. First, it derives an exact
decomposition of the fair strike of a pay-as-produced PPA into the baseload
forward level, a deterministic renewable-profile correction, and a stochastic
price--volume covariance correction. This decomposition shows that the
covariance correction is the model-free pricing term associated with renewable
cannibalisation. Second, it formulates PPA hedging as a two-stage variance-optimal procedure that
respects delivery-period trading restrictions: the contract is first projected onto dynamically
traded delivery-period futures, and the residual that remains after dynamic futures hedging
is then projected onto a static portfolio of auxiliary claims. Explicit
variance-optimal hedge ratios and residual variances are derived under both the
pre-delivery and the idealised during-delivery trading conventions. Third, it
shows that power--renewable orthant quantos and capture-spread options provide
a finite set of payoffs that can hedge covariance exposure remaining after
dynamic futures hedging. A sufficient-condition result also
shows how the signs of the optimal static weights can inform the design of OTC
hedging products. Fourth, it applies the same valuation, calibration, and
hedging procedure to wind and solar PPAs using hourly German data, thereby
distinguishing technology-specific deterministic profile risk from residual
stochastic price--volume covariance risk using a common set of definitions and
assumptions.
Together, the results show how
covariance-sensitive OTC claims can be selected to hedge the residual risk left after
futures hedging. Their payoff structures indicate
which price--volume states a renewable shape-risk
contract should cover.

The remainder of the paper is organised as follows.
\Cref{sec:literature} reviews the related literature.
\Cref{sec:model-free-pricing} introduces the market objects and the exact
fair-strike decomposition. \Cref{subsec:empirical-analysis} documents
the empirical price-renewable dependence, capture prices
and factors, state-dependent negative spikes, and
the relation between spot, futures, and observed PPA
prices. These findings motivate the model
and hedge design.
\Cref{sec:model-free-hedging} develops the dynamic and semi-static
variance-optimal hedging theory. \Cref{sec:model} builds a joint renewable-price MCARMA model with
state-dependent spikes using the state variables introduced
in \cref{subsec:empirical-analysis}. \Cref{sec:calibration} presents the estimation
methodology. \Cref{sec:empirical-design} describes the simulation, hedging-instrument and
backtesting design used in the empirical study.

\section{Related literature}\label{sec:literature}

\paragraph{Renewable cannibalisation and electricity price formation:}

The market value of variable renewable generation and the associated
cannibalisation effect are documented by \citet{Hirth2013}. Empirical evidence
that wind and solar production affect wholesale price levels, volatility and
spikes is provided by \citet{Ketterer2014} for German wind, \citet{Clo2015} for
Italian wind and solar, and \citet{Rintamaki2017} for Denmark and Germany.
Equilibrium electricity-forward pricing and the role of demand, variance and
skewness in forward premia go back to \citet{BessembinderLemmon2002} and
\citet{LongstaffWang2004}. Our decomposition expresses this mechanism in
the PPA fair-strike formula: cannibalisation enters the fair strike
through the conditional covariance between production and spot prices, while
deterministic profile effects enter through the alignment of expected production
with the forward curve.

\paragraph{Price--volume risk and renewable PPA hedging:}
Quantity risk is a central source of incompleteness in power and energy markets.
Static and functional hedging of joint price--quantity exposure with standard
power options is studied by \citet{Oum2006,Oum2010}; \citet{COULON2013976}
develop a structural model for hedging load and price risk in the Texas market,
and \citet{RONCORONI2017415} analyse size risk in the US gas market. For wind
power, \citet{Pircalabu2016} and \citet{Pircalabu2017} model joint price--volume
risk with copula methods and study the hedging of fixed-price wind agreements;
\citet{Rowinska2020} develops statistical inference for joint electricity--wind
models. Hedging of renewable PPAs with standard instruments is studied by
\citet{PENA2024101513}. Energy quanto options, which pay on products of a price
index and a volumetric or weather index, are analysed by
\citet{BenthLangeQuanto} and \citet{Benth2022}; product options as
market-completing instruments for functions of two underlyings go back to
\citet{CarrCorso2001} and \citet{Madan2021Pricing}. Closely related renewable
risk-transfer designs combine energy and weather derivatives for joint
price--volume exposure \citep{MATSUMOTO2021105101}, introduce QF (Quality Factor)-settled
shape-risk contracts for capture-rate risk \citep{en19133044}, and analyse
optimized fixed-volume wind swaps under shape and basis risk
\citep{LUCY2021105603}. Relative to this literature,
the present paper works with the exact PPA strike decomposition, formulates the
hedge as a two-layer $L^2$ projection that respects delivery-period trading
restrictions, and selects quanto-type claims according to how
well they hedge the residual left after dynamic futures hedging.

\paragraph{Electricity spot, futures and renewable factor models:}
Electricity spot models must accommodate seasonality, mean reversion, spikes,
negative prices, delivery-period settlement and non-storability; see
\citet{Benth2008} for a comprehensive treatment of arithmetic and geometric
models. Pricing and hedging of Asian-style energy claims under delivery-period
trading restrictions are analysed by \citet{BenthDetering2015}, and additive
mean-reverting forward-curve specifications in a Heath--Jarrow--Morton framework
by \citet{Benth2019}. The empirical model is closest in spirit to
\citet{DeschatreVeraart2018}, who model spot prices and wind penetration jointly
with seasonal functions, continuous autoregressive factors, a logistic
wind-penetration transform and a wind-dependent spike intensity, and to
\citet{RowinskaVeraartGruet2021}, who use a multi-factor arithmetic model with
wind-related exogenous variables. The present specification replaces the
short-term factor pair by a bivariate L\'evy-driven MCARMA state
\citep{MarquardtStelzer2007,SchlemmStelzer2012, BenthKarbach2023MCARMACones}, which nests multivariate
Ornstein--Uhlenbeck and higher-order autoregressive dynamics, can
represent oscillatory autocovariance at daily and weekly scales, and supports
exact discrete-time inference. Estimation rests on the
quasi-maximum-likelihood theory of \citet{SchlemmStelzerQMLE2012} and the driver
recovery of \citet{BrockwellSchlemm2013}; the multivariate normal-inverse
Gaussian law used for the recovered driver goes back to
\citet{BarndorffNielsen1997}.

\paragraph{Variance--optimal and semi--static hedging:}
Quadratic hedging in incomplete markets originates with the GKW decomposition
\citep{Galchouk1976,KunitaWatanabe1967} and local risk minimization
\citep{FollmerSondermann1985}; mean-variance hedging for general claims and
semimartingale price processes is developed in
\citet{Schweizer1992,Schweizer1994}, surveyed by \citet{Pham2000}, and treated
in full generality by \citet{CernyKallsen2007}. Explicit variance-optimal
hedges for L\'evy and affine stochastic volatility models are obtained by
\citet{Hubalek.2006}, \citet{KallsenPauwels2010} and \citet{Goutte2014}. The
semi-static layer follows \citet{SemistaticHedging,semistaticsparse}: each
target and auxiliary claim is first dynamically projected onto the traded
futures stack, and the static portfolio is chosen by a finite-dimensional
covariance projection of the orthogonal residuals. Fourier methods transfer
these projections to claims with integral payoff representations
\citep{CarrMadan2001,Eberlein2010}. In a companion paper,
\citet{chatziandreou2026semistaticvarianceoptimalhedgingcovariance} develop the
general semi-static variance-optimal hedging theory for covariance risk in
multi-asset derivatives: a multivariate Galtchouk--Kunita--Watanabe
decomposition splits the global mean-variance problem into an inner dynamic
$L^2$ projection onto the stochastic integrals generated by the traded assets
and an outer finite-dimensional quadratic program over static portfolios of
vanilla, product, and spread options. Multidimensional functional-spanning results
then identify the static option strips that reduce cross-gamma exposures not
hedgeable through dynamic trading alone. In this paper the semi-static construction
is specialized to delivery-period futures with a time-varying active set, which
is the setting in which power futures enter and leave the tradable set at contract-specific dates, and the auxiliary
claims are chosen to target the price--volume covariance channel identified by
the fair-strike decomposition.

\section{Model-free PPA pricing}\label{sec:model-free-pricing}

\subsection{Valuation convention and market objects}\label{subsec:valuation-convention}

Fix a finite horizon $T>0$ and a filtered probability space
$(\Omega,\scrF,(\scrF_t)_{0\le t\le T},\Q)$ satisfying the usual conditions. The
measure $\Q$ is a valuation measure under which the discounted futures value
processes used for hedging are square-integrable martingales. Interest rates are
deterministic and all cash flows are expressed in discounted units; equivalently,
the money-market account is the numeraire. Since renewable production is not
fully spanned by exchange-traded power futures, no-arbitrage does not identify a
unique joint law for price and production. All fair strikes and hedge
projections below are therefore conditional on the chosen valuation measure and
model convention.

Let $\mathcal R=\{\mathrm{W},\mathrm{S}\}$ denote the wind and solar
technologies. For $r\in\mathcal R$, the physical production rate is
\begin{equation}\label{eq:qC-general}
  q_t^r=\bar q^r C_t^r,\qquad C_t^r\in[0,1],\qquad \bar q^r>0 ,
\end{equation}
where $C^r$ is the technology-specific capacity factor and $\bar q^r$ the
installed capacity. We write $\E_t[\cdot]=\E[\cdot\mid\scrF_t]$. For $t\le u$
define the three pointwise conditional objects
\begin{align}
  F_t(u)&=\E_t[S_u],\label{eq:F-general}\\
  Q_t^r(u)&=\E_t[q_u^r],\label{eq:Q-general}\\
  R_t^r(u)&=\E_t[q_u^rS_u],
  \label{eq:R-general}
\end{align}
and the conditional covariance kernel
\begin{align*}\label{eq:Gamma-general}
  \Gamma_t^r(u) &
  =R_t^r(u)-Q_t^r(u)F_t(u)\\
 & =\Cov_t(q_u^r,S_u)\\
 & =\E_t\bigl[(q_u^r-Q_t^r(u))(S_u-F_t(u))\bigr].
\end{align*}
The last identity records that $\Gamma_t^r(u)$ is exactly the conditional
covariance of the centered variables; a negative $\Gamma_t^r(u)$ means that high
production occurs in low-price states. The notation is point-delivery notation;
delivery-period instruments are obtained by deterministic integration over their
delivery windows. Dimensionally, $S_u$ is in EUR/MWh and $q_u^r$ in MWh per unit
time, so $R_t^r(u)$ and $\Gamma_t^r(u)$ are in EUR per unit time.

\begin{assumption}[Model-free integrability]\label{ass:integrability}
For every delivery period and every technology considered in the paper,
$\int |S_u|\dd u$, $\int |q_u^r|\dd u$ and $\int |q_u^rS_u|\dd u$ are integrable,
and all payoff variables entering the hedging problem are square-integrable.
Conditional Fubini holds for the displayed conditional expectations.
\end{assumption}

\subsection{Fair strike and capture-price decomposition}\label{subsec:ppa-decomp}

For a fixed strike $K\in\R$, the floating-minus-fixed pay-as-produced PPA payoff
for technology $r$ is
\begin{equation}\label{eq:ppa-payoff-general}
  H_r^K=\int_0^T q_u^r(S_u-K)\dd u .
\end{equation}
Assume $\int_0^TQ_0^r(u)\dd u>0$. The fair strike is the unique $K_r^\star$
satisfying $\E[H_r^{K_r^\star}]=0$:
\begin{equation}\label{eq:fair-strike-raw}
  K_r^\star=
  \frac{\E\left[\int_0^T q_u^rS_u\dd u\right]}
       {\E\left[\int_0^T q_u^r\dd u\right]}.
\end{equation}
This is the model-implied production-weighted capture price under $\Q$. It
equals the flat baseload forward only when the profile and covariance
corrections sum to zero.

\begin{proposition}[Exact PPA decomposition]\label{prop:ppa-decomp}
Under Assumption \ref{ass:integrability}, the time-$t$ value of the remaining PPA cash
flow is
\begin{equation}
\begin{aligned}
  V_{r,t}^K
  &=\E_t\left[\int_t^T q_u^r(S_u-K)\dd u\right]\\
  &=\int_t^T\{R_t^r(u)-KQ_t^r(u)\}\dd u ,
\end{aligned}
  \label{eq:ppa-value-RQ}
\end{equation}
or equivalently
\begin{equation}\label{eq:ppa-value-decomp}
  V_{r,t}^K
  =\int_t^T Q_t^r(u)\{F_t(u)-K\}\dd u
  +\int_t^T\Gamma_t^r(u)\dd u .
\end{equation}
Consequently,
\begin{equation}\label{eq:fair-strike-gamma}
  K_r^\star=
  \frac{\int_0^T\{Q_0^r(u)F_0(u)+\Gamma_0^r(u)\}\dd u}
       {\int_0^TQ_0^r(u)\dd u}.
\end{equation}
\end{proposition}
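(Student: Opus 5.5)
The argument is a direct computation, carried out in three steps.

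First, I obtain \eqref{eq:ppa-value-RQ}. By linearity of conditional expectation,
\[
V_{r,t}^K=\E_t\left[\int_t^T q_u^rS_u\dd u\right]-K\,\E_t\left[\int_t^T q_u^r\dd u\right].
\]
Both terms are well defined because, by Assumption~\ref{ass:integrability}, $\int|q_u^rS_u|\dd u$ and $\int|q_u^r|\dd u$ are integrable. The conditional Fubini property in the same assumption then lets me move $\E_t$ inside the $\dd u$-integral. Inserting the definitions \eqref{eq:Q-general} and \eqref{eq:R-general} gives $\int_t^T\{R_t^r(u)-KQ_t^r(u)\}\dd u$. The only point needing care is measurability of $u\mapsto R_t^r(u),Q_t^r(u)$, i.e.\ choosing jointly measurable versions of the conditional expectations. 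This is exactly what the conditional Fubini clause of the assumption provides, so I expect this to be the main (but mild) obstacle, and I will simply invoke the assumption there.

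Second, I derive \eqref{eq:ppa-value-decomp}. For each $u\ge t$ I substitute the pointwise identity
\[
R_t^r(u)=Q_t^r(u)F_t(u)+\Gamma_t^r(u),
\]
which is just the definition of the covariance kernel $\Gamma_t^r$. The integrand then becomes $Q_t^r(u)\{F_t(u)-K\}+\Gamma_t^r(u)$.

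Splitting the integral requires $\int_t^T|Q_t^r(u)F_t(u)|\dd u<\infty$ a.s. I will justify this as follows. The product $Q_t^rF_t$ equals $R_t^r-\Gamma_t^r$. Each of $R_t^r$ and $Q_t^r$ is integrable in $u$ by the first step. Conditional Jensen gives $|F_t(u)|\le\E_t|S_u|$, and $Q_t^r(u)\le\bar q^r$ since $C^r\in[0,1]$. Hence $\int|Q_t^rF_t|\dd u\le\bar q^r\,\E_t\int|S_u|\dd u<\infty$, so the split is legitimate.

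Third, I derive \eqref{eq:fair-strike-gamma}. I set $t=0$ and use \eqref{eq:ppa-value-RQ}, which turns $\E[H_r^K]=0$ into an equation that is linear in $K$:
\[
\int_0^TR_0^r(u)\dd u=K\int_0^TQ_0^r(u)\dd u .
\]
Since $\int_0^TQ_0^r\dd u>0$ by hypothesis, this equation has a unique solution. Replacing $R_0^r$ by $Q_0^rF_0+\Gamma_0^r$ once more yields \eqref{eq:fair-strike-gamma}, and the result is consistent with \eqref{eq:fair-strike-raw} via Fubini at $t=0$.
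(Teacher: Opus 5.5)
Your proposal is correct and follows the paper's argument: conditional Fubini under Assumption~\ref{ass:integrability} gives the first identity, adding and subtracting $Q_t^r(u)F_t(u)$ gives the second, and setting $t=0$ and solving the linear equation $\E[H_r^K]=0$ gives the strike. Your extra check that $\int_t^T|Q_t^r(u)F_t(u)|\dd u<\infty$, using $0\le Q_t^r(u)\le\bar q^r$ and conditional Jensen, is a valid refinement that justifies splitting the integral, a step the paper leaves implicit.
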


\begin{proof}
The first equality follows from the tower property and conditional Fubini under
Assumption \ref{ass:integrability}. Adding and subtracting $Q_t^r(u)F_t(u)$ inside the
integral gives \eqref{eq:ppa-value-decomp}. Setting $t=0$ and imposing
$\E[H_r^K]=0$ gives \eqref{eq:fair-strike-gamma}.
\end{proof}

To compare the pay-as-produced strike with a flat baseload quote, define
\begin{equation}\label{eq:flat-and-volume-weighted}
  \bar F=\frac1T\int_0^T F_0(u)\dd u,\qquad
  \bar F_r^q=
  \frac{\int_0^T Q_0^r(u)F_0(u)\dd u}{\int_0^T Q_0^r(u)\dd u}.
\end{equation}
Then
\begin{align}
  K_r^\star-\bar F
  &=\Delta_r^{\rm prof}+\Delta_r^{\rm cov},\label{eq:flat-profile-cov}\\
  \Delta_r^{\rm prof}
  &=\bar F_r^q-\bar F,\label{eq:profile-correction}\\
  \Delta_r^{\rm cov}
  &=
  \frac{\int_0^T\Gamma_0^r(u)\dd u}
       {\int_0^TQ_0^r(u)\dd u}.
  \label{eq:covariance-correction}
\end{align}
The profile term $\Delta_r^{\rm prof}$ is deterministic conditional on the
forward curve and the expected production profile; the covariance term
$\Delta_r^{\rm cov}$ is the unspanned stochastic contribution. This distinction
is central for comparing wind and solar: a technology-specific PPA strike can lie below baseload
because expected output is concentrated in low-forward hours, because high
output coincides with low spot prices after conditioning, or because both
channels are present. The two components also require different hedging
instruments, which motivates the two-layer hedge of
\cref{sec:model-free-hedging}.

\subsection{Delivery-period futures}\label{subsec:delivery-futures}

Let $I=[I^-,I^+]\subset[0,T]$ be a delivery period and let $w_I(u)\ge0$ be a
deterministic delivery weight with
\begin{equation}\label{eq:delivery-weights}
  D(I)=\int_{I^-}^{I^+}w_I(u)\dd u>0 .
\end{equation}
Define the realized weighted power and production settlements
\begin{align}
  A^S(I)
  &=\int_{I^-}^{I^+}w_I(u)S_u\dd u,\label{eq:settlement-power}\\
  A^{q,r}(I)
  &=\int_{I^-}^{I^+}w_I(u)q_u^r\dd u .
  \label{eq:settlement-production}
\end{align}
The average-price power future and the production-volume future are
\begin{equation}\label{eq:delivery-futures}
\begin{aligned}
  F^S(t,I)&=\frac{\E_t[A^S(I)]}{D(I)},\\
  F^{q,r}(t,I)&=\E_t[A^{q,r}(I)],\qquad t\le I^+ .
\end{aligned}
\end{equation}
Before delivery starts,
\begin{align}
  F^S(t,I)
  &=\frac1{D(I)}\int_{I^-}^{I^+}w_I(u)F_t(u)\dd u,\label{eq:pre-delivery-power}\\
  F^{q,r}(t,I)
  &=\int_{I^-}^{I^+}w_I(u)Q_t^r(u)\dd u ,\qquad t\le I^- .
  \label{eq:pre-delivery-production}
\end{align}
During delivery, the realized part is known path by path and only the remaining
part is conditional:
\begin{align}
  F^S(t,I)
  &=\frac1{D(I)}
  \int_{I^-}^{I^+}\1_{\{u\le t\}}w_I(u)S_u\dd u\notag\\
  &\quad
  +\frac1{D(I)}
  \int_{I^-}^{I^+}\1_{\{u>t\}}w_I(u)F_t(u)\dd u,\label{eq:during-delivery-power}\\
  F^{q,r}(t,I)
  &=\int_{I^-}^{I^+}\1_{\{u\le t\}}w_I(u)q_u^r\dd u\notag\\
  &\quad
  +\int_{I^-}^{I^+}\1_{\{u>t\}}w_I(u)Q_t^r(u)\dd u ,
  \label{eq:during-delivery-production}
\end{align}
for $I^-<t<I^+$. The notional-weighted settlement martingales
\begin{equation}\label{eq:X-martingales}
  X_t^S(I)=D(I)F^S(t,I),\qquad
  X_t^{q,r}(I)=F^{q,r}(t,I)
\end{equation}
are square-integrable $\Q$-martingales by construction and serve as the integrators
in the hedging problem; in particular $X^S_{I^+}(I)=A^S(I)$ and
$X^{q,r}_{I^+}(I)=A^{q,r}(I)$. This martingale property, together with the stated square-integrability
conditions, is sufficient for the projection results below. Hedge ratios below are computed against
the integrator vector $\bm X=(X^S(I),X^{q,r}(I),\ldots)$, not against quoted
futures labels; a position in the quoted average-price future equals the
notional-weighted position multiplied by $D(I)$. The empirical implementation
compares a pre-delivery convention, in which a futures position is held fixed
during its delivery period, with an idealized during-delivery convention in
which the remaining settlement value can be rebalanced throughout delivery,
following the Asian-style energy-contract analysis of \citet{BenthDetering2015};
the precise conventions are formalized in \cref{sec:single-period} and
\cref{sec:empirical-design}.

\subsection{Empirical analysis of the German market}\label{subsec:empirical-analysis}

The decomposition of the previous section identifies why a pay-as-produced PPA deviates from a flat baseload contract:
the expected production profile, the covariance between production and
prices, and the volumetric components that survive a futures hedge. This
section documents these empirical features using German data before any
stochastic model is introduced: the deterministic seasonal structure of
prices and renewable output, the serial dependence and spectral structure of
the deseasonalized states, the negative and load-dependent
price--renewable dependence (cannibalisation), realized capture prices and
capture factors for wind and solar, the state dependence of negative price
spikes, and the relation between spot outcomes, futures quotes and observed
PPA prices. These findings motivate the design choices for
the hedging theory of \cref{sec:model-free-hedging,sec:single-period}, for
the joint renewable--price model of \cref{sec:model}, and for its estimation
in \cref{sec:calibration}.

\subsubsection{Data and market overview}\label{subsec:emp-data}

The empirical panel combines the four series summarized in
\cref{tab:emp-data-sources}: German day-ahead electricity prices
$S_t$ from the ENTSO-E Transparency Platform, the ENWEX Germany wind and
solar utilisation indices $C_t^{\mathrm W},C_t^{\mathrm S}\in[0,1]$, and
DE--LU actual total load $L_t$ from the ENTSO-E Transparency Platform
\citep{ENTSOE2025EnergyPrices,ENWEX2026Data,
ENTSOE2026ActualLoad}. After aligning the four series on
their common hourly timestamps, the resulting panel contains $43{,}819$
observations from January~2021 through December~2025.

\begin{table*}[t]
\centering
\caption{Data sources for the hourly German empirical panel. The table reports
the variables used in the analysis, their definitions, the corresponding
provider datasets, and the official documentation and data-access sites.
The final sample is obtained by aligning all series on their common hourly
timestamps.}
\label{tab:emp-data-sources}
\small
\renewcommand{\arraystretch}{1.18}
\setlength{\tabcolsep}{5pt}
\begin{tabularx}{\textwidth}{
  @{}
  p{0.09\textwidth}
  p{0.25\textwidth}
  p{0.28\textwidth}
  X
  @{}
}
\toprule
Variable
& Series used
& Provider and dataset
& Official reference sites \\
\midrule

$S_t$
&
DE--LU day-ahead electricity price, in EUR/MWh
&
ENTSO-E Transparency Platform,
\emph{Energy Prices [12.1.D]}
\citep{ENTSOE2025EnergyPrices}
&
\href{https://transparency.entsoe.eu/}
     {ENTSO-E data portal};
\href{https://transparencyplatform.zendesk.com/hc/en-us/articles/16647234190100-Energy-Prices-12-1-D}
     {dataset definition} \\

\addlinespace

$C_t^{\mathrm W}$
&
ENWEX Germany wind utilisation index, divided by $100$ and expressed on
$[0,1]$
&
enwex GmbH,
\emph{Enwex Germany Wind}
\citep{ENWEX2026Data}
&
\href{https://enwex.com/product/}
     {index methodology};
\href{https://enwex.com/getdataapitest/}
     {historical-data page} \\

\addlinespace

$C_t^{\mathrm S}$
&
ENWEX Germany solar utilisation index, divided by $100$ and expressed on
$[0,1]$
&
enwex GmbH,
\emph{Enwex Germany Solar}
\citep{ENWEX2026Data}
&
\href{https://enwex.com/product/}
     {index methodology};
\href{https://enwex.com/getdataapitest/}
     {historical-data page} \\

\addlinespace

$L_t$
&
DE--LU actual total electricity load, in MW
&
ENTSO-E Transparency Platform,
\emph{Actual Total Load [6.1.A]}
\citep{ENTSOE2026ActualLoad}
&
\href{https://transparency.entsoe.eu/}
     {ENTSO-E data portal};
\href{https://transparencyplatform.zendesk.com/hc/en-us/articles/16647979768084-Actual-Total-Load-Day-ahead-Per-Bidding-Zone-6-1-A-6-1-B}
     {dataset definition} \\

\bottomrule
\end{tabularx}

\vspace{0.4em}
\begin{minipage}{0.98\textwidth}
\footnotesize
\emph{Notes:}
The ENWEX wind and solar series are standardized, nationwide
weather-based utilisation indices. They should therefore be interpreted as
country-level renewable-capacity-utilisation measures rather than as metered
capacity factors for an individual wind or solar installation. The ENTSO-E
series correspond to the DE--LU bidding zone. All variables are aligned to the
common hourly grid used in the empirical analysis.
\end{minipage}
\end{table*}

Each hour is labelled \emph{crisis} (2021--2022, which contains the
energy-crisis regime), \emph{calibration} (2023--2024, the window on which the
stochastic model of \cref{sec:model} is later estimated), or
\emph{out-of-sample} (January--December~2025, the PPA delivery window of the
empirical hedging study). 

Three sample characteristics are important for the analysis. First, the crisis window is dominated
by the 2022 shock, with a mean price of $166.2$~EUR/MWh and a standard deviation
of $133.1$~EUR/MWh, which is why estimation deliberately begins in 2023.
The calibration and out-of-sample mean prices are nearly identical
($86.8$ versus $89.3$~EUR/MWh), so the 2025 experiment is not driven by a
shift in the average price level. Second, the share of negative-price hours
increases out of sample, from $4.3\%$ in the calibration period to
$6.6\%$. Third, 2025 is a weak wind year, with a mean wind utilisation factor
$C^{\mathrm W}$ of $0.20$ compared with $0.24$ during calibration, while the
mean solar utilisation factor remains close to $0.10$. The out-of-sample period therefore provides
a demanding test of hedges against volume and capture-price
risk.

\begin{figure*}[t]
\centering
\includegraphics[width=.92\textwidth]{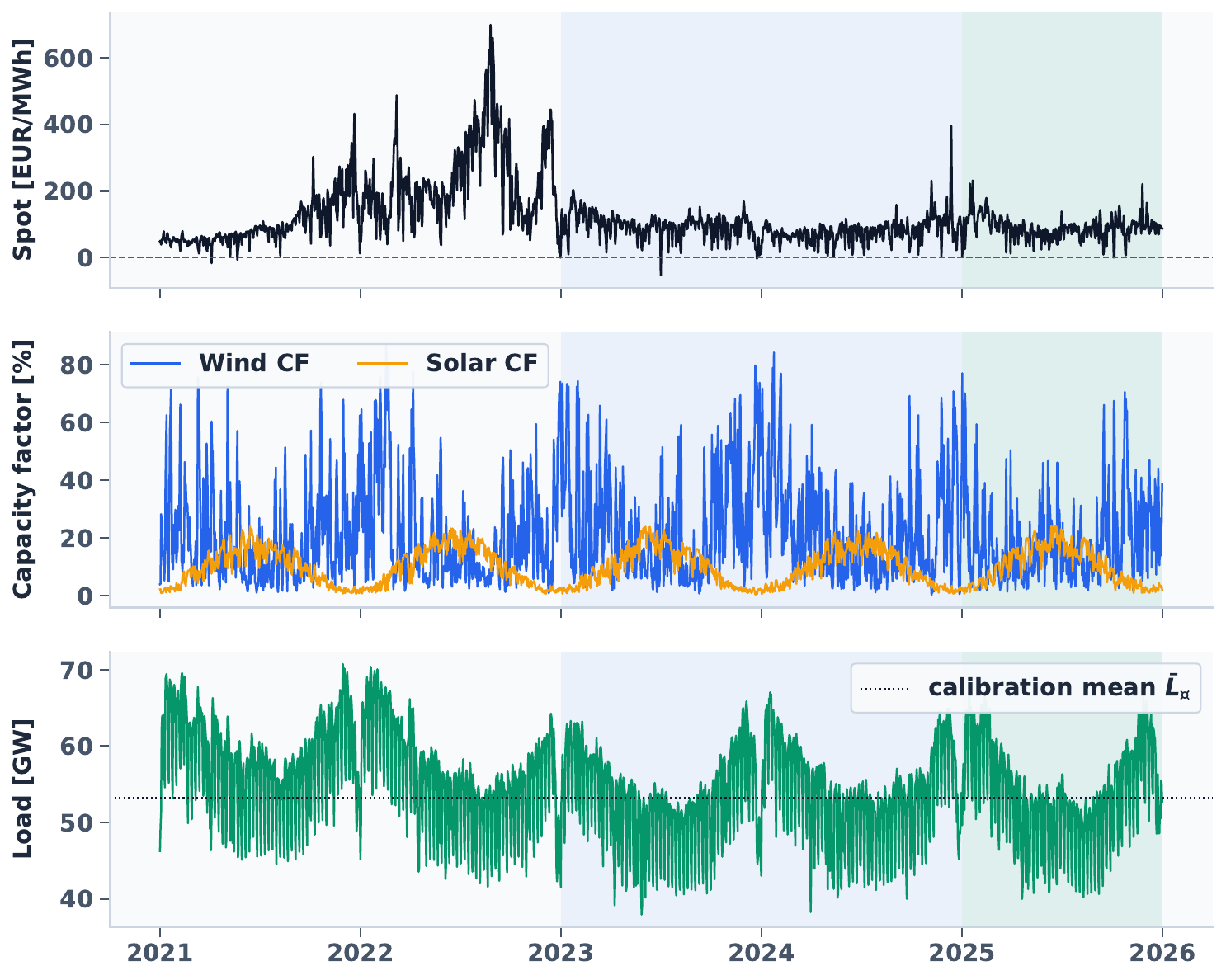}
\caption{Daily means of the hourly German panel, 2021--2025: day-ahead spot
price (top, dashed line at zero), wind and solar capacity factors (middle) and
system load with the calibration-window mean $\bar L_{\mathcal C}$ (bottom).
Blue shading marks the 2023--2024 calibration window, green shading the 2025
out-of-sample delivery window.}
\label{fig:emp-overview}
\end{figure*}

\subsubsection{Seasonal structure and deseasonalized components}
\label{subsec:emp-seasonality}

Prices and renewable output are dominated by deterministic calendar
structure, and their stochastic dependence should therefore be assessed after
removing that structure. We use three
transformations to separate deterministic from stochastic
variation. The resulting variables form the state vector in
\cref{sec:model}.

The spot price is represented additively,
\begin{equation}\label{eq:spot-additive}
  S_t=\Lambda^S(t)+Y_t^S+J_t ,
\end{equation}
where $\Lambda^S$ is a deterministic price level of the product form
$\Lambda^S(t)=\bar S\,F^{2Y,S}_tF^{2D,S}_t$, combining a slowly varying
annual factor with a normalized intraday profile \citep{Paraschiv2015},
$Y^S$ is the continuous deseasonalized residual in EUR/MWh, and $J$ is a
mean-reverting spike component identified by a threshold procedure on the
deseasonalized price. For valuation, if liquid futures quotes are available, the deterministic price
level may alternatively be inferred from a smooth, market-implied hourly price
forward curve (HPFC).

For wind, the relevant state is not the capacity factor alone but wind
penetration relative to demand. With $L_t$ the system load and
$\bar L_{\mathcal C}$ its calibration-window average, define the
load-adjusted wind penetration index
\begin{equation}\label{eq:wind-pressure}
  \Pi_t^{\mathrm W}
  =
  \min\left\{1,\max\left\{0,\;C_t^{\mathrm W}\frac{\bar L_{\mathcal C}}{L_t}\right\}\right\},
\end{equation}
which preserves the ordering of penetration because, for approximately
constant installed capacity, penetration is proportional to
$C^{\mathrm W}_t/L_t$. The latent wind coordinate is the clipped logit
\begin{equation}\label{eq:wind-logit}
\begin{aligned}
  X_t^{\mathrm W}
  &=
  \logit\bigl(\Pi_t^{\mathrm W,\eps}\bigr)
  =
  \Lambda^{\mathrm W}(t)+Y_t^{\mathrm W},\\
  \Pi_t^{\mathrm W,\eps}
  &=\min\{1-\eps,\max\{\eps,\Pi_t^{\mathrm W}\}\},
\end{aligned}
\end{equation}
for a small clipping constant $\eps\in(0,1/2)$, where $\Lambda^{\mathrm W}$
is a deterministic seasonal component and $Y^{\mathrm W}$ the stochastic
residual.

For solar, we separate deterministic daylight availability from
stochastic cloud-cover shortfall and model the realized capacity
factor $C_t^{\mathrm S}\in[0,1]$ directly. Write $d_t\in\{1,\ldots,366\}$
and $h_t\in\{0,\ldots,23\}$ for day of year and hour of day, and let
$\operatorname{dist}_{366}$ denote circular distance on the annual grid. For
each day--hour cell, the clear-sky construction first estimates the locally
pooled upper envelope
\begin{equation}\label{eq:solar-local-envelope}
\begin{aligned}
 \mathcal N_{d,h}
 &:=\{t\in\mathcal C:h_t=h,\ 
       \operatorname{dist}_{366}(d_t,d)\le b\},\\
 \widetilde q_{d,h}
 &:=Q_{\tau}\bigl(C_t^{\mathrm S}:t\in\mathcal N_{d,h}\bigr).
\end{aligned}
\end{equation}
where $\mathcal C$ is the calibration window, $\tau=0.98$ and $b=15$. The sequence
$d\mapsto\widetilde q_{d,h}$ is smoothed circularly, separately for every
hour, with a Gaussian kernel of standard deviation $\sigma_d=3$ days; denote
the resulting envelope by $\widetilde C_{d,h}^{\mathrm{cs}}$. A scale factor estimated from
high-visibility observations raises the envelope so that it covers nearly all observed output,
\begin{equation}\label{eq:solar-envelope-scale}
\begin{aligned}
 \mathcal H
 &:=\{t\in\mathcal C:
 \widetilde C_{d_t,h_t}^{\mathrm{cs}}\ge c_{\mathrm{ref}}\},\\
  \kappa
  &:=\max\left\{1,
  Q_{\eta}\!\left(
  \frac{C_t^{\mathrm S}}{\widetilde C_{d_t,h_t}^{\mathrm{cs}}}
  \,\middle|\,t\in\mathcal H
  \right)\right\},
  \qquad \eta=0.999,\\
 c_{\mathrm{ref}}&:=0.03.
\end{aligned}
\end{equation}
To distinguish nighttime cells with zero solar availability from small positive metering values, let
$\widehat p_{d,h}$ be the local-window frequency of
$C_t^{\mathrm S}>\epsilon_{\mathrm{obs}}$, with
$\epsilon_{\mathrm{obs}}=10^{-4}$. The final climatology is
\begin{equation}\label{eq:solar-structural-night}
\begin{aligned}
 E_{d,h}&:=[\kappa\widetilde C_{d,h}^{\mathrm{cs}}]_{[0,1]},\\
 C_{d,h}^{\mathrm{cs}}
 &:={}
 \begin{cases}
  0,
  & E_{d,h}<c_{\mathrm n}\ \text{and}\ \widehat p_{d,h}<p_{\min},\\
  E_{d,h},&\text{otherwise}.
 \end{cases}
\end{aligned}
\end{equation}
Here $c_{\mathrm n}=10^{-3}$, $p_{\min}=0.10$, and
$[z]_{[a,b]}:=\min\{b,\max\{a,z\}\}$. We set
$C_t^{\mathrm{cs}}=C_{d_t,h_t}^{\mathrm{cs}}$.

At very low irradiance the physical output does not identify cloud cover:
when $C_t^{\mathrm{cs}}=0$, production is zero for every weather state. The
observation map therefore uses the identification scale $\delta=0.03$,
\begin{equation}\label{eq:solar-risk-driver}
  R_t^\odot
  =
  \left[
  1-\frac{C_t^{\mathrm S}}{\max(C_t^{\mathrm{cs}},\delta)}
  \right]_{[0,1]},
  \qquad \delta=0.03,
\end{equation}
so that $R_t^\odot\approx0$ under clear conditions and
$R_t^\odot\approx1$ under heavy cloud cover or curtailment. With the finite
boundary buffer $\eps=0.01$, define
\begin{equation}\label{eq:solar-latent}
\begin{aligned}
  U_t^\odot&=[R_t^\odot]_{[\eps,1-\eps]},\\
  X_t^{\mathrm S}&=\logit(U_t^\odot)
  =\Lambda^{\mathrm S_\odot}(t)+Y_t^{\mathrm S_\odot},\\
  Y_t^{\mathrm S_\odot}
  &=X_t^{\mathrm S}-\Lambda^{\mathrm S_\odot}(t).
\end{aligned}
\end{equation}
The buffer bounds the latent coordinate by
$|X_t^{\mathrm S}|\le\log(99)$. In the non-clipped region its local
sensitivity satisfies
\begin{equation}\label{eq:solar-latent-sensitivity}
\begin{aligned}
 \left|\frac{\partial X_t^{\mathrm S}}{\partial C_t^{\mathrm S}}\right|
 &=\frac{1}{\max(C_t^{\mathrm{cs}},\delta)
 U_t^\odot(1-U_t^\odot)}
 \\[-0.15em]
 &\le\frac{1}{\delta\eps(1-\eps)}.
\end{aligned}
\end{equation}
and it is zero where clipping is active. Thus we map negligible
nighttime metering noise to the boundary state instead of treating
it as variation in cloud cover.

The deterministic latent level is estimated using a robust
calendar--Fourier regression. Put
$\vartheta_t=2\pi d_t/365.25$. With hour~0 and January as reference classes,
we take the following design:
\begin{align}\label{eq:solar-calendar-design}
 \bm x_t^\top\bm b
 ={}&b_0
 +\sum_{h=1}^{23}b_h^{\mathrm H}\1_{\{h_t=h\}}\nonumber\\
 &+\sum_{m=2}^{12}b_m^{\mathrm M}\1_{\{m_t=m\}}\nonumber\\
 &+\sum_{k=1}^{3}
   \{a_k\sin(k\vartheta_t)+c_k\cos(k\vartheta_t)\}\nonumber\\
 &+\sum_{h=1}^{23}\sum_{k=1}^{2}
   a_{h,k}\1_{\{h_t=h\}}\sin(k\vartheta_t)\nonumber\\
 &+\sum_{h=1}^{23}\sum_{k=1}^{2}
   c_{h,k}\1_{\{h_t=h\}}\cos(k\vartheta_t).
\end{align}
The hour--Fourier interactions allow the annual daylight cycle to vary across
the intraday profile without introducing discontinuous day-of-year cells. The
coefficient vector solves
\begin{equation}\label{eq:solar-robust-seasonality}
\begin{aligned}
 \mathcal J(\bm b)
 &:={}
 \sum_{t\in\mathcal C}w_t\rho_{1.345}
   (X_t^{\mathrm S}-\bm x_t^\top\bm b)\\
 &\quad+\frac{\lambda_s}{N_*}\|D\bm X_*\bm b\|_2^2
 +\lambda_r\|\bm b_{-0}\|_2^2,\\
 \widehat{\bm b}&:=\argmin_{\bm b}\mathcal J(\bm b).
\end{aligned}
\end{equation}
where $\bm X_*$ is the chronological design grid, $N_*$ is the number of its
first differences, $D$ is the first-difference operator,
$\lambda_s=50$, $\lambda_r=10^{-5}$, and
\[
 \rho_c(u)=
 \begin{cases}
  u^2/2,&|u|\le c,\\
  c|u|-c^2/2,&|u|>c
 \end{cases}
\]
is the Huber loss. The visibility weight is
\begin{equation}\label{eq:solar-visibility-weights}
\begin{aligned}
  w_t&=w_0+(1-w_0)
  \min\{C_t^{\mathrm{cs}}/\delta,1\},\\
  w_0&=0.35.
\end{aligned}
\end{equation}
The problem is solved by iteratively reweighted least squares and the fitted
intercept is recentered so that the calibration residual has zero arithmetic
mean. The construction keeps low-visibility observations in the
seasonality fit but downweights them when estimating weather-driven
variation. Finally,
$\Lambda^{\mathrm S_\odot}(t)=\bm x_t^\top\widehat{\bm b}$. All quantities
defined in \crefrange{eq:solar-local-envelope}{eq:solar-visibility-weights} are
estimated on 2023--2024 only and projected onto 2025, so the panels below
are fully out-of-sample projections.

\Cref{fig:emp-spot-seasonality} displays the price decomposition: the
seasonal level $\Lambda^S$ tracks the daily price level through both
calibration years and, when projected out of sample, preserves the estimated level
and calendar pattern through December 2025 without re-estimation. The lower panel
separates the continuous residual $Y^S_t=S_t-\Lambda^S_t-J_t$ from the spike
state $J_t$; the spike filter assigns the December 2024
\emph{Dunkelflaute} episode and the 2025 negative-price afternoons mainly to $J$,
leaving a mean-reverting continuous residual.
\Cref{fig:emp-renewable-seasonality} shows the renewable decompositions: the
wind penetration $\Pi^{\mathrm W}_t$ against its seasonal logistic level with
the logit residual $Y^{\mathrm W}_t$, and the realized solar capacity factor
against the clear-sky envelope and the seasonal level in physical units, with
the bounded latent cloud-risk residual $Y^{\mathrm S_\odot}_t$.

\begin{figure*}[p]
\centering
\begin{subfigure}[t]{.96\textwidth}
\centering
\includegraphics[width=\linewidth,height=.36\textheight,keepaspectratio]{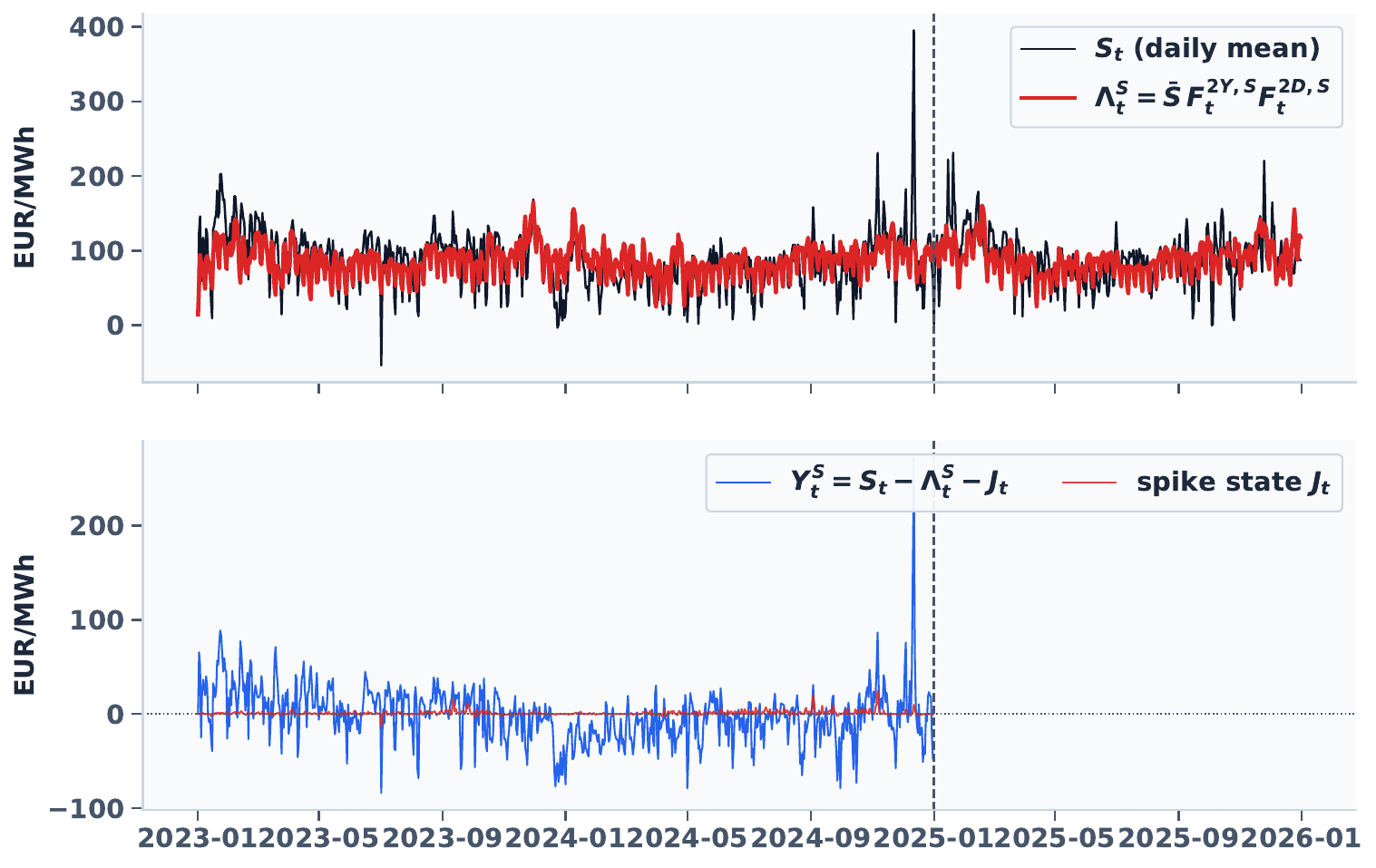}
\caption{Spot-price seasonal level, continuous residual and spike component.}
\end{subfigure}\\[0.4em]
\begin{subfigure}[t]{.96\textwidth}
\centering
\includegraphics[width=\linewidth,height=.36\textheight,keepaspectratio]{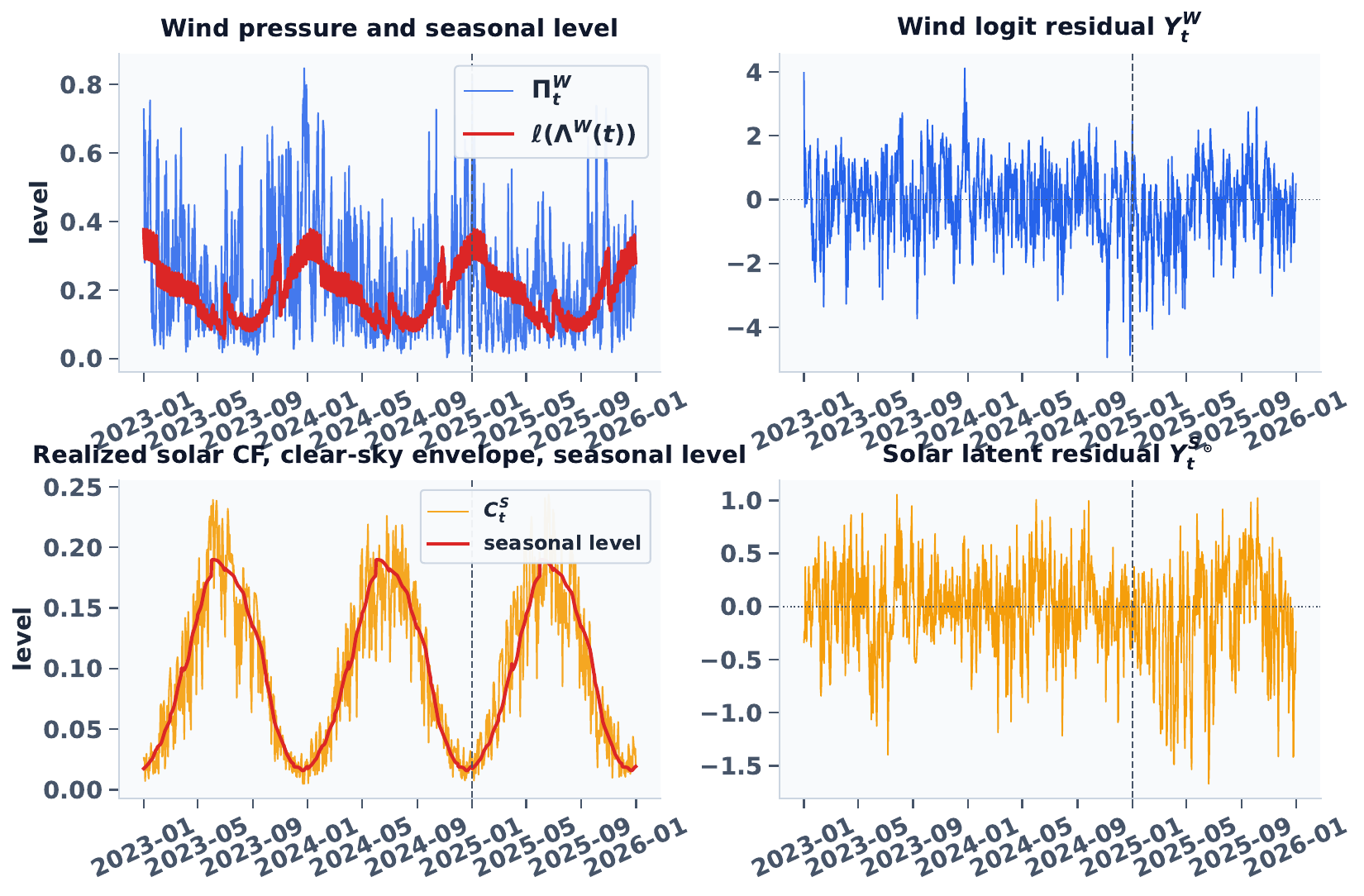}
\caption{Wind-penetration and solar clear-sky state constructions with their
deseasonalized latent residuals.}
\end{subfigure}
\caption{Seasonal decomposition and renewable state construction. The dashed
vertical lines separate the 2023--2024 calibration window from the 2025
out-of-sample projection.}
\label{fig:emp-spot-seasonality}
\label{fig:emp-renewable-seasonality}
\end{figure*}

\subsubsection{Serial dependence and spectral structure of the states}
\label{subsec:emp-acf}

Removing the seasonal layers does not leave white noise: the deseasonalized
states remain serially dependent, and their autocorrelation patterns guide
the order and spectral specification of the continuous-time model. For a
state $x$ with $n$ hourly observations and sample mean $\bar x$, the
empirical autocorrelation is
\begin{equation}\label{eq:emp-acf}
  \widehat\rho(k)
  =\frac{\frac{1}{n-k}\sum_{t=1}^{n-k}(x_t-\bar x)(x_{t+k}-\bar x)}
        {\frac{1}{n}\sum_{t=1}^{n}(x_t-\bar x)^2},
\end{equation}
computed on the 2023--2024 calibration states for lags
$k=1,\dots,336$ hours. \Cref{fig:emp-state-acf} shows $\widehat\rho$ over
these two weeks of lags. The three states display three
distinct dependence shapes. The wind residual $Y^{\mathrm W}$ decays
smoothly but on two clearly separated time scales: a fast component that
decays within a day and a slower synoptic component that remains visible
after a week, with only mild calendar oscillations. The spot
residual $Y^S$ combines a slowly decaying envelope with pronounced
oscillations at multiples of $24$ hours and a secondary peak around the weekly
lag of 168 hours. The solar residual $Y^{\mathrm S_\odot}$ combines a fast-decaying
component with pronounced positive autocorrelation at multiples of 24 hours. The
bounded low-irradiance observation map prevents nighttime zero output from being
treated as an extreme weather shock; the remaining daily autocorrelation reflects
persistence in cloud-cover regimes across successive daylight periods.
This persistence of calendar-frequency dependence after deterministic
deseasonalization and short-memory filtering is consistent with
\citet{DeschatreVeraart2018}. In their spot-price application, significant
residual autocorrelation remains at lags $24$ and $48$ hours after an
AR(24) model is fitted to the deseasonalized and filtered series; in their
wind application, significant residual autocorrelation remains at lag $24$
hours after an autoregressive continuous-time model is fitted to
deseasonalized logit wind penetration. It is also consistent with
\citet{kloster2025ambit}, who remove a 90-day trend and quarterly and annual
seasonal components from an hourly price panel and obtain a fitted dependence
kernel with slow temporal decay and a dependence pattern linking adjacent within-day
hours and corresponding hours across midnight. These studies support the narrower
claim that calendar-scale dependence can remain after seasonal adjustment and
short-memory filtering; they do not identify the four frequencies or the
MCARMA order used below.

We summarize the frequency content by a descriptive spectral diagnostic. Let
$\widehat f(\nu)$ be the Welch estimate of the power spectral density of a
standardized state, computed from $K$ Hann-tapered segments with 50\% overlap.
For a calendar period $T$, define the peak-to-local-background ratio
\begin{equation}\label{eq:emp-peak-ratio}
  R_T
  =\frac{\displaystyle
    \max_{|\nu-1/T|\le\delta}\widehat f(\nu)}
  {\displaystyle
    \operatorname*{median}_{\nu\in\mathcal B_T}\widehat f(\nu)},
\end{equation}
where the peak window has half-width $\delta$ and $\mathcal B_T$ is a
surrounding frequency band from which the peak window is removed. A period enters the
descriptive candidate set when
\begin{equation}\label{eq:emp-peak-screen}
  R_T>\tau_R,\qquad \tau_R=2.3,\qquad
  \frac{2\pi}{T}<\frac{\pi}{h},
\end{equation}
where the last inequality is the Nyquist condition for sampling step $h$ and
$\tau_R$ is a fixed, non-inferential ratio cutoff used uniformly across
states and frequencies.

\Cref{fig:emp-state-psd} displays the Welch spectra and
\cref{tab:emp-harmonics} reports the ratios for $K=16$ segments. All four
periods $T\in\{24,12,168,8\}$ hours exceed the descriptive cutoff for each
state and hence enter the candidate set. The largest ratios occur at the
intraday frequencies ($R_8=8.7$ for wind, $R_{12}=6.8$ for the spot
residual and $R_8=9.9$ for solar), while the weekly ratios range from
$R_{168}=3.8$ to $4.2$. These values quantify local spectral prominence;
they are not significance statistics.

The ACF and PSD diagnostics \cref{fig:emp-state-acf,fig:emp-state-psd} show dependence that cannot be reproduced
simultaneously by the scalar, single-rate Ornstein--Uhlenbeck
autocorrelation $\rho(k)=\exp(-\kappa kh)$. They motivate,
but do not uniquely identify, a higher-order continuous-time specification
combining several relaxation scales with damped oscillator pairs. The
screening-and-selection rule retains the daily and
half-daily periods under a two-pair model-order cap; the weekly and eight-hour
peaks remain reported diagnostics rather than additional fitted oscillator
pairs. The screening, model-order selection and fitting procedure is documented in the \suppmat.

\begin{table}[t]
\caption{Descriptive peak-to-local-background ratios
\eqref{eq:emp-peak-ratio} at the calendar periods
$T\in\{24,12,168,8\}$ hours for the three deseasonalized states, 2023--2024
($K=16$ Welch segments). The threshold $\tau_R=2.3$ is the fixed
non-inferential screen in \eqref{eq:emp-peak-screen}; a checkmark means that
the period enters the candidate set, not that a hypothesis is accepted.}
\label{tab:emp-harmonics}
\resizebox{\columnwidth}{!}{
\begin{tabular}{lrrrl}
\toprule
State & Period $T$ [h] & $R_T$ & Threshold & Included \\
\midrule
$Y^{W}$ & 24 & 2.7 & 2.3 & $\checkmark$ \\
$Y^{W}$ & 12 & 6.6 & 2.3 & $\checkmark$ \\
$Y^{W}$ & 168 & 3.8 & 2.3 & $\checkmark$ \\
$Y^{W}$ & 8 & 8.6 & 2.3 & $\checkmark$ \\
$Y^{S}$ & 24 & 4.3 & 2.3 & $\checkmark$ \\
$Y^{S}$ & 12 & 6.8 & 2.3 & $\checkmark$ \\
$Y^{S}$ & 168 & 4.0 & 2.3 & $\checkmark$ \\
$Y^{S}$ & 8 & 4.9 & 2.3 & $\checkmark$ \\
$Y^{S_\odot}$ & 24 & 9.0 & 2.3 & $\checkmark$ \\
$Y^{S_\odot}$ & 12 & 10.1 & 2.3 & $\checkmark$ \\
$Y^{S_\odot}$ & 168 & 4.1 & 2.3 & $\checkmark$ \\
$Y^{S_\odot}$ & 8 & 40.0 & 2.3 & $\checkmark$ \\
\bottomrule
\end{tabular}
}
\end{table}

\begin{figure*}[p]
\centering
\begin{subfigure}[t]{.96\textwidth}
\centering
\includegraphics[width=\linewidth,height=.34\textheight,keepaspectratio]{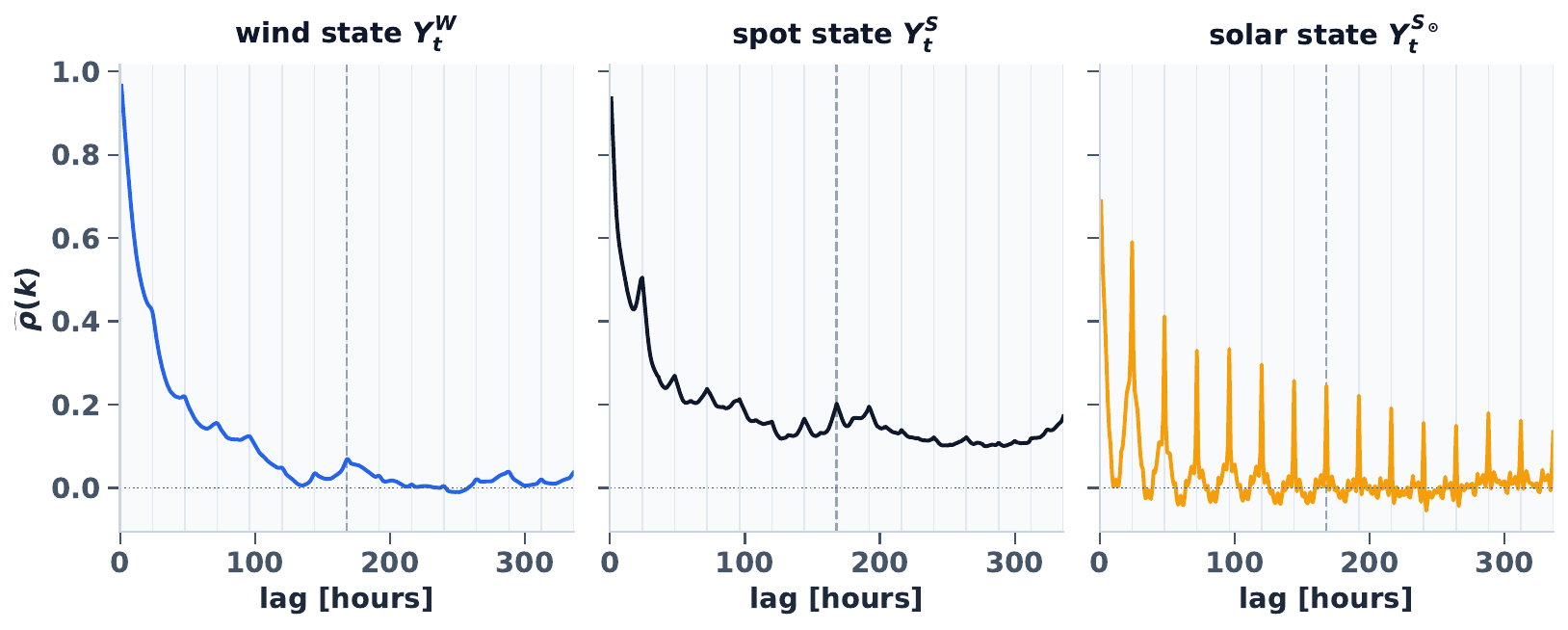}
\caption{Empirical autocorrelations over two weeks of hourly lags.}
\end{subfigure}\\[0.4em]
\begin{subfigure}[t]{.96\textwidth}
\centering
\includegraphics[width=\linewidth,height=.34\textheight,keepaspectratio]{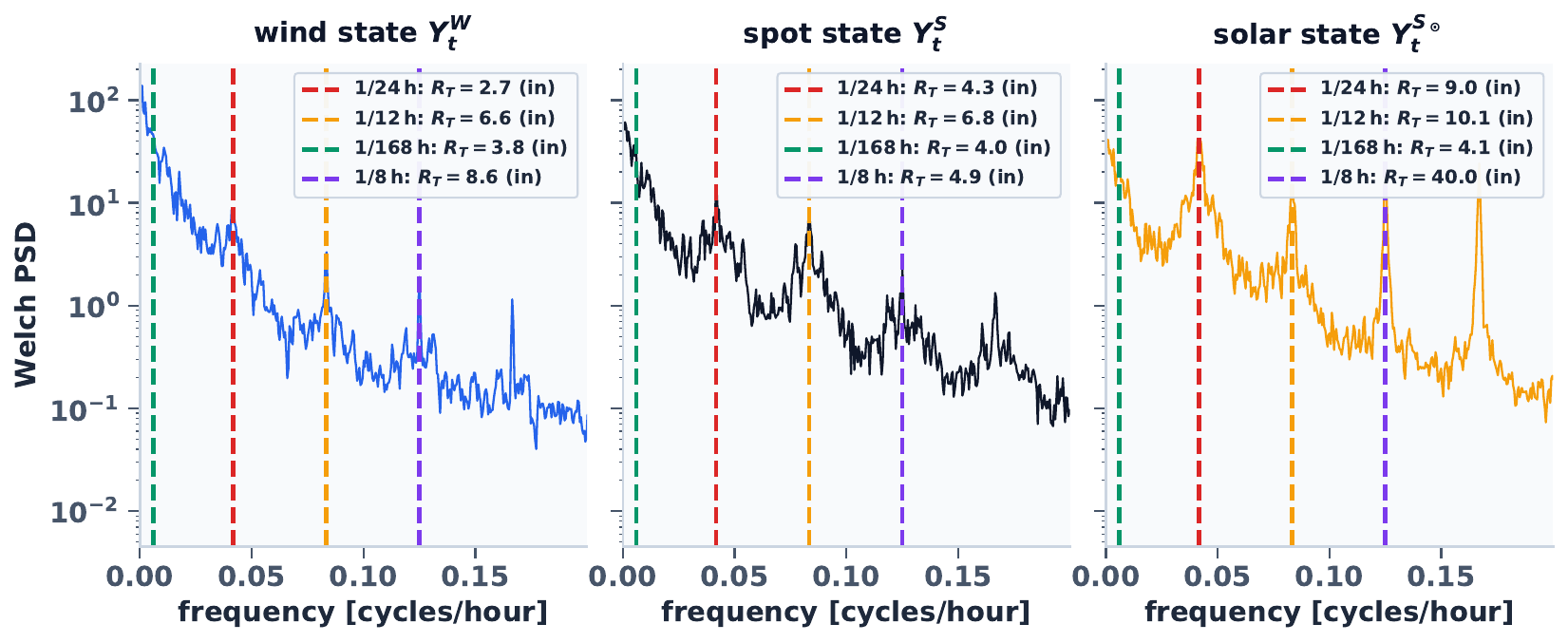}
\caption{Welch power spectral densities and calendar-frequency prominence
ratios.}
\end{subfigure}
\caption{Serial dependence and spectral structure of the deseasonalized wind,
spot and solar states. Thin markers identify the principal daily and weekly
calendar frequencies.}
\label{fig:emp-state-acf}
\label{fig:emp-state-psd}
\end{figure*}

\subsubsection{The isolated renewable price effect}\label{subsec:emp-wind-effect}

The covariance correction $\Delta^{\rm cov}_r$ in
\eqref{eq:covariance-correction} depends on the joint variation of
production and price, so its empirical counterpart must be measured after
removing the deterministic calendar and demand structure that both series
share. For each calendar year we estimate both an univariate regression and a controlled
regression of the daily load-weighted price $P_d=\sum_{t\in d}S_tL_t/\sum_{t\in d}L_t$ on the daily wind
capacity factor. The controlled (isolated) slope is the
Frisch--Waugh--Lovell partial regression coefficient
\begin{equation}\label{eq:iso-slope}
  \beta^{\rm iso}
  =\frac{\langle M_ZC^{\mathrm W},M_ZP\rangle}
        {\langle M_ZC^{\mathrm W},M_ZC^{\mathrm W}\rangle},
  \quad
  M_Z=I-Z(Z^\top Z)^{-1}Z^\top,
\end{equation}
where the control design $Z$ collects an intercept, the solar capacity
factor, load, a holiday indicator, annual sine/cosine terms and weekday and
month dummies; $\rho^{\rm iso}=\operatorname{corr}(M_ZC^{\mathrm W},M_ZP)$ is
the controlled correlation. Swapping the roles of wind and solar gives the
isolated solar effect. \Cref{fig:emp-wind-effect} and
\cref{tab:emp-wind-effect} show that the controls materially change both the wind and solar estimates.
For wind, the isolated slope is uniformly \emph{steeper} than the raw slope
outside the crisis year (for 2023--2025, $\beta^{\rm iso}$ between $-1.52$ and
$-1.71$~EUR/MWh per percentage point of capacity factor against raw slopes
near $-1.0$), and the controlled correlation reaches $-0.90$ in 2023 and
$-0.87$ in 2025: almost the entire non-seasonal variation of the daily price
level is conditionally associated with wind once demand and season are removed. For solar, \cref{fig:emp-solar-effect} uses the
same regression with wind and solar interchanged: the daily
load-weighted price is regressed on the daily \emph{solar} capacity factor
(the same realized series $C^{\mathrm S}$ on which the solar state
\eqref{eq:solar-latent} is built), with the wind capacity factor taking solar's place in the
control design $Z$ of \eqref{eq:iso-slope}, so that
$\beta^{\rm iso}_{\mathrm S}
=\langle M_ZC^{\mathrm S},M_ZP\rangle/
 \langle M_ZC^{\mathrm S},M_ZC^{\mathrm S}\rangle$.
Controlling matters far more here than for wind, because solar output is
dominated by its deterministic annual cycle: the raw slope conflates the
seasonal price level with the solar effect and even changes sign in 2022
($+4.5$, when the crisis price peak coincided with the high-solar season),
while the isolated slope is negative in every year ($-1.6$ to $-5.3$, with
$-1.9$ in 2023--2024 and $-2.4$ in 2025), indicating a negative association between solar output and the daily
price level after controlling for seasonality. The controlled correlations
($-0.13$ to $-0.42$) are much weaker than for wind, consistent with a daily
price level driven primarily by wind and demand while the solar effect
is concentrated within the day, as shown by the hourly profile term
$\Delta^{\rm prof}$ and the price-bucket analysis
in \cref{subsec:emp-buckets}. \Cref{fig:emp-rolling-corr} shows the same dependence at the
hourly scale: the 60-day rolling correlation of $S_t$ with the wind capacity
factor is persistently negative, and replacing $C^{\mathrm W}_t$ by the
load-adjusted pressure $\Pi^{\mathrm W}_t$ strengthens the full-sample hourly
correlation from $-0.32$ to $-0.36$ and stabilizes it across regimes. This stronger and
more stable correlation motivates the use of $\Pi_t^{\mathrm W}$ as the wind coordinate
of the joint model. The hourly spot--solar correlation is weaker ($-0.16$) but becomes markedly
more negative in the 2024--2025 springs and summers.

\begin{table}[t]
\caption{Raw versus isolated (controlled) renewable price effects by calendar
year: OLS slopes of the daily load-weighted price on the daily capacity factor
(EUR/MWh per percentage point) and the corresponding raw and controlled
correlations, per \eqref{eq:iso-slope}.}
\label{tab:emp-wind-effect}
\resizebox{\columnwidth}{!}{
\begin{tabular}{lrrrrrr}
\toprule
Technology & Year & Days & Raw slope & Isolated slope & Raw corr. & Controlled corr. \\
\midrule
Wind & 2021 & 365 & -1.11 & -1.71 & -0.26 & -0.67 \\
Wind & 2022 & 365 & -4.34 & -3.79 & -0.63 & -0.73 \\
Wind & 2023 & 365 & -1.07 & -1.55 & -0.56 & -0.90 \\
Wind & 2024 & 366 & -1.00 & -1.52 & -0.45 & -0.74 \\
Wind & 2025 & 365 & -1.07 & -1.71 & -0.47 & -0.87 \\
Solar & 2021 & 365 & -4.49 & -1.62 & -0.39 & -0.13 \\
Solar & 2022 & 365 & 4.53 & -5.35 & 0.23 & -0.21 \\
Solar & 2023 & 365 & -0.88 & -1.91 & -0.16 & -0.34 \\
Solar & 2024 & 366 & -1.69 & -1.93 & -0.27 & -0.20 \\
Solar & 2025 & 365 & -2.30 & -2.42 & -0.43 & -0.41 \\
\bottomrule
\end{tabular}
}
\end{table}

\begin{figure*}[p]
\centering
\begin{subfigure}[t]{.49\textwidth}
\centering
\includegraphics[width=\linewidth,height=.30\textheight,keepaspectratio]{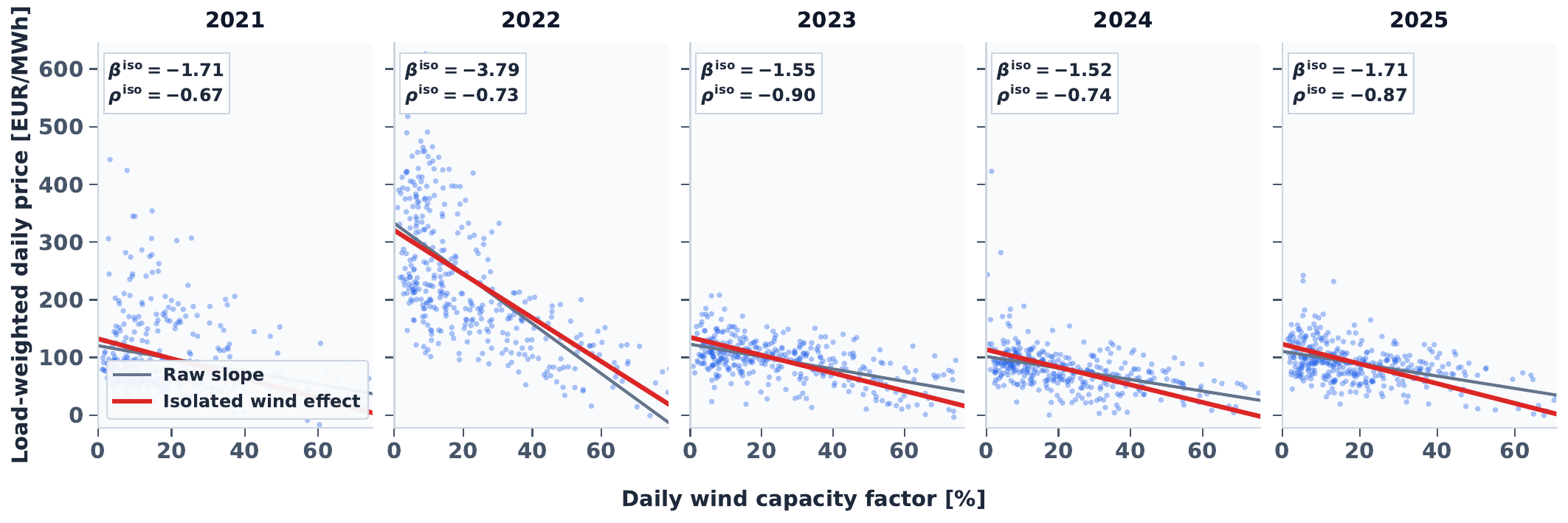}
\caption{Raw and isolated wind effects.}
\end{subfigure}\hfill
\begin{subfigure}[t]{.49\textwidth}
\centering
\includegraphics[width=\linewidth,height=.30\textheight,keepaspectratio]{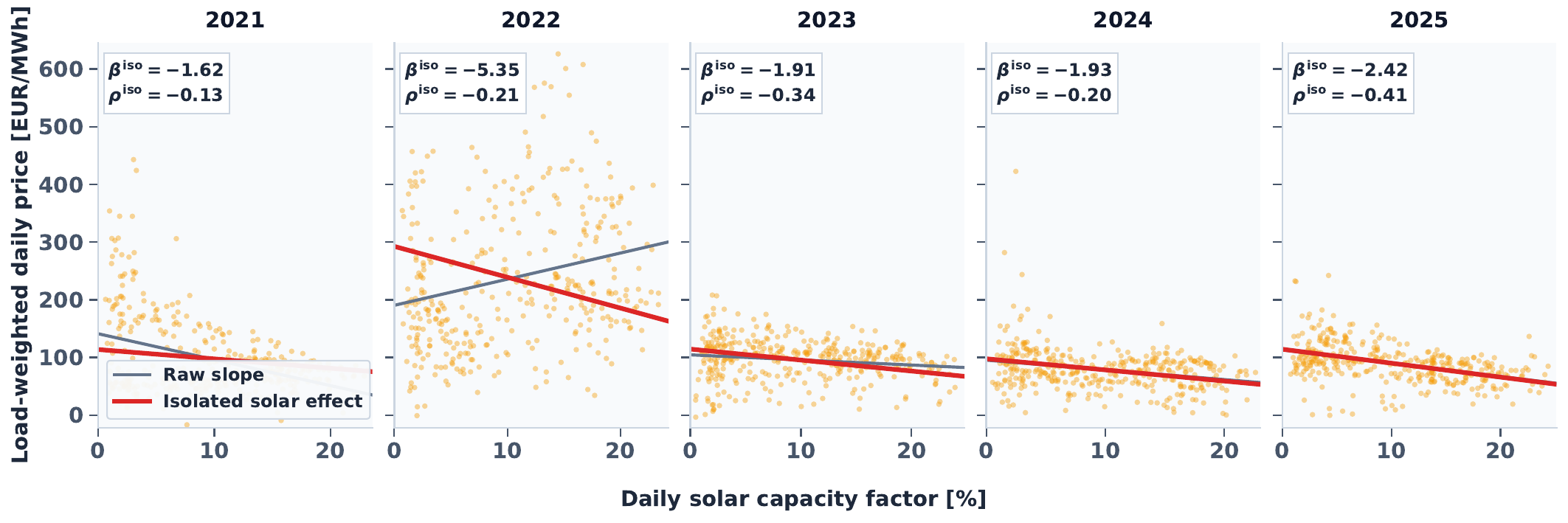}
\caption{Raw and isolated solar effects.}
\end{subfigure}\\[0.4em]
\begin{subfigure}[t]{.92\textwidth}
\centering
\includegraphics[width=\linewidth,height=.31\textheight,keepaspectratio]{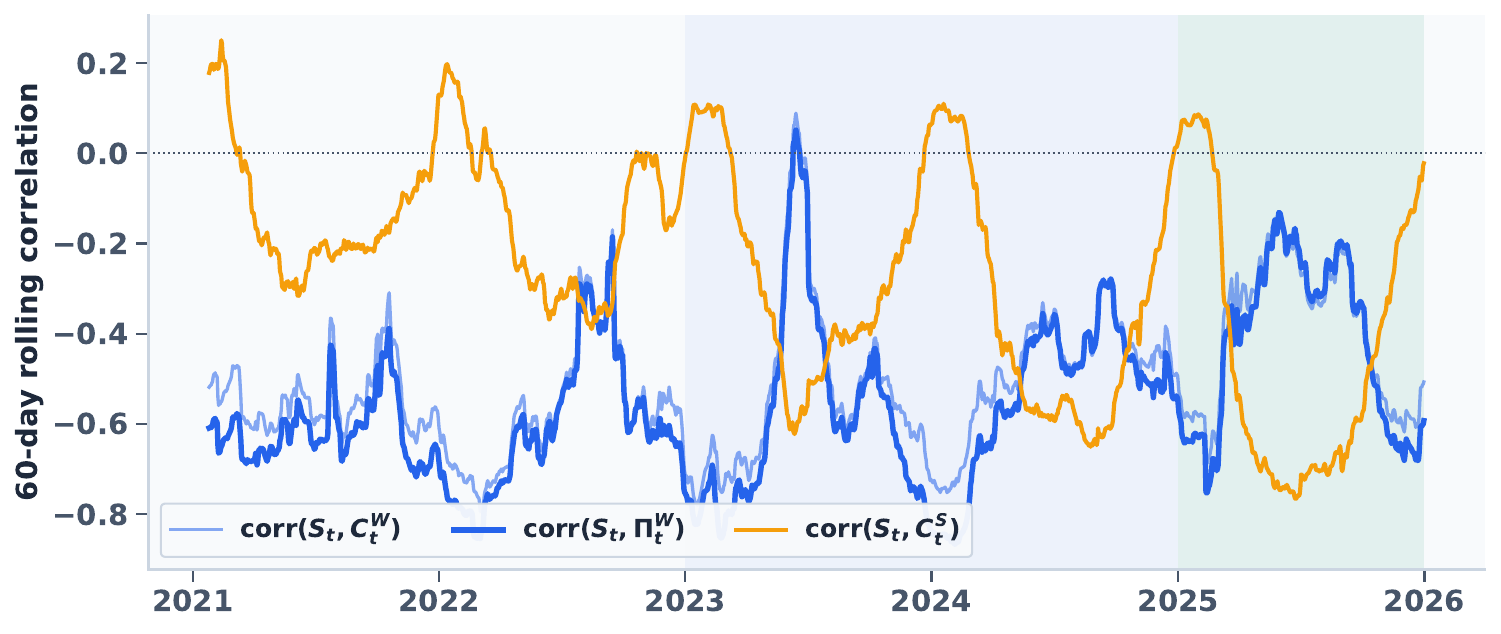}
\caption{Sixty-day rolling hourly spot--renewable correlations.}
\end{subfigure}
\caption{Isolated renewable price effects. The annual regressions control for
the other technology, load, holidays and calendar effects; the rolling panel
compares raw wind, load-adjusted wind pressure and solar output.}
\label{fig:emp-wind-effect}
\label{fig:emp-solar-effect}
\label{fig:emp-rolling-corr}
\end{figure*}

\subsubsection{Capture prices and capture factors}\label{subsec:emp-capture}

For technology $a\in\{\mathrm{wind},\mathrm{solar}\}$ with per-MW output
$q^a_t=C^a_t$ (installed capacity cancels), define for a period $P$ the
generation-weighted capture price, the baseload and demand-weighted market
references, and the value (capture) factor
\begin{equation}\label{eq:emp-capture}
\begin{aligned}
  CP^a_P&=\frac{\sum_{t\in P}q^a_tS_t}{\sum_{t\in P}q^a_t},
  &BL_P&=\frac{1}{|P|}\sum_{t\in P}S_t,\\
  DW_P&=\frac{\sum_{t\in P}L_tS_t}{\sum_{t\in P}L_t},
  &VF^a_P&=\frac{CP^a_P}{DW_P}.
\end{aligned}
\end{equation}
$CP^a_P$ is the realized counterpart of the fair strike
\eqref{eq:fair-strike-raw}, $BL_P$ of the flat reference $\bar F$ in
\eqref{eq:flat-and-volume-weighted}, and the wedge $CP^a_P-BL_P$ is the realized analogue
of the profile-plus-covariance correction
$\Delta^{\rm prof}_a+\Delta^{\rm cov}_a$ of \eqref{eq:flat-profile-cov}.
\Cref{fig:emp-capture-prices} plots monthly capture prices against the
demand-weighted market average and \cref{tab:emp-annual-capture} reports the
annual panel. Wind capture factors are low but comparatively stable
($0.68$ in 2022, $0.77$--$0.78$ in 2021 and 2023--2024, and $0.81$ in 2025); solar capture factors
fall after the crisis, from $0.96$ in 2022 to $0.75$ in
2023, $0.60$ in 2024 and $0.51$ in 2025---implying that the solar capture
price was only about half of the demand-weighted market price in 2025. \Cref{fig:emp-value-factors}
shows the monthly capture factors and their spread: the solar value factor exceeds the wind
value factor in months adjacent to winter and is far below it in summer. Across all 60
complete months the solar--wind value-factor correlation is $-0.25$, and the
spread $VF^{\mathrm S_\odot}-VF^{\mathrm W}$ has mean $0.03$ with standard
deviation $0.24$; solar lies below wind in $36.7\%$ of months. In 2025 alone
the spread averages $-0.16$ and solar lies below wind in seven of twelve
months. This imperfect co-movement creates potential diversification benefits across technologies.
Wind and solar PPAs are exposed to different capture-price
drivers. Combining the two technologies therefore
diversifies price-volume exposure across
meteorological and intraday conditions. The same mechanism underlies the growing
share of multi-technology (portfolio) PPAs in the European market: since wind
and solar generation profiles are imperfectly correlated, blending them
attenuates the intermittency of any single project and gives the offtaker a
generation profile whose timing and volume more closely match consumption, thereby
reducing shape, balancing, and volume risk relative to either technology in
isolation \cite{GABRIELLI2022105980}.

\begin{table}[t]
\caption{Annual realized capture panel for complete calendar years
\eqref{eq:emp-capture}: full-load hours per MW, capture price, baseload and
demand-weighted references (EUR/MWh), value factor and merchant revenue per MW.}
\label{tab:emp-annual-capture}
\resizebox{\columnwidth}{!}{
\begin{tabular}{rlrrrrrr}
\toprule
Year & Technology & FLH [h/MW] & Capture price & Baseload & Demand-wtd & Value factor & Revenue [kEUR/MW] \\
\midrule
2021 & solar & 833.33 & 76.09 & 96.85 & 100.20 & 0.76 & 63.41 \\
2021 & wind & 1733.41 & 78.28 & 96.85 & 100.20 & 0.78 & 135.69 \\
2022 & solar & 905.54 & 226.55 & 235.46 & 237.09 & 0.96 & 205.15 \\
2022 & wind & 1921.95 & 161.18 & 235.46 & 237.09 & 0.68 & 309.78 \\
2023 & solar & 855.62 & 73.65 & 95.19 & 98.11 & 0.75 & 63.01 \\
2023 & wind & 2204.96 & 76.17 & 95.19 & 98.11 & 0.78 & 167.96 \\
2024 & solar & 853.20 & 48.53 & 78.51 & 80.79 & 0.60 & 41.40 \\
2024 & wind & 1997.18 & 62.33 & 78.51 & 80.79 & 0.77 & 124.49 \\
2025 & solar & 892.35 & 47.40 & 89.33 & 92.20 & 0.51 & 42.30 \\
2025 & wind & 1713.24 & 74.64 & 89.33 & 92.20 & 0.81 & 127.87 \\
\bottomrule
\end{tabular}
}
\end{table}

\begin{figure*}[p]
\centering
\begin{subfigure}[t]{.96\textwidth}
\centering
\includegraphics[width=\linewidth,height=.34\textheight,keepaspectratio]{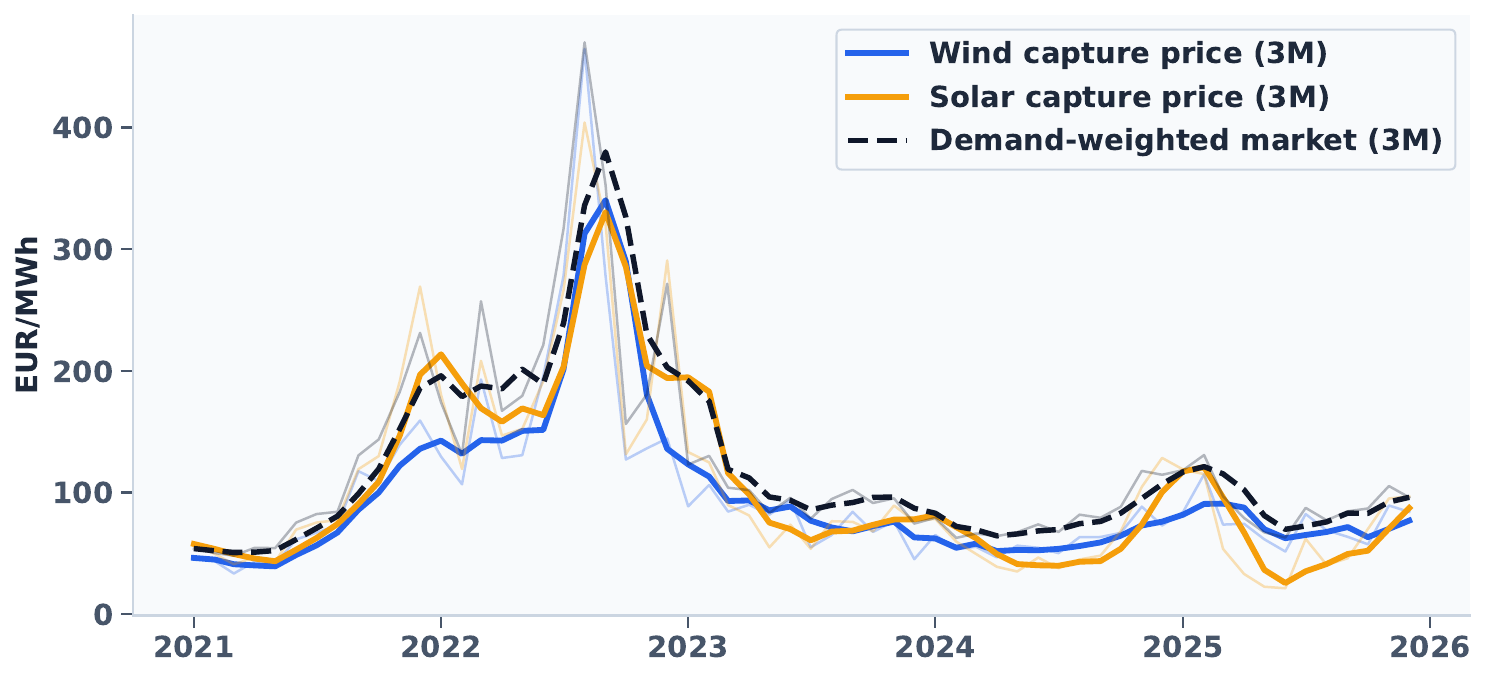}
\caption{Monthly generation-weighted capture prices against the
demand-weighted market average.}
\end{subfigure}\\[0.4em]
\begin{subfigure}[t]{.96\textwidth}
\centering
\includegraphics[width=\linewidth,height=.34\textheight,keepaspectratio]{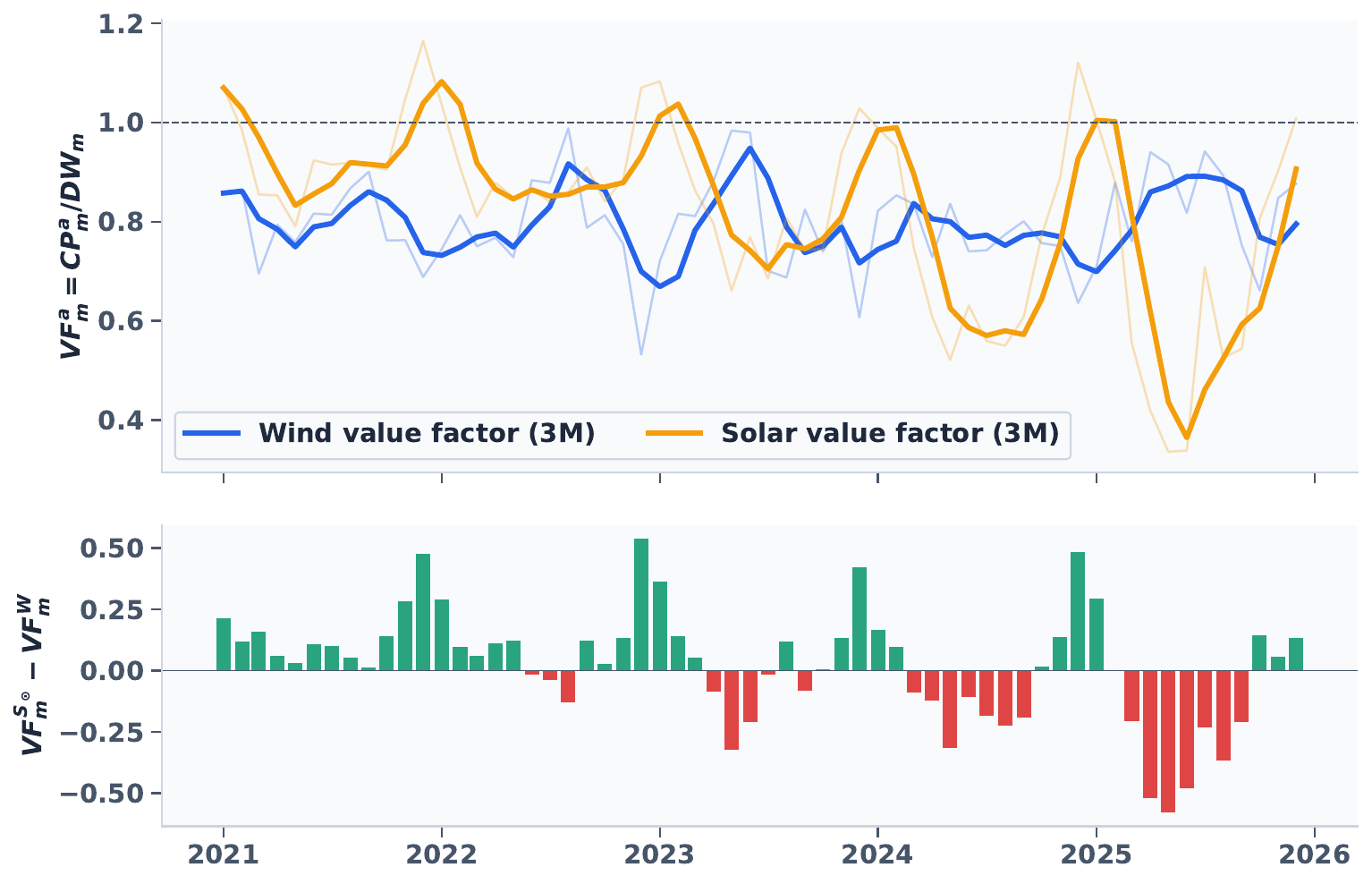}
\caption{Monthly wind and solar value factors and their cross-technology
spread.}
\end{subfigure}
\caption{Capture prices and value factors for German wind and solar
generation. Thick curves report three-month rolling means.}
\label{fig:emp-capture-prices}
\label{fig:emp-value-factors}
\end{figure*}

\subsubsection{Concentration of renewable output across price buckets}\label{subsec:emp-buckets}

Capture factors fall when a larger share of renewable output is produced during low-price hours. With 50~EUR/MWh
spot-price buckets $b$, the technology and demand shares and the lean are
\begin{equation}\label{eq:emp-lean}
  \pi^a_b=\frac{\sum_{t:S_t\in b}q^a_t}{\sum_tq^a_t},\quad
  \pi^L_b=\frac{\sum_{t:S_t\in b}L_t}{\sum_tL_t},\quad
  \ell^a_b=\pi^a_b-\pi^L_b ,
\end{equation}

and the low-price lean aggregates $\ell^a$ over the bottom-quartile price
hours of each year. \Cref{fig:emp-price-buckets} shows that in 2025 a much larger
share of solar generation than of demand occurs in the lowest-price quartile;
the solar low-price lean rises from $+1.8$ percentage points in 2022 to
$+36.2$ in 2025, while the wind lean \emph{falls} from its 2022 peak of
$+25.5$ to $+12.5$ because, in the weak-wind year, a smaller share of wind output occurs
during low-price hours. The share of solar MWh delivered into strictly negative prices
reaches $24.5\%$ in 2025 ($9.3\%$ for wind). In the language of
\cref{prop:ppa-decomp}, $\ell^a$ is an empirical indicator of
$\Gamma^a$: a positive low-price lean indicates negative realized covariance between
production and price, and its growth over time explains why a constant
(profile-only) capture discount would misprice the 2025 window.

\subsubsection{State-dependent negative jump intensity}\label{subsec:emp-jumps}

Negative price spikes occur more frequently when renewable
penetration is high. To quantify this state dependence, we define the
negative-spike indicator
$\mathbf 1\{\Delta S_t<0,\ S_t\le q_{0.20}(S)\}$ identified by the spike
filter of \cref{subsec:emp-seasonality} and estimate, for each candidate
driver $x_t\in\{C_t,\Pi_t\}$, the annualized Epanechnikov kernel intensity
\begin{equation}\label{eq:emp-kernel-intensity}
  \widehat\lambda^-(x)
  =\frac{\sum_tK_h(x_t-x)\,\mathbf 1\{\text{event at }t\}}
        {\sum_tK_h(x_t-x)\,\Delta t},
\end{equation}
summarized by the monotone two-state fit
$\lambda^-(x)=\lambda^-_{\rm low}\mathbf 1\{x\le x^\star\}
+\lambda^-_{\rm high}\mathbf 1\{x>x^\star\}$.
\Cref{fig:emp-jump-intensity} and \cref{tab:emp-jump-intensity} support state dependence
and show that load adjustment improves the separation of the two regimes. Against the raw
wind capacity factor the fitted intensity rises from $35$ to $84$ events per
year across the threshold; against the load-adjusted pressure
$\Pi^{\mathrm W}$ the separation widens to $41$ versus $140$ per year (a
factor $3.4$) with threshold $x^\star=0.60$, and the kernel curve is monotone
over the whole support instead of flattening at high capacity factors. For
solar, we use the realized capacity factor $C^{\mathrm S}_t$ itself as the
driver on which the solar state of \cref{subsec:emp-seasonality} is built:
its intensity rises from $34$ to $122$ events per year across the threshold
$x^\star=0.35$, while the load-adjusted variant is nearly indistinguishable
($37$ versus $128$, threshold $0.38$). Because the load adjustment
changes the solar fit only marginally, the model uses
$C^{\mathrm S}$ directly. For wind, the monotone increase in negative-spike intensity
is consistent with the wind-penetration mechanism documented by
\citet{DeschatreVeraart2018}. The solar relation is a separate empirical
extension supported by the estimates in the present sample, rather than a
result attributed to that study. Together, these findings motivate the
technology-specific state-dependent negative-spike intensities adopted in
\cref{subsec:spikes}.

\begin{table}[t]
\caption{Two-state negative jump intensity fits (low-price filtered events;
79 events on 2023--2024): low and high intensities, driver threshold
$x^\star$, and the constant positive intensity, per
\eqref{eq:emp-kernel-intensity}. A checkmark marks the driver adopted for
the state-dependent spike intensity of the model (load-adjusted $\Pi$ for
wind; the realized capacity factor $C$ for solar).}
\label{tab:emp-jump-intensity}
\resizebox{\columnwidth}{!}{
\begin{tabular}{lllrrrrr}
\toprule
Panel & Driver & Model & Events & $\lambda^-_{\rm low}$ [yr$^{-1}$] & $\lambda^-_{\rm high}$ [yr$^{-1}$] & Threshold $x^\star$ & $\lambda^+$ [yr$^{-1}$] \\
\midrule
Wind & realized $C$ &  & 79 & 35.17 & 84.46 & 0.44 & 200.25 \\
Wind & load-adjusted $\Pi$ & $\checkmark$ & 79 & 41.35 & 140.20 & 0.60 & 200.25 \\
Solar & realized $C$ & $\checkmark$ & 79 & 34.36 & 122.38 & 0.35 & 200.25 \\
Solar & load-adjusted $\Pi$ &  & 79 & 36.66 & 128.27 & 0.38 & 200.25 \\
\bottomrule
\end{tabular}
}
\end{table}

\begin{figure*}[p]
\centering
\begin{subfigure}[t]{.96\textwidth}
\centering
\includegraphics[width=\linewidth,height=.33\textheight,keepaspectratio]{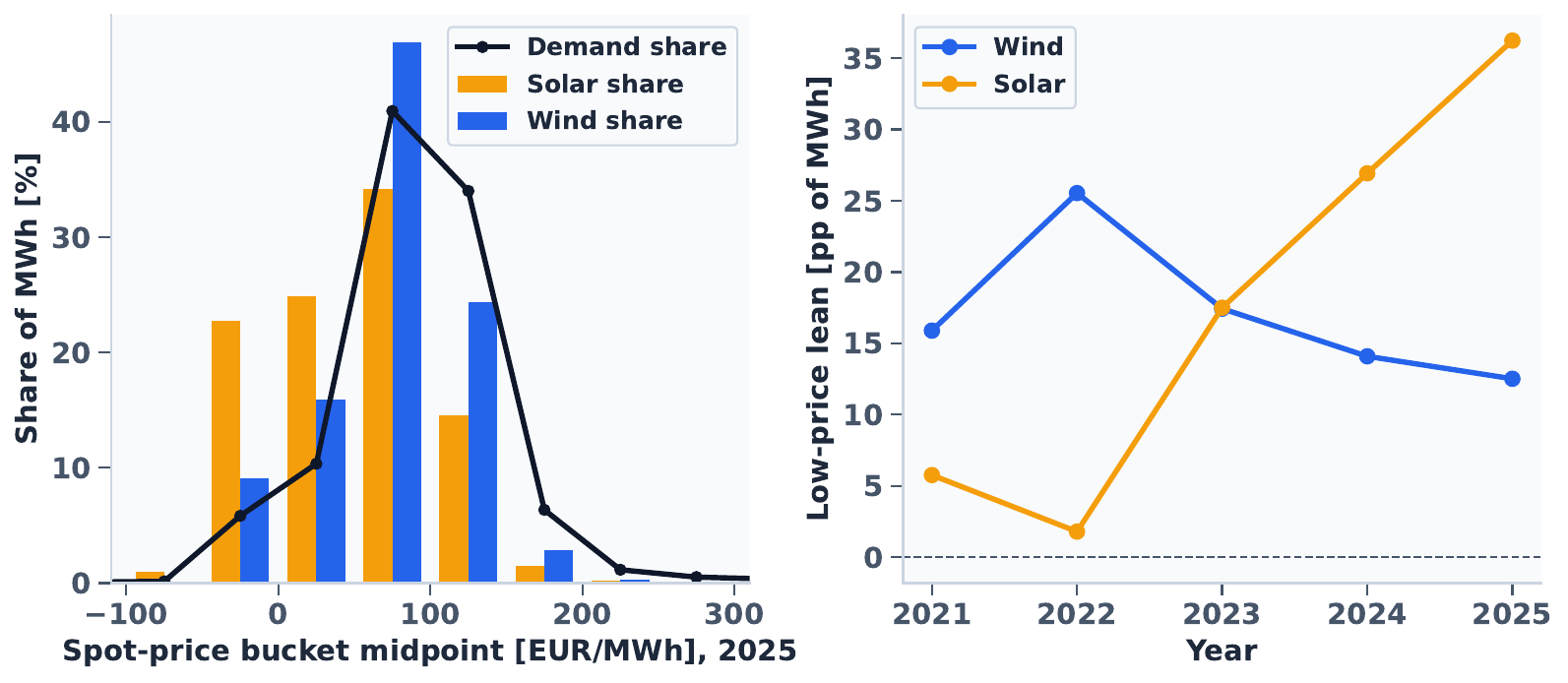}
\caption{Renewable production shares across price buckets and the annual
low-price lean relative to demand.}
\end{subfigure}\\[0.4em]
\begin{subfigure}[t]{.92\textwidth}
\centering
\includegraphics[width=\linewidth,height=.33\textheight,keepaspectratio]{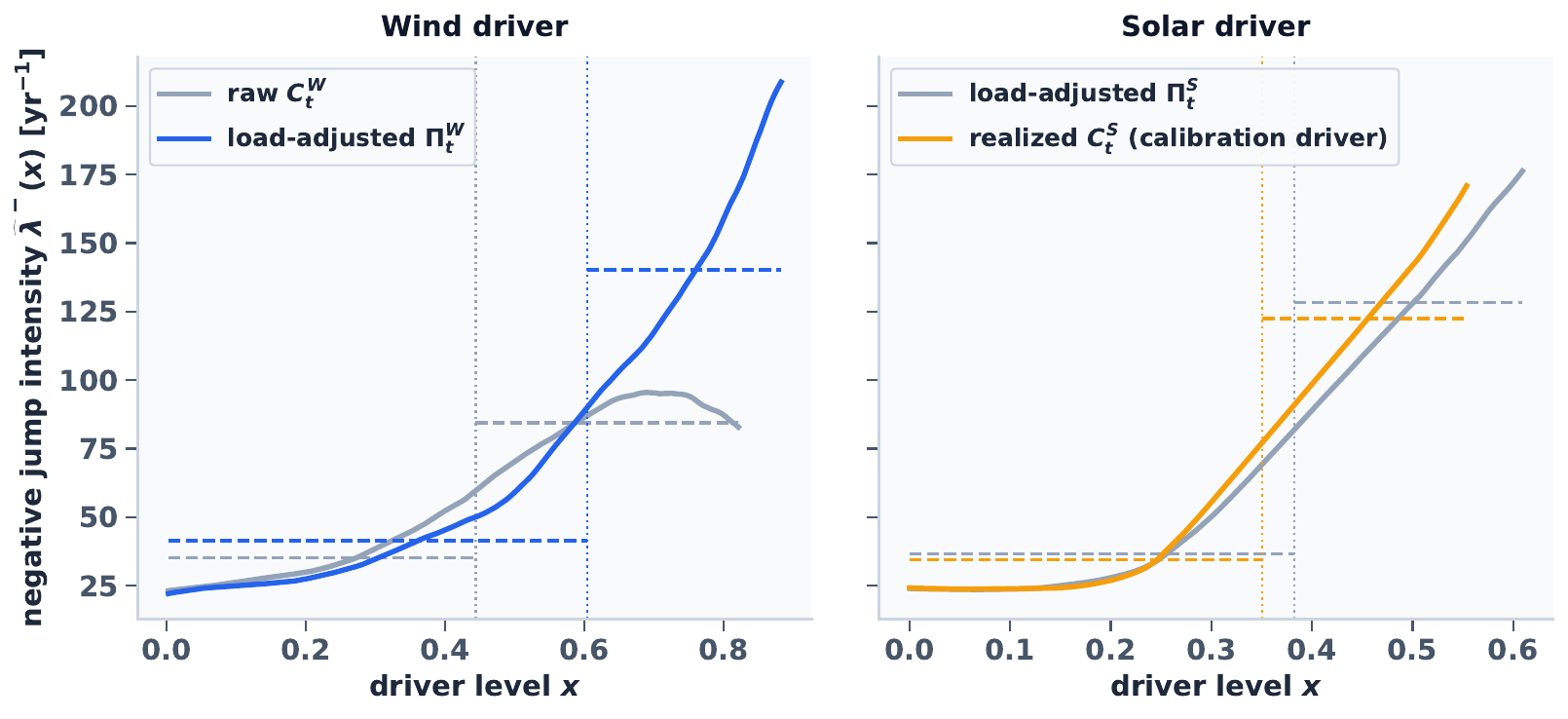}
\caption{State-dependent negative-price jump intensities for wind and solar.}
\end{subfigure}
\caption{Renewable-output concentration in low-price hours and the corresponding
negative-spike intensity. Dashed lines in the lower panel show the fitted two-state jump
intensities and thresholds.}
\label{fig:emp-price-buckets}
\label{fig:emp-jump-intensity}
\end{figure*}

\subsubsection{Spot, futures and observed PPA prices}\label{subsec:emp-forward}

Finally, we compare realized spot prices and capture prices with observed
exchange-traded futures settlements and quoted PPA indices. The comparison is
descriptive because the contracts differ in delivery horizon, volume profile,
credit exposure, and quotation date. For each delivery month \(m\), let
\(T_m\) denote the first delivery hour and define
\[
q_m^- := \max\{q : q < T_m\},
\]
so that \(q_m^-\) is the final available quotation date strictly before
delivery. Let \(F_m^{B}\) and \(F_m^{P}\) denote, respectively, the EEX
baseload- and peakload-month settlement prices observed on \(q_m^-\). The
ex-post baseload futures premium is defined as
\begin{equation}
\label{eq:emp-risk-premium}
  RP_m^{B}
  :=
  F_m^{B} - \bar{S}_m^{\mathrm{obs}},
  \qquad
  \bar{S}_m^{\mathrm{obs}}
  :=
  \frac{1}{\lvert I_m\rvert}
  \sum_{i\in I_m} S_i^{\mathrm{obs}},
\end{equation}
where \(I_m\) is the set of delivery hours in month \(m\). This ex-post
quantity measures the difference between the final pre-delivery baseload-month
settlement and the subsequently realized monthly average spot price. It should
not be interpreted as a risk premium identified under a particular pricing
measure.

\Cref{fig:emp-risk-premia,tab:emp-risk-premia} report the corresponding
descriptive statistics. The mean value of \(RP_m^{B}\) is
\(14.8\)~EUR/MWh over 2021--2022, with a standard deviation of
\(51.1\)~EUR/MWh; \(4.1\)~EUR/MWh over the 2023--2024 calibration window,
with a standard deviation of \(15.3\)~EUR/MWh; and
\(-0.5\)~EUR/MWh over the eight 2025 delivery months for which the final
pre-delivery EEX quote is available, with a standard deviation of
\(4.5\)~EUR/MWh.

\Cref{fig:emp-ppa} provides a separate descriptive comparison between the
observed German PPA index and contemporaneous front-month and front-year
futures prices. Over the 2019--2024 sample, the quoted PPA index lies, on
average, \(52.7\)~EUR/MWh below the front-month futures price and
\(61.0\)~EUR/MWh below the front-year futures price. These spreads are
consistent with differences in delivery tenor, production profile,
price--volume covariance exposure, contract structure, and credit and
liquidity conditions.

\begin{table}[!t]
\caption{Ex-post baseload futures risk premium \eqref{eq:emp-risk-premium} by
sample window (EUR/MWh).}
\label{tab:emp-risk-premia}
\resizebox{\columnwidth}{!}{
\begin{tabular}{lrrrrrr}
\toprule
Sample & Months & Mean & Median & Std & MAE & RMSE \\
\midrule
Context & 24 & 14.82 & 1.82 & 51.05 & 35.18 & 52.13 \\
Out-of-sample 2025 & 8 & -0.47 & -0.37 & 4.53 & 3.14 & 4.27 \\
Calibration window & 24 & 4.07 & 3.67 & 15.31 & 12.24 & 15.53 \\
\bottomrule
\end{tabular}
}
\end{table}

\begin{figure*}[!t]
\centering
\includegraphics[width=.78\textwidth]{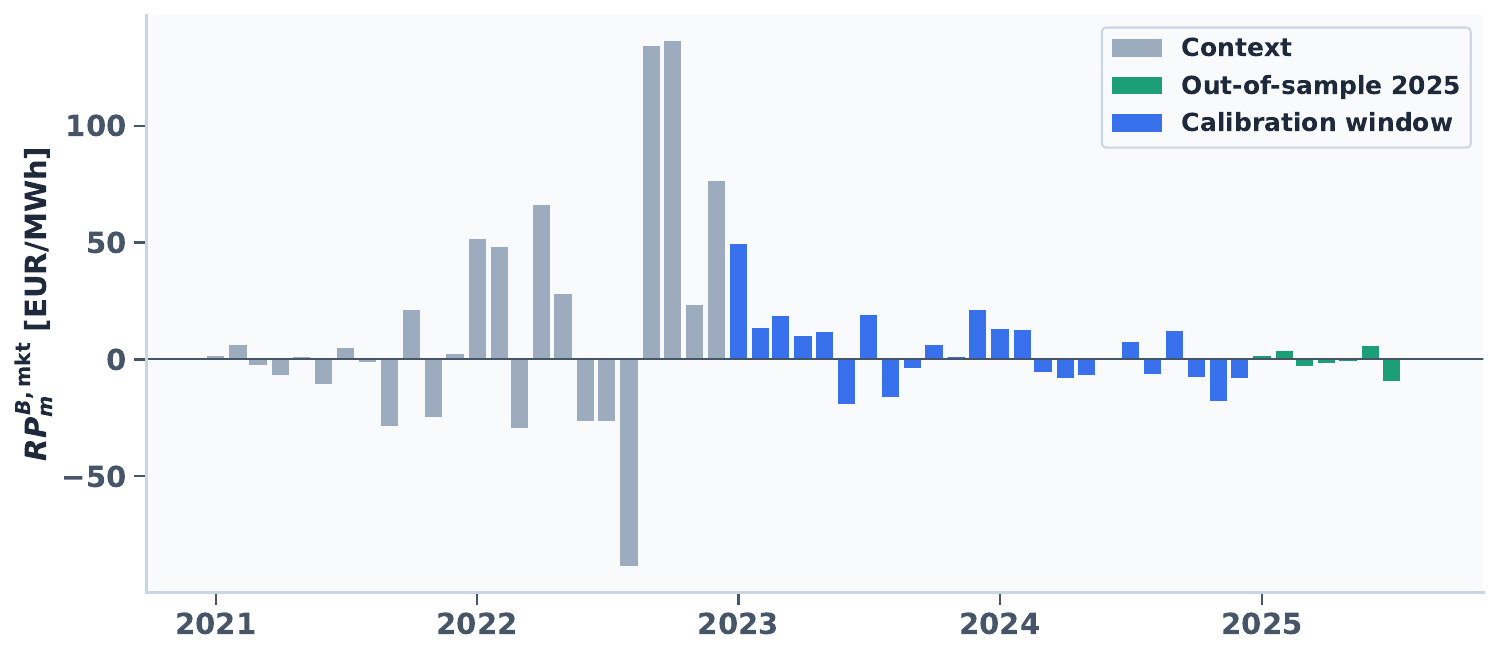}\\[0.6em]
\includegraphics[width=.78\textwidth]{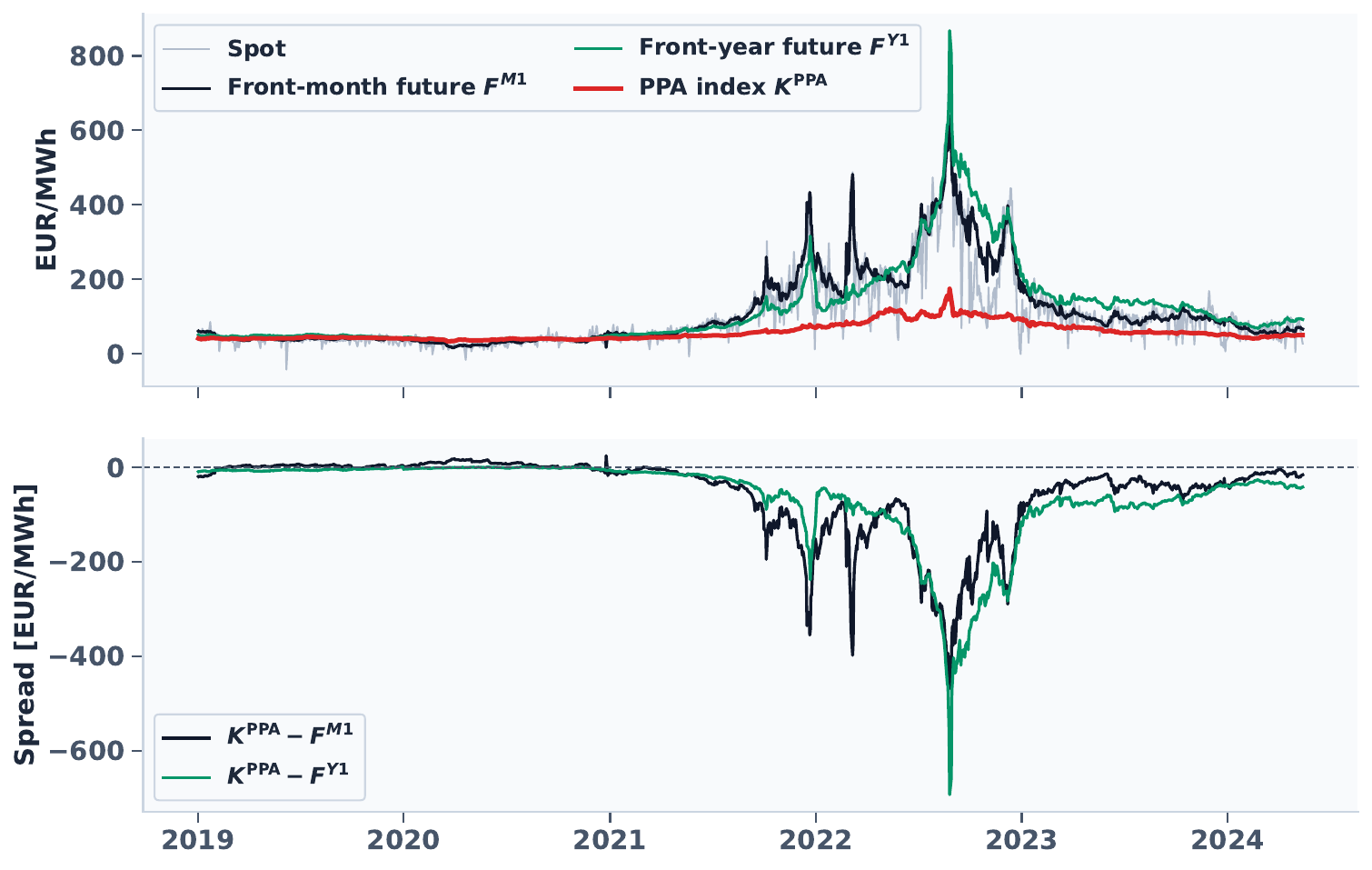}
\caption{Observed futures and PPA quotations compared with realized spot prices. Top panel:
monthly ex-post baseload futures risk premium \eqref{eq:emp-risk-premium},
last pre-delivery EEX base-month settlement minus realized monthly baseload
spot, coloured by sample window. Bottom panels: observed Germany PPA index
against spot and contemporaneous front-month and front-year futures, and the
diagnostics for the spread between the PPA index and futures prices. Data are obtained from
\citet{PENA2024101513} and sourced from Pexapark.}
\label{fig:emp-risk-premia}\label{fig:emp-ppa}
\end{figure*}

We decompose the quoted discount directly. For
technology $a$ with realized monthly capture price $CP^a_m$ and realized
baseload $BL_m=\bar S^{\rm obs}_m$ as in \eqref{eq:emp-capture}, the total
realized wedge between the tradable baseload future and the pay-as-produced
capture price splits exactly as
\begin{equation}\label{eq:emp-forward-wedge}
  F^{B}_m-CP^a_m
  =\underbrace{F^{B}_m-BL_m}_{\text{baseload risk premium}}
  +\underbrace{BL_m-CP^a_m}_{\text{realized capture discount}} ,
\end{equation}
and the realized capture discount itself splits exactly into empirical
counterparts of the profile and covariance corrections
\crefrange{eq:profile-correction}{eq:covariance-correction}. With
$\bar S_{m,h}$ and $\bar q^a_{m,h}$ the hour-of-day means of price and
production within month $m$, and
$CP^{a,\rm prof}_m
=\sum_h\bar q^a_{m,h}\bar S_{m,h}\big/\sum_h\bar q^a_{m,h}$
the capture price of the \emph{average} intraday production profile
evaluated against the \emph{average} intraday price profile,
\begin{equation}\label{eq:emp-profile-cov}
  CP^a_m-BL_m
  =\underbrace{CP^{a,\rm prof}_m-BL_m}_{\widehat\Delta^{a,\rm prof}_m}
  +\underbrace{CP^a_m-CP^{a,\rm prof}_m}_{\widehat\Delta^{a,\rm cov}_m} .
\end{equation}
The profile term measures how the average production profile
weights the average intraday price profile and can be hedged
with fixed-volume shape instruments. The covariance term captures
within-month price-production co-movement around hourly means and
cannot be hedged with fixed-volume instruments.

\begin{table}[!t]
\caption{Realized forward capture decomposition by technology and sample
window (monthly means, EUR/MWh): baseload risk premium $RP^{B}$, realized
capture discount $BL-CP$, its profile and covariance components
\eqref{eq:emp-profile-cov}, and the total wedge $F^{B}-CP$
\eqref{eq:emp-forward-wedge}. Complete delivery months with a pre-delivery
base-month settlement, 2021--2025.}
\label{tab:emp-forward-capture}
\resizebox{\columnwidth}{!}{
\begin{tabular}{llrrrrrr}
\toprule
Technology & Window & Months & $RP^{B}$ & $BL-CP$ & $\widehat\Delta^{\rm prof}$ & $\widehat\Delta^{\rm cov}$ & $F^{B}-CP$ \\
\midrule
Solar & calibration & 24 & 4.06 & 14.52 & -12.32 & -2.21 & 18.59 \\
Solar & context & 24 & 14.78 & 9.58 & -6.43 & -3.15 & 24.36 \\
Solar & out-of-sample & 12 & 1.01 & 24.94 & -21.57 & -3.37 & 25.96 \\
Wind & calibration & 24 & 4.06 & 17.14 & -0.45 & -16.69 & 21.20 \\
Wind & context & 24 & 14.78 & 30.20 & -0.01 & -30.19 & 44.98 \\
Wind & out-of-sample & 12 & 1.01 & 14.05 & 0.29 & -14.34 & 15.06 \\
\bottomrule
\end{tabular}
}
\end{table}

\begin{figure*}[p]
  \centering
  
  \begin{minipage}[c]{0.485\textwidth}
  \centering
  
  \begin{subfigure}[t]{\linewidth}
  \centering
  \includegraphics[
      width=\linewidth,
      height=.225\textheight,
      keepaspectratio
  ]{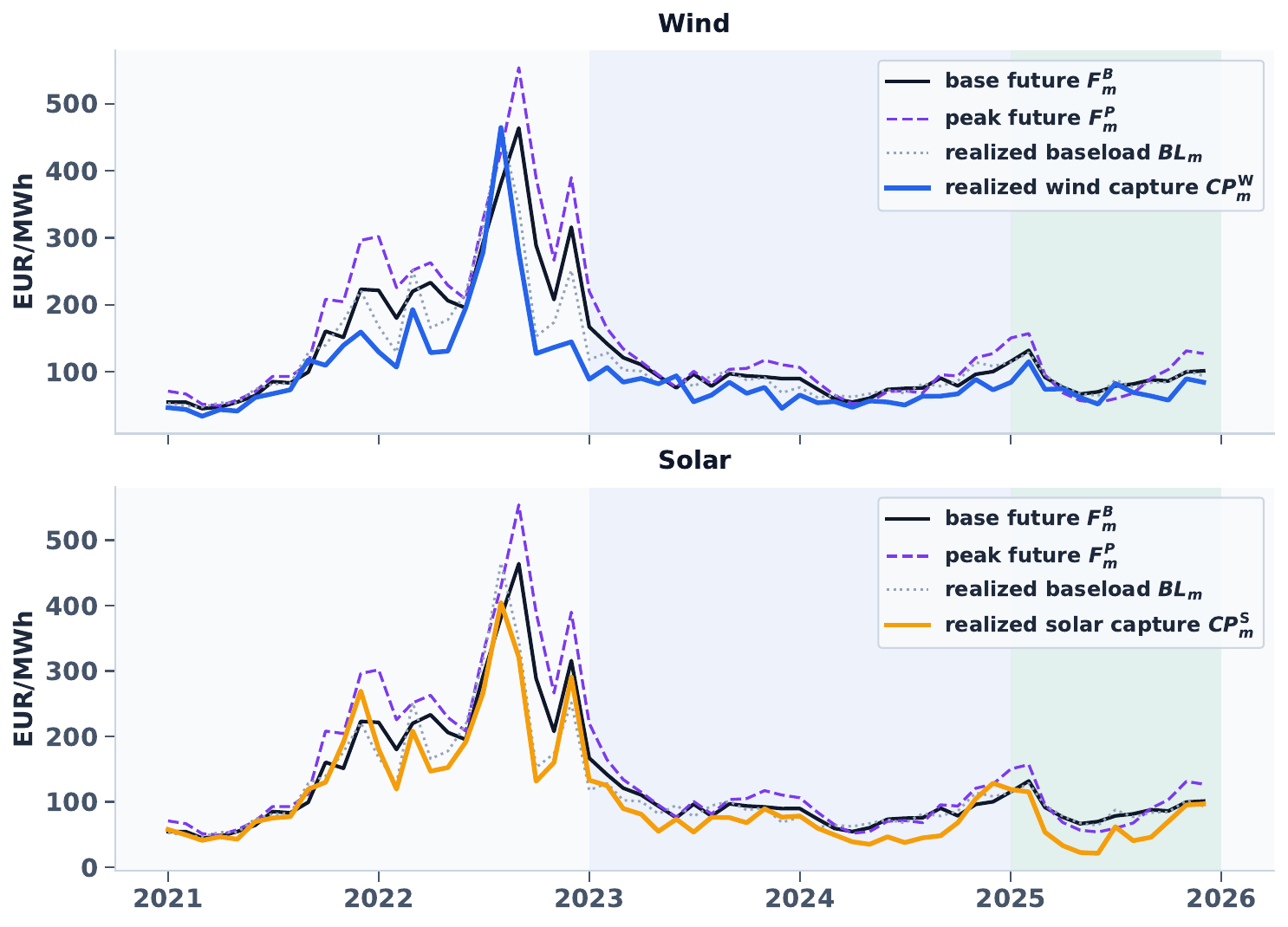}
  \caption{Pre-delivery base and peak futures, realized baseload and realized
  wind/solar capture prices.}
  \label{fig:emp-capture-futures}
  \end{subfigure}
  
  \vspace{0.8em}
  
  \begin{subfigure}[t]{\linewidth}
  \centering
  \includegraphics[
      width=\linewidth,
      height=.225\textheight,
      keepaspectratio
  ]{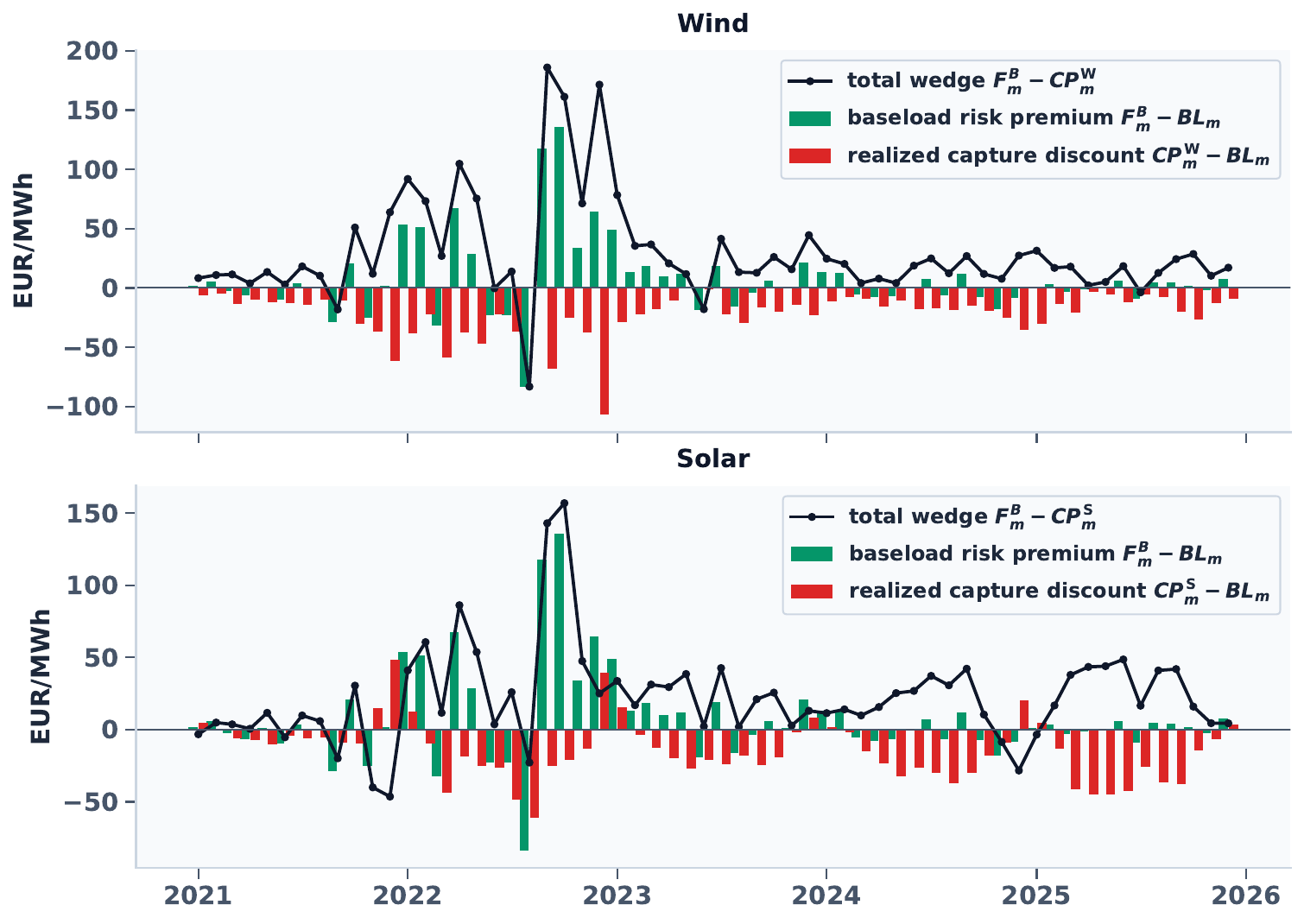}
  \caption{Baseload risk premium, realized capture discount and total
  futures--capture wedge.}
  \label{fig:emp-forward-decomp}
  \end{subfigure}
  
  \end{minipage}
  \hfill
  \begin{minipage}[c]{0.485\textwidth}
  \centering
  
  \begin{subfigure}[c]{\linewidth}
  \centering
  \includegraphics[
      width=\linewidth,
      height=.50\textheight,
      keepaspectratio
  ]{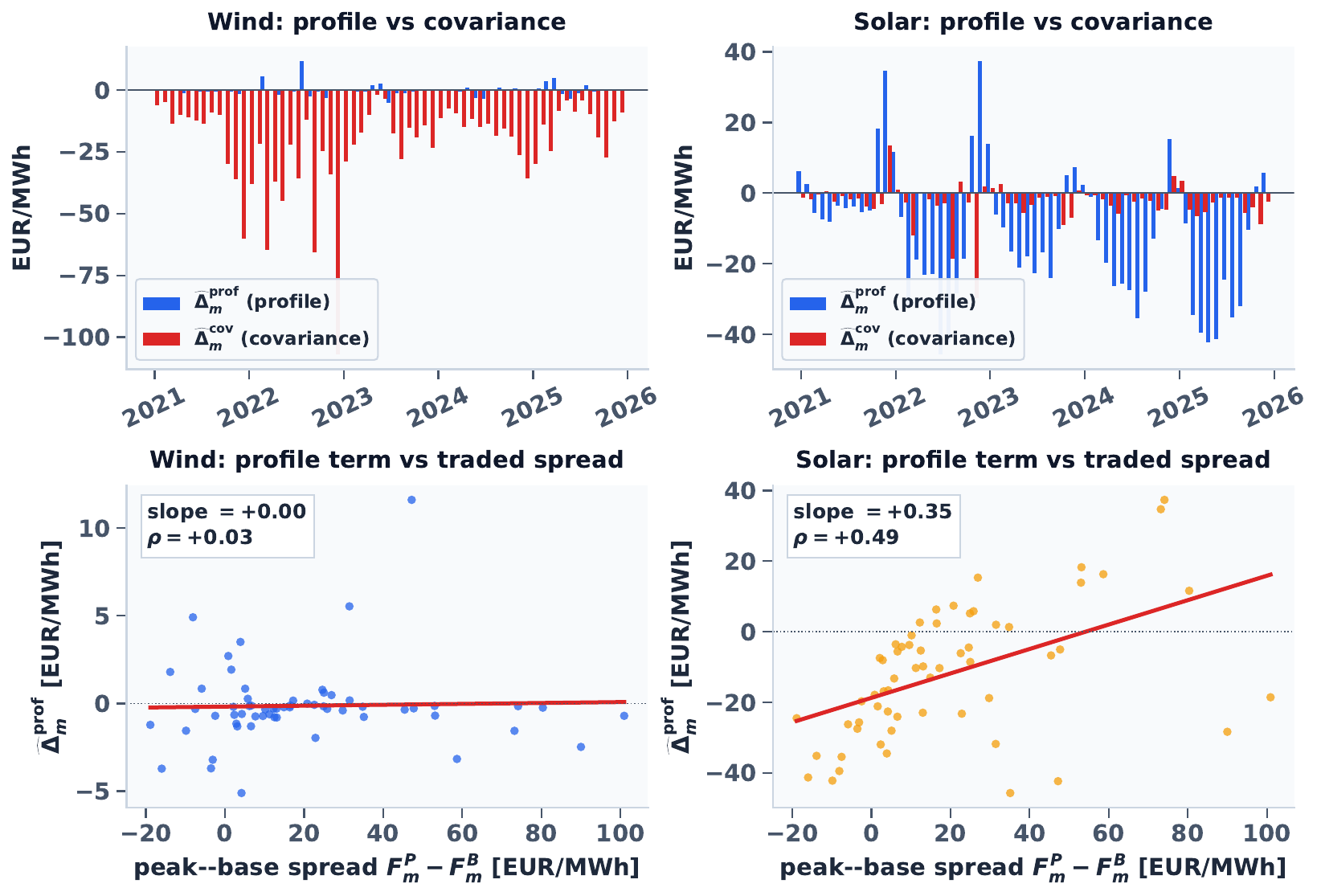}
  \caption{Profile/covariance split and the relation of the profile component to
  the traded peak--base spread.}
  \label{fig:emp-profile-cov}
  \end{subfigure}
  
  \end{minipage}
  
  \caption{Observed futures and the realized forward-capture decomposition for
  Germany, 2021--2025. The panels distinguish futures bias, deterministic
  delivery shape and stochastic price--production covariance.}
  \label{fig:emp-forward-capture-summary}
  \end{figure*}

\Cref{fig:emp-capture-futures} shows the four monthly price series over
2021--2025, \cref{fig:emp-forward-decomp} the decomposition
\eqref{eq:emp-forward-wedge}, and \cref{fig:emp-profile-cov} the split
\eqref{eq:emp-profile-cov}; \cref{tab:emp-forward-capture} summarizes by
sample window. Three findings emerge. First, the total wedge
$F^{B}_m-CP^a_m$ is large and persistent for both technologies ($+21.2$
(wind) and $+18.6$ (solar) EUR/MWh on average over the calibration window,
$+15.1$ and $+26.0$ over the twelve complete out-of-sample months), while the
mean baseload risk premium contributes only $+1.0$~EUR/MWh out of sample: the
wedge is almost entirely realized capture discount.
Second, the split \eqref{eq:emp-profile-cov} separates the two technologies
sharply. The wind discount is essentially pure covariance: the wind profile
term is economically zero in every window ($|\widehat\Delta^{\mathrm
W,\rm prof}|\le0.5$~EUR/MWh) while the covariance term carries the entire
discount ($-30.2$ in the crisis years, $-16.7$ in calibration, $-14.3$ out
of sample). For solar, the covariance term is small and stable ($-2.2$ to $-3.4$), while the profile term
dominates and deteriorates from $-6.4$ (2021--2022) to
$-12.3$ (calibration) and $-21.6$ (2025), quantifying the
growing midday profile discount in EUR/MWh documented in
\cref{subsec:emp-capture}. Third, the two components have different
hedgeability, and one component can already be hedged with an exchange-traded instrument:
the monthly solar profile term co-moves with the traded peak--base spread
$F^{P}_m-F^{B}_m$ with correlation $+0.49$ (regression slope $+0.35$ per
EUR/MWh of spread), whereas the wind profile term shows no such loading
($+0.03$ correlation). A peak--base futures position therefore spans a substantial part
of the solar \emph{profile} risk, while the dominant wind component
and the residual solar covariance depend on production-weighted
price-volume co-movement that fixed-volume futures cannot hedge. These results motivate dynamic
futures hedging for the spanned components and the semi-static production-linked portfolio
of \cref{subsec:semistatic} for the
remaining covariance risk.

The empirical findings motivate the model and hedging choices used in the remainder
of the paper. Pronounced but well-fitted seasonality
(\cref{subsec:emp-seasonality}) justifies deterministic seasonal layers with
stationary residuals. The multiscale autocorrelations and the large
peak-to-background ratios at the candidate calendar frequencies
(\cref{subsec:emp-acf}) motivate higher-order continuous-time dynamics
combining several relaxation scales with damped oscillator pairs. Under the
two-pair order cap imposed below, the daily and half-daily periods are included in the
two-oscillator specification, while the weekly and eight-hour peaks remain
descriptive diagnostics. The strong, load-dependent negative price--wind
dependence (\cref{subsec:emp-wind-effect}) requires a joint stochastic
state for price and renewable output, which \cref{sec:model} builds on the
coordinates introduced above. Diverging and dispersed capture factors with a
growing low-price lean (\cref{subsec:emp-capture,subsec:emp-buckets}) show that
the covariance correction $\Delta^{\rm cov}$ varies over time and is economically
material, so futures alone are insufficient: the theory of
\cref{sec:model-free-hedging,sec:single-period} therefore complements the
dynamic futures projection with a semi-static layer of production-linked
claims. The regime structure of negative spikes (\cref{subsec:emp-jumps})
calls for a state-dependent spike intensity (\cref{subsec:spikes}). Finally,
the realized forward capture decomposition (\cref{subsec:emp-forward}) shows
that the wedge between tradable baseload futures and pay-as-produced capture
prices is dominated by the capture discount rather than by futures bias, and
that its two components differ in hedgeability: the solar profile term loads
on the traded peak--base spread, while the wind discount and the residual
solar covariance arise from production-weighted price-volume co-movement
that fixed-volume futures cannot hedge. The residual risk of a PPA after
futures hedging is therefore concentrated in covariance components
targeted by the semi-static hedge in \cref{subsec:semistatic};
the resulting risk reduction is reported in
\cref{sec:empirical-design}.

\section{Semi-static variance-optimal hedging}\label{sec:model-free-hedging}
This section separates the PPA hedge into a dynamic projection onto the
active delivery-period futures market, followed by a static projection
of the orthogonal residual onto production-linked claims. The construction is
first stated for a general futures stack, then specialized to a single delivery
window and finally implemented with the time-varying active sets relevant for
power markets. Proofs and extended transform calculations are collected in the
\suppmat.
\subsection{Dynamic projection on liquid futures}\label{subsec:dynamic-projection}

Let $\bm X=(X^1,\ldots,X^d)^\top$ be the vector of discounted value processes of
all liquid futures available over the hedging horizon; components may be power
monthly, quarterly or yearly delivery futures and, in an extended market,
renewable-volume futures. Since contracts enter and leave the hedgeable set,
define the diagonal predictable availability matrix
\begin{equation}\label{eq:availability}
  \bm\Pi_t^{\rm act}
  =\diag\left(\1_{\{1\in\Act_t\}},\ldots,\1_{\{d\in\Act_t\}}\right),
\end{equation}
where $\Act_t$ is the active index set at time $t$. A predictable strategy
$\bm\theta_t\in\R^d$ generates terminal gain
\begin{equation}\label{eq:gain-process}
  G_T^\theta
  =\int_0^T\bm\theta_t^\top\bm\Pi_t^{\rm act}\dd\bm X_t ,
\end{equation}
and the admissible integrand space is
\begin{equation}\label{eq:theta-space}
  \Theta_X=
  \left\{\bm\theta:
  \E\left[\int_0^T
  \bm\theta_t^\top\bm\Pi_t^{\rm act}
  \dd\brac{\bm X}{\bm X}_t
  \bm\Pi_t^{\rm act}\bm\theta_t\right]<\infty\right\}.
\end{equation}
For $H\in L^2(\scrF_T,\Q)$, the dynamic variance-optimal problem is
\begin{equation}\label{eq:dynamic-problem}
  \epsilon_{\rm dyn}^2(H)
  =
  \inf_{\bm\theta\in\Theta_X}
  \E\left[(H-\E[H]-G_T^\theta)^2\right].
\end{equation}
Since $\bm X$ is a square-integrable $\Q$-martingale, mean-variance optimality
coincides with local risk minimization and the F\"ollmer--Schweizer
decomposition reduces to the GKW decomposition
\citep{Galchouk1976,KunitaWatanabe1967,FollmerSondermann1985}; the
semimartingale extension under the structure condition is given by
\citet{Schweizer1992,Schweizer1994}.

\begin{lemma}[Closedness of the active-futures gain space]\label{lem:gain-closedness}
Let $\bm X$ be an $\R^d$-valued square-integrable $\Q$-martingale on $[0,T]$ and
$\bm\Pi^{\rm act}$ a bounded predictable diagonal orthogonal projector. Then the
terminal gain space
$\scrG_X=\{\int_0^T\bm\theta_t^\top\bm\Pi_t^{\rm act}\dd\bm X_t:
\bm\theta\in\Theta_X\}$ is a closed subspace of $L^2(\scrF_T,\Q)$, and the
infimum in \eqref{eq:dynamic-problem} is attained.
\end{lemma}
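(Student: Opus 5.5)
The proof uses the Itô isometry for vector stochastic integrals with respect to the square-integrable martingale $\bm X$. Write $\bm\psi_t=\bm\Pi_t^{\rm act}\bm\theta_t$. Because $\bm\Pi^{\rm act}$ is a predictable diagonal orthogonal projector, $\bm\Pi^{\rm act}\bm\Pi^{\rm act}=\bm\Pi^{\rm act}$. Hence $\bm\theta^\top\bm\Pi^{\rm act}\dd\bm X=\bm\psi^\top\dd\bm X$, and the integrability condition in $\Theta_X$ reads
\[
\E\Bigl[\int_0^T\bm\psi_t^\top\dd\brac{\bm X}{\bm X}_t\bm\psi_t\Bigr]<\infty .
\]
So $\scrG_X$ is the image under $\bm\psi\mapsto\int_0^T\bm\psi^\top\dd\bm X$ of the set
\[
L^2_{\rm act}(\bm X)=\{\bm\psi\in L^2(\bm X):\bm\psi=\bm\Pi^{\rm act}\bm\psi\}.
\]
Here $L^2(\bm X)$ is the space of predictable integrands, modulo null integrands, with seminorm $\norm{\bm\psi}^2=\E[\int\bm\psi^\top\dd\brac{\bm X}{\bm X}\bm\psi]$. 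I use the vector (not componentwise) integral in the sense of Jacod, so that $L^2(\bm X)$ is complete. The Itô isometry $\E[(\int\bm\psi^\top\dd\bm X)^2]=\norm{\bm\psi}^2$ then makes this map a linear isometry from $L^2(\bm X)$ into $L^2(\scrF_T,\Q)$.

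Next I show that $L^2_{\rm act}(\bm X)$ is a closed linear subspace of $L^2(\bm X)$. Linearity is clear. For closedness, take $\bm\psi^n\to\bm\psi$ in $L^2(\bm X)$ with $\bm\psi^n=\bm\Pi^{\rm act}\bm\psi^n$. Then
\[
\norm{\bm\psi-\bm\Pi^{\rm act}\bm\psi}\le\norm{\bm\psi-\bm\psi^n}+\norm{\bm\Pi^{\rm act}(\bm\psi^n-\bm\psi)}.
\]
The key point is the bound $\norm{\bm\Pi^{\rm act}\bm\phi}\le C\norm{\bm\phi}$, which requires some control of how $\bm\Pi^{\rm act}$ interacts with $\dd\brac{\bm X}{\bm X}$.

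I expect this bound to be the main obstacle. $\bm\Pi^{\rm act}$ is an orthogonal projector in the Euclidean metric, but not necessarily in the metric $\dd\brac{\bm X}{\bm X}$, because the futures covariations are not diagonal. Two routes avoid this.

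The first route is to write $\dd\brac{\bm X}{\bm X}_t=\bm c_t\dd A_t$ with $A$ increasing and $\bm c_t$ positive semidefinite. I then identify $\bm\psi$ only through $\bm c^{1/2}\bm\psi$, since that determines the integral. Under this identification the admissible set corresponds to predictable processes valued in the random subspace $\bm c^{1/2}\,\mathrm{Im}(\bm\Pi^{\rm act})$, and $L^2$ limits of such processes stay in it: pass to an a.e. convergent subsequence and use that $\mathrm{Im}(\bm c_t^{1/2}\bm\Pi_t^{\rm act})$ is closed for each fixed $(\omega,t)$.

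The cleaner second route avoids integrands entirely. The set $\scrG_X$ equals the terminal values of the stable subspace generated by the martingale $\int\bm\Pi^{\rm act}\dd\bm X$. This process is square-integrable because $\bm\Pi^{\rm act}$ is bounded. Stable subspaces of $\mathcal H^2$ generated by a vector martingale via vector stochastic integration are closed; this is a standard result of Jacod, and also the closedness ingredient of the GKW decomposition already invoked above via \citet{Galchouk1976,KunitaWatanabe1967}. I would adopt this second route as the main argument and keep the first as a justification that the two descriptions coincide.

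Once $\scrG_X$ is closed, attainment follows from the projection theorem in the Hilbert space $L^2(\scrF_T,\Q)$. The element $H-\E[H]\in L^2$ has a unique orthogonal projection $G^\star=\int\bm\theta^{\star\top}\bm\Pi^{\rm act}\dd\bm X$ onto $\scrG_X$, and this projection minimizes $\E[(H-\E[H]-G)^2]$ over $G\in\scrG_X$. Any representing integrand $\bm\theta^\star\in\Theta_X$ therefore attains the infimum in \eqref{eq:dynamic-problem}. The integrand is unique only up to the null space of $\dd\brac{\bm X}{\bm X}$. This is exactly the GKW decomposition $H=\E[H]+G^\star+L_T$ with $L$ strongly orthogonal to $\int\bm\Pi^{\rm act}\dd\bm X$.
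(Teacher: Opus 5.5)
Your proof is correct and takes essentially the standard route one would expect in the supplement. Via the It\^o isometry for the vector stochastic integral, $\scrG_X$ is the isometric image of the complete integrand space $\Theta_X=L^2(\bm\Pi^{\rm act}\cdot\bm X)$ modulo null integrands, that is, the terminal values of the closed stable subspace generated by the square-integrable martingale $\bm\Pi^{\rm act}\cdot\bm X$; the Hilbert projection theorem then gives attainment. You also correctly noticed that $\bm\Pi^{\rm act}$ need not be bounded in the $\dd\brac{\bm X}{\bm X}$-seminorm, and you avoided that pitfall by working with $\bm\Pi^{\rm act}\cdot\bm X$ directly, or with the fibrewise subspaces $\bm c_t^{1/2}\,\mathrm{Im}(\bm\Pi_t^{\rm act})$, instead of trying to bound $\bm\Pi^{\rm act}$ on $L^2(\bm X)$.
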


Let $V_t^H=\E_t[H]$. Choose a predictable increasing process $A^X$ dominating
the predictable covariations of $\bm X$ and $V^H$, and write
\begin{equation}\label{eq:bracket-densities}
  \dd\brac{\bm X}{\bm X}_t=\bm c_t^{XX}\dd A_t^X,
  \qquad
  \dd\brac{\bm X}{V^H}_t=\bm c_t^{XH}\dd A_t^X ,
\end{equation}
with active densities
\begin{equation}\label{eq:active-densities}
  \bm c_t^{XX,\rm act}
  =\bm\Pi_t^{\rm act}\bm c_t^{XX}\bm\Pi_t^{\rm act},\qquad
  \bm c_t^{XH,\rm act}=\bm\Pi_t^{\rm act}\bm c_t^{XH}.
\end{equation}
The GKW decomposition on the active futures market is
\begin{equation}\label{eq:gkw}
  H=\E[H]+
  \int_0^T(\bm\theta_t^H)^\top\bm\Pi_t^{\rm act}\dd\bm X_t+L_T^H,
  \quad
  L^H\perp \bm\Pi^{\rm act}\!\cdot\!\bm X,
\end{equation}
where the integrand solves the local normal equations
\begin{equation}\label{eq:dynamic-normal-equations}
  \bm c_t^{XX,\rm act}\bm\theta_t^H=\bm c_t^{XH,\rm act},
\end{equation}
whose minimum-norm predictable solution is
\begin{equation}\label{eq:dynamic-theta}
  \bm\theta_t^H=(\bm c_t^{XX,\rm act})^\dagger\bm c_t^{XH,\rm act},
\end{equation}
with $\dagger$ the Moore--Penrose pseudoinverse, which is essential when futures
are collinear or inactive.

\begin{proposition}[Dynamic variance-optimality and residual covariance]\label{prop:dynamic-residual}
The strategy \eqref{eq:dynamic-theta} solves \eqref{eq:dynamic-problem}. If
$G,H\in L^2(\scrF_T,\Q)$ have value processes $V^G,V^H$ with
$c_t^{G,H}\dd A_t^X=\dd\brac{V^G}{V^H}_t$, the dynamic residuals in
\eqref{eq:gkw} satisfy
\begin{equation}\label{eq:residual-density-measure}
  \dd\brac{L^G}{L^H}_t=\ell_t^{G,H}\dd A_t^X ,
\end{equation}
where
\begin{equation}\label{eq:residual-density}
  \ell_t^{G,H}
  =
  c_t^{G,H}
  -(\bm c_t^{XG,\rm act})^\top
  (\bm c_t^{XX,\rm act})^\dagger
  \bm c_t^{XH,\rm act}.
\end{equation}
Consequently,
\begin{equation}\label{eq:residual-isometry}
  \E[L_T^GL_T^H]
  =
  \E\left[\int_0^T\ell_t^{G,H}\dd A_t^X\right].
\end{equation}
\end{proposition}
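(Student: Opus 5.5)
The plan has three steps: reduce the dynamic problem to an $L^2$ orthogonal projection onto the gain space, identify the projection with the GKW integrand by a pointwise quadratic minimisation, and read off the residual brackets from bilinearity of the predictable covariation.

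\emph{Optimality via projection.} By \cref{lem:gain-closedness}, $\scrG_X$ is a closed subspace of $L^2(\scrF_T,\Q)$. Hence the infimum in \eqref{eq:dynamic-problem} is the squared distance from $H-\E[H]$ to $\scrG_X$, attained at the orthogonal projection. It therefore suffices to show that the residual $L_T^H:=H-\E[H]-\int_0^T(\bm\theta^H_t)^\top\bm\Pi^{\rm act}_t\dd\bm X_t$, with $\bm\theta^H$ from \eqref{eq:dynamic-theta}, is orthogonal to every $G_T^\theta$, and that $\bm\theta^H\in\Theta_X$.

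\emph{Orthogonality and admissibility.} Write $V^H=\E[H]+\bm\theta^H\!\cdot(\bm\Pi^{\rm act}\bm X)+L^H$. For any $\bm\theta\in\Theta_X$ the bracket computation gives
\[
\dd\brac{L^H}{\bm\theta^\top\bm\Pi^{\rm act}\!\cdot\bm X}_t=\bm\theta_t^\top\bigl(\bm c_t^{XH,\rm act}-\bm c_t^{XX,\rm act}\bm\theta^H_t\bigr)\dd A^X_t .
\]
By the normal equations \eqref{eq:dynamic-normal-equations} the bracket term $\bm c^{XX,\rm act}_t\bm\theta^H_t-\bm c^{XH,\rm act}_t$ vanishes. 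The key point is that \eqref{eq:dynamic-normal-equations} is solvable: $\bm c^{XH,\rm act}$ lies in the range of $\bm c^{XX,\rm act}$. This follows from a Kunita--Watanabe/Cauchy--Schwarz argument on the $2\times2$ block densities of $(\bm\Pi^{\rm act}\bm X,V^H)$, which are positive semidefinite $\dd A^X$-a.e., so any vector in $\ker\bm c^{XX,\rm act}$ annihilates $\bm c^{XH,\rm act}$. Consequently $\bm c^{XX,\rm act}(\bm c^{XX,\rm act})^\dagger\bm c^{XH,\rm act}=\bm c^{XH,\rm act}$.

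The bracket is then zero, so $L^H\,G^\theta$ is a local martingale. Square integrability of both factors upgrades this to $\E[L^H_TG^\theta_T]=0$. Predictability of $\bm\theta^H$ follows because the pseudoinverse is a Borel function of the predictable densities.

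Admissibility is the step I expect to be the main obstacle, since the pseudoinverse can be unbounded. I will handle it by showing directly that
\[
(\bm\theta^H_t)^\top\bm c_t^{XX,\rm act}\bm\theta^H_t=(\bm c^{XH,\rm act}_t)^\top(\bm c^{XX,\rm act}_t)^\dagger\bm c^{XH,\rm act}_t\le c^{H,H}_t .
\]
This inequality is the Schur-complement inequality for the positive semidefinite block matrix. Integrating against $\dd A^X$ bounds the $\Theta_X$-norm of $\bm\theta^H$ by $\E\langle V^H\rangle_T\le\E[H^2]<\infty$. Minimality also follows pointwise, because the quadratic $\bm\theta\mapsto\int(c^{HH}-2\bm\theta^\top\bm c^{XH,\rm act}+\bm\theta^\top\bm c^{XX,\rm act}\bm\theta)\dd A^X$ is minimised by any solution of the normal equations.

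\emph{Residual covariance.} Write $L^G=V^G-\E[G]-\bm\theta^G\!\cdot(\bm\Pi^{\rm act}\bm X)$, and likewise for $H$. Expanding $\brac{L^G}{L^H}$ bilinearly gives four terms:
\[
c^{G,H}-(\bm\theta^G)^\top\bm c^{XH,\rm act}-(\bm c^{XG,\rm act})^\top\bm\theta^H+(\bm\theta^G)^\top\bm c^{XX,\rm act}\bm\theta^H .
\]
Substitute $\bm\theta^{G},\bm\theta^H$ from \eqref{eq:dynamic-theta} and use the symmetry of the pseudoinverse together with $A^\dagger AA^\dagger=A^\dagger$. Three of the four terms then collapse to a single $-(\bm c^{XG,\rm act})^\top(\bm c^{XX,\rm act})^\dagger\bm c^{XH,\rm act}$, which gives \eqref{eq:residual-density}. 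Finally, $L^GL^H-\brac{L^G}{L^H}$ is a uniformly integrable martingale starting at zero, since $L^G,L^H$ are square-integrable martingales with $L_0=0$. Taking expectations at $T$ yields \eqref{eq:residual-isometry}.
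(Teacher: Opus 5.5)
Your proof is correct and follows essentially the same route as the paper's argument. The closedness lemma reduces the problem to an $L^2$ projection; strong orthogonality of the GKW residual to the active gains comes from the normal equations, which are solvable because the joint bracket densities are positive semidefinite; the pseudoinverse integrand is admissible by the Schur-complement bound $(\bm c^{XH,\rm act}_t)^\top(\bm c^{XX,\rm act}_t)^\dagger\bm c^{XH,\rm act}_t\le c^{H,H}_t$; and the residual density and isometry follow from bilinearity of the predictable covariation together with $M^\dagger MM^\dagger=M^\dagger$. The only conventions you use implicitly are that $A^X$ also dominates $\brac{V^H}{V^H}$ and $\brac{V^G}{V^G}$, and that $\scrF_0$ is trivial so that $L^G_0=L^H_0=0$; the statement of the proposition tacitly relies on the same two conventions.
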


\subsection{Variance-optimal hedging on a single delivery window}\label{sec:single-period}

For a PPA on a single delivery period, the projection can be written
explicitly. The hedge uses the power swap and, when available, the
production-index swap on the same window. Two trading conventions are
compared: the \emph{pre-delivery} convention, in which trading stops at the
delivery start and the last position is held to settlement---the
continuous--discrete structure of energy Asian options analysed by
\citet{BenthDetering2015}---and the idealized \emph{during-delivery}
convention, in which the delivery-period value process remains hedgeable until
the end of delivery. Throughout this subsection the technology index $r$ is
fixed and suppressed.

Fix a delivery period $I_\tau=[\tau_1,\tau_2]\subset[0,T]$ with deterministic
endpoints and a deterministic non-negative delivery weight $w(u)$, and put
$D_\tau=\int_{\tau_1}^{\tau_2}w(u)\dd u>0$. The realized notional-weighted
power and production deliveries are
$A_\tau^S=\int_{\tau_1}^{\tau_2}w(u)S_u\dd u$ and
$A_\tau^q=\int_{\tau_1}^{\tau_2}w(u)q_u\dd u$, and the fixed-strike PPA payoff
on the same window is
\begin{equation}\label{eq:single-ppa-payoff}
  H_\tau^K=\int_{\tau_1}^{\tau_2}w(u)q_u(S_u-K)\dd u .
\end{equation}
The associated square-integrable $\Q$-martingales are the PPA value
$V_t^{K,\tau}=\E_t[H_\tau^K]$, the notional-weighted power swap
$X_t^{S,\tau}=\E_t[A_\tau^S]$ and the production-index swap
$X_t^{q,\tau}=\E_t[A_\tau^q]$; before delivery these are the $w$-weighted
integrals of the pointwise objects
\crefrange{eq:F-general}{eq:R-general}, and during delivery each value
splits into the realized part and the conditional value of the undelivered
remainder. The exchange-quoted power swap is the average-price future
$F^S(t,I_\tau)=X_t^{S,\tau}/D_\tau$; all formulas below use the
notional-weighted martingales. Since $\tau_1$ is deterministic and the spike
point measures carry no fixed-time jumps,
$\scrF_{\tau_1-}=\scrF_{\tau_1}$ almost surely. Assume the predictable
brackets of $X^{S,\tau}$, $X^{q,\tau}$ and $V^{K,\tau}$ are absolutely
continuous in time and define the instantaneous covariance densities
\begin{equation}\label{eq:single-cov-def}
  c_t^{a,b}=\frac{\dd\brac{U^a}{U^b}_t}{\dd t},
  \qquad a,b\in\{S,q,K\},
\end{equation}
where $U^S=X^{S,\tau}$, $U^q=X^{q,\tau}$ and $U^K=V^{K,\tau}$; the martingale
part of a delivery-period value process is generated only by the undelivered
remainder of the window, so all densities vanish as $t\to\tau_2^-$. For the
pre-delivery convention we also need the conditional integrated densities
\begin{equation}\label{eq:single-bar-c}
  \overline c_{\tau_1}^{a,b}
  =\E_{\tau_1-}\!\left[\int_{\tau_1}^{\tau_2}c_s^{a,b}\dd s\right],
  \qquad a,b\in\{S,q,K\},
\end{equation}
which, by the martingale isometry, are the conditional second moments of
$(H_\tau^K,A_\tau^S,A_\tau^q)$ given $\scrF_{\tau_1-}$; for example
$\overline c_{\tau_1}^{S,K}=\Cov_{\tau_1-}(H_\tau^K,A_\tau^S)$ and
$\overline c_{\tau_1}^{S,S}=\Var_{\tau_1-}(A_\tau^S)$.

\paragraph{One power swap.}
Under the pre-delivery convention the hedger trades $X^{S,\tau}$ continuously
on $[0,\tau_1)$, then chooses an $\scrF_{\tau_1-}$-measurable position
$\beta_{\tau_1}$ and holds it to settlement, with terminal gain
\begin{equation}\label{eq:single-power-pre-gain}
  G_T^{S,{\rm pre}}(\vartheta,\beta)
  =\int_0^{\tau_1}\vartheta_t\dd X_t^{S,\tau}
  +\beta_{\tau_1}\,(A_\tau^S-X_{\tau_1}^{S,\tau}).
\end{equation}

\begin{proposition}[Single power swap, pre-delivery trading]\label{prop:single-power-pre}
The variance-optimal pre-delivery hedge of $H_\tau^K$ is
\begin{equation}\label{eq:single-power-pre-theta}
  \vartheta_t^{S,{\rm pre}}
  =\frac{c_t^{S,K}}{c_t^{S,S}},
  \qquad 0\le t<\tau_1,
\end{equation}
with the convention that the ratio is zero on $\{c_t^{S,S}=0\}$, and the optimal position
chosen at the start of delivery is the corresponding conditional covariance-to-variance ratio
\begin{equation}\label{eq:single-power-pre-beta}
  \beta_{\tau_1}^{S,{\rm pre}}
  =\frac{\Cov_{\tau_1-}(H_\tau^K,A_\tau^S)}{\Var_{\tau_1-}(A_\tau^S)}
  =\frac{\overline c_{\tau_1}^{S,K}}{\overline c_{\tau_1}^{S,S}} .
\end{equation}
The minimal squared hedging error is
\begin{equation}\label{eq:single-power-pre-error}
  \epsilon_{S,{\rm pre}}^2
  =\Var(H_\tau^K)
  -\E\left[\int_0^{\tau_1}\frac{(c_t^{S,K})^2}{c_t^{S,S}}\dd t\right]
  -\E\left[\frac{(\overline c_{\tau_1}^{S,K})^2}{\overline c_{\tau_1}^{S,S}}\right].
\end{equation}
\end{proposition}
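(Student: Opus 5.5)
The proof splits the problem at $\tau_1$ into two orthogonal pieces: a continuous pre-delivery projection and a one-shot conditional regression at the delivery start. First write $H_\tau^K-\E[H_\tau^K]=(V^{K,\tau}_{\tau_1}-V^{K,\tau}_0)+(H_\tau^K-V^{K,\tau}_{\tau_1})$, and decompose the gain \eqref{eq:single-power-pre-gain} in the same way. The pre-delivery piece $\int_0^{\tau_1}\vartheta_t\dd X^{S,\tau}_t$ is $\scrF_{\tau_1}$-measurable. The delivery piece $\beta_{\tau_1}(A^S_\tau-X^{S,\tau}_{\tau_1})$ has zero $\scrF_{\tau_1}$-conditional mean. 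The same two facts hold for the target: $V^{K,\tau}_{\tau_1}-V^{K,\tau}_0$ is $\scrF_{\tau_1}$-measurable, and $H_\tau^K-V^{K,\tau}_{\tau_1}$ has conditional mean zero. Here we use $\scrF_{\tau_1-}=\scrF_{\tau_1}$ and square integrability, which holds provided $\beta$ is such that $\beta(A^S-X_{\tau_1})\in L^2$. The cross term between the $\scrF_{\tau_1}$-measurable part and the conditionally centred part therefore vanishes by the tower property, and the squared error splits as
\[
\E\Bigl[\bigl(V^{K,\tau}_{\tau_1}-V^{K,\tau}_0-\textstyle\int_0^{\tau_1}\vartheta_t\dd X^{S,\tau}_t\bigr)^2\Bigr]
+\E\Bigl[\bigl(H_\tau^K-V^{K,\tau}_{\tau_1}-\beta_{\tau_1}(A^S_\tau-X^{S,\tau}_{\tau_1})\bigr)^2\Bigr].
\]
The two terms can be minimised separately over $\vartheta$ and $\beta$.

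For the first term, apply \cref{prop:dynamic-residual} (GKW on $[0,\tau_1]$ with the single active integrator $X^{S,\tau}$ and $A^X_t=t$) to the claim $V^{K,\tau}_{\tau_1}$. This gives the optimal integrand $c^{S,K}_t/c^{S,S}_t$, using the pseudoinverse convention (zero ratio) on $\{c^{S,S}_t=0\}$; closedness of the gain space comes from \cref{lem:gain-closedness}. The minimal value is $\E\int_0^{\tau_1}\ell^{K,K}_t\dd t=\E\int_0^{\tau_1}\bigl(c^{K,K}_t-(c^{S,K}_t)^2/c^{S,S}_t\bigr)\dd t$.

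For the second term, condition on $\scrF_{\tau_1}$ and minimise pointwise in $\omega$ a quadratic in the $\scrF_{\tau_1}$-measurable scalar $\beta$. This is the conditional least-squares regression of $H_\tau^K-V_{\tau_1}$ on $A^S_\tau-X_{\tau_1}$, whose solution is $\beta=\Cov_{\tau_1-}(H,A^S)/\Var_{\tau_1-}(A^S)$, set to zero where the variance vanishes. The minimal conditional value is $\Var_{\tau_1-}(H)-\Cov_{\tau_1-}(H,A^S)^2/\Var_{\tau_1-}(A^S)$. Two checks are needed here: the pointwise minimiser is $\scrF_{\tau_1}$-measurable, and it yields an admissible square-integrable gain. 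The latter follows because the conditional variance of the gain is at most $\Var_{\tau_1-}(H)$, by the Cauchy–Schwarz inequality.

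It remains to identify the conditional moments with $\overline c_{\tau_1}^{a,b}$, via the martingale isometry for the martingales on $[\tau_1,\tau_2]$ (terminal values $H$, $A^S$). Summing the two minimal values and using $\Var(H)=\E\int_0^{\tau_1}c^{K,K}_t\dd t+\E[\overline c^{K,K}_{\tau_1}]$, which is again the isometry, gives \eqref{eq:single-power-pre-error}. I expect the main obstacle to be the measurability and integrability bookkeeping of the second stage, rather than the algebra. Specifically, one must justify that the pointwise conditional minimiser attains the infimum over all admissible $\beta$, and that $\scrF_{\tau_1-}$ may be replaced by $\scrF_{\tau_1}$ in the cross-term argument.
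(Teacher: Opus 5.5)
Your proposal is correct and follows essentially the same route as the paper's argument. You split the error orthogonally at $\tau_1$ into an $\scrF_{\tau_1}$-measurable pre-delivery part, handled by the GKW projection of \cref{prop:dynamic-residual} on $[0,\tau_1)$, and a conditionally centred delivery part, handled by a one-shot conditional regression whose moments are identified with $\overline c^{a,b}_{\tau_1}$ through the conditional martingale isometry. The bookkeeping you flag is routine: for every admissible $\beta$ the conditional quadratic is bounded below by its pointwise minimum, and the zero conventions are consistent because Kunita--Watanabe and conditional Cauchy--Schwarz force $c^{S,K}_t=0$ wherever $c^{S,S}_t=0$, and $\overline c^{S,K}_{\tau_1}=0$ wherever $\overline c^{S,S}_{\tau_1}=0$.
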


Under the during-delivery convention, $X^{S,\tau}$ remains tradeable until
$\tau_2$ and the optimal strategy is the same instantaneous beta,
$\vartheta_t^{S,{\rm del}}=c_t^{S,K}/c_t^{S,S}$ for $0\le t<\tau_2$, where for
$t\in[\tau_1,\tau_2)$ the densities are generated only by the remaining
interval $[t,\tau_2]$. The minimal squared error is
\begin{equation}\label{eq:single-power-del-error}
  \epsilon_{S,{\rm del}}^2
  =\Var(H_\tau^K)
  -\E\left[\int_0^{\tau_2}\frac{(c_t^{S,K})^2}{c_t^{S,S}}\dd t\right],
\end{equation}
and since the during-delivery admissible space contains the pre-delivery one,
$\epsilon_{S,{\rm del}}^2\le\epsilon_{S,{\rm pre}}^2$. Finiteness of the
integrals is a standing hypothesis (local integrability of
$(c_t^{S,K})^2/c_t^{S,S}$ and, in the multivariate case,
$\bm b_t\in\operatorname{Range}(\bm C_t)$ with
$\bm b_t^\top\bm C_t^\dagger\bm b_t$ locally integrable on $[0,\tau_2)$).

\paragraph{Power and production swaps.}
Suppose the hedger trades both the notional-weighted power swap and the
production-index swap, $\bm X_t^\tau=(X_t^{S,\tau},X_t^{q,\tau})^\top$, and
define
\begin{equation}\label{eq:single-C-b}
  \bm C_t=
  \begin{pmatrix}
  c_t^{S,S} & c_t^{S,q}\\
  c_t^{S,q} & c_t^{q,q}
  \end{pmatrix},
  \qquad
  \bm b_t=
  \begin{pmatrix}
  c_t^{S,K}\\ c_t^{q,K}
  \end{pmatrix}.
\end{equation}

\begin{proposition}[Power and production swaps, pre-delivery trading]\label{prop:single-two-pre}
For $0\le t<\tau_1$, the variance-optimal dynamic hedge is
$\bm\theta_t^{S,q,{\rm pre}}=\bm C_t^\dagger\bm b_t$. If
$\Delta_t=c_t^{S,S}c_t^{q,q}-(c_t^{S,q})^2>0$, the components are
\begin{equation}\label{eq:single-two-pre-theta}
\begin{aligned}
  \theta_t^{S,{\rm pre}}
  &=\frac{c_t^{S,K}c_t^{q,q}-c_t^{q,K}c_t^{S,q}}{\Delta_t},\\
  \theta_t^{q,{\rm pre}}
  &=\frac{c_t^{q,K}c_t^{S,S}-c_t^{S,K}c_t^{S,q}}{\Delta_t}.
\end{aligned}
\end{equation}
At $\tau_1$, with
$\Delta\bm X_{\tau_1}^\tau=(A_\tau^S-X_{\tau_1}^{S,\tau},
A_\tau^q-X_{\tau_1}^{q,\tau})^\top$, the optimal last position is the
finite-dimensional conditional projection
\begin{equation}\label{eq:single-two-pre-beta}
\begin{aligned}
  \bm\beta_{\tau_1}^{S,q,{\rm pre}}
  &=\overline{\bm C}_{\tau_1}^\dagger\overline{\bm b}_{\tau_1},\\
  \overline{\bm C}_{\tau_1}
  &=\E_{\tau_1-}\!\left[\int_{\tau_1}^{\tau_2}\bm C_s\dd s\right],\\
  \overline{\bm b}_{\tau_1}
  &=\E_{\tau_1-}\!\left[\int_{\tau_1}^{\tau_2}\bm b_s\dd s\right],
\end{aligned}
\end{equation}
whose entries are the conditional variances and covariances of
$(H_\tau^K,A_\tau^S,A_\tau^q)$ given $\scrF_{\tau_1-}$. The minimal squared
error is
\begin{equation}\label{eq:single-two-pre-error}
\begin{aligned}
  \epsilon_{S,q,{\rm pre}}^2
  &=\Var(H_\tau^K)
  -\E\left[\int_0^{\tau_1}\bm b_t^\top\bm C_t^\dagger\bm b_t\dd t\right]\\
  &\quad
  -\E\left[\overline{\bm b}_{\tau_1}^\top
  \overline{\bm C}_{\tau_1}^\dagger\overline{\bm b}_{\tau_1}\right].
\end{aligned}
\end{equation}
\end{proposition}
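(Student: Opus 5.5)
The plan is to split the terminal error at $\tau_1$ into two orthogonal $L^2$ pieces, each handled by a projection result already available. Write $H=H_\tau^K$ and $V_t=V_t^{K,\tau}$. The terminal gain is
\[
G_T=\int_0^{\tau_1}\bm\theta_t^\top\dd\bm X_t^\tau+\bm\beta_{\tau_1}^\top\Delta\bm X_{\tau_1}^\tau .
\]
Decompose
\[
H-\E[H]-G_T=\Bigl(V_{\tau_1-}-V_0-\int_0^{\tau_1}\bm\theta_t^\top\dd\bm X_t^\tau\Bigr)+\Bigl(H-V_{\tau_1-}-\bm\beta_{\tau_1}^\top\Delta\bm X_{\tau_1}^\tau\Bigr).
\]
Because $\tau_1$ is deterministic, $\scrF_{\tau_1-}=\scrF_{\tau_1}$, and $\bm X^\tau$ has no fixed-time jump there, we have $\bm X_{\tau_1-}^\tau=\bm X_{\tau_1}^\tau$. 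The first bracket is then $\scrF_{\tau_1-}$-measurable. The second bracket has zero $\scrF_{\tau_1-}$-conditional mean, since $\bm\beta_{\tau_1}$ is $\scrF_{\tau_1-}$-measurable and $V$, $\bm X^\tau$ are martingales. So the cross term vanishes, and the squared error is the sum of the two pieces' second moments. The two pieces can therefore be minimized separately: over $\bm\theta$ on $[0,\tau_1)$, and pointwise in $\omega$ over $\bm\beta_{\tau_1}$.

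\textbf{Dynamic piece.} Apply \cref{prop:dynamic-residual} on $[0,\tau_1)$. Take the active set to be both swaps, with $A^X_t=t$ by the absolute-continuity assumption \eqref{eq:single-cov-def}, and use $H$ replaced by $V_{\tau_1-}$, which has the same bracket densities with $\bm X^\tau$ on $[0,\tau_1)$. The minimizer is then $\bm C_t^\dagger\bm b_t$ by \eqref{eq:dynamic-theta}. The residual second moment is $\E\int_0^{\tau_1}(c_t^{K,K}-\bm b_t^\top\bm C_t^\dagger\bm b_t)\dd t$ by \eqref{eq:residual-density} and \eqref{eq:residual-isometry}, using that $\bm b_t\in\operatorname{Range}(\bm C_t)$ so that $\bm b_t^\top\bm C_t^\dagger\bm C_t\bm C_t^\dagger\bm b_t=\bm b_t^\top\bm C_t^\dagger\bm b_t$. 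When $\Delta_t>0$, explicit $2\times2$ inversion gives \eqref{eq:single-two-pre-theta}.

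\textbf{Static piece.} Conditionally on $\scrF_{\tau_1-}$, minimizing $\E_{\tau_1-}[(H-V_{\tau_1-}-\bm\beta^\top\Delta\bm X)^2]$ is a finite-dimensional least-squares problem. Its normal equations are $\E_{\tau_1-}[\Delta\bm X\Delta\bm X^\top]\bm\beta=\E_{\tau_1-}[\Delta\bm X(H-V_{\tau_1-})]$. By the conditional martingale isometry for the martingales on $[\tau_1,\tau_2]$, noting that $V$ and $\bm X^\tau$ are constant after $\tau_2$, these conditional moments equal $\overline{\bm C}_{\tau_1}$ and $\overline{\bm b}_{\tau_1}$. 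The minimum-norm solution is $\overline{\bm C}_{\tau_1}^\dagger\overline{\bm b}_{\tau_1}$. It is $\scrF_{\tau_1-}$-measurable because the pseudoinverse is a Borel map. The minimal conditional error is $\overline c^{K,K}_{\tau_1}-\overline{\bm b}_{\tau_1}^\top\overline{\bm C}_{\tau_1}^\dagger\overline{\bm b}_{\tau_1}$; here $\overline{\bm b}_{\tau_1}\in\operatorname{Range}(\overline{\bm C}_{\tau_1})$ automatically, by Cauchy--Schwarz on a Gram matrix.

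\textbf{Summing and main obstacle.} Taking expectations and adding the two pieces, the total is
\[
\E\int_0^{\tau_1}c^{K,K}_t\dd t+\E\,\overline c^{K,K}_{\tau_1}=\E\int_0^{\tau_2}c^{K,K}_t\dd t=\Var(H),
\]
which yields \eqref{eq:single-two-pre-error}. I expect the main obstacle to be the rigour of the orthogonal split: justifying $\bm X_{\tau_1-}=\bm X_{\tau_1}$ and the identification of $\overline{\bm C}_{\tau_1}$ and $\overline{\bm b}_{\tau_1}$ as conditional moments. This needs the conditional isometry together with the local integrability standing hypothesis, so that $\int\bm\theta\,\dd\bm X^\tau$ is a genuine $L^2$ martingale and attainment holds as in \cref{lem:gain-closedness}.
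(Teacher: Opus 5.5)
Your proposal is correct and follows essentially the route the paper indicates for this proof (its text is in the supplement, not in the source you were given): split the error orthogonally at $\tau_1$ using $\scrF_{\tau_1-}=\scrF_{\tau_1}$ and the tower property, apply the GKW projection of \cref{prop:dynamic-residual} on $[0,\tau_1)$, and solve a conditional least-squares problem for the held position, where the conditional isometry identifies $\overline{\bm C}_{\tau_1}$ and $\overline{\bm b}_{\tau_1}$; the identity $\E\int_0^{\tau_2}c^{K,K}_t\dd t=\Var(H_\tau^K)$ then closes the error formula. One small correction: $\bm b_t^\top\bm C_t^\dagger\bm C_t\bm C_t^\dagger\bm b_t=\bm b_t^\top\bm C_t^\dagger\bm b_t$ holds for every $\bm b_t$, since $\bm C_t^\dagger\bm C_t\bm C_t^\dagger=\bm C_t^\dagger$, whereas $\bm b_t\in\operatorname{Range}(\bm C_t)$ (a standing hypothesis in the paper, and also a consequence of the Kunita--Watanabe inequality) is what guarantees that $\bm C_t^\dagger\bm b_t$ solves the normal equations.
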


Under the during-delivery convention the same Moore--Penrose projection
$\bm\theta_t^{S,q,{\rm del}}=\bm C_t^\dagger\bm b_t$ extends to
$0\le t<\tau_2$, and the minimal squared error is
\begin{equation}\label{eq:single-two-del-error}
  \epsilon_{S,q,{\rm del}}^2
  =\Var(H_\tau^K)
  -\E\left[\int_0^{\tau_2}\bm b_t^\top\bm C_t^\dagger\bm b_t\dd t\right]
  \le\epsilon_{S,q,{\rm pre}}^2 .
\end{equation}
The conditional terminal moments in \eqref{eq:single-power-pre-beta} and
\eqref{eq:single-two-pre-beta} admit an equivalent two-time-kernel form: with
\begin{equation}\label{eq:two-time-kernels}
\begin{aligned}
  K_t^{SS}(u,v)&=\E_t[S_uS_v],\qquad
  K_t^{qS}(u,v)=\E_t[q_uS_v],\\
  K_t^{qSS}(u,v)&=\E_t[q_uS_uS_v],
\end{aligned}
\end{equation}
the conditional second moments of $(H_\tau^K,A_\tau^S,A_\tau^q)$ are double
$w$-weighted integrals of these kernels over the delivery window, and for
$u\le v$ the Markov property gives $K_t^{SS}(u,v)=\E_t[S_uF_u(v)]$,
$K_t^{qS}(u,v)=\E_t[q_uF_u(v)]$ and $K_t^{qSS}(u,v)=\E_t[q_uS_uF_u(v)]$, so
the kernels can be computed from conditional transforms; path-dependent terms are
evaluated by simulation. The proofs of
\cref{prop:single-power-pre,prop:single-two-pre} are given in the online
supplement.

\paragraph{Interpretation.}
With one power swap, the hedge $\vartheta_t^S=c_t^{S,K}/c_t^{S,S}$ is the
conditional beta of the PPA value against the same-period power swap. If
production were deterministic, this beta would be close to the deterministic
delivered volume. With stochastic production, the numerator depends on the
covariance between price and production because the PPA value depends on $q_uS_u$,
not only on $S_u$; in a cannibalisation regime the beta systematically
deviates from expected volume. With both swaps, the vector hedge
$\bm\theta_t=\bm C_t^\dagger\bm b_t$ separates price and production exposure:
the power component is the marginal contribution of the power swap after
controlling for the production swap and vice versa, while the off-diagonal density
$c_t^{S,q}$ accounts for risk common to the power and production swaps. The
quantities $c_t^{S,K}$, $c_t^{q,K}$ and $c_t^{S,q}$ measure the local
price-production covariance exposure of the PPA; their term structure across
delivery months is implied by the calibrated model.

\subsection{Delivery-period implementation}\label{subsec:delivery-implementation}

For a delivery window $I=[I^-,I^+]$, let $X^{S}(I)$ and $X^{q}(I)$ denote the
power- and production-index futures in \eqref{eq:X-martingales}.  Before
delivery, the active vector is $\bm X=(X^{S}(I),X^{q}(I))^\top$ when both
contracts are available, and only $X^{S}(I)$ otherwise.  The predictable hedge
is therefore always obtained from the normal equations
\begin{equation}\label{eq:delivery-normal-equations-short}
  \bm c_t^{XX,\mathrm{act}}\bm\theta_t^H
  =\bm c_t^{XH,\mathrm{act}},
  \qquad
  \bm\theta_t^H=
  (\bm c_t^{XX,\mathrm{act}})^\dagger
  \bm c_t^{XH,\mathrm{act}},
\end{equation}
with the active set changed only at contract-expiry dates.  This formulation
covers both market conventions relevant for the application.  Under
pre-delivery trading, rebalancing stops at $I^-$ and the terminal position is
held while the settlement averages are formed.  Under the idealised
during-delivery convention, the conditional settlement martingales remain
tradable until $I^+$; the same equations apply, with the bracket integrated over
the longer active interval.  The empirical study adopts the former convention
and hence does not use information revealed after delivery begins.

For a power future alone, \eqref{eq:delivery-normal-equations-short} reduces
to $\theta_t^H=c_t^{SH}/c_t^{SS}$ whenever $c_t^{SS}>0$; with power and
production futures it is the two-dimensional covariance regression of
\cref{prop:single-two-pre}.  The production contract hedges the linear
component of volume risk, while $L_T^H$ retains the nonlinear interaction
between price and production.  The auxiliary static claims of
\cref{subsec:semistatic} are selected to hedge that residual.

\subsection{Semi-static completion by auxiliary claims}\label{subsec:semistatic}

Let $H^0:=H_r^{K_r^\star}$ be the fair-strike PPA payoff for a fixed technology
$r$, and let $H^1,\ldots,H^n\in L^2(\scrF_T,\Q)$ be auxiliary claims available
at inception. Natural candidates are power vanillas on delivery-period averages,
renewable-volume vanillas, price-weighted renewable quantos, orthant quantos and
capture-spread claims; the universe used empirically is specified in
\cref{sec:empirical-design}. Set $\bm H=(H^1,\ldots,H^n)^\top$ and
$\bm\pi=(\pi_1,\ldots,\pi_n)^\top$ with $\pi_i=\E[H^i]$. For $i=0,\ldots,n$,
write the GKW decomposition on the active futures market,
\begin{equation}\label{eq:claim-gkw}
  H^i=\pi_i+
  \int_0^T(\bm\theta_t^i)^\top\bm\Pi_t^{\rm act}\dd\bm X_t
  +L_T^i ,
  \qquad L^i\perp\bm\Pi^{\rm act}\!\cdot\!\bm X .
\end{equation}
A semi-static strategy consists of initial capital $c$, static weights
$\bm\nu\in\R^n$ held fixed from inception to maturity, and a dynamic futures
strategy $\bm\theta\in\Theta_X$. The terminal hedging error is
\begin{equation}\label{eq:semistatic-error}
  \eps_T(c,\bm\nu,\bm\theta)
  =H^0-c-\bm\nu^\top(\bm H-\bm\pi)
  -\int_0^T\bm\theta_t^\top\bm\Pi_t^{\rm act}\dd\bm X_t ,
\end{equation}
and the variance-optimal semi-static problem is
\begin{equation}\label{eq:semistatic-problem}
  \epsilon_{\rm ss}^2
  =
  \inf_{c\in\R,\ \bm\nu\in\R^n,\ \bm\theta\in\Theta_X}
  \E[\eps_T(c,\bm\nu,\bm\theta)^2].
\end{equation}
Define the residual covariance inputs
\begin{equation}\label{eq:outer-abc}
  a_0=\E[(L_T^0)^2],\qquad
  \bm b=\E[\bm L_TL_T^0],\qquad
  \bm C=\E[\bm L_T\bm L_T^\top],
\end{equation}
where $\bm L_T=(L_T^1,\ldots,L_T^n)^\top$. The semi-static layer is a
finite-dimensional regression of the PPA residual left after futures hedging onto the
corresponding residuals of the auxiliary claims. The following theorem is the
semi-static variance-optimal decomposition of
\citet{chatziandreou2026semistaticvarianceoptimalhedgingcovariance}, stated in the present
delivery-period notation with the active projector
$\bm\Pi^{\rm act}$; it is restated here because the time-varying active set is needed
to accommodate contract-specific delivery and expiry dates, and the proof is given in the
online supplement.

\begin{theorem}[Semi-static variance-optimal hedge; cf.\ \citealp{SemistaticHedging,semistaticsparse}]\label{thm:semistatic}
Assume $H^0,\ldots,H^n\in L^2(\Q)$ and that the futures stack $\bm X$ is a
square-integrable $\Q$-martingale. Then:
\begin{enumerate}[label=(\roman*),leftmargin=*]
\item the optimal initial capital is $c^\star=\pi_0$;
\item the cross-covariance vector satisfies $\bm b\in\operatorname{Range}(\bm C)$
and the minimum-norm optimal static weights are
\begin{equation}\label{eq:nu-star}
  \bm\nu^\star=\bm C^\dagger\bm b;
\end{equation}
\item the optimal dynamic futures strategy is
\begin{equation}\label{eq:theta-star}
  \bm\theta_t^\star
  =
  \bm\theta_t^0-[\bm\theta_t^1,\ldots,\bm\theta_t^n]\bm\nu^\star;
\end{equation}
\item the minimal residual variance is
\begin{equation}\label{eq:min-error}
  \epsilon_{\rm ss}^2=a_0-\bm b^\top\bm C^\dagger\bm b .
\end{equation}
\end{enumerate}
\end{theorem}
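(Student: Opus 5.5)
The plan is to substitute the GKW decompositions \eqref{eq:claim-gkw} into the error \eqref{eq:semistatic-error}. This splits it into three mutually orthogonal pieces of $L^2(\scrF_T,\Q)$, after which each piece can be optimised separately. Writing $\bm\Theta_t=[\bm\theta_t^1,\ldots,\bm\theta_t^n]$, we have
\begin{equation*}
  \eps_T(c,\bm\nu,\bm\theta)
  =(\pi_0-c)
  +\int_0^T\bigl(\bm\theta_t^0-\bm\Theta_t\bm\nu-\bm\theta_t\bigr)^\top\bm\Pi_t^{\rm act}\dd\bm X_t
  +\bigl(L_T^0-\bm\nu^\top\bm L_T\bigr).
\end{equation*}
The constant is orthogonal to the other two terms because both have mean zero: the stochastic integral is a square-integrable martingale started at zero, and $\E[L_T^i]=0$ by the GKW decomposition. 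The stochastic integral lies in the gain space $\scrG_X$, which is closed by \cref{lem:gain-closedness}. Each $L_T^i$ is orthogonal to $\scrG_X$, since strong orthogonality $L^i\perp\bm\Pi^{\rm act}\!\cdot\!\bm X$ implies that $L^i(\bm\theta^\top\bm\Pi^{\rm act}\cdot\bm X)$ is a martingale, so $\E[L_T^iG_T^\theta]=0$ for all $\bm\theta\in\Theta_X$. I must also check that $\bm\theta^0-\bm\Theta\bm\nu-\bm\theta\in\Theta_X$. This holds because $\Theta_X$ is a vector space and each $\bm\theta^i\in\Theta_X$, by the isometry and $H^i\in L^2$.

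By Pythagoras,
\begin{equation*}
  \E[\eps_T^2]=(\pi_0-c)^2+\E\Bigl[\int_0^T\bm\phi_t^\top\bm\Pi_t^{\rm act}\dd\brac{\bm X}{\bm X}_t\bm\Pi_t^{\rm act}\bm\phi_t\Bigr]+\E\bigl[(L_T^0-\bm\nu^\top\bm L_T)^2\bigr],
\end{equation*}
where $\bm\phi=\bm\theta^0-\bm\Theta\bm\nu-\bm\theta$. The three terms are minimised separately. The first vanishes iff $c=\pi_0$, which is (i). For any fixed $\bm\nu$, the second vanishes by choosing $\bm\theta=\bm\theta^0-\bm\Theta\bm\nu$, which is (iii) once $\bm\nu^\star$ is known. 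Strictly, this only pins down $\bm\theta$ modulo integrands with zero active bracket, and the stated choice is a representative. The third term equals $a_0-2\bm\nu^\top\bm b+\bm\nu^\top\bm C\bm\nu$ by \eqref{eq:outer-abc}: a finite-dimensional convex quadratic with normal equations $\bm C\bm\nu=\bm b$.

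The main step is (ii), the solvability $\bm b\in\operatorname{Range}(\bm C)$. I would argue via Gram matrices. Since $\bm C$ is symmetric, $\operatorname{Range}(\bm C)=\ker(\bm C)^\perp$. If $\bm C\bm v=0$, then $\E[(\bm v^\top\bm L_T)^2]=0$, so $\bm v^\top\bm L_T=0$ a.s. Hence $\bm v^\top\bm b=\E[(\bm v^\top\bm L_T)L_T^0]=0$, which gives $\bm b\perp\ker\bm C$. Then $\bm\nu^\star=\bm C^\dagger\bm b$ satisfies the normal equations, and it is the minimum-norm solution by the standard Moore--Penrose characterisation. Substituting back gives the value $a_0-2\bm b^\top\bm C^\dagger\bm b+\bm b^\top\bm C^\dagger\bm C\bm C^\dagger\bm b=a_0-\bm b^\top\bm C^\dagger\bm b$, using $\bm C^\dagger\bm C\bm C^\dagger=\bm C^\dagger$. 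This is (iv), and the infimum is attained at $(\pi_0,\bm\nu^\star,\bm\theta^\star)$.

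The only delicate points are the admissibility of the combined integrand under the time-varying projector and the orthogonality of $L_T^i$ to the whole of $\scrG_X$, rather than to integrals of $\bm X$ alone. For the latter I would reduce to \cref{prop:dynamic-residual}: the GKW construction there is carried out with respect to $\bm\Pi^{\rm act}\cdot\bm X$, so orthogonality to that martingale is built in.
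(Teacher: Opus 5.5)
Your proof is correct and follows essentially the same route as the paper. Substituting the GKW decompositions \eqref{eq:claim-gkw} and using the strong orthogonality of the residuals to the active gain space splits the problem into two parts: an inner dynamic projection that can be made exact for every $\bm\nu$, and an outer finite-dimensional quadratic $a_0-2\bm\nu^\top\bm b+\bm\nu^\top\bm C\bm\nu$, whose normal equations are solvable because every $\bm v\in\ker\bm C$ satisfies $\bm v^\top\bm L_T=0$ almost surely. Your explicit checks on the admissibility of the combined integrand and on the non-uniqueness of $\bm\theta^\star$ modulo null integrands are refinements of that argument, not a different approach.
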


\subsection{Orthant quantos and capture spreads for hedging covariance risk}
\label{subsec:orthants}

Fix a delivery period $I$ and an observation or settlement time
$\tau\le I^+$. Define the average power and capacity-factor underliers by
\begin{equation}\label{eq:orthant-underliers}
  Z_I^S(\tau)=F^S(\tau,I),
  \qquad
  Z_I^{C,r}(\tau)
  =\frac{F^{q,r}(\tau,I)}{\bar q^rD(I)} .
\end{equation}
At final settlement, $Z_I^S(I^+)$ is the realized delivery-period baseload
price and $Z_I^{C,r}(I^+)$ is the realized average capacity factor. For
thresholds $(a,b)$, introduce the centered underliers
\begin{equation}\label{eq:XY-def}
  X=Z_I^S(\tau)-a,
  \qquad
  Y=Z_I^{C,r}(\tau)-b .
\end{equation}
The four power--renewable orthant quanto payoffs are
\begin{equation}\label{eq:orthant-payoffs}
\begin{array}{ll}
  \mathrm{CC}_{a,b}^{I,\tau}=X^+Y^+,
  &\mathrm{PP}_{a,b}^{I,\tau}=(-X)^+(-Y)^+,\\[0.35em]
  \mathrm{CP}_{a,b}^{I,\tau}=X^+(-Y)^+,
  &\mathrm{PC}_{a,b}^{I,\tau}=(-X)^+Y^+ .
\end{array}
\end{equation}
They satisfy the pathwise identity
\begin{equation}\label{eq:orthant-product}
  XY
  =
  \mathrm{CC}_{a,b}^{I,\tau}
  +\mathrm{PP}_{a,b}^{I,\tau}
  -\mathrm{CP}_{a,b}^{I,\tau}
  -\mathrm{PC}_{a,b}^{I,\tau}.
\end{equation}
Hence the centered product $XY$, which represents price-volume co-movement,
lies in the linear span of the four orthant payoffs: the co-movement
orthants $\mathrm{CC}$ and $\mathrm{PP}$ enter with positive signs, whereas
the divergent orthants $\mathrm{CP}$ and $\mathrm{PC}$ enter with negative
signs. Product options therefore provide exposure to joint nonlinear movements in price and renewable output that
cannot be generated by marginal power and renewable vanillas alone; their use
as a spanning family for covariance-sensitive claims follows the
multidimensional option-spanning arguments of \citet{Madan2021Pricing} and
\citet{chatziandreou2026semistaticvarianceoptimalhedgingcovariance}.

The polarization identity shows how spread options provide additional exposure
to price-capture covariance. In particular, for the monthly baseload and achieved
prices $F_j$ and $A_j$ of \cref{subsec:claim-universe}, and arbitrary anchors
$\alpha,\beta\in\R$, setting $D_j:=F_j-A_j$ gives
\begin{align}\label{eq:capture-spread-polarization}
  &(F_j-\alpha)(A_j-\beta)= \nonumber\\
  & \frac12\left\{
    (F_j-\alpha)^2
    +(A_j-\beta)^2
    -\bigl[D_j-(\alpha-\beta)\bigr]^2
  \right\}.
\end{align}
Thus a mixed baseload--capture-price exposure decomposes into two
marginal quadratic terms and one quadratic term in the spread, which is
the polarization principle underlying covariance hedges constructed from
spread options in \citet{CarrCorso2001} and
\citet{chatziandreou2026semistaticvarianceoptimalhedgingcovariance}. The
capture spread $D_j$ is directly linked to the PPA exposure:
writing $\bar q_j^r:=V_j/D(I_j)$ and
\begin{equation*}
  \Cov_{I_j}^{w}(S,q^r)
  :=
  \frac{1}{D(I_j)}
  \int_{I_j}
  w_{I_j}(u)
  \bigl(S_u-F_j\bigr)
  \bigl(q_u^r-\bar q_j^r\bigr)\,\dd u ,
\end{equation*}
the settlement identities $A_j=R_j/V_j$ and $D_j=F_j-A_j$ imply
\begin{align}\label{eq:capture-spread-covariance}
  & V_jD_j
  =
  V_jF_j-R_j
  =
  -D(I_j)\Cov_{I_j}^{w}(S,q^r),
  \qquad \\
  &D_j
  =
  -\frac{\Cov_{I_j}^{w}(S,q^r)}{\bar q_j^r}.
\end{align}
Consequently, the calls and puts on $D_j$ specified in
\eqref{eq:capture-discount} provide nonlinear exposure to extreme realizations of the within-month capture
spread, which is proportional to the realized price--production
covariance: capture-spread calls pay in severe
cannibalisation states with $A_j<F_j$, whereas capture-spread puts pay when
renewable production earns a capture premium.

For negative raw price--volume covariance, an insurance portfolio that offsets
the associated PPA loss is naturally short the co-movement orthants
$\mathrm{CC}$ and $\mathrm{PP}$ and long the divergent orthants
$\mathrm{CP}$ and $\mathrm{PC}$. Two qualifications are essential. First,
neither power--renewable orthant quantos nor capture-spread options are
standard exchange-traded products in current European power markets; they
should be interpreted as potential OTC hedging products
that would need to be issued bilaterally. Second, the
semi-static optimizer acts on \emph{dynamically residualized} claims: the
pathwise identities motivate the auxiliary basis, but the optimal weights are
determined by the residual covariance problem \eqref{eq:outer-abc}, rather
than being fixed by the raw payoff identity. For the
four-orthant sub-basis, the following proposition gives sufficient conditions
under which the raw insurance sign pattern is inherited by the optimal
weights.

\begin{proposition}[Orthant sign structure]\label{prop:orthant-sign}
Label the four orthant claims $\mathrm{CC}$, $\mathrm{PP}$, $\mathrm{CP}$ and
$\mathrm{PC}$ in this order, and let
$\bm s=(-1,-1,+1,+1)^\top$. With $\bm b$ and $\bm C$ denoting the outer
inputs of \eqref{eq:outer-abc} for this basis, define
\[
  m=\min_i\frac{|b_i|}{C_{ii}},
  \qquad
  M=\max_i\frac{|b_i|}{C_{ii}} .
\]
Suppose that
\begin{enumerate}[label=(\roman*),leftmargin=*]
\item $s_ib_i>0$ for every $i=1,\ldots,4$; and
\item $\bm C$ is strictly diagonally dominant with positive diagonal entries
and dominance ratio
\[
  \delta
  =
  \max_i\frac{\sum_{j\ne i}|C_{ij}|}{C_{ii}}
  <
  \frac{m}{M+m}.
\]
\end{enumerate}
Then $s_i\nu_i^\star>0$ for every $i$, and therefore
\[
  \operatorname{sign}(\bm\nu^\star)=\bm s .
\]
\end{proposition}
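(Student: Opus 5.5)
By hypothesis (ii), $\bm C$ is strictly diagonally dominant with positive diagonal, hence nonsingular. Therefore $\bm C^\dagger=\bm C^{-1}$, and \cref{thm:semistatic} gives $\bm\nu^\star$ as the unique solution of $\bm C\bm\nu=\bm b$. The plan is to regard this linear system as a perturbation of its diagonal part and to show that the perturbation cannot flip any sign.

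Write $\bm C=\bm D(\bm I+\bm E)$ with $\bm D=\diag(C_{11},\ldots,C_{44})$ and $E_{ij}=C_{ij}/C_{ii}$ for $j\ne i$, $E_{ii}=0$. Put $\bm d=\bm D^{-1}\bm b$, so that $d_i=b_i/C_{ii}$. By (i) we have $s_id_i=|d_i|\in[m,M]$. The equations read $\nu_i=d_i-\sum_{j\ne i}E_{ij}\nu_j$. In the $\infty$-norm, $\norm{\bm E}_\infty=\delta<1$, which gives the a priori bound
\[
\norm{\bm\nu^\star}_\infty\le \frac{\norm{\bm d}_\infty}{1-\delta}\le\frac{M}{1-\delta}.
\]
One can obtain this directly. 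Pick $i$ with $|\nu_i|$ maximal. Then $|\nu_i|\le|d_i|+\delta|\nu_i|$, so $|\nu_i|\le M/(1-\delta)$.

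Next, multiply the $i$-th equation by $s_i$:
\[
s_i\nu_i\ \ge\ |d_i|-\sum_{j\ne i}|E_{ij}||\nu_j|\ \ge\ m-\delta\frac{M}{1-\delta}.
\]
The right-hand side is positive if and only if $m(1-\delta)>\delta M$, that is, $\delta<m/(M+m)$. This is exactly hypothesis (ii), so $s_i\nu_i^\star>0$ for every $i$, and $\operatorname{sign}(\bm\nu^\star)=\bm s$.

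The only delicate point is this last step: the max-row bound must be taken with $\delta$ defined row-wise relative to $C_{ii}$, which matches the stated dominance ratio. Beyond that, the argument needs only the uniqueness of $\bm\nu^\star$ from invertibility of $\bm C$, since the minimum-norm choice in \cref{thm:semistatic} is then automatic. I expect no further obstacles. The conclusion uses only the $\infty$-norm bound, not any structure specific to orthant claims.
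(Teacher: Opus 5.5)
Your proof is correct and follows essentially the same route as the paper's. You factor $\bm C=\bm D(\bm I+\bm E)$ with $\norm{\bm E}_\infty=\delta<1$, bound $\norm{\bm\nu^\star}_\infty\le M/(1-\delta)$ (equivalently via the Neumann series for $(\bm I+\bm E)^{-1}$), and conclude $s_i\nu_i^\star\ge m-\delta M/(1-\delta)>0$, which is exactly where the threshold $\delta<m/(M+m)$ comes from; the side facts $m>0$ from (i) and $\bm C^\dagger=\bm C^{-1}$ from strict diagonal dominance are handled correctly.
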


\begin{remark}[Strength of the conditions]\label{rem:orthant-sign-caveat}
Condition~(ii) requires the four dynamically residualized orthant claims to
have sufficiently weak mutual dependence relative to the signal-to-noise ratio
$m/M$. Condition~(i) is not implied by negative aggregate price--volume
covariance alone: after dynamic projection on the active futures market, the
PPA residual must covary negatively with the co-movement orthant residuals and
positively with the divergent orthant residuals. Thus
\eqref{eq:orthant-product} motivates the raw sign pattern but does not impose it
on the residualized moments
$b_i=\E[L_T^iL_T^0]$. In the empirical analysis, both conditions are checked
directly using the estimated residual covariance vector and matrix, and the
proposition is interpreted as a sufficient condition for the sign pattern,
rather than as a characterization of every possible optimal portfolio.
\end{remark}

\section{A joint renewable production--price MCARMA model}\label{sec:model}

The stochastic model is built from the economic variables entering the PPA
payoff: the spot price and the technology-specific renewable state. The
empirical analysis motivates three modeling requirements in
\cref{subsec:empirical-analysis}: boundedness of the capacity factor, additivity
of the price in EUR/MWh so that negative prices are admissible, and a dependence structure through
which renewable shocks move prices. The notation uses a generic technology
$r\in\{\mathrm{W},\mathrm{S}\}$; the wind and solar transformations were
introduced in \cref{subsec:emp-seasonality} and are recalled below together
with their inverse production maps.

\subsection{State variables and technology-specific transformations}
\label{subsec:state-variables}

The state variables are the deseasonalized components
constructed in \cref{subsec:emp-seasonality}: (i) the continuous
price residual $Y^S$ and spike component $J$ from \eqref{eq:spot-additive};
(ii) the load-adjusted wind-penetration index $\Pi^{\mathrm W}$ of
\eqref{eq:wind-pressure} with logit residual $Y^{\mathrm W}$ from
\eqref{eq:wind-logit}; and (iii) the regularized solar cloud-cover
shortfall $R^\odot$ of \eqref{eq:solar-risk-driver} with latent residual
$Y^{\mathrm S_\odot}$ from \eqref{eq:solar-latent}. The additive price
specification admits the negative prices documented in
\cref{subsec:emp-data}, the logit transforms respect the boundedness of the
capacity factor, and the load and clear-sky normalizations give the two
renewable coordinates whose dependence with the price residual is
documented in \cref{subsec:emp-wind-effect}. This dependence must
be represented by the joint model.

To evaluate PPA payoffs on simulated paths, we map the latent
states back to physical production. Simulated wind paths are
mapped back through
\begin{equation}\label{eq:wind-reconstruction}
  \widehat C_t^{\mathrm W}
  =
  \min\left\{1,\max\left\{0,\,\widehat\Pi_t^{\mathrm W}
  \frac{\widehat L_t}{\bar L_{\mathcal C}}\right\}\right\},
  \quad
  q_t^{\mathrm W}=\bar q^{\mathrm W}\widehat C_t^{\mathrm W},
\end{equation}
with $\widehat\Pi_t^{\mathrm W}=\ell(\Lambda^{\mathrm W}(t)+Y_t^{\mathrm W})$
and $\ell(x)=(1+\ee^{-x})^{-1}$ the logistic function. The PPA payoff is thus
always evaluated on physical MWh, while the state variable captures wind
penetration relative to load. In the empirical backtest, the realized
historical load path is treated as an exogenous conditioning input and is not
modeled stochastically.

The corresponding solar reconstruction maps the cloud-risk coordinate
\eqref{eq:solar-latent} back to physical output, combining the simulated
cloud-shortfall state with the deterministic daylight envelope:
\begin{equation}\label{eq:solar-inverse}
\begin{aligned}
  \widehat C_t^{\mathrm S}
  &=
  \min\Bigl\{1,\max\Bigl\{0,\;\\
  &\qquad
  C_t^{\mathrm{cs}}
  \bigl[1-
  \ell\bigl(\Lambda^{\mathrm S_\odot}(t)+Y_t^{\mathrm S_\odot}\bigr)\bigr]
  \Bigr\}\Bigr\},\\
  q_t^{\mathrm S}&=\bar q^{\mathrm S}\widehat C_t^{\mathrm S}.
\end{aligned}
\end{equation}
We use the clear-sky envelope as the solar baseline because it removes
the deterministic envelope before the stochastic state is fitted; the
residual state then measures weather-driven shortfall on a comparable scale
across hours with sufficient daylight. The asymmetry between
\eqref{eq:solar-risk-driver} and \eqref{eq:solar-inverse} is deliberate:
$\delta$ regularizes only the low-information observation map, whereas the
physical reconstruction uses the actual envelope $C_t^{\mathrm{cs}}$. Hence
cells classified as nighttime remain zero in every simulated production path.
Production in \eqref{eq:solar-inverse} is a \emph{decreasing} transform of the
logistic latent variable, whereas wind production in
\eqref{eq:wind-reconstruction} is increasing; this difference reverses the relation
between latent and physical covariance for solar.

\begin{remark}[Monotonicity of the production maps and the latent sign convention]
\label{rem:latent-sign-convention}
The two technologies share a common transformation structure: an unconstrained
real-valued latent coordinate is mapped through the logistic function to a
bounded physical capacity factor and then to production
$q_t^r=\bar q^r\widehat C_t^r$.

\emph{Monotonicity of the physical maps.}
Write $\ell(x)=(1+\ee^{-x})^{-1}$, so that
$\ell'(x)=\ell(x)\{1-\ell(x)\}>0$. Conditional on the exogenous load path and deterministic
clear-sky envelope used in the reconstruction, and away from the saturation
boundaries of the clipping operators, the derivatives are
\begin{equation}\label{eq:map-derivatives}
\begin{aligned}
  \frac{\partial\widehat C_t^{\mathrm W}}
        {\partial Y_t^{\mathrm W}}
  &=
  \frac{\widehat L_t}{\bar L_{\mathcal C}}\,
  \ell'\!\bigl(X_t^{\mathrm W}\bigr)>0,\\
  \frac{\partial\widehat C_t^{\mathrm S}}
        {\partial Y_t^{\mathrm S_\odot}}
  &=
  -C_t^{\mathrm{cs}}\,
  \ell'\!\bigl(X_t^{\mathrm S}\bigr)<0,
  \qquad C_t^{\mathrm{cs}}>0 .
\end{aligned}
\end{equation}
Thus the wind latent residual raises production because
$X_t^{\mathrm W}$ is the logit of load-adjusted wind penetration, whereas
the solar latent residual lowers production because
$X_t^{\mathrm S}$ is the logit of cloud-cover shortfall: a larger solar
latent state represents cloudier conditions and lower output. At night,
$C_t^{\mathrm{cs}}=0$ and hence $\widehat C_t^{\mathrm S}=0$ independently
of the stochastic state. At clipping or saturation boundaries, both maps are
Lipschitz and their derivatives are zero almost everywhere outside the
interior region.

\emph{Continuous-state covariance and the Gaussian benchmark.}
For this paragraph, write
\[
  \widetilde Y_t^{\mathrm W}:=Y_t^{\mathrm W},
  \qquad
  \widetilde Y_t^{\mathrm S}:=Y_t^{\mathrm S_\odot},
\]
and, conditional on the deterministic or exogenous reconstruction inputs,
write $\widehat C_u^r=g_u^r(X_u^r)$ with
$X_u^r=\Lambda^r(u)+\widetilde Y_u^r$. Since
$S_u=\Lambda^S(u)+Y_u^S+J_u$, the conditional capture covariance decomposes
exactly as
\begin{equation}\label{eq:capture-covariance-components}
\begin{aligned}
  \Gamma_t^r(u)
  &=
  \Cov_t(q_u^r,S_u)\\
  &=
  \bar q^r\Cov_t\!\left(g_u^r(X_u^r),Y_u^S\right)
  +
  \bar q^r\Cov_t\!\left(g_u^r(X_u^r),J_u\right)\\
  &:=\Gamma_{t,\mathrm c}^r(u)+\Gamma_{t,\mathrm J}^r(u).
\end{aligned}
\end{equation}
The second term cannot generally be omitted in the state-dependent-jump
specification because the intensity of negative price spikes depends on the
renewable state.

In the Gaussian benchmark, $\nu_L=0$ and $J\equiv0$. Conditional on $\scrF_t$, the pair
$(X_u^r,Y_u^S)$ is then jointly Gaussian. Defining
\[
  \sigma_{rS,t}(u)
  :=
  \Cov_t\!\left(\widetilde Y_u^r,Y_u^S\right)
  =
  \Cov_t\!\left(X_u^r,Y_u^S\right),
\]
Stein's identity gives the exact relation
\begin{equation}\label{eq:sign-chain}
  \Gamma_{t,\mathrm c}^r(u)
  =
  \bar q^r\,\sigma_{rS,t}(u)\,
  \E_t\!\left[(g_u^r)'(X_u^r)\right],
  \quad \nu_L=0,\quad J\equiv0 .
\end{equation}
Consequently, whenever the expected derivative is nonzero,
\begin{equation}\label{eq:latent-cov-signs}
\begin{aligned}
  \operatorname{sign}\Gamma_{t,\mathrm c}^{\mathrm W}(u)
  &=
  \operatorname{sign}\sigma_{\mathrm WS,t}(u),\\
  \operatorname{sign}\Gamma_{t,\mathrm c}^{\mathrm S}(u)
  &=
  -\operatorname{sign}\sigma_{\mathrm S_\odot S,t}(u).
\end{aligned}
\end{equation}
Equation \eqref{eq:sign-chain} specializes to
$\Gamma_t(u)=\bar q\,\sigma_{RS}L_1(m_R,\sigma_R)$ for wind and to
\[
  \Gamma_t(u)
  =
  -\bar q\,C^{\mathrm{cs}}(u)
  \sigma_{RS}L_1(m_R,\sigma_R)
\]
for solar.

\emph{What remains valid under the MNIG driver.}
Under the fitted MNIG law, \eqref{eq:sign-chain} is not an exact identity.
A first-order expansion around
$m_{r,t,u}=\E_t[X_u^r]$ gives the diagnostic approximation
\[
  \Gamma_{t,\mathrm c}^r(u)
  \approx
  \bar q^r(g_u^r)'(m_{r,t,u})\,
  \sigma_{rS,t}(u),
\]
but the approximation contains a nonlinear remainder depending on higher
joint moments. Equivalently, conditioning on the inverse-Gaussian mixing
variables collected in $\mathcal M_{t,u}$ gives
\begin{align*}
  &\hspace{-5mm}\Cov_t\!\left(g_u^r(X_u^r),Y_u^S\right)
  =
  \E_t\!\left[
    \Cov_t\!\left(g_u^r(X_u^r),Y_u^S\mid\mathcal M_{t,u}\right)
  \right]\\
  &\qquad+
  \Cov_t\!\left(
    \E_t[g_u^r(X_u^r)\mid\mathcal M_{t,u}],
    \E_t[Y_u^S\mid\mathcal M_{t,u}]
  \right).
\end{align*}
The second term and the mixing-dependent conditional covariance prevent the
physical covariance sign from being inferred from
$\sigma_{rS,t}(u)$ alone without additional dependence assumptions. The
empirical signs
\[
  \operatorname{corr}(Y^{\mathrm W},Y^S)<0,
  \qquad
  \operatorname{corr}(Y^{\mathrm S_\odot},Y^S)>0
\]
are calibration diagnostics that are consistent with the opposite monotonicities
of the wind and solar maps; they are not implications of the MNIG model
alone. They must also be distinguished from the contemporaneous
correlation of the recovered driver increments: the MCARMA transfer function
maps $\Sigma_L$ into the entire state autocovariance sequence
$\Gamma(k)$, so the off-diagonal sign of $\Sigma_L$ need not equal the
lag-zero sign of the filtered state covariance.

\emph{Consistency across estimation, pricing and hedging.}
The calibration procedure matches the state autocovariances and the cross-spectral phase using the exact
second-order law of the sampled MCARMA process; this step requires finite
second moments, not Gaussian state marginals. The MNIG law is subsequently
fitted to the recovered driver increments and rescaled to preserve the
calibrated covariance, thereby adding skewness, heavy tails and tail dependence
without changing the targeted second-order structure. In analytic pricing,
the decreasing physical map produces the solar sign reversal. The Gaussian formulas are used only as a closed-form
benchmark, whereas the full L\'evy covariance densities include the jump
integral. Finally, the empirical design
evaluates every payoff, fair strike and hedge covariance after mapping each
simulated latent path through \eqref{eq:wind-reconstruction} or
\eqref{eq:solar-inverse} and adding the simulated spike component. Hence the
sign of simulated price-production covariance is determined after mapping the
latent states to physical production and adding the spike component, rather than
imposing it through a Gaussian covariance identity.
\end{remark}

\subsection{The MCARMA continuous state}\label{subsec:mcarma-state}

For each technology, define the state vector with the renewable latent
residual first and the spot residual second:
\begin{equation}\label{eq:Y-vector}
  \bm Y_t=(Y_t^{r},Y_t^S)^\top\in\R^2 .
\end{equation}
The ordering matters: joint lower-triangular restrictions on the autoregressive
and moving-average coefficient matrices allow renewable weather shocks to propagate
into spot prices while excluding direct dynamic feedback from the spot driver into
the renewable coordinate, reflecting the assumption that renewable shocks affect prices, but not conversely.

Let $p>q\ge0$ and let $d=2$ denote the observation dimension and $m=2$ the
driver dimension. The monic autoregressive and moving-average matrix
polynomials are
\begin{align}
  P(z)&=I_dz^p+A_1z^{p-1}+\cdots+A_p,\label{eq:mcarma-P}\\
  Q(z)&=B_0z^q+B_1z^{q-1}+\cdots+B_q,\label{eq:mcarma-Q}
\end{align}
with $A_i\in\R^{d\times d}$, $B_j\in\R^{d\times m}$. The formal MCARMA equation
\begin{equation}\label{eq:mcarma-formal}
  P(D)\bm Y_t=Q(D)D\bm L_t,\qquad D=\frac{\dd}{\dd t},
\end{equation}
where $\bm L$ is an $m$-dimensional zero-mean finite-variance L\'evy process, is
interpreted through the continuous-time state-space representation of
\citet{MarquardtStelzer2007,SchlemmStelzer2012}:
\begin{equation}\label{eq:mcarma-state-equation}
  \dd\bm X_t=\mathcal A\bm X_t\dd t+\beta\dd\bm L_t,\qquad
  \bm Y_t=\mathcal C\bm X_t ,
\end{equation}
with companion state $\bm X_t\in\R^{pd}$, observation matrix
$\mathcal C=(I_d,0_d,\ldots,0_d)\in\R^{d\times pd}$, companion matrix
\begin{equation}\label{eq:mcarma-companion}
  \mathcal A=
  \begin{pmatrix}
    0_d&I_d&0_d&\cdots&0_d\\
    0_d&0_d&I_d&\cdots&0_d\\
    \vdots&\vdots&\vdots&\ddots&\vdots\\
    0_d&0_d&0_d&\cdots&I_d\\
    -A_p&-A_{p-1}&-A_{p-2}&\cdots&-A_1
  \end{pmatrix}\in\R^{pd\times pd},
\end{equation}
and loading matrix
$\beta=(\beta_1^\top,\ldots,\beta_p^\top)^\top\in\R^{pd\times m}$ determined by
the recursion
\begin{equation}\label{eq:mcarma-beta-recursion}
\begin{aligned}
  \beta_{p-j}
  &=-\sum_{i=1}^{p-j-1} A_i\beta_{p-j-i}
  +\1_{\{0,\ldots,q\}}(j)B_{q-j},\\
  &\hspace{2.5em} j=0,\ldots,p-1,
\end{aligned}
\end{equation}

with empty sums equal to zero. In particular, for $(p,q)=(1,0)$ the recursion
gives $\beta_1=B_0$, as required by
$\mathcal C(zI_d-\mathcal A)^{-1}\beta=P(z)^{-1}Q(z)$. If all eigenvalues of
$\mathcal A$ have strictly
negative real part, the unique causal stationary solution is
\begin{equation}\label{eq:mcarma-stationary}
  \bm X_t=\int_{-\infty}^t\ee^{\mathcal A(t-s)}\beta\dd\bm L_s,
  \quad
  \bm Y_t=\int_{-\infty}^t\mathcal C\ee^{\mathcal A(t-s)}\beta\dd\bm L_s ,
\end{equation}
and the transfer function
\begin{equation}\label{eq:mcarma-transfer}
  H(z)=\mathcal C(zI_{pd}-\mathcal A)^{-1}\beta=P(z)^{-1}Q(z)
\end{equation}
links the state-space and polynomial representations.

\paragraph{Block lower-triangular design.}
The generic polynomials \crefrange{eq:mcarma-P}{eq:mcarma-Q} are
specialized so that the coefficient matrices encode the causal ordering of
\eqref{eq:Y-vector}. In the basis $(Y^r,Y^S)$ every autoregressive and every
moving-average coefficient is taken \emph{lower triangular}, i.e. its
upper-right entry vanishes,
\begin{equation}\label{eq:lower-triangular-coeffs}
  A_k=\begin{pmatrix} a_k^{r}&0\\[2pt] a_k^{\times}&a_k^{S}\end{pmatrix},
  \quad
  B_j=\begin{pmatrix} b_j^{r}&0\\[2pt] b_j^{\times}&b_j^{S}\end{pmatrix},
  \quad B_0=I_d ,
\end{equation}
for $k=1,\dots,p$ and $j=1,\dots,q$; the normalization $B_0=I_d$ fixes the
driver gauge required for the L\'evy recovery.
Because the inverse of a lower-triangular matrix polynomial is again lower
triangular, the transfer function \eqref{eq:mcarma-transfer} inherits the
structure, $H(z)=P(z)^{-1}Q(z)$ with $H_{rS}(z)\equiv0$. The renewable
coordinate is therefore a functional of its own driving increments alone,
while the spot coordinate is driven by both: renewable weather shocks
propagate dynamically into prices, whereas price shocks never feed back into
the renewable state. This one-directional (Granger) coupling specifies the one-way
effect of renewable shocks on prices documented in
\cref{subsec:emp-wind-effect,subsec:emp-forward}; contemporaneous dependence
is retained through the off-diagonal of the driver covariance $\Sigma_L$
(\cref{subsec:mnig-driver}), so the triangular design removes dynamic feedback,
not instantaneous correlation.

\paragraph{Modal structure.}
Since the determinant of a triangular matrix polynomial is the product of its
diagonal entries, $\det P(z)=p_r(z)\,p_S(z)$ with monic scalar polynomials
$p_c(z)=z^{p}+a_1^{c}z^{p-1}+\cdots+a_p^{c}$ for $c\in\{r,S\}$, and the
companion spectrum splits accordingly; the cross-coefficients $a_k^\times$ do not
affect the eigenvalues and only set the strength with which renewable modes are
transmitted into the spot coordinate. Each diagonal polynomial is specified in
factored form as a product of real relaxation modes and damped oscillator
pairs,
\begin{equation}\label{eq:scalar-poly-factorization}
\begin{aligned}
  p_c(z)&=\prod_{i=1}^{n_c^{\rm re}}\bigl(z+\rho_{c,i}\bigr)
  \prod_{j=1}^{n_c^{\rm os}}\Bigl[\bigl(z+\alpha_{c,j}\bigr)^2+\omega_{c,j}^2\Bigr],\\
  p&=n_c^{\rm re}+2\,n_c^{\rm os},
\end{aligned}
\end{equation}
with $\rho_{c,i},\alpha_{c,j}>0$ and $0<\omega_{c,j}<\pi/h$ for the hourly
sampling step $h$. A real mode contributes an exponentially decaying
autocovariance with relaxation time $1/\rho_{c,i}$; a conjugate pair
contributes a damped cosine of period $2\pi/\omega_{c,j}$ under an
exponentially decaying envelope. Combining several real decay
rates with oscillator pairs at the selected calendar frequencies allows
the model to represent the multiscale and periodic autocorrelation
of the deseasonalized states.

\begin{assumption}[MCARMA standing assumptions]\label{ass:mcarma}
On the pricing horizon:
(i) $\bm L$ is a two-dimensional L\'evy process with $\E[\bm L_1]=0$,
$\E\norm{\bm L_1}^2<\infty$, L\'evy triplet $(\gamma_L,\Sigma_G,\nu_L)$ and
cumulant exponent
\begin{equation}\label{eq:levy-cumulant}
\begin{aligned}
  \kappa_L(\bm u)
  &=\bm u^\top\gamma_L+\frac12\bm u^\top\Sigma_G\bm u\\
  &\quad
  +\int_{\R^2}\left(\ee^{\bm u^\top\bm x}-1-
  \bm u^\top\bm x\1_{\{\norm{\bm x}\le1\}}\right)\nu_L(\dd\bm x)
\end{aligned}
\end{equation}
on its exponential moment domain;
(ii) $\mathcal A$ is stable,
$\max\{\Real\lambda:\lambda\in\sigma(\mathcal A)\}<0$;
(iii) the realization $(\mathcal A,\beta,\mathcal C)$ is minimal (controllable
and observable) on the fitted model class; and
(iv) $P$ and $Q$ are left-coprime in the fitted parametrization.
\end{assumption}

\begin{proposition}[Stationary second-order structure]\label{prop:mcarma-second-order}
Under Assumption \ref{ass:mcarma}, with $\Sigma_L=\Cov(\bm L_1)$, the stationary companion
covariance $V_X=\Cov(\bm X_t)$ is the unique positive semidefinite solution of
the Lyapunov equation
\begin{equation}\label{eq:mcarma-lyapunov}
  \mathcal A V_X+V_X\mathcal A^\top+\beta\Sigma_L\beta^\top=0,
\end{equation}
and for $h\ge0$ the autocovariance of the observed state is
\begin{equation}\label{eq:mcarma-acov}
  \Gamma_Y(h)=\Cov(\bm Y_{t+h},\bm Y_t)
  =\mathcal C\ee^{\mathcal A h}V_X\mathcal C^\top .
\end{equation}
\end{proposition}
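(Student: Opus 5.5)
We start from the stationary representation \eqref{eq:mcarma-stationary}, $\bm X_t=\int_{-\infty}^t\ee^{\mathcal A(t-s)}\beta\dd\bm L_s$, which is well defined under Assumption~\ref{ass:mcarma}(i)--(ii). Because $\bm L$ is a zero-mean square-integrable L\'evy process, its compensated increments are orthogonal with $\E[\dd\bm L_s\dd\bm L_s^\top]=\Sigma_L\dd s$. Stability gives $\norm{\ee^{\mathcal A u}}\le K\ee^{-\lambda u}$ for some $\lambda>0$, so $\int_0^\infty\norm{\ee^{\mathcal A u}\beta}^2\dd u<\infty$. The It\^o isometry for L\'evy stochastic integrals with deterministic integrands then yields
\[
V_X=\Cov(\bm X_t)=\int_0^\infty\ee^{\mathcal A u}\beta\Sigma_L\beta^\top\ee^{\mathcal A^\top u}\dd u ,
\]
independent of $t$. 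This matrix is positive semidefinite because each integrand is.

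\textbf{Lyapunov equation.} We differentiate $u\mapsto\ee^{\mathcal A u}\beta\Sigma_L\beta^\top\ee^{\mathcal A^\top u}$, whose derivative is $\mathcal A(\cdot)+(\cdot)\mathcal A^\top$. Integrating over $[0,\infty)$ and using the exponential decay at infinity gives $\mathcal A V_X+V_X\mathcal A^\top=-\beta\Sigma_L\beta^\top$, which is \eqref{eq:mcarma-lyapunov}.

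\textbf{Uniqueness.} Suppose two solutions exist and let $W$ be their difference. Then $\mathcal A W+W\mathcal A^\top=0$. The Sylvester operator $W\mapsto\mathcal A W+W\mathcal A^\top$ has eigenvalues $\lambda_i+\lambda_j$, where $\lambda_i,\lambda_j\in\sigma(\mathcal A)$. Each of these has strictly negative real part, so none is zero, the operator is invertible, and $W=0$. Hence $V_X$ is the unique solution, and in particular the unique positive semidefinite one. Minimality (Assumption~\ref{ass:mcarma}(iii)) is not needed here; it would only upgrade the result to positive definiteness when $\Sigma_L\succ0$.

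\textbf{Autocovariance.} For $h\ge0$, split the integral defining $\bm X_{t+h}$ at time $t$:
\[
\bm X_{t+h}=\ee^{\mathcal A h}\bm X_t+\int_t^{t+h}\ee^{\mathcal A(t+h-s)}\beta\dd\bm L_s .
\]
The second term depends only on increments of $\bm L$ on $(t,t+h]$, so it is independent of $\bm X_t$, which is measurable with respect to increments up to time $t$. That term also has mean zero. Hence $\Cov(\bm X_{t+h},\bm X_t)=\ee^{\mathcal A h}V_X$. Applying $\bm Y=\mathcal C\bm X$ on both sides gives \eqref{eq:mcarma-acov}.

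The main obstacle, though a mild one, is the rigorous justification of the two-sided stochastic integral and of the isometry for a general L\'evy driver with finite second moment. We handle this by citing the construction in \citet{MarquardtStelzer2007}, extending $\bm L$ to $\R$ by an independent copy. All remaining steps are linear algebra.
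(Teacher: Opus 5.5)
Your proof is correct and follows the standard argument behind this proposition, which the paper defers to the supplement and which traces to \citet{MarquardtStelzer2007}. The steps match: the It\^o isometry gives $V_X=\int_0^\infty\ee^{\mathcal A u}\beta\Sigma_L\beta^\top\ee^{\mathcal A^\top u}\dd u$; integrating the derivative of the integrand yields \eqref{eq:mcarma-lyapunov}; stability makes $W\mapsto\mathcal A W+W\mathcal A^\top$ invertible; and splitting $\bm X_{t+h}$ at $t$ with independent increments gives \eqref{eq:mcarma-acov}. Your remark is also accurate that minimality is needed only for positive definiteness when $\Sigma_L\succ0$, not for uniqueness.
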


These identities underlie the multiscale autocovariance diagnostics, the
spectral initialization and the exact sampled-state likelihood of
\cref{sec:calibration}: the model matches empirical auto- and cross-covariances
at hourly, daily and weekly lags through $\ee^{\mathcal A h}$ acting on $V_X$,
including the oscillatory components induced by complex eigenvalue pairs of
$\mathcal A$ at calendar frequencies.

\subsection{The MNIG driver}\label{subsec:mnig-driver}

The driver $\bm L$ is specified as a bivariate normal-inverse Gaussian (NIG)
L\'evy process with a common inverse Gaussian subordinator
\citep{BarndorffNielsen1997}: over a sampling interval of length $h$,
\begin{equation}\label{eq:mnig-increment}
\begin{aligned}
  \Delta\bm L
  &\overset{d}{=}
  \bm\mu h+\bm b\,W+\sqrt{W}\,\bm Z,\\
  \bm Z&\sim N(0,\bm\Sigma),\qquad
  W\sim\mathrm{IG}(\delta h,\gamma),
\end{aligned}
\end{equation}
with $\bm Z$ independent of $W$ and parameters constrained so that
$\E[\Delta\bm L]=0$ and $\Cov(\Delta\bm L)=\Sigma_L h$ matches the covariance
identified in estimation. The common subordinator generates joint heavy tails
and tail dependence between renewable and price shocks, producing heavier
joint tails and stronger tail dependence than a Gaussian driver
while preserving the exact second-order structure
of \cref{prop:mcarma-second-order}. The MNIG law is estimated from the recovered
driver increments.

\subsection{State-dependent price spikes}\label{subsec:spikes}

The spike component is a price-only mean-reverting jump process,
\begin{equation}\label{eq:jump-ou}
\begin{aligned}
  \dd J_t&=-\kappa_JJ_t\dd t+\dd Z_t^+ +\dd Z_t^- ,\\
  \dd Z_t^\pm&=\int_\R x\,N^\pm(\dd t,\dd x),
\end{aligned}
\end{equation}
where positive marks have law $\nu^+$ on $(0,\infty)$ and negative marks law
$\nu^-$ on $(-\infty,0]$. Positive spikes have deterministic intensity
$\lambda^+(t)$; negative spikes have an intensity that depends on the
left-continuous renewable state,
\begin{equation}\label{eq:state-dependent-intensity}
  \lambda_t^-
  =
  \lambda_L^-+\Delta\lambda^-
  \1_{\{X_{t-}^{r}>k_\star^r\}},
  \qquad k_\star^r=\logit(c_\star^r),
\end{equation}
so that episodes of high renewable penetration raise the arrival rate of
negative price spikes. For wind, $X^r$ is the wind-penetration logit state,
and this mechanism is consistent with \citet{DeschatreVeraart2018}. For
solar, the analogous state dependence is an empirical extension developed in
this paper: the latent shortfall state enters with reversed sign so that the intensity
switches to the high regime when shortfall is low and solar output is high. Whenever
analytic state derivatives of the jump transform are required, the threshold is
replaced by the smooth sigmoid
$\lambda_L^-+\Delta\lambda^-\ell(a(X^r-k_\star^r))$ with finite steepness $a$;
simulations use the hard-threshold specification.

\begin{assumption}[Jump admissibility]\label{ass:jump-admissibility}
The marked point measures $N^\pm$ are defined on an extension of the MCARMA
filtration with predictable intensities $\lambda^+(t)$ and $\lambda_t^-$ and
i.i.d.\ marks with laws $\nu^\pm$; the intensities are bounded and
non-negative, the mark second moments $m_{2,\pm}=\int x^2\nu^\pm(\dd x)$ are
finite, and the mark moment generating functions are finite on every transform
strip used below \citep{Bremaud1981}.
\end{assumption}

\section{Estimation methodology}\label{sec:calibration}

The model is estimated on hourly German data: day-ahead spot prices, wind and
solar capacity factors, system load and temperature. The calibration window
runs from January~2023 through December~2024 and the out-of-sample PPA delivery
window covers January through December~2025, so that all pricing and hedging
results are evaluated on data not used in estimation. Estimation proceeds in four stages: deterministic seasonality,
the continuous MCARMA state, recovered L\'evy increments, and
the spike layer.

\subsection{Seasonal decomposition and state construction}\label{subsec:seasonality}

The deterministic spot level has the product form of \citet{Paraschiv2015},
\begin{equation}\label{eq:spot-seasonality}
  \Lambda^S(t)=\bar S\,F_t^{2Y,S}F_t^{2D,S},
\end{equation}
where the factor-to-year term $F^{2Y,S}$ is estimated on daily average prices by
a regression on day-of-week and month indicators, holiday effects and
heating/cooling temperature terms $(15-\Theta_d)^+$ and $(\Theta_d-15)^+$, and
the factor-to-day term $F^{2D,S}$ is the normalized intraday profile within
twenty calendar classes (weekday months and weekend season groups). The
seasonality is fitted directly to prices rather than to a shifted logarithm, so
the residual $Y^S$ is an additive price residual in EUR/MWh. The spike state
$J_t$ is identified first by an iterative threshold procedure on the
deseasonalized price and removed, so that the continuous residual entering the
MCARMA estimation is $\widehat Y_t^S=S_t-\widehat\Lambda_t^S-\widehat J_t$.

The renewable seasonal components $\Lambda^{\mathrm W}$ and
$\Lambda^{\mathrm S_\odot}$ are calendar--Fourier regressions on the
corresponding logit states \eqref{eq:wind-logit} and \eqref{eq:solar-latent}.
For solar, the locally pooled quantile
\eqref{eq:solar-local-envelope}, circular smoother, visibility scaling
\eqref{eq:solar-envelope-scale} and support rule
\eqref{eq:solar-structural-night} are fitted first. The regularized observation
map \eqref{eq:solar-risk-driver} then separates the nonidentified
low-irradiance region from daylight weather variation, and
\eqref{eq:solar-robust-seasonality} estimates the latent calendar level with
Huber and chronological smoothness regularization. The final calibration series is
the bivariate hourly residual vector
$\widehat{\bm Y}_t=(\widehat Y_t^{r},\widehat Y_t^S)^\top$; no
post-calibration observation enters these estimates.

\section{Calibration Results}\label{sec:calibration-results}

This section reports the fitted wind and solar models that are used in the
empirical hedging experiment of \cref{sec:empirical-design}.

\providecommand{\inputcalibrationtable}[1]{
  \begingroup
  \expandafter\let\expandafter\table\csname table*\endcsname
  \expandafter\let\expandafter\endtable\csname endtable*\endcsname
  \input{#1}
  \endgroup}

The graphical diagnostics are complemented by full-level one-step prediction
metrics. For a physical component $X_t$ and its one-step fitted value
$\widehat X_{t|t-h}$, the reported root mean squared error, mean absolute
error and bias are
\begin{align*}
  \operatorname{RMSE}(X)
  &=
  \left\{N^{-1}\sum_t
  \bigl(X_t-\widehat X_{t|t-h}\bigr)^2\right\}^{1/2},\\
  \operatorname{MAE}(X)
  &=
  N^{-1}\sum_t\bigl|X_t-\widehat X_{t|t-h}\bigr|,\\
  \operatorname{Bias}(X)
  &=
  N^{-1}\sum_t\bigl(\widehat X_{t|t-h}-X_t\bigr).
\end{align*}
The same tables report realized and fitted standard deviations and the
empirical correlation between $X_t$ and $\widehat X_{t|t-h}$. These statistics
are computed after applying the inverse physical maps
\eqref{eq:wind-reconstruction} and \eqref{eq:solar-inverse}, so they measure
fit quality on the variables entering the PPA payoff rather than only on the
latent state scale.

\subsection{Wind calibration}\label{subsec:calibration-results-wind}

For wind, the fitted state is
$\bm Y_t=(Y_t^{\mathrm W},Y_t^S)^\top$, where
$Y^{\mathrm W}$ is the logit residual of the load-adjusted wind pressure
$\Pi^{\mathrm W}$ and $Y^S$ is the additive spot residual. Physical
wind production is reconstructed using \eqref{eq:wind-reconstruction}, so
an increase in $Y_t^{\mathrm W}$ increases $\Pi_t^{\mathrm W}$ and wind
production. The estimated dependence therefore has the same sign on the latent
and physical wind scales: the relevant cannibalisation relation is the
negative association between wind pressure and spot prices.

\Cref{fig:cal-wind-state-acf-psd} assesses the fit of the sampled-state law
before the MNIG marginal layer is assessed. The wind-state autocorrelation
decays from near one at the first lag to values close to zero after roughly one
week, and the realized curve is close to the simulated median for the short lags
that drive one-step prediction. The spectrum shows similar agreement in the
frequency domain: the low-frequency wind persistence and the selected daily
and half-daily oscillator frequencies are reproduced by the exact resolvent
spectrum of the fitted state-space model. The spot state exhibits stronger daily periodicity
associated with market hours. Its autocorrelation has visible daily peaks throughout
the 336-hour diagnostic window, and the PSD displays peaks at the corresponding
frequencies. The fitted median captures these peaks and the realized spectrum
remains inside the simulated envelope over the economically relevant
low-frequency and calendar-frequency bands.

The wind-pressure equation has an
RMSE of $0.0319$ and correlation $0.9872$ against realized
$\Pi_t^{\mathrm W}$, while the spot equation has an RMSE of about
$12.22$ EUR/MWh and correlation $0.9709$ against realized $S_t$. The biases
are small relative to the respective standard deviations, and the fitted
standard deviations are close to the realized ones. The filter therefore matches
the main persistence and spectral features and yields close one-step
conditional-mean predictions for the two payoff variables.

The dynamic cross-dependence is summarized by
\cref{fig:cal-wind-lead-lag}. The plotted statistic is
$k\mapsto\operatorname{Corr}(S_{t+k},\Pi_t^{\mathrm W})$. It has a pronounced
minimum at $k=0$, approximately $-0.45$ to $-0.50$, and remains negative over
lags near zero. This pattern is consistent with wind
cannibalisation: high contemporaneous wind pressure is
associated with low spot prices, and the effect persists over neighbouring
hours. The fitted model matches the contemporaneous minimum and the
asymmetric return of the correlation toward zero; the realized curve is
well inside the simulated band over almost the whole window.

\Cref{fig:cal-wind-mnig-residuals} then compares the marginal distributions of
the deseasonalized state coordinates. The wind residual is close to symmetric
after the load-adjusted logit transform, and the simulated law matches its dispersion and the
concentration of observations near the center. The spot residual is heavier tailed and more skewed; the fitted MNIG density is smoother than the empirical
distribution near the center.

\begin{figure*}[p]
\centering
\begin{subfigure}[t]{.96\textwidth}
\centering
\includegraphics[width=\linewidth,height=.34\textheight,keepaspectratio]{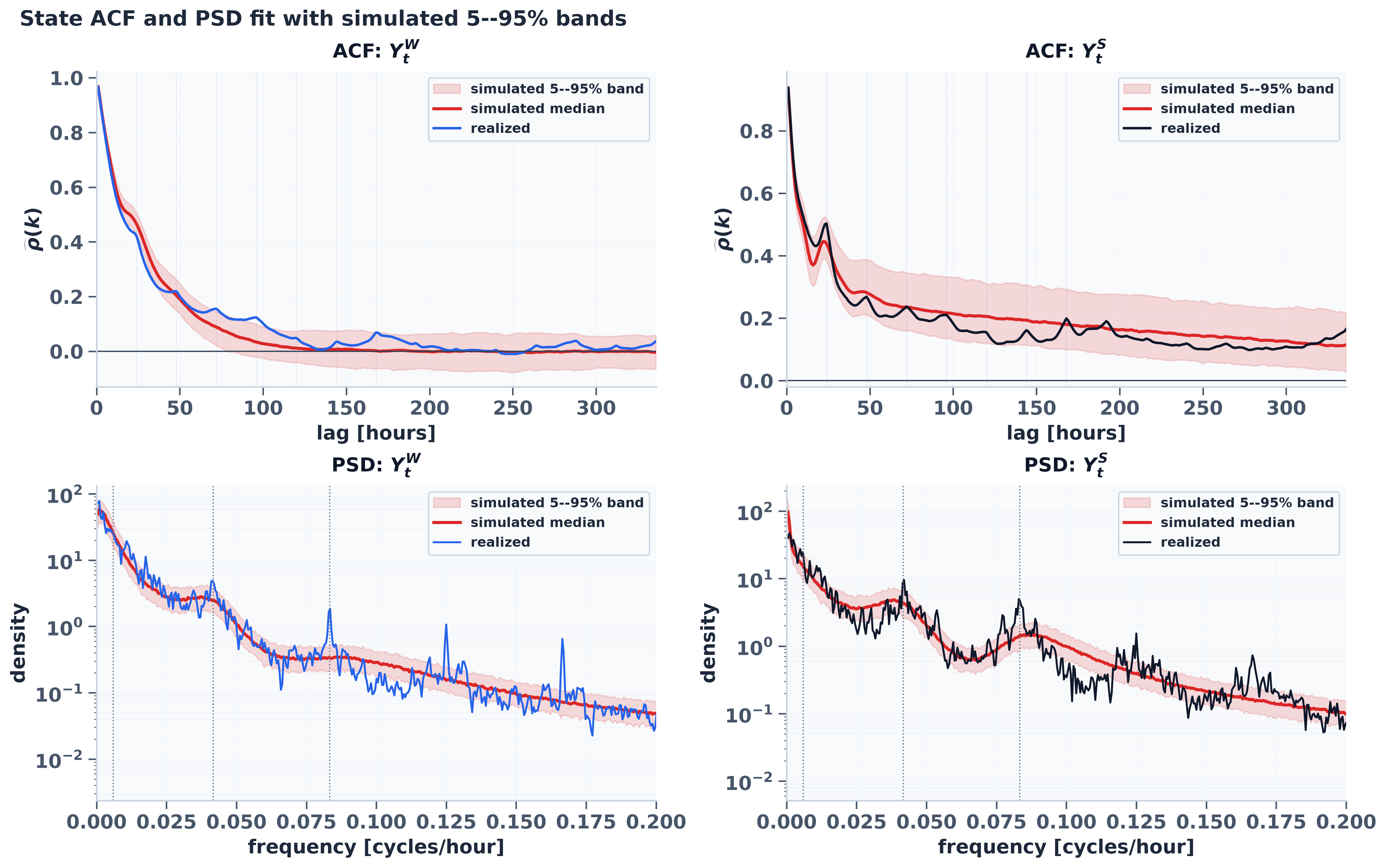}
\caption{State autocorrelations and Welch/resolvent spectra.}
\end{subfigure}\\[0.3em]
\begin{subfigure}[t]{.48\textwidth}
\centering
\includegraphics[width=\linewidth,height=.27\textheight,keepaspectratio]{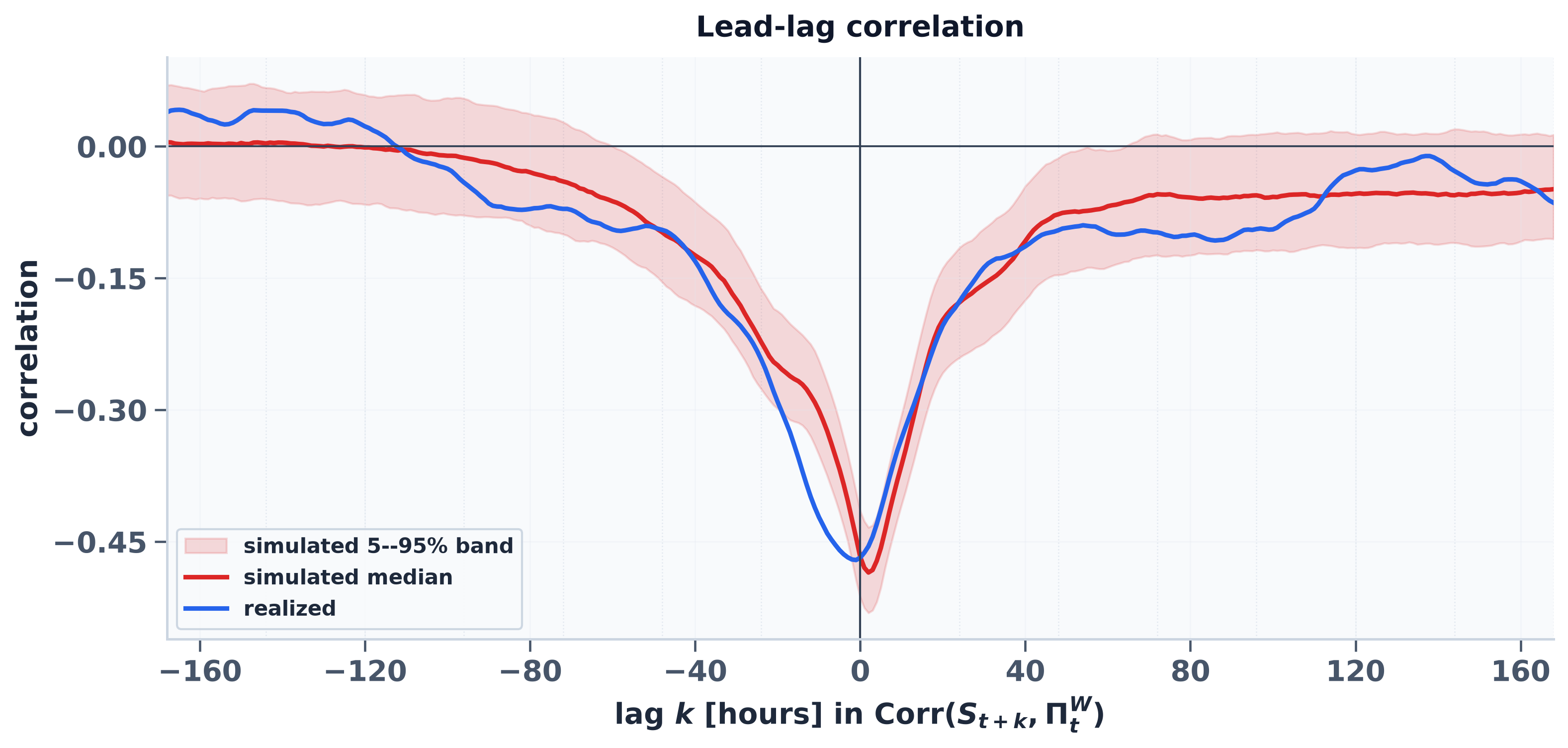}
\caption{Full-level wind--price lead--lag dependence.}
\end{subfigure}\hfill
\begin{subfigure}[t]{.48\textwidth}
\centering
\includegraphics[width=\linewidth,height=.27\textheight,keepaspectratio]{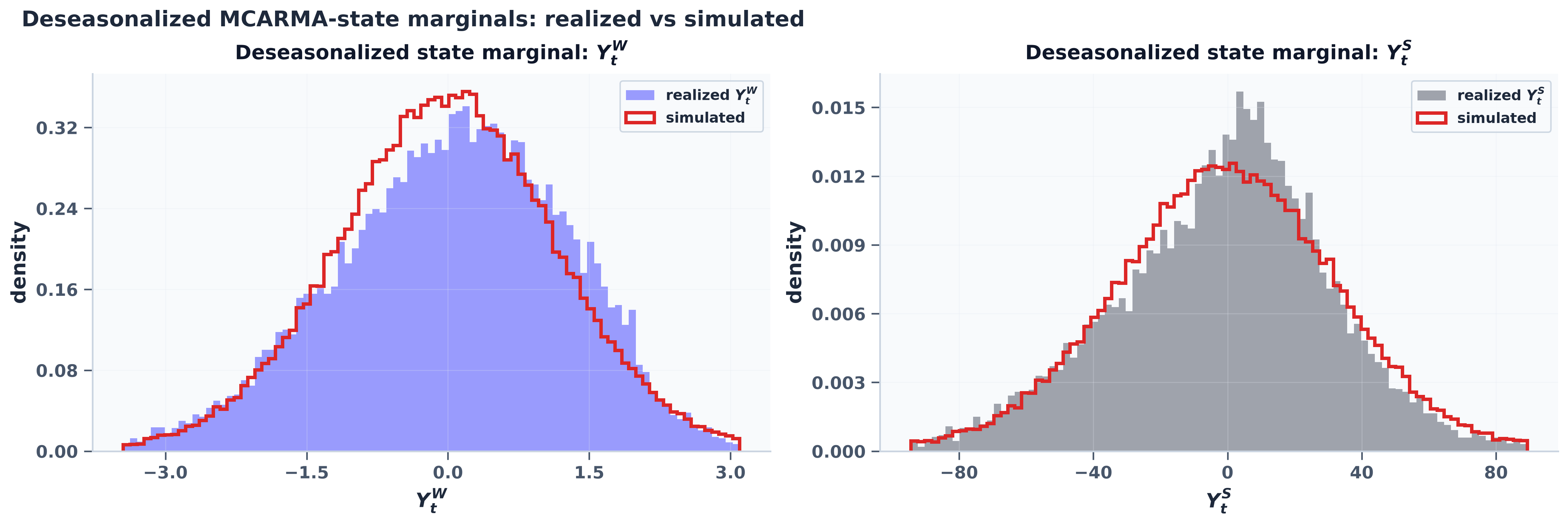}
\caption{MNIG-driven residual marginals.}
\end{subfigure}
\caption{Wind calibration diagnostics for the deseasonalized states and their dependence.
Bands are simulated from the fitted exact-sampled MCARMA--MNIG law.}
\label{fig:cal-wind-state-acf-psd}
\label{fig:cal-wind-lead-lag}
\label{fig:cal-wind-mnig-residuals}
\end{figure*}

At the physical level, \cref{fig:cal-wind-mnig-marginals,fig:cal-wind-mnig-bivar}
show that the fitted model reproduces the main unconditional features of the
joint law of $(\Pi_t^{\mathrm W},S_t)$. The marginal law of
$\Pi_t^{\mathrm W}$ has the correct support, positive skewness, and a long upper tail
corresponding to high-wind hours, while the spot marginal has the correct broad level and
dispersion. The bivariate density is concentrated along a downward-sloping region: high
prices occur mainly when wind pressure is low, whereas high wind pressure is concentrated at
lower prices. The continuous component underrepresents isolated high-price
events at very low wind; these events are instead captured by
the spike component rather than to the continuous MCARMA driver.

\Cref{fig:cal-wind-mnig-acf} checks that the same MNIG-driven physical paths
preserve the full-level serial dependence. The wind-pressure ACF combines
slow persistence with residual daily oscillations, and the simulated median
tracks the realized curve over the full 336-hour horizon. The spot ACF has
the familiar daily and half-daily peaks; the model reproduces their location
and most of their amplitude, with the realized curve remaining close to the
simulated band at the major market-clock lags. This shows that the physical-scale simulation
retains the main serial-dependence pattern, not only the state-scale fit in
\cref{fig:cal-wind-state-acf-psd}.

\subsection{Solar calibration}\label{subsec:calibration-results-solar}

For solar, the estimated state is
$\bm Y_t=(Y_t^{\mathrm S_\odot},Y_t^S)^\top$. The first coordinate is not the
capacity factor itself but the bounded, visibility-regularized and
deseasonalized latent cloud-shortfall state. The physical capacity factor is
obtained through the decreasing map
\eqref{eq:solar-inverse}. Consequently, a positive latent covariance
$\Cov(Y_t^{\mathrm S_\odot},Y_t^S)$ corresponds to a negative physical
covariance between solar production and prices, as explained in
Remark \ref{rem:latent-sign-convention}. All solar results below should therefore be
read with this sign convention in mind: model fitting occurs on
$Y^{\mathrm S_\odot}$, while PPA cash flows and full-level diagnostics are
reported for $C_t^{\mathrm S}$.

The state-scale second-order diagnostics in
\cref{fig:cal-solar-state-acf-psd} show that the fitted MCARMA law captures
the dominant solar and spot time scales. The solar latent ACF has a strong
daily structure induced by the clear-sky/weather decomposition and the simulated
median reproduces the main decay and the daily peaks, while the realized curve
is sharper at some calendar lags. The PSD makes this more explicit: the
selected daily and half-daily frequencies are present in both realized and
simulated spectra, and the low-frequency envelope is matched, although the
realized solar spectrum contains additional narrow high-frequency peaks that are
not included in the two-oscillator specification. The spot residual behaves as in
the wind calibration: the model reproduces the low-frequency decay and the
calendar peaks of $Y^S$, with realized values largely inside the simulated
band.

The full-level one-step metrics show similar agreement. After applying
\eqref{eq:solar-inverse}, the solar capacity factor has an RMSE of $0.0168$,
correlation $0.9931$ and bias $-0.0003$; the spot equation has an RMSE of
$11.80$~EUR/MWh and correlation $0.9730$. The close agreement between realized
and fitted standard deviations shows that the nonlinear reconstruction
preserves the calibrated state information on the physical scale used in
valuation.

The lead--lag diagnostic in \cref{fig:cal-solar-lead-lag} is plotted on the
physical scale as
$k\mapsto\operatorname{Corr}(S_{t+k},C_t^{\mathrm S})$. Its magnitude is much
smaller than the wind lead--lag curve, with values concentrated around zero
and a local contemporaneous peak of roughly $0.07$. This reflects the fact
that full-level solar output contains a deterministic daylight envelope:
intraday price and production profiles partly offset the negative latent
weather--price relation. The fitted model reproduces the sign change around
the contemporaneous lag and the simulation band contains the realized curve. The physical-scale correlation is relevant
for PPA cash flows because \eqref{eq:solar-inverse} is applied before prices are multiplied by solar
production.

\begin{figure*}[p]
\centering
\begin{subfigure}[t]{.96\textwidth}
\centering
\includegraphics[width=\linewidth,height=.38\textheight,keepaspectratio]{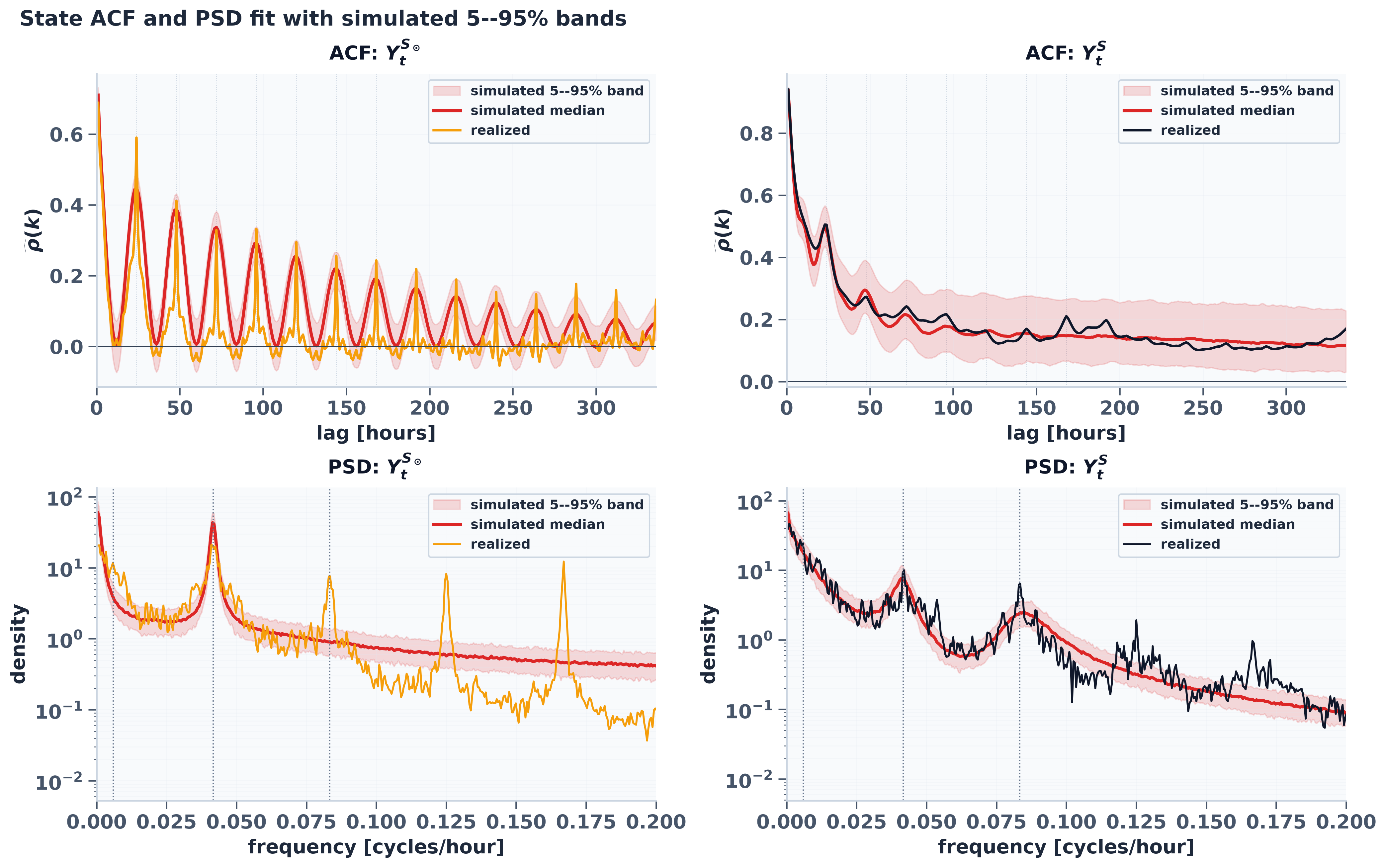}
\caption{Solar-shortfall and spot autocorrelations and spectra.}
\end{subfigure}\\[0.35em]
\begin{subfigure}[t]{.92\textwidth}
\centering
\includegraphics[width=\linewidth,height=.30\textheight,keepaspectratio]{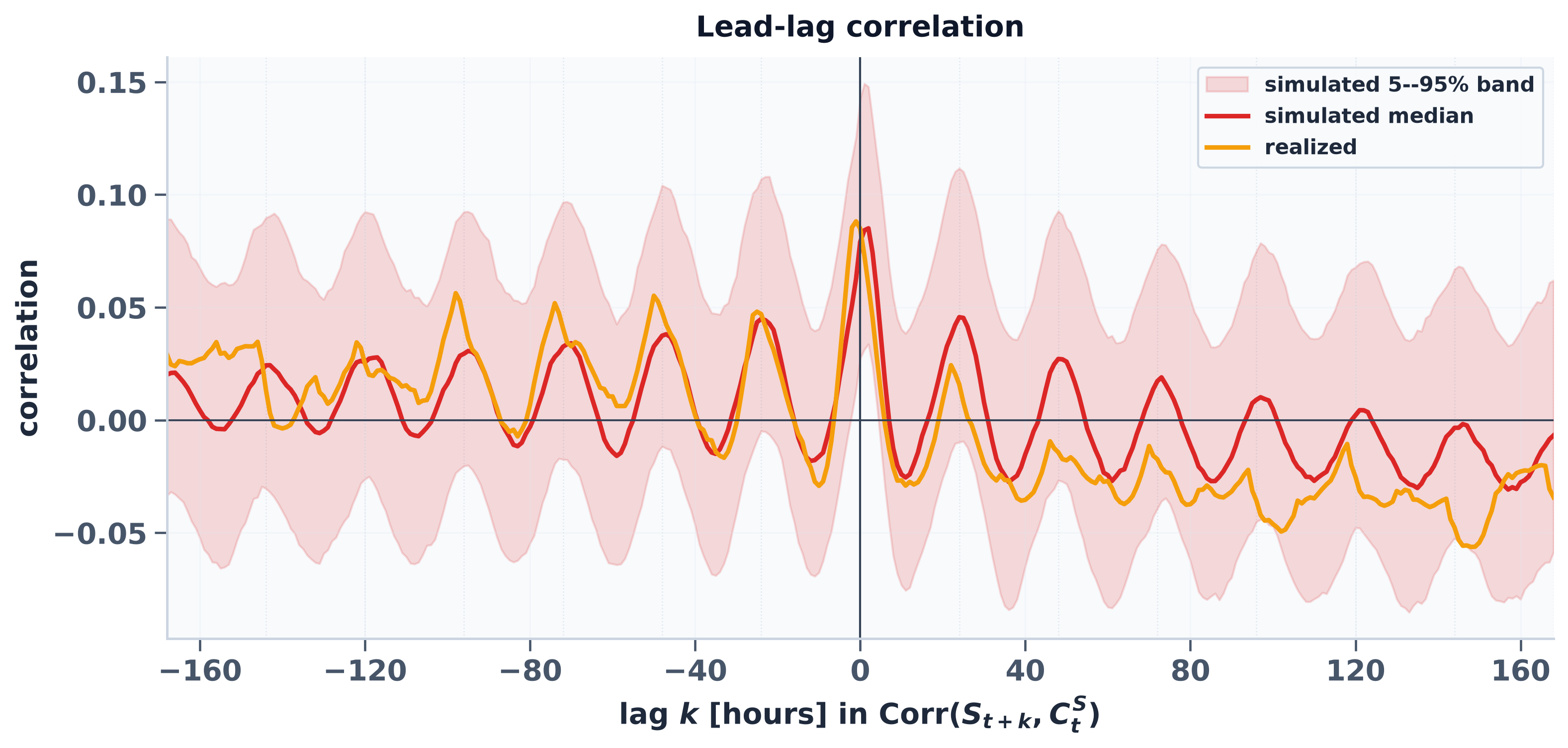}
\caption{Full-level solar--price lead--lag dependence after the decreasing
physical reconstruction.}
\end{subfigure}
\caption{Solar calibration on the deseasonalized state and dependence scales.}
\label{fig:cal-solar-state-acf-psd}
\label{fig:cal-solar-lead-lag}
\end{figure*}

\Cref{fig:cal-solar-mnig-marginals} shows that the model
reproduces the large mass of $C_t^{\mathrm S}$ near zero and the long daytime
right tail of solar capacity factors. The spot marginal again captures the
central price dispersion and heavy tails while smoothing the empirical
concentration near zero. The bivariate law in \cref{fig:cal-solar-mnig-bivar}
has the correct triangular support: high solar output is concentrated in lower
spot-price regions, while low output admits the full range of prices. After the decreasing reconstruction
map reverses the latent sign, the physical variables display the expected negative price-solar relation
in \eqref{eq:solar-inverse}.

The fitted full-level ACF in \cref{fig:cal-solar-mnig-acf} closely matches the empirical ACF of the solar
capacity factor. This is expected: once the clear-sky envelope and the
bounded inverse map are restored, the deterministic diurnal component dominates
the full-level autocorrelation. The fitted simulation therefore matches the
realized daily oscillations over the full 336-hour window. For $S_t$, the
model captures the daily and half-daily price peaks and the longer decay,
with residual discrepancies concentrated at the most pronounced daily price peaks.

\begin{figure*}[p]
\centering
\begin{subfigure}[t]{.49\textwidth}
\centering
\includegraphics[width=\linewidth,height=.22\textheight,keepaspectratio]{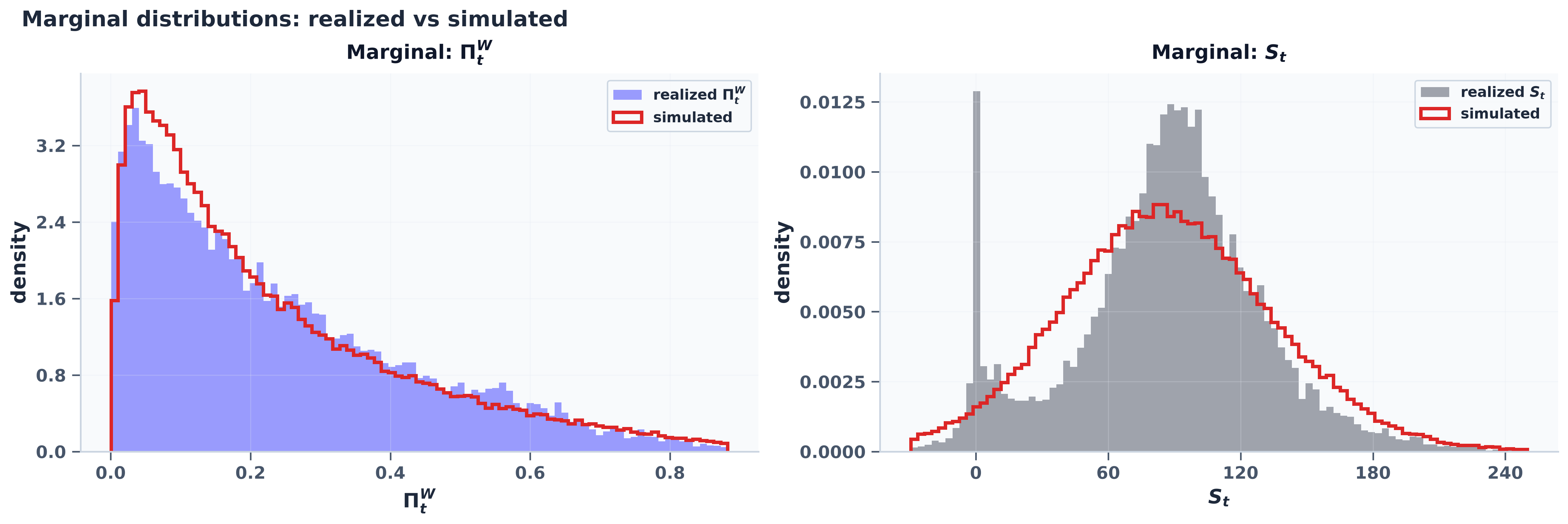}
\caption{Wind-pressure and spot marginals.}
\end{subfigure}\hfill
\begin{subfigure}[t]{.49\textwidth}
\centering
\includegraphics[width=\linewidth,height=.22\textheight,keepaspectratio]{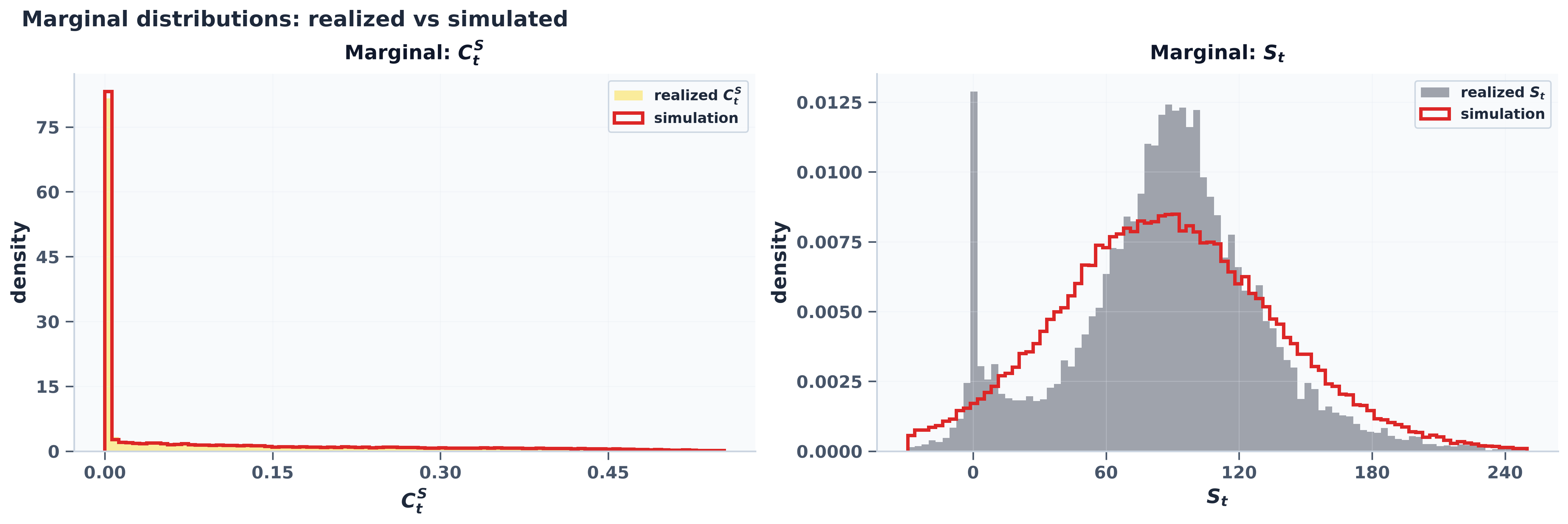}
\caption{Solar-capacity-factor and spot marginals.}
\end{subfigure}\\[0.2em]
\begin{subfigure}[t]{.49\textwidth}
\centering
\includegraphics[width=\linewidth,height=.22\textheight,keepaspectratio]{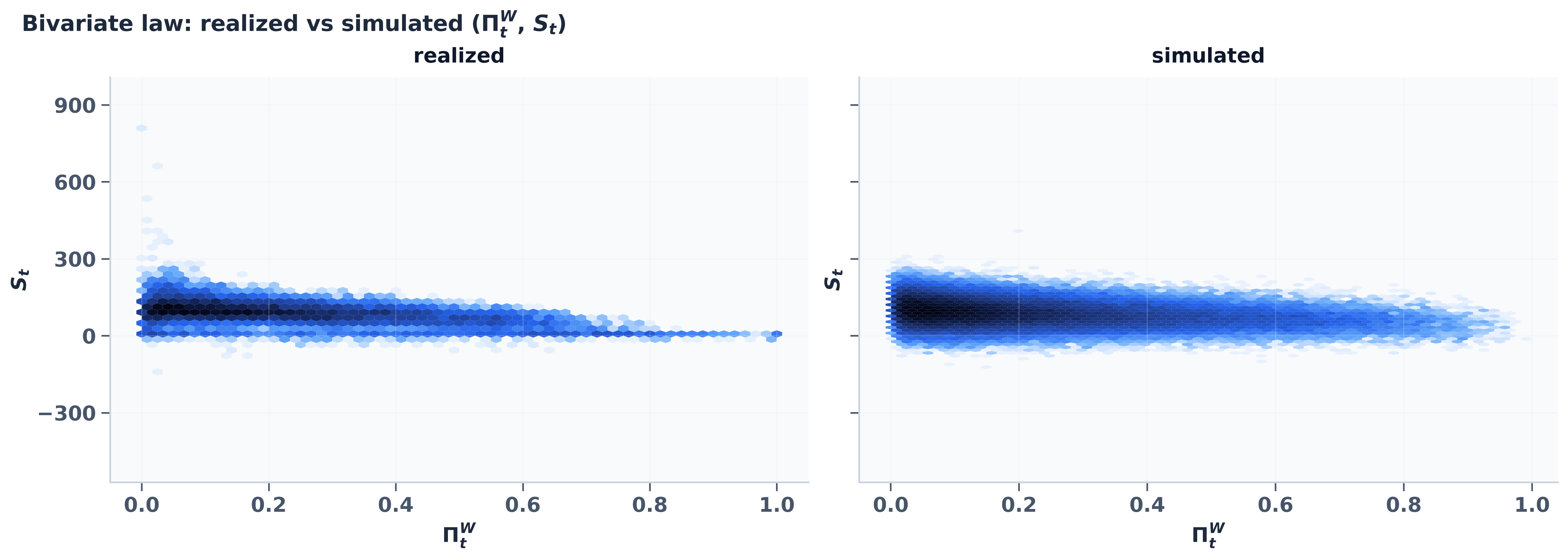}
\caption{Wind full-level bivariate law.}
\end{subfigure}\hfill
\begin{subfigure}[t]{.49\textwidth}
\centering
\includegraphics[width=\linewidth,height=.22\textheight,keepaspectratio]{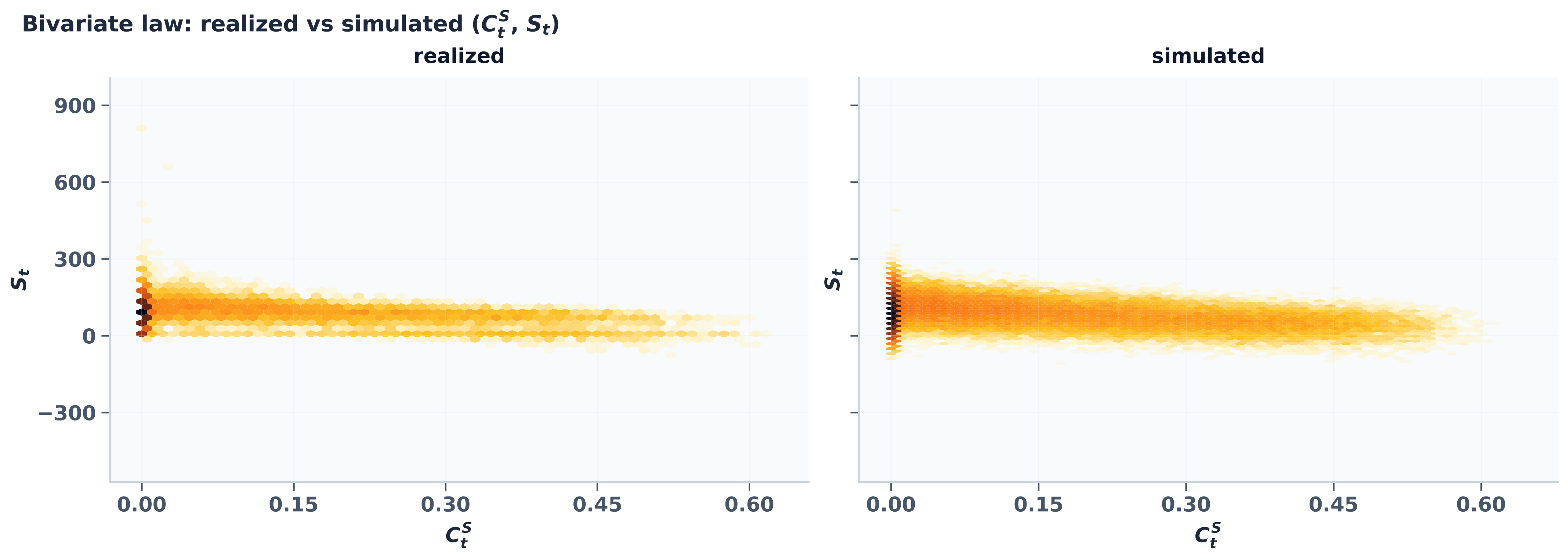}
\caption{Solar full-level bivariate law.}
\end{subfigure}\\[0.2em]
\begin{subfigure}[t]{.49\textwidth}
\centering
\includegraphics[width=\linewidth,height=.22\textheight,keepaspectratio]{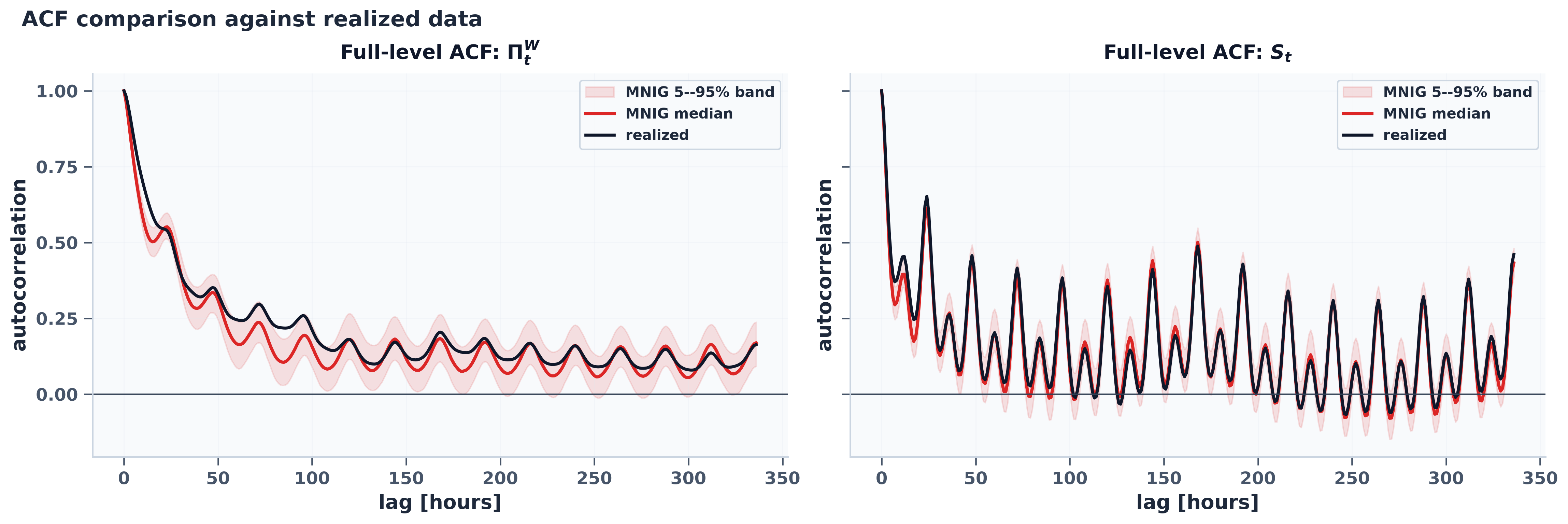}
\caption{Wind full-level autocorrelations.}
\end{subfigure}\hfill
\begin{subfigure}[t]{.49\textwidth}
\centering
\includegraphics[width=\linewidth,height=.22\textheight,keepaspectratio]{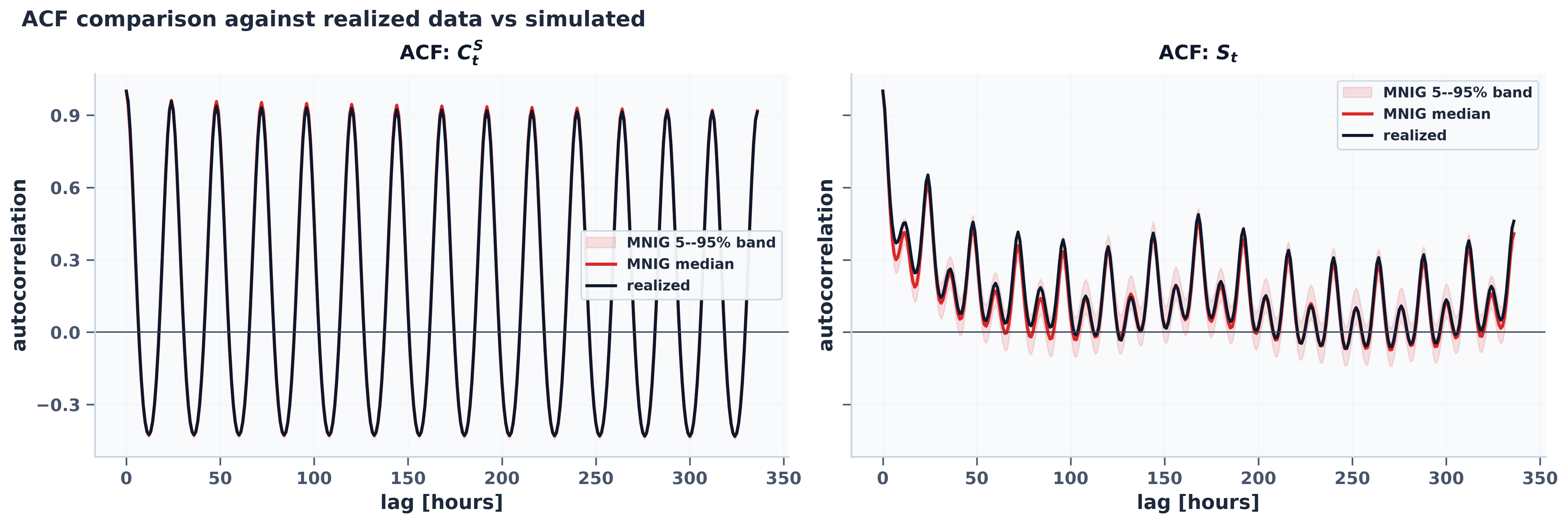}
\caption{Solar full-level autocorrelations.}
\end{subfigure}
\caption{Physical-scale calibration diagnostics for wind, solar,
and spot prices reconstructed from the latent MCARMA-MNIG
states.}
\label{fig:cal-wind-mnig-marginals}
\label{fig:cal-wind-mnig-bivar}
\label{fig:cal-wind-mnig-acf}
\label{fig:cal-solar-mnig-marginals}
\label{fig:cal-solar-mnig-bivar}
\label{fig:cal-solar-mnig-acf}
\end{figure*}

Overall, the calibration matches the main persistence and periodicity patterns of
the deseasonalized renewable and price states, while the MNIG and spike
components accommodate non-Gaussian variation and isolated price extremes. The
inverse physical maps also ensure that simulated renewable production remains
within its economically admissible bounds. The diagnostics nevertheless reveal
residual serial dependence in the squared recovered driver increments and
the simulated spot-price distribution understates the observed
skewness and excess kurtosis. The calibrated model should therefore be
interpreted as a parsimonious benchmark for joint price--production
dynamics rather than as a complete description of their conditional distribution. The hedging analysis in
\cref{sec:empirical-design} uses the model in this role. Future work could combine
richer short-run dynamics with stochastic volatility and stochastic covariance to account for the serial dependence of the squared residuals. The remaining discrepancies are concentrated in
economically important stress periods. The year 2025 was a low-wind year, so realized
$\Pi_t^{\mathrm W}$ often lies near the lower simulated band. Solar production
is well fitted in levels, but the realized 2025 solar value factor is an
unusually severe cannibalisation outcome. The backtest evaluates
the hedge under these out-of-sample volume and capture-price
conditions.

\section{Empirical hedging design}\label{sec:empirical-design}
\begingroup
\linespread{0.97}\selectfont

This section specifies the discrete implementation used in the empirical study:
the discretized payoff and fair strike, the backward-projection estimator that
implements the semi-static decomposition of \cref{thm:semistatic} on
simulated paths, the two trading conventions, the auxiliary claim universe, and
the backtesting protocol with its benchmark strategies. The
same protocol is used for wind and
solar; the corresponding results appear in
\cref{subsec:wind-backtest-results,subsec:solar-backtest-results}.

\subsection{Discretized payoff and fair strike}\label{subsec:discrete-payoff}

Let $\mathcal T=\{t_1<\cdots<t_M\}$ be the hourly delivery grid with discount
factors $d_i$ and weights $\Delta t_i$. The discretized PPA payoff and Monte
Carlo fair strike over $N$ simulated paths are
\begin{equation}\label{eq:discrete-ppa}
\begin{aligned}
  H^K&=\sum_{i=1}^Md_i\,q_{t_i}(S_{t_i}-K)\Delta t_i,\\
  \widehat K^\star&=
  \frac{\sum_{p=1}^N\sum_id_i\,q_{t_i}^{(p)}S_{t_i}^{(p)}\Delta t_i}
       {\sum_{p=1}^N\sum_id_i\,q_{t_i}^{(p)}\Delta t_i}.
\end{aligned}
\end{equation}
As a ratio of sample means, $\widehat K^\star$ is consistent under
\cref{ass:integrability} though biased in finite samples; the delivery horizon
is partitioned into monthly delivery cells $I_j$, and the same calculation is
applied separately to each monthly delivery period to obtain the wind and solar
fair strikes and their level, profile, and covariance components.

\subsection{Backward projection}\label{subsec:backward-projection}

Let $\rho_0<\rho_1<\cdots<\rho_{K'}$ be the rebalancing grid. At $\rho_k$, let
$\Delta\bm X_k$ be the vector of block gains of the active contracts over
$(\rho_k,\rho_{k+1}]$ and $\bm Z_{\rho_k}$ the hedge state (current state
variables and remaining-delivery summaries). For a centered payoff
$Y^H=H-\widehat\pi$, set $R_{K'}=Y^H$ and compute backward conditional
least-squares projections
\begin{equation}\label{eq:backward-argmin}
\begin{aligned}
  \bm\theta_k(\bm z)&=\argmin_{\bm\theta}
  \widehat{\E}\left[(R_{k+1}-\bm\theta^\top\Delta\bm X_k)^2
  \mid\bm Z_{\rho_k}=\bm z\right],\\
  R_k&=R_{k+1}-\bm\theta_k(\bm Z_{\rho_k})^\top\Delta\bm X_k ,
\end{aligned}
\end{equation}
whose first-order condition is the empirical analogue of the normal equations
\eqref{eq:dynamic-normal-equations},
\begin{equation}\label{eq:backward-normal}
\begin{aligned}
  &\widehat{\E}[\Delta\bm X_k\Delta\bm X_k^\top\mid\bm Z_{\rho_k}=\bm z]\,
  \bm\theta_k(\bm z)\\
  &\qquad
  =\widehat{\E}[\Delta\bm X_kR_{k+1}\mid\bm Z_{\rho_k}=\bm z].
\end{aligned}
\end{equation}
Running the recursion for $H^0,H^1,\ldots,H^n$ yields discrete dynamic
residuals $\widehat L^i$, and the static regression is then estimated from the following empirical residual moments
\begin{equation}\label{eq:discrete-abc}
\begin{aligned}
  \widehat a_0&=\frac1N\sum_{p}(\widehat L^{0,(p)})^2,
  \qquad
  \widehat{\bm b}=\frac1N\sum_{p}\widehat{\bm L}^{(p)}\widehat L^{0,(p)},\\
  \widehat{\bm C}&=\frac1N\sum_{p}\widehat{\bm L}^{(p)}\widehat{\bm L}^{(p)\top},
  \qquad
  \widehat{\bm\nu}^\star=\widehat{\bm C}^\dagger\widehat{\bm b},
\end{aligned}
\end{equation}
which estimate \eqref{eq:outer-abc}.

We implement the conditional projections in
\eqref{eq:backward-normal} using three design choices. First, the hedge state $\bm Z_{\rho_k}$
collects the simulated spot level, the capacity factor, the deseasonalized
price and production states and the spike-intensity state at $\rho_k$; all
coordinates are standardized on the training paths and winsorized at six
standard deviations. Second, the pre-delivery value $X_{\rho_k}^S(I_j)$ of each
active monthly future is itself a conditional expectation and is estimated by a
Gaussian-kernel nearest-neighbour regression of the discounted settlement
$A^S(I_j)$ on the standardized hedge state, with predictions clipped at extreme
training quantiles, so that block gains use model-implied conditional
marks rather than realized settlements.

Because the spot--renewable model is not jointly calibrated to the observed
EEX futures curve, these state-dependent marks are conditional projections of
simulated settlements under the fitted historical law rather than observed
tradable prices. The resulting dynamic strategy is therefore a model-implied benchmark rather
than a directly implementable market hedge. A market-implementable backtest
would instead require observed EEX mark-to-market histories, or
a separately estimated valuation model whose marks are validated
for every hedged maturity.

Third, the conditional
moments in \eqref{eq:backward-normal} are estimated by state-group local least
squares: the standardized hedge state is projected onto its first principal
component, the training paths are partitioned into five quantile groups along
this score, and within each group, the local moment matrices are shrunk
toward the global estimates using a fixed weight, and a proportional ridge
penalty regularizes near-collinear futures increments; for sparsely populated groups,
the global fit is used instead. The estimated hedge ratios
$\bm\theta_k(\cdot)$ are therefore piecewise constant in the state score.
At every block the empirical variance identity
\begin{equation}\label{eq:block-variance-identity}
\begin{aligned}
  \widehat{\Var}(R_{k+1})-\widehat{\Var}(R_k)
  &=2\,\widehat{\Cov}\bigl(R_{k+1},\bm\theta_k^\top\Delta\bm X_k\bigr)\\
  &\quad-\widehat{\Var}\bigl(\bm\theta_k^\top\Delta\bm X_k\bigr)
\end{aligned}
\end{equation}
is verified numerically, which separates the covariance term from the
variance contributed by the hedge and provides a numerical check of the projection
calculation. All prices, hedge ratios and static weights are estimated on a
training subset of paths and every reported risk statistic is computed on
held-out test paths, so that the reported hedge effectiveness is evaluated on paths that were
not used to estimate the projection; the split is specified in
\cref{subsec:backtest-protocol}.

\subsection{Trading conventions}\label{subsec:trading-conventions}

The two conventions of \cref{sec:single-period} are implemented through different active sets
and block-gain definitions. Under the pre-delivery convention, a contract with delivery period
$I$ is rebalanced only strictly before $I^-$, so
$\Act^{\rm pre}(\rho)=\{I:\rho<I^-\}$ and the final admissible block
includes the gain from the last pre-delivery mark to settlement:
\begin{equation}\label{eq:pre-gain}
  \Delta X^{S,{\rm pre}}(\rho,\rho';I)=
  \begin{cases}
  X_{\rho'}^S(I)-X_{\rho}^S(I),
  & \rho'<I^-,\\[0.35em]
  A^S(I)-X_{\rho}^S(I),
  & \rho<I^-\le\rho',
  \end{cases}
\end{equation}
and analogously for volume contracts. This is the discrete analogue of the
restricted Asian-option hedging space
$G^{\vartheta,I}=\int_0^{I^-}\vartheta_s\dd X_s(I)
+\vartheta_{I^-}(X_{I^+}(I)-X_{I^-}(I))$ of \citet{BenthDetering2015}. Under
the idealized during-delivery convention, contracts remain tradable through delivery. Without modelling the actual
cascade from monthly into daily contracts, we define the active set as
$\Act^{\rm EEX}(\rho)=\{I:\rho<I^+\}$ and block gains are ordinary increments
of the during-delivery value processes
\crefrange{eq:during-delivery-power}{eq:during-delivery-production},
truncated at $I^+$. The backward normal equations \eqref{eq:backward-normal}
are unchanged; only the active set and block-gain definitions differ, so the comparison
measures the model-implied effect of allowing rebalancing during delivery under
the stated idealization. The first hedge-state observation is taken one day before delivery begins
so that the first monthly future is tradable before its delivery begins, and
positions are rolled month by month across the delivery horizon. At each roll
date the stack holds the front monthly contract together with up to five
deferred monthly contracts: the front-contract gain is measured from its current mark
to settlement in the second line of \eqref{eq:pre-gain}, deferred contracts contribute
mark-to-market increments between consecutive roll dates, and contracts whose
block gains are numerically degenerate on the training paths are removed from
the active set. Economically, the front-month
position covers the current delivery month through settlement,
while deferred positions hedge changes in the estimated values
of later delivery months.

\subsection{Auxiliary claim universe}\label{subsec:claim-universe}

The static portfolio contains European-style claims on four discounted quantities
defined for each monthly delivery period $I_j$: the baseload average price
$F_j=A^S(I_j)/D(I_j)$, the delivered volume $V_j=A^{q,r}(I_j)$, the average
capacity factor $\bar C_j$, and the achieved (production-weighted) price $A_j=R_j/V_j$,
where $R_j$ is the discounted revenue earned from production during $I_j$. Let
$\bar F_j=\widehat{\E}[F_j]$, $\bar V_j=\widehat{\E}[V_j]$ and
$\bar c_j=\widehat{\E}[\bar C_j]$ denote training-sample means and
$\widehat Q^{\,\rm tr}_u(\cdot)$ the training-sample $u$-quantile of a
settlement variable. Four families are used, with notionals chosen so that their payoff
magnitudes are comparable with the monthly PPA cash flow.
\emph{Power vanillas}: calls $\bar V_j(F_j-K)^+$ and puts $\bar V_j(K-F_j)^+$
on the monthly baseload average, with call strikes at
$\widehat Q^{\,\rm tr}_u(F_j)$ for $u\in\{0.55,0.65,0.75\}$ and put strikes for
$u\in\{0.25,0.35,0.45\}$.
\emph{Renewable-index vanillas}: calls
$\bar q^rD(I_j)\bar F_j(\bar C_j-L)^+$ and puts
$\bar q^rD(I_j)\bar F_j(L-\bar C_j)^+$ on the monthly capacity factor, with
strikes at the same quantile levels of $\bar C_j$, constrained to $[0,1]$.
\emph{Price--renewable orthant quantos}: the four orthant payoffs
\eqref{eq:orthant-payoffs} on the centered pair $(F_j-a_j,\bar C_j-b_j)$ with
notional $\bar q^rD(I_j)$, at the three threshold pairs
\begin{equation}\label{eq:orthant-threshold-pairs}
\begin{aligned}
  (a_j,b_j)\in
  \bigl\{&(\bar F_j,\bar c_j),\\
  &\bigl(\widehat Q^{\,\rm tr}_{0.45}(F_j),
  \widehat Q^{\,\rm tr}_{0.55}(\bar C_j)\bigr),\\
  &\bigl(\widehat Q^{\,\rm tr}_{0.55}(F_j),
  \widehat Q^{\,\rm tr}_{0.45}(\bar C_j)\bigr)\bigr\},
\end{aligned}
\end{equation}
so that the threshold pairs cover the joint training mean and two nearby combinations with one marginal
threshold above and the other below its median. \emph{Capture-spread options}: options on the monetary
capture discount
\begin{equation}\label{eq:capture-discount}
\begin{aligned}
  D_j&=F_j-A_j,\\
  H_j^{{\rm capSpr},C}(K)&=\bar V_j(D_j-K)^+,\\
  H_j^{{\rm capSpr},P}(K)&=\bar V_j(K-D_j)^+,
\end{aligned}
\end{equation}
with strikes at the training quantiles of $D_j$ at the same call and put
levels. Capture-spread options directly target the profile and covariance effects represented
by the term $V_jD_j$ in $R_j=V_jF_j-V_jD_j$: they respond to
within-month alignment between production and hourly prices
that is not represented by options on monthly price or volume
alone.

All strikes are rounded to market quoting steps, one EUR/MWh for price-type
and capture-spread strikes and one percentage point for capacity-factor
strikes; if rounding produces duplicate strikes, one strike is shifted by a single quoting increment so that
strike grids remain distinct. Claims with negligible payoff variance on the
training paths are screened out. Each monthly cell contributes at most six
power vanillas, six renewable-index vanillas, twelve orthant quantos (four
orthants at three threshold pairs) and six capture-spread options: a nominal total of $12\times30=360$ claims for the twelve-month delivery horizon. The wind
catalog retains all 360 claims; four degenerate solar claims are screened out,
leaving 356. All static claims
are priced by training-sample Monte Carlo expectations under the model,
because these claims are treated as synthetic OTC instruments. Their residualized
payoffs are then used to assess how much covariance risk each family can hedge in the
sense of \cref{subsec:orthants}, and \cref{prop:orthant-sign} is checked on
the estimated residual moments \eqref{eq:discrete-abc}.

\subsection{Backtesting protocol and benchmark strategies}
\label{subsec:backtest-protocol}
The MCARMA--MNIG model of \cref{sec:model}, calibrated on the January 2023--December 2024 window as in
\cref{sec:calibration-results}, is used to simulate joint hourly
price--production paths over the delivery horizon January--December 2025 for a
pay-as-produced contract with installed capacity $\bar q^r=50$ MW. We use
$10{,}000$ paths for each of the wind and solar contracts. The hedge-state grid begins twenty-four hours before the first delivery hour.
The paths are split at random into $5{,}600$ training, $1{,}400$ validation
and $3{,}000$ test paths. The fair strike, kernel futures marks, dynamic hedge
ratios, static claim prices and coefficient fits use the training paths;
regularization and sparse-cardinality choices use the validation paths; and
every risk statistic reported below is computed once on the held-out test
paths. Interest rates are set to zero over the twelve-month
horizon, so discounted and nominal cash flows coincide. We backtest two calibrated variants: the continuous
MCARMA--MNIG model and the state-dependent-jump extension of \cref{subsec:spikes}; the state-dependent-jump
variant is the primary specification, and the continuous variant is used as
a robustness check. Under the restricted pre-delivery convention the
rebalancing grid consists of twelve monthly roll dates, each placed one day
before the start of the corresponding delivery month, and at each roll the
active stack holds the front month and up to five deferred monthly futures as
in \cref{subsec:trading-conventions}; under the idealized convention
the same projection is rerun on a daily rebalancing grid through delivery.

Six benchmark strategies are compared, each centered by its training-sample
price so that all profit-and-loss variables are mean-zero by construction on
the training set.
\emph{Unhedged fair-strike PPA}: the centered payoff
$H^{\widehat K^\star}-\widehat\pi_0$ with $\widehat\pi_0$ the training mean;
this is the unhedged benchmark.
\emph{Static baseload forward}: a single buy-and-hold fixed-for-floating
baseload swap over the whole horizon, with gain
$h\sum_id_i\{F_0(t_i)-S_{t_i}\}\Delta t_i$ and the volume $h$ chosen by
one-dimensional least squares on the training paths; this represents the common
stylized practice of hedging expected production with a flat baseload forward, and
it can remove level risk but neither profile nor covariance risk.
\emph{Static baseload plus peak forward}: adds a peak-block forward
(hours 08--20, weekdays) and refits the two volumes jointly; this provides
a simple exchange-traded approximation to the production-weighted delivery
profile.
\emph{Rolling front-month dynamic hedge}: the backward projection of
\cref{subsec:backward-projection} with the active set restricted to the front
monthly contract at each roll, i.e.\ a stack-and-roll
hedge using only the next delivery month.
\emph{Multi-month dynamic hedge}: the same backward projection on the full
active stack of up to six monthly maturities, the empirical version of
\eqref{eq:dynamic-theta}; relative to the front-month hedge it also hedges
changes in the model-implied marks of later monthly contracts.
\emph{Multi-month dynamic plus static claims}: the multi-month dynamic hedge
combined with the static overlay $\widehat{\bm\nu}^\star$ of
\eqref{eq:discrete-abc} estimated from the training residuals remaining after the dynamic futures hedge; this
is the empirical semi-static hedge of \cref{thm:semistatic}.

We evaluate hedge quality using the test-sample standard deviation of
terminal error and two lower-tail loss measures: with $q_{0.05}$ the
$5\%$-quantile of the centered strategy profit and loss $\eps$, we report the
loss statistics $\mathrm{VaR}_{95}=-q_{0.05}$ and
$\mathrm{CVaR}_{95}=-\E[\eps\mid\eps\le q_{0.05}]$, quoted as positive losses.
Risk reductions are stated as $1-{\rm metric}/{\rm metric}_{\rm ref}$,
referenced either to the unhedged fair-strike PPA or to the multi-month
dynamic hedge; using the dynamic hedge as the reference measures the additional risk
reduction obtained from the static portfolio.

We construct sparse static portfolios using two methods. The
first is greedy forward selection: at each cardinality $k$,
the method adds one claim and refits the restricted least-squares
problem in \eqref{eq:outer-abc}. The second is a LASSO frontier
solved by cyclic coordinate descent on an $\ell^1$-penalized version of the
same quadratic objective, followed by an unpenalized refit on the recovered
active set. Both methods are run without sign restrictions and under a
long-only constraint $\bm\nu\ge0$. 

\subsection{Wind PPA backtest}
\label{subsec:wind-backtest-results}

The wind delivery horizon is January--December 2025 and the auxiliary catalog
contains the 360 claims of \cref{subsec:claim-universe}. The model-implied
fair strike is $\widehat K^\star=80.01$ EUR/MWh against an expected discounted
baseload price of $88.90$ EUR/MWh, so the model-implied wind PPA strike is
$8.89$ EUR/MWh below baseload, with the difference decomposed into profile
and covariance components as in \eqref{eq:flat-profile-cov}. The ratio defining
$\widehat K^\star$ differs from the mean pathwise capture price
($80.14$ EUR/MWh) by the $-0.13$ EUR/MWh volume--capture covariance term. The conditional
monthly price--wind correlation implied by the calibrated model is between
$-0.71$ and $-0.68$ in every delivery month, and expected monthly delivered
volumes range from $14.9$ GWh in January to $5.2$ GWh in June before rising
to $13.3$ GWh in December, so both price-volume
covariance and seasonal variation in expected production materially affect
the 2025 backtest.

The wind heatmap in \cref{fig:wind-hour-month-ppa} reports contributions for
every hour of the day: its strongest positive contributions occur in the January--February
morning and evening shoulder hours, whereas negative midday and
afternoon cells indicate hours in which the wind-weighted spot
price is below $\widehat K_{\mathrm W}^\star$, producing negative
PPA contributions.

\subsubsection{Hedging effectiveness of the benchmark strategies}
\label{subsubsec:wind-benchmarks}

\begin{table*}[t]
\centering
\caption{Wind PPA test-set risk levels for the main hedging benchmarks.
Reductions are relative to the unhedged fair-strike PPA.}
\label{tab:wind-main-risk}
\footnotesize
\begin{tabular}{lrrrr}
\toprule
Strategy & Std. & 95\% CVaR loss & Std. red. & CVaR red. \\
 & (kEUR) & (kEUR) & (\%) & (\%) \\
\midrule
Unhedged fair-strike PPA & 359.5 & 800.3 & 0.0 & 0.0 \\
Static baseload forward & 236.8 & 509.9 & 34.1 & 36.3 \\
Static baseload plus peak forward & 236.8 & 510.1 & 34.1 & 36.3 \\
Front-month dynamic futures hedge & 230.2 & 504.3 & 36.0 & 37.0 \\
Multi-month dynamic futures hedge & 228.8 & 495.4 & 36.4 & 38.1 \\
Multi-month dynamic plus static claims & 45.6 & 101.4 & 87.3 & 87.3 \\
\bottomrule
\end{tabular}
\end{table*}

\Cref{tab:wind-main-risk} distinguishes the risk reduction from liquid futures from the
additional reduction provided by the static claims. A static baseload hedge already removes
about one third of the PPA dispersion, and allowing a rolling variance-optimal
stack of monthly futures improves the standard-deviation reduction only
modestly, from $34.1\%$ to $36.4\%$. Most of the reduction comes from the
auxiliary claims: adding the static portfolio to the same multi-month dynamic hedge
reduces the test standard deviation from $228.8$ kEUR to $45.6$ kEUR and the
95\% CVaR loss from $495.4$ kEUR to $101.4$ kEUR. This is consistent with the
decomposition in \cref{thm:semistatic}: most of the risk remaining after futures
hedging comes from production-shape and price-volume covariance effects rather than
average-price exposure. \Cref{fig:wind-pnl-residuals} shows the corresponding
out-of-sample residual distributions: the forward-only and dynamic-only residual
distributions are narrower than the unhedged distribution but remain heavy-tailed,
whereas the semi-static residual is much more concentrated around zero.

Two robustness checks show that the conclusion is stable within the model. First,
the same backtest under the continuous MCARMA--MNIG variant gives nearly identical
risk reductions: the multi-month dynamic hedge reduces the test standard
deviation by $36.4\%$ and the semi-static hedge by $87.2\%$, against $36.4\%$
and $87.3\%$ for the state-dependent-jump specification. Second, rerunning the
projection with daily rebalancing
through delivery, i.e.\ 378 rebalancing times and 377 delivery blocks instead
of twelve monthly rolls, performs worse than the
monthly-roll strategy in this implementation. One possible
explanation is that estimation noise in the intramonth
positions offsets the benefit of more frequent updating.

\subsubsection{Naive volume deltas versus variance-optimal positions}
\label{subsubsec:wind-delta}

The front-month positions show why the variance-optimal hedge differs from the expected-volume hedge.
For a single futures gain $\Delta F_j$ the variance-optimal position is
$\theta_j^\star=\Cov(H_j,\Delta F_j)/\Var(\Delta F_j)$, the one-dimensional
specialization of \eqref{eq:backward-normal}, while the naive delta sells the
model-implied expected delivered volume $\bar q^r\,\widehat{\E}[\bar C_j]$ of each month
forward. In the wind backtest we define the difference between the variance-optimal
position and the naive volume delta as the covariance adjustment
(\cref{tab:wind-delta-diagnostics}). The adjustment is negative in
January--March and August--December and positive in April--July; for example
the February position is $7.99$ MW against a naive model delta of $16.10$ MW.
The hedge is smaller than the volume delta
when high prices and wind output are negatively associated in the
conditional distribution: in such states the PPA has less high-price exposure
because high prices tend to coincide with low production; selling the full expected
volume therefore produces a position larger than the conditional least-squares
coefficient.

\begin{table*}[t]
\centering
\caption{Wind test-set hedging effectiveness of front-month delta rules versus
the variance-optimal projection. Reductions are relative to the unhedged
fair-strike PPA, and $\rho(H,G)$ is the test correlation between the centered
PPA payoff and the hedge gain.}
\label{tab:wind-delta-effectiveness}
\footnotesize
\begin{tabular}{lrrrrr}
\toprule
Hedge & Std. & 95\% CVaR loss & Std. red. & CVaR red. & $\rho(H,G)$ \\
 & (kEUR) & (kEUR) & (\%) & (\%) &  \\
\midrule
Unhedged fair-strike PPA & 359.5 & 800.3 & 0.0 & 0.0 & -- \\
Front-month naive model delta & 243.9 & 533.3 & 32.2 & 33.4 & 0.75 \\
Front-month naive realized delta & 250.2 & 554.0 & 30.4 & 30.8 & 0.72 \\
Front-month variance-optimal $\theta$ & 228.6 & 496.3 & 36.4 & 38.0 & 0.77 \\
Multi-month dynamic futures hedge & 228.8 & 495.4 & 36.4 & 38.1 & 0.77 \\
\bottomrule
\end{tabular}
\end{table*}
Table \ref{tab:wind-delta-effectiveness} quantifies the
additional residual risk from omitting the covariance
adjustment. The variance identity separates the covariance term
$2\Cov(H,G)$ from the hedge variance $\Var(G)$. The naive model
delta attains a $32.2\%$ standard-deviation reduction against $36.4\%$ for the
variance-optimal front-month position, and its tail improvement is
correspondingly weaker ($33.4\%$ against $38.0\%$ CVaR reduction). The large
negative adjustment in February illustrates the mechanism: the
variance-optimal rule avoids the hedge-variance cost of selling the full
model-implied volume when price and output are strongly negatively coupled. A
hindsight delta based on realized volumes performs worse still ($30.4\%$),
which is consistent with the covariance adjustment, rather than volume
information alone, being important for the hedge. Finally, extending the
variance-optimal front-month position to the full multi-month stack leaves the
standard-deviation reduction unchanged to one decimal place; almost all of the estimated risk
reduction from dynamic futures is obtained with the variance-optimal front-month position.

\begin{table*}[t]
\centering
\caption{Wind front-month variance-optimal futures positions versus naive
volume deltas. PPA values are realized January--December 2025 values.}
\label{tab:wind-delta-diagnostics}
\scriptsize
\renewcommand{\arraystretch}{0.88}
\begin{tabular}{lrrrrr}
\toprule
Month & PPA & Naive delta & VO delta & Adjustment & Ratio \\
 & (kEUR) & (MW) & (MW) & (MW) &  \\
\midrule
2025-01 & 46.3 & 19.98 & 14.37 & -5.60 & 0.72 \\
2025-02 & 191.0 & 16.10 & 7.99 & -8.11 & 0.50 \\
2025-03 & -30.7 & 13.90 & 13.55 & -0.35 & 0.97 \\
2025-04 & -24.0 & 11.67 & 15.28 & 3.61 & 1.31 \\
2025-05 & -115.6 & 8.51 & 11.00 & 2.50 & 1.29 \\
2025-06 & -211.2 & 7.22 & 7.32 & 0.10 & 1.01 \\
2025-07 & 11.1 & 9.03 & 12.08 & 3.05 & 1.34 \\
2025-08 & -46.9 & 7.12 & 7.97 & 0.85 & 1.12 \\
2025-09 & -131.4 & 8.40 & 7.73 & -0.67 & 0.92 \\
2025-10 & -263.8 & 13.63 & 11.58 & -2.05 & 0.85 \\
2025-11 & 75.5 & 14.34 & 8.68 & -5.65 & 0.61 \\
2025-12 & 39.1 & 17.85 & 12.83 & -5.02 & 0.72 \\
\bottomrule
\end{tabular}
\end{table*}

\begin{figure*}[!t]
\centering
\begin{subfigure}[t]{.49\textwidth}
\centering
\includegraphics[width=\linewidth,height=.31\textheight,keepaspectratio]{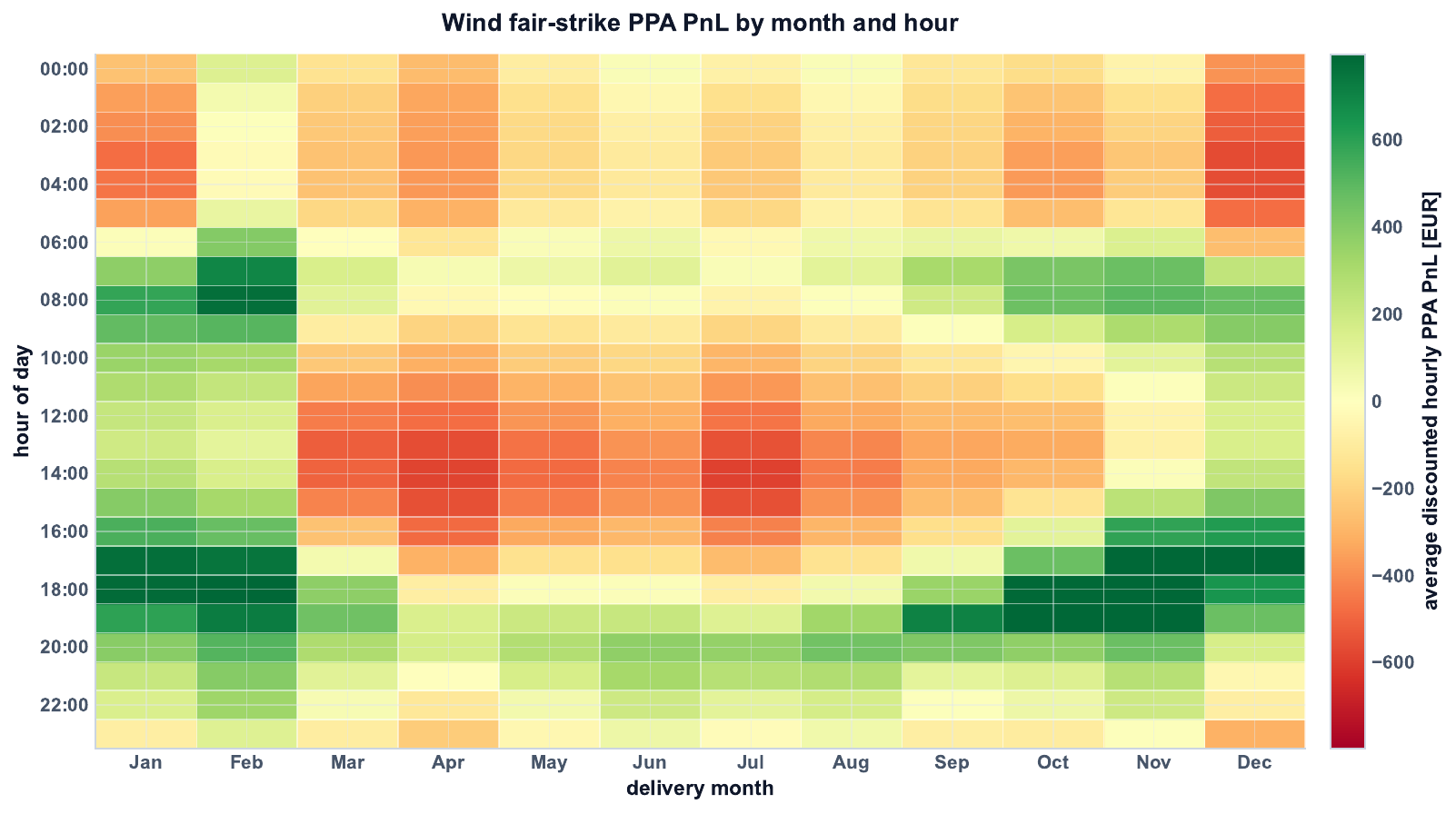}
\caption{Mean fair-strike PPA payoff by delivery month and hour.}
\end{subfigure}\hfill
\begin{subfigure}[t]{.49\textwidth}
\centering
\includegraphics[width=\linewidth,height=.31\textheight,keepaspectratio]{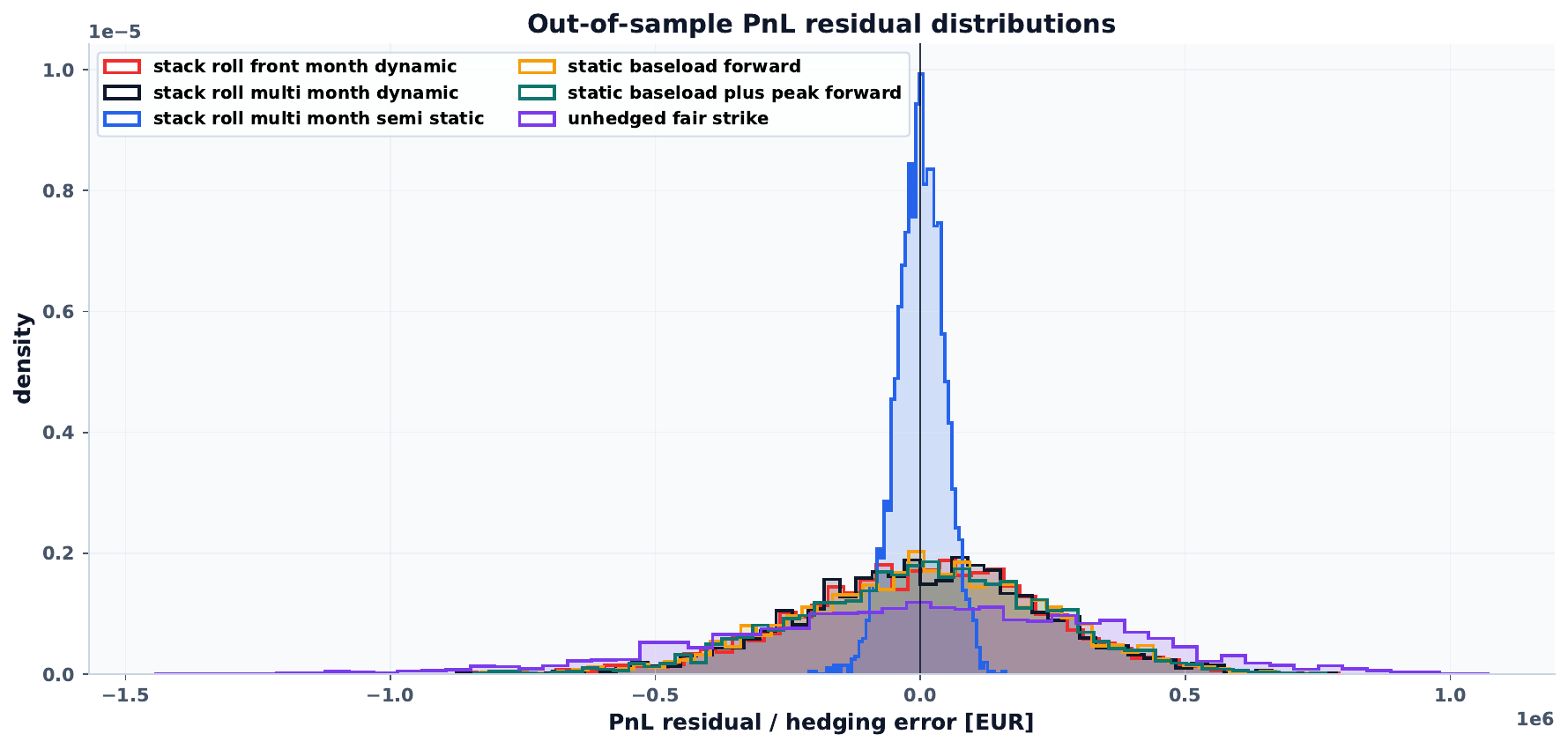}
\caption{Out-of-sample residual distributions across hedge designs.}
\end{subfigure}
\caption{Wind PPA contributions by delivery month and hour, together with hedging residuals.
The heatmap shows which month-hour combinations contribute to the capture discount,
while the distributional panel compares the unhedged, futures-only and
semi-static outcomes.}
\label{fig:wind-hour-month-ppa}
\label{fig:wind-pnl-residuals}
\end{figure*}

\subsubsection{The efficient frontier in the number of static claims}
\label{subsubsec:wind-frontier}

\Cref{fig:wind-efficient-frontier-static-claims,fig:wind-claim-count-risk-reduction}
show risk as a function of the number $m$ of static claims combined
with the multi-month dynamic hedge. Claims are added in the order
selected by the greedy algorithm. Risk falls rapidly for the first selected claims
and then declines more slowly: a small portfolio already removes a substantial share
of the residual dispersion, and the first twenty to thirty claims capture most of the
attainable reduction, after which the validation-selected unconstrained
frontier reaches $76.5\%$ reduction of the post-dynamic residual standard
deviation with 75 claims. Standard deviation and
$95\%$ CVaR loss decrease almost proportionally along the frontier, so the selected
portfolios do not improve one of these metrics at the expense of
the other. For product design, the practical implication is that
a relatively small subset of claims attains most of the risk reduction
achieved by the full catalog.

\begin{figure*}[!t]
\centering
\begin{subfigure}[t]{.78\textwidth}
\centering
\includegraphics[width=\linewidth,height=.31\textheight,keepaspectratio]{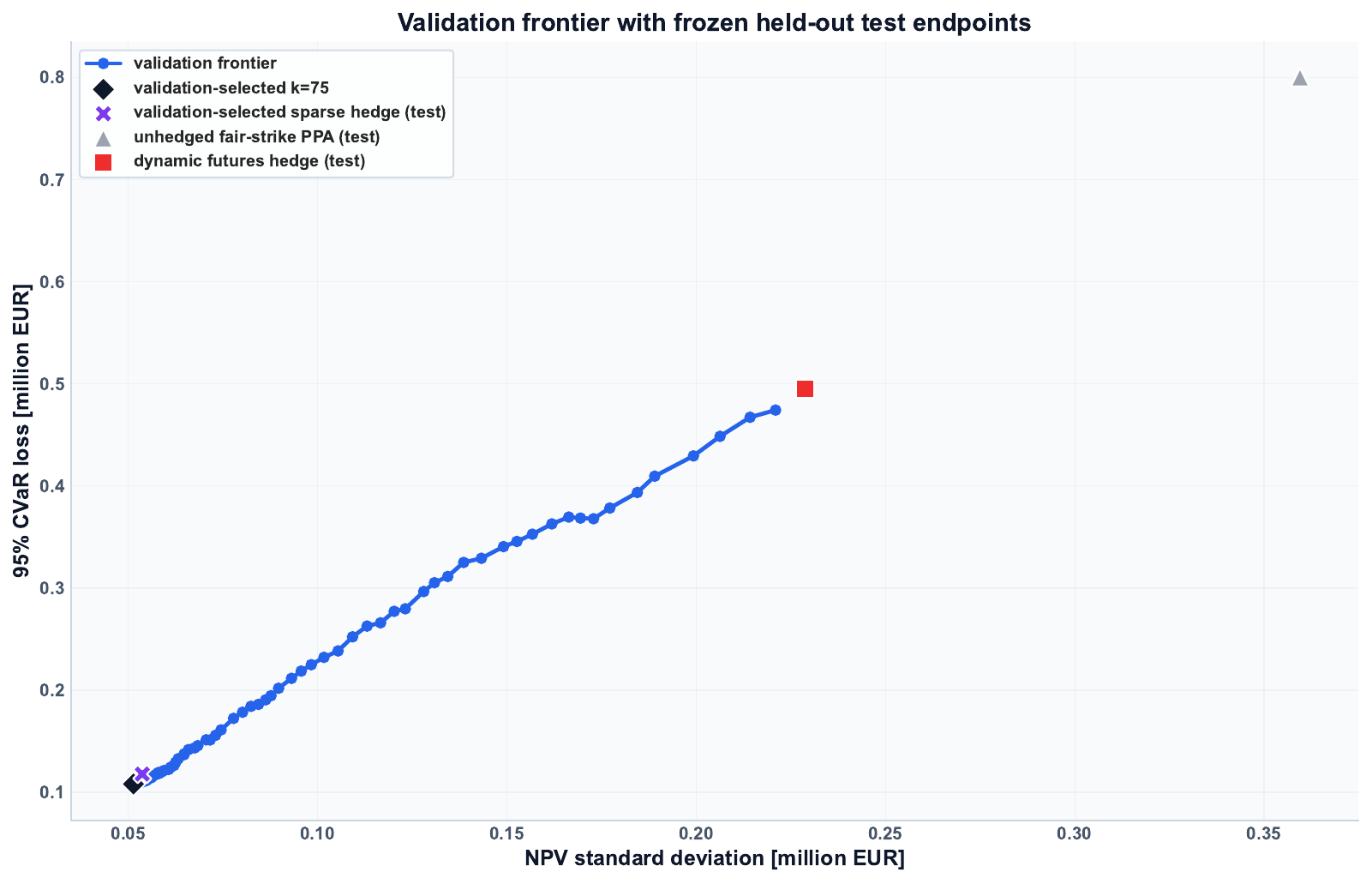}
\caption{Validation frontier in standard deviation and CVaR, with the selected portfolio evaluated once on the test set.}
\end{subfigure}
\caption{Wind efficient frontier in the number of static claims.}
\label{fig:wind-efficient-frontier-static-claims}
\label{fig:wind-claim-count-risk-reduction}
\end{figure*}

\subsubsection{Sparse static portfolios: selection, composition and leverage}
\label{subsubsec:wind-sparse}

\begin{table*}[t]
\centering
\caption{Wind sparse static portfolios selected from the all-product catalog.
Risk reductions are relative to the multi-month dynamic futures hedge and to
the unhedged fair-strike PPA, respectively.}
\label{tab:wind-sparse-risk}
\footnotesize
\begin{tabular}{llrrrrrr}
\toprule
Method & Constraint & $k$ & Std. & 95\% CVaR & Red. vs dyn. & Red. vs unhedged & Gross $|\nu|$ \\
 & & & (kEUR) & (kEUR) & (\%) & (\%) &  \\
\midrule
Greedy forward selection & Long-only & 31 & 119.5 & 224.1 & 47.78 & 66.77 & 27.83 \\
Greedy forward selection & Unconstrained & 75 & 53.7 & 117.9 & 76.53 & 85.06 & 72.69 \\
LASSO coordinate descent & Long-only & 57 & 118.8 & 223.5 & 48.08 & 66.96 & 27.37 \\
LASSO coordinate descent & Unconstrained & 79 & 67.0 & 150.0 & 70.72 & 81.37 & 46.66 \\
\bottomrule
\end{tabular}
\end{table*}

\begin{table*}[t]
\centering
\caption{Composition of the wind sparse static portfolios. Entries are counts
of nonzero static claims in each family.}
\label{tab:wind-sparse-composition}
\footnotesize
\begin{tabular}{llrrrrrr}
\toprule
Method & Constraint & Power & Wind & Quantos & Capture spread & Net $\nu$ & Gross $|\nu|$ \\
\midrule
Greedy forward selection & Long-only & 4 & 7 & 8 & 12 & 27.83 & 27.83 \\
Greedy forward selection & Unconstrained & 12 & 12 & 25 & 26 & 9.76 & 72.69 \\
LASSO coordinate descent & Long-only & 15 & 12 & 18 & 12 & 27.37 & 27.37 \\
LASSO coordinate descent & Unconstrained & 13 & 15 & 27 & 24 & 0.26 & 46.66 \\
\bottomrule
\end{tabular}
\end{table*}

\Cref{tab:wind-sparse-risk,tab:wind-sparse-composition} give the two main
sparse-portfolio comparisons. On the simulated test set, the unconstrained greedy
portfolio has the lowest residual variance: with $k=75$ selected products it reaches
a $53.7$ kEUR test standard deviation and a $117.9$ kEUR 95\% CVaR loss,
i.e.\ an $85.1\%$ standard-deviation reduction relative to the unhedged PPA.
The unconstrained LASSO portfolio selects 79 products and has higher simulated
risk but a much smaller gross absolute weight,
$\norm{\bm\nu}_1=46.66$ instead of $72.69$, and a nearly flat net position.
The $\ell^1$ penalty thus reduces leverage: it attains a similar, though
weaker, risk reduction with roughly two-thirds of the gross notional. The
long-only portfolios are much closer to insurance portfolios. They have similar
risk under greedy selection and LASSO, about $119$ kEUR standard deviation
and $224$ kEUR 95\% CVaR loss, and they reduce risk by slightly less than half
relative to the dynamic hedge because the long-only constraint prevents short option positions that would
offset nonlinear components of the residual payoff. The unconstrained portfolios, by contrast, use long and short
option positions jointly to approximate the residual regression coefficient
\eqref{eq:nu-star}, whose entries are signed by construction.

\begin{table*}[t]
\centering
\caption{Wind single-family sparse frontiers at their selected endpoints.
Reductions in test standard deviation are relative to the multi-month dynamic
futures hedge ($228.8$ kEUR standard deviation, $495.4$ kEUR 95\% CVaR loss).}
\label{tab:wind-family-endpoints}
\footnotesize
\begin{tabular}{lrrrrrrrr}
\toprule
 & \multicolumn{4}{c}{Long-only} & \multicolumn{4}{c}{Unconstrained} \\
\cmidrule(lr){2-5}\cmidrule(lr){6-9}
Candidate family & $k$ & Std. & 95\% CVaR & Red. & $k$ & Std. & 95\% CVaR & Red. \\
 & & (kEUR) & (kEUR) & (\%) & & (kEUR) & (kEUR) & (\%) \\
\midrule
All families & 31 & 119.5 & 224.1 & 47.8 & 75 & 53.7 & 117.9 & 76.5 \\
Capture spread & 11 & 157.8 & 326.5 & 31.0 & 23 & 124.4 & 283.2 & 45.6 \\
Quanto orthant & 11 & 208.7 & 449.4 & 8.8 & 27 & 200.5 & 435.8 & 12.3 \\
Wind vanillas & 9 & 207.6 & 454.9 & 9.2 & 12 & 205.3 & 450.0 & 10.2 \\
Power vanillas & 7 & 221.2 & 468.6 & 3.3 & 8 & 220.6 & 468.7 & 3.6 \\
\bottomrule
\end{tabular}
\end{table*}

\begin{figure*}[p]
\centering
\begin{subfigure}[t]{.49\textwidth}
\centering
\includegraphics[width=\linewidth,height=.29\textheight,keepaspectratio]{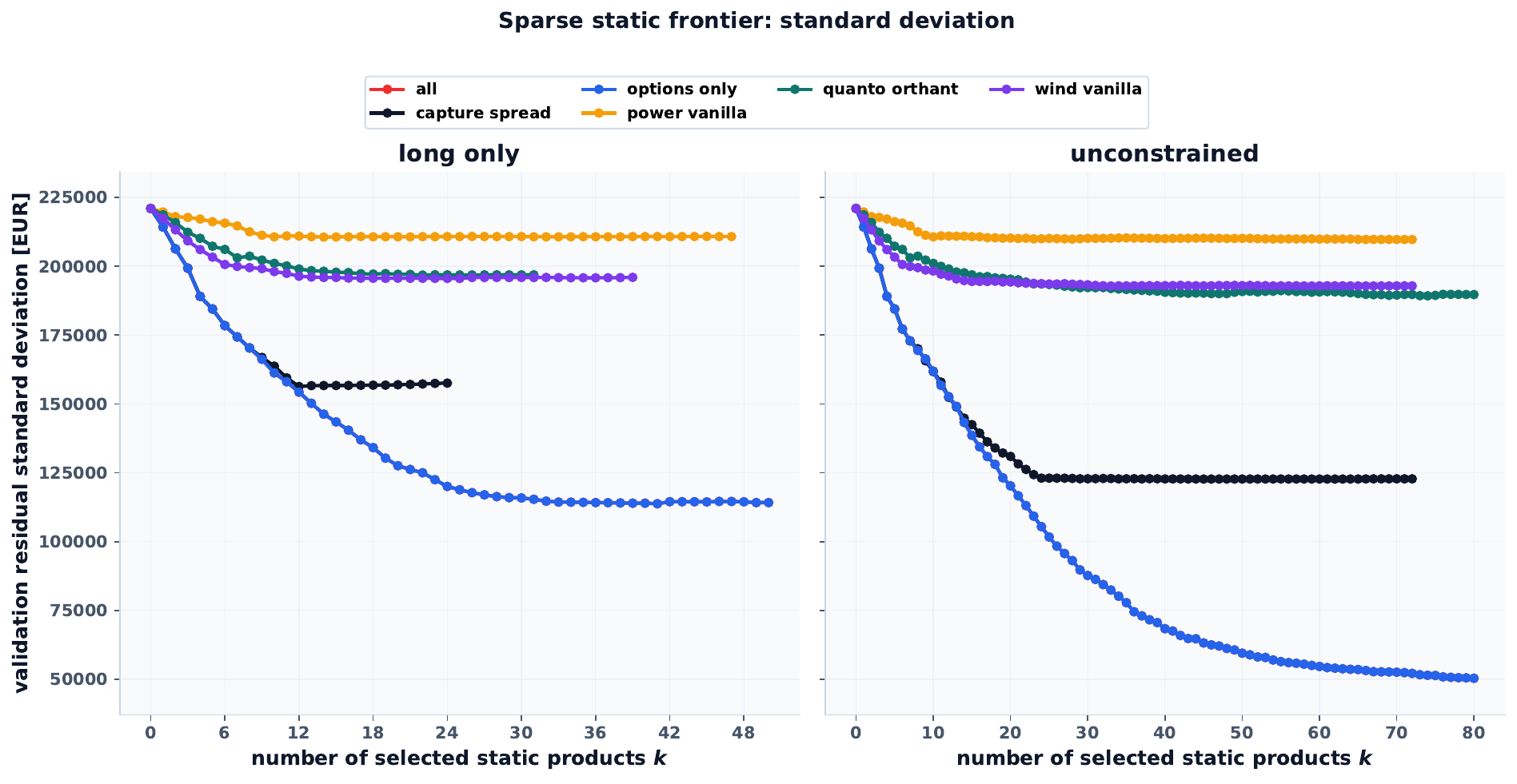}
\caption{Validation standard deviation.}
\end{subfigure}\hfill
\begin{subfigure}[t]{.49\textwidth}
\centering
\includegraphics[width=\linewidth,height=.29\textheight,keepaspectratio]{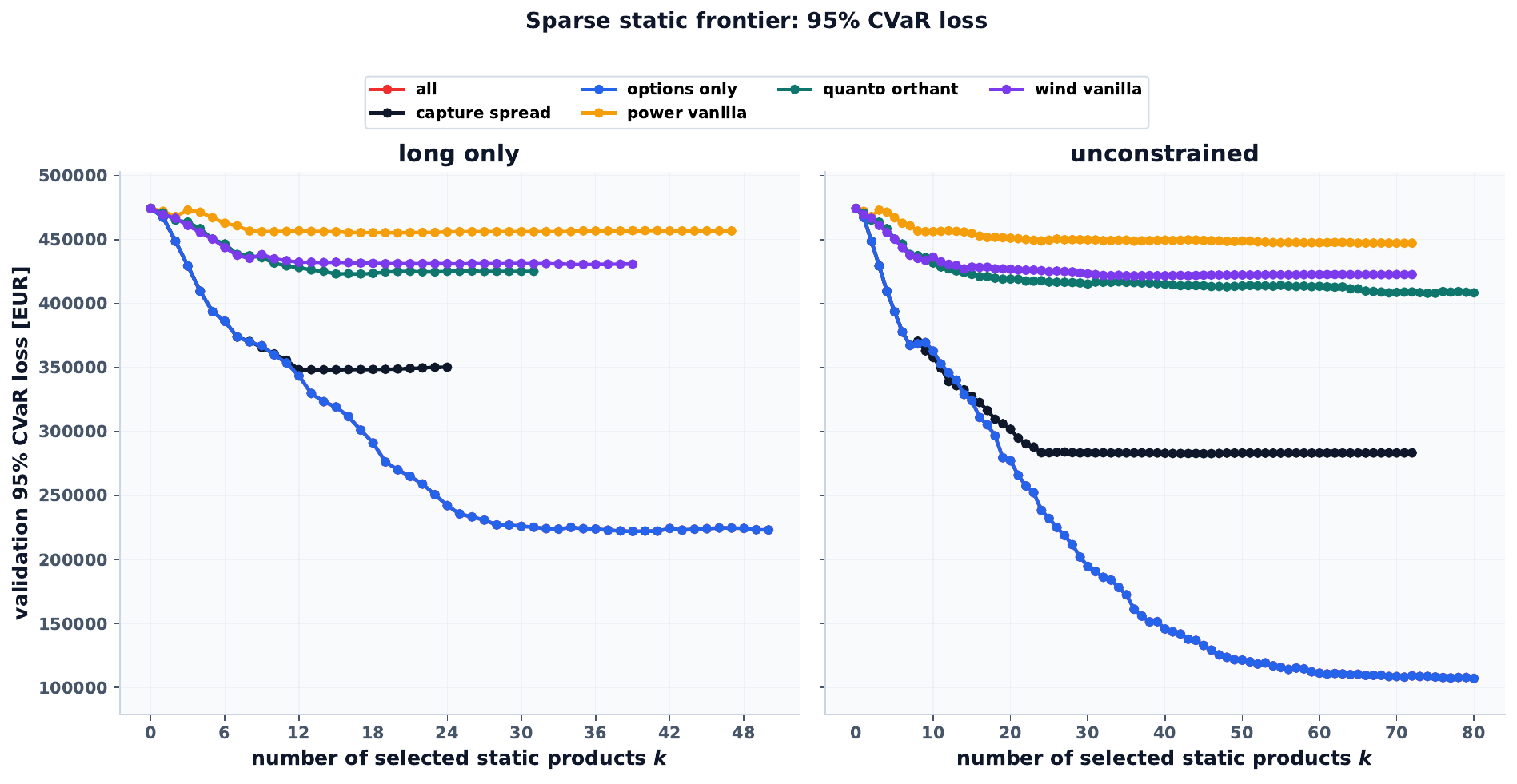}
\caption{Validation 95\% CVaR.}
\end{subfigure}\\[0.35em]
\begin{subfigure}[t]{.78\textwidth}
\centering
\includegraphics[width=\linewidth,height=.29\textheight,keepaspectratio]{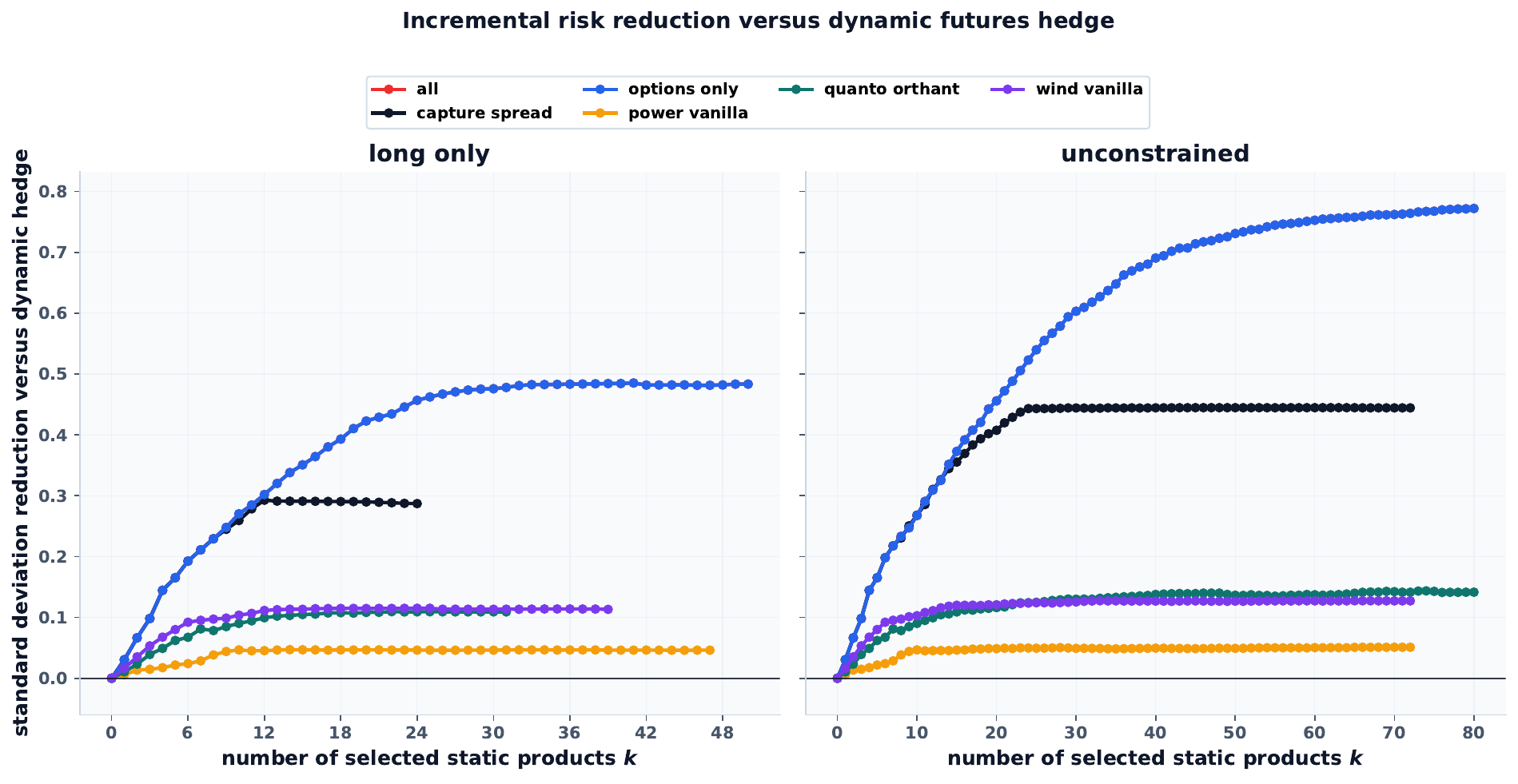}
\caption{Incremental reduction relative to the multi-month dynamic hedge.}
\end{subfigure}
\caption{Wind sparse semi-static frontiers for greedy and LASSO selection,
with and without long-only restrictions.}
\label{fig:wind-sparse-frontiers}
\label{fig:wind-sparse-incremental}
\end{figure*}

\Cref{fig:wind-sparse-frontiers,fig:wind-sparse-incremental} and
\cref{tab:wind-family-endpoints} report separate frontiers for each candidate family. Every
single-family portfolio leaves substantially more residual risk than the combined catalog. Capture-spread options are the most effective stand-alone
family in this experiment. Their underlying variable $D_j$
measures the difference between baseload and
achieved prices, an exposure not represented by fixed-volume
futures. Orthant quantos and wind vanillas each remove
$9$--$12\%$, and power vanillas remove only $3$--$4\%$: after the dynamic
futures hedge, little exposure that is monotone in the monthly baseload price
remains to be hedged. Two further observations qualify the single-family
frontiers. First, the unconstrained single-family quanto book attains its
modest reduction only with an extremely large gross notional built from
offsetting long--short positions, which indicates an unstable approximation
when this single localized basis is used to fit the entire residual; the
combined catalog reaches a far lower risk level at a small fraction of that
leverage. Second, the ordering of families by stand-alone reduction differs
from their marginal value inside the combined portfolio, where the families contribute
different payoff features: capture spreads and orthant quantos respond to joint
price-volume outcomes, wind puts protect against low production, and power vanillas
adjust the remaining price dependence.

\subsubsection{Convexity geometry of the static overlay}
\label{subsubsec:wind-convexity}

The following regression illustrates which nonlinear payoff components are offset by the static portfolio. Let
$F_\Sigma$ and $\bar C_\Sigma$ denote the discount-weighted delivery-horizon
averages of the hourly price and of the capacity factor, the horizon
aggregates of the monthly settlement summaries of
\cref{subsec:claim-universe}. For a centered terminal payoff $Y$ define its
\emph{convexity residual} as the part orthogonal to first-order exposure in
the pair $(F_\Sigma,\bar C_\Sigma)$,
\begin{equation}\label{eq:convexity-residual}
  \kappa^Y
  =Y-\widehat\alpha-\widehat\beta_C\,\bar C_\Sigma-\widehat\beta_S\,F_\Sigma,
\end{equation}
with $(\widehat\alpha,\widehat\beta_C,\widehat\beta_S)$ estimated by least
squares on the training paths. Any static position in linear forwards, and to
first order any futures delta, can only span the fitted plane; $\kappa^Y$ is the component of $Y$ not explained by linear
exposure to $F_\Sigma$ and $\bar C_\Sigma$, and therefore contains interaction and curvature effects. For the PPA, this component is dominated by the product $\bar C_\Sigma F_\Sigma$ inherent in
\eqref{eq:discrete-ppa}. On the test paths the unhedged wind PPA has a payoff
standard deviation of $359.5$ kEUR of which the convexity residual
has a standard deviation of $236.8$ kEUR, corresponding to $43.4\%$
of the unhedged variance. This comparison helps explain why
the dynamic futures hedge leaves a residual of similar magnitude: both remove
mainly linear price and volume exposure, while the dynamic hedge achieves
a modest further improvement by conditioning on interim
information.

For each candidate family, the convexity decomposition
\eqref{eq:convexity-residual} is applied to the unhedged PPA payoff, to the
terminal unconstrained static overlay
$\widehat{\bm\nu}^\top(\bm H-\widehat{\bm\pi})$ of
\cref{subsubsec:wind-sparse} evaluated without the dynamic futures leg, and to
their sum. We then regress the static-overlay convexity residual on the PPA convexity residual to measure
their offsetting relationship,
\begin{equation}\label{eq:convexity-cancellation}
  \widehat\beta_\kappa
  =\frac{\widehat{\Cov}(\kappa^{\rm stat},\kappa^{\rm PPA})}
        {\widehat{\Var}(\kappa^{\rm PPA})},
\end{equation}
whose ideal value is $-1$: a slope of $-1$, together with correlation close to $-1$, indicates
that the static residual offsets the PPA residual with approximately equal scale.

\begin{table*}[t]
\centering
\caption{Wind convexity decomposition of the terminal unconstrained static
overlays on test paths. Payoff and convexity standard-deviation reductions
compare the PPA plus static overlay (without dynamic futures leg) with the
unhedged PPA; $\rho_\kappa$ is the correlation between the static and PPA
convexity residuals and $\widehat\beta_\kappa$ the cancellation slope
\eqref{eq:convexity-cancellation}.}
\label{tab:wind-convexity}
\footnotesize
\begin{tabular}{lrrrr}
\toprule
Candidate family & Payoff std.\ red. & Convexity std.\ red. & $\rho_\kappa$ & $\widehat\beta_\kappa$ \\
 & (\%) & (\%) & & \\
\midrule
All families & 45.44 & 61.36 & $-0.93$ & $-0.98$ \\
Capture spread & 7.94 & 40.57 & $-0.81$ & $-0.60$ \\
Power vanillas & 15.30 & 4.40 & $-0.29$ & $-0.09$ \\
Wind vanillas & 2.33 & 6.03 & $-0.37$ & $-0.18$ \\
Quanto orthant & 1.30 & 1.80 & $-0.35$ & $-0.23$ \\
\bottomrule
\end{tabular}
\end{table*}

\begin{figure*}[p]
\centering
\begin{subfigure}[t]{.96\textwidth}
\centering
\includegraphics[width=\linewidth,height=.33\textheight,keepaspectratio]{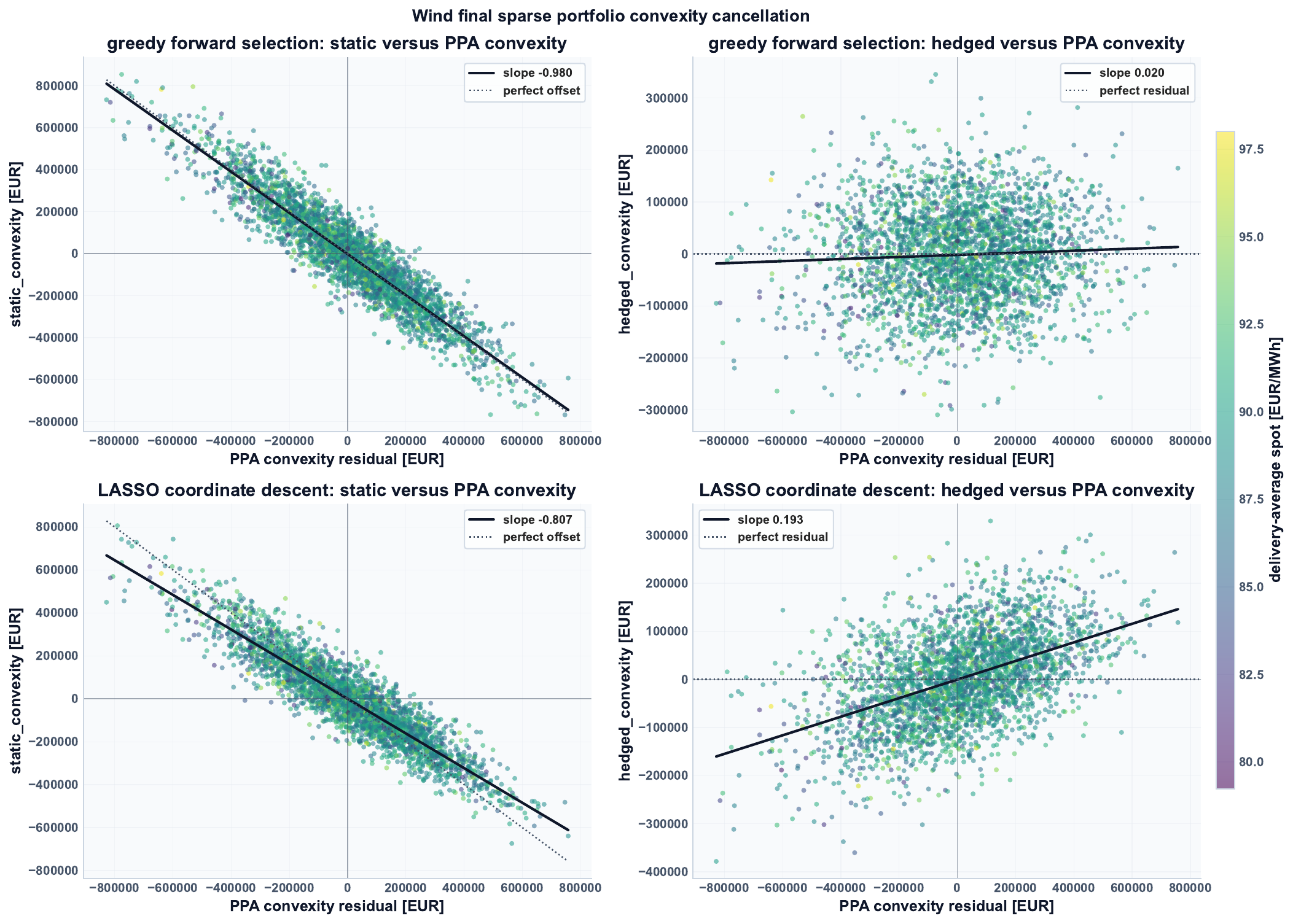}
\caption{Regression-based convexity cancellation.}
\end{subfigure}\\[0.35em]
\begin{subfigure}[t]{.96\textwidth}
\centering
\includegraphics[width=\linewidth,height=.33\textheight,keepaspectratio]{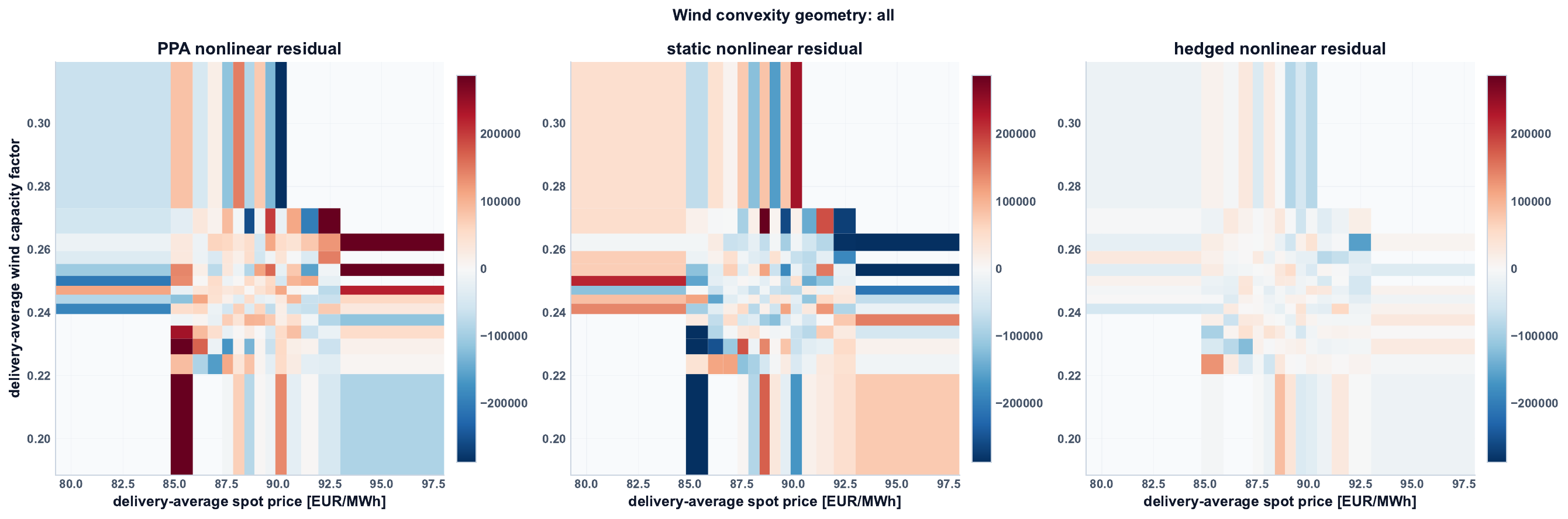}
\caption{Two-dimensional convexity heatmaps.}
\end{subfigure}
\caption{Wind convexity geometry of the sparse static overlay.}
\label{fig:wind-convexity-regression}
\label{fig:wind-convexity-triptych}
\end{figure*}

\Cref{tab:wind-convexity} and
\cref{fig:wind-convexity-regression,fig:wind-convexity-triptych} show that the combined
catalog substantially offsets the nonlinear residual: the selected all-family overlay
attains $\widehat\beta_\kappa=-0.98$ with correlation $-0.93$ and reduces the
convexity dispersion by $61.4\%$, exceeding its $45.4\%$ payoff reduction.
The family decomposition distinguishes linear-payoff reduction from nonlinear-residual reduction more clearly than the frontier alone. Power
vanillas achieve a payoff reduction of $15.3\%$ but almost no convexity
reduction ($4.4\%$, $\widehat\beta_\kappa=-0.09$): they replicate the linear
price exposure that futures also span, which is why they add so little on top
of the dynamic hedge in \cref{tab:wind-family-endpoints}. Capture-spread
options show the opposite pattern: modest total-payoff reduction
but the largest single-family reduction in the nonlinear
residual, because their payoff is convex in the capture
discount. The quanto-orthant book taken alone produces only a
$1.8\%$ convexity reduction because individual orthant payoffs are local.
Combining several orthants and thresholds allows the
portfolio to approximate the signed price--volume interaction described in
\cref{subsec:orthants}. At the monthly level the same decomposition applied to
the monthly summaries $(F_j,\bar C_j)$ shows that the convexity reduction of
the all-family book ranges from $69.2\%$ in August to $79.7\%$ in December,
while its monthly payoff reduction ranges from $11.4\%$ in October to
$86.0\%$ in December;
the selected notionals are larger in high-variance winter months, while the estimated
nonlinear-residual reduction remains relatively stable across months.

\subsubsection{Covariance replication by orthant quantos}
\label{subsubsec:wind-quanto-covariance}

\begin{figure*}[t]
\centering
\includegraphics[width=.92\textwidth]{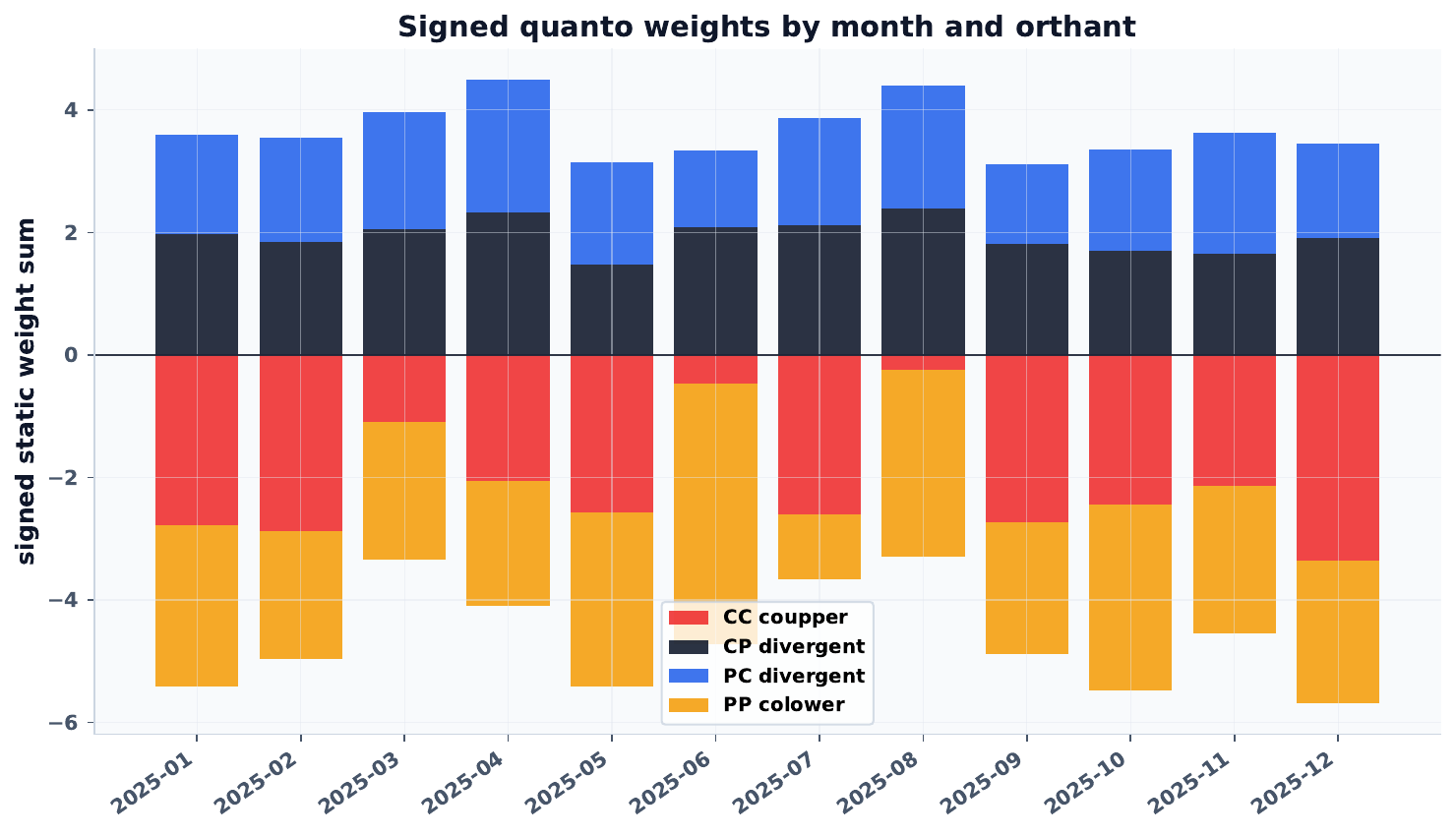}
\caption{Signed wind quanto weights by month and orthant. Positive
$\mathrm{CP}$/$\mathrm{PC}$ (divergent) weights and negative
$\mathrm{CC}$/$\mathrm{PP}$ (concordant) contributions reproduce the sign
structure of the price--wind covariance projection of
\cref{prop:orthant-sign}. The month-by-month pattern is consistent with using
the orthant claims as a sparse approximation to localized price-volume
covariance exposure.}
\label{fig:wind-quanto-weights}
\end{figure*}

The estimated weights match the predicted orthant signs in every
month. In every delivery month and at every threshold pair, the estimated
static weights are positive on the divergent orthants $\mathrm{CP}$ and
$\mathrm{PC}$ and negative on the concordant orthants $\mathrm{CC}$ and
$\mathrm{PP}$, the pattern $\operatorname{sign}(\bm\nu^\star)=\bm s$ of
\cref{prop:orthant-sign}, with an absolute-weighted sign-match
share of $100\%$ (\cref{fig:wind-quanto-weights}). The
coefficient from projecting the selected monthly
quanto payoff onto $(F_j-a_j)(\bar C_j-b_j)$
is negative in every month, ranging from $-0.78$ to
$-1.22$. Gross weights and model costs vary non-monotonically over
the annual production cycle rather than following a single summer--winter
gradient. The negative centered-product loading means that the orthant
portfolio offsets monthly price--wind co-movement, the
type of residual exposure represented by $\Gamma_0^r$ in the
fair-strike decomposition \eqref{eq:fair-strike-gamma}.

\subsubsection{Realized January--December 2025 performance}
\label{subsubsec:wind-realized}

The realized 2025 path illustrates how these claims respond to observed price and
production outcomes. With the fair strike $\widehat K^\star=80.01$ EUR/MWh, the realized
monthly PPA payoff is $H_j=R_j-\widehat K^\star V_j$ evaluated at the observed
monthly settlements (\cref{tab:wind-realized-state}). The period is not
uniformly favorable for the PPA: January and February have high baseload
prices and positive PPA values, while May and June have low capture prices and
large negative PPA values. Losses continue in autumn: September and
October contribute $-131.4$ and $-263.8$ kEUR, respectively, and total
realized PPA value over the year is $-460.6$ kEUR. Realized wind production is
below its model expectation in several high-price months (\textit{Dunkelflaute}).

\begin{table}[t]
\centering
\caption{Realized wind monthly state variables and PPA values,
January--December 2025. The capture discount is $D_j=F_j-A_j$.}
\label{tab:wind-realized-state}
\scriptsize
\begin{adjustbox}{max width=\columnwidth,max totalheight=.155\textheight,center}
\begin{tabular}{lrrrr}
\toprule
Month & $F_j$ & $\bar C_j$ & $D_j$ & PPA value \\
 & (EUR/MWh) &  & (EUR/MWh) & (kEUR) \\
\midrule
2025-01 & 114.14 & 0.3029 & 30.02 & 46.3 \\
2025-02 & 128.52 & 0.1625 & 13.51 & 191.0 \\
2025-03 & 94.61 & 0.1275 & 21.07 & -30.7 \\
2025-04 & 77.94 & 0.1183 & 3.57 & -24.0 \\
2025-05 & 67.34 & 0.1687 & 5.75 & -115.6 \\
2025-06 & 63.99 & 0.2061 & 12.45 & -211.2 \\
2025-07 & 87.80 & 0.1278 & 5.46 & 11.1 \\
2025-08 & 76.99 & 0.1143 & 8.01 & -46.9 \\
2025-09 & 83.51 & 0.2229 & 19.88 & -131.4 \\
2025-10 & 84.51 & 0.3148 & 27.03 & -263.8 \\
2025-11 & 101.88 & 0.2244 & 12.53 & 75.5 \\
2025-12 & 93.47 & 0.2534 & 9.32 & 39.1 \\
\bottomrule
\end{tabular}
\end{adjustbox}
\end{table}

\begin{table*}[t]
\centering
\caption{Realized net payoff of the wind static portfolio by product family. Amounts are
net of model prices, $\sum_i\nu_i(g_i^{\rm realized}-\pi_i)$, in kEUR.}
\label{tab:wind-realized-family-economics}
\footnotesize
\begin{tabular}{llrrrrr}
\toprule
Method & Constraint & Capture spread & Power & Quantos & Wind & Total static \\
\midrule
Greedy forward selection & Long-only & 620.3 & -52.6 & 46.7 & 186.3 & 800.7 \\
Greedy forward selection & Unconstrained & 582.1 & -350.4 & 185.6 & -3.9 & 413.4 \\
LASSO coordinate descent & Long-only & 593.2 & -26.5 & 53.1 & 158.9 & 778.8 \\
LASSO coordinate descent & Unconstrained & 544.6 & -27.2 & 63.6 & 98.9 & 680.0 \\
\bottomrule
\end{tabular}
\end{table*}

The family-level payoffs in
\cref{tab:wind-realized-family-economics} show how each claim type responded to the realized path.
Capture-spread calls insure against months in which the baseload price is high but the
realized production-weighted capture price is materially
lower. They generate large positive payoffs in January and
March, when $D_j=30.02$ and $21.07$ EUR/MWh. Wind puts provide protection in low-production
months, especially February, where the realized capacity factor is $0.1625$. Orthant
quantos add protection whose payoff depends jointly on the price and wind
thresholds: the CP divergent contracts pay when price is above its threshold and
wind is below its threshold, which matches the high-price,
low-wind outcomes observed early in 2025. Power vanillas have the opposite
realized contribution in the unconstrained portfolios because the
variance-optimal regression sells part of the high-price convexity. This reduces
variance in the simulated test set but produces losses when January--March spot prices
finish above the sold call strikes.

The long-only portfolios have larger
realized payoffs on this path because the
observed capture discounts and wind volumes
put capture-spread calls and wind puts in the
money. The unconstrained overlays contribute $413.4$ and $680.0$ kEUR.
The LASSO unconstrained book performs better on the realized path because it
has a smaller short position in power calls while keeping positive exposure to
capture-spread and $\mathrm{CP}$-quanto payoffs. This does not contradict the simulated
frontier: unconstrained greedy is the better test-set variance hedge, whereas
LASSO is the lower-leverage approximation of the same projection. A
single realized path cannot rank the estimators; it only illustrates
that the unconstrained variance-optimal portfolio may include short
option positions, whereas the long-only portfolio cannot.

\subsection{Solar PPA backtest}
\label{subsec:solar-backtest-results}

The solar backtest uses the protocol in \cref{subsec:backtest-protocol}. The solar backtest replaces the wind state and calibration with
the clear-sky solar specification in \cref{subsec:state-variables,subsec:calibration-results-solar}; the
renewable-index vanillas and orthant quantos are written on the solar
capacity factor.

The valuation results already show that wind and solar discounts arise from different
components. The model-implied fair strike is
$\widehat K^\star=63.89$ EUR/MWh, compared with an expected discounted baseload
price of $88.46$ EUR/MWh, a $24.57$ EUR/MWh discount to baseload, almost three times
the wind discount. The mean pathwise capture price is $63.92$ EUR/MWh and the
volume--capture covariance correction is only $0.001$ EUR/MWh, so the Monte
Carlo decomposition attributes essentially all of the discount to the profile
component. Consistent with the empirical decomposition of \cref{subsec:emp-capture},
the model-implied solar PPA strike is far below baseload mainly
because production is concentrated around midday, when prices are typically lower
and aggregate solar output is high; by contrast, the wind discount contains a
larger stochastic covariance component. At the monthly settlement scale the
cross-path correlation between the baseload average $F_j$ and the capacity
factor $\bar C_j$ is mildly positive, between $0.04$ and $0.08$ in every
delivery month, and yet the model capture price lies below baseload in every
month; the discount arises from the within-month alignment of hourly output and prices
rather than from correlation between monthly averages. Expected monthly delivered volume
rises from $0.9$ GWh in January to $6.5$ GWh in June and then falls to
$0.7$ GWh in December, broadly opposite to the wind seasonal pattern.

\Cref{fig:solar-hour-month-ppa} shows that the PPA payoff is nonzero only during daylight hours:
cells classified as night have zero production by construction under the envelope rule
\eqref{eq:solar-structural-night} and the physical inverse map \eqref{eq:solar-inverse},
while the nonzero window widens toward summer.
Positive winter and morning--evening
shoulder-hour contributions coexist with
a broad negative midday band from April
to August, consistent with high summer
production occurring in hours with low solar
capture prices.

\subsubsection{Hedging effectiveness of the benchmark strategies}
\label{subsubsec:solar-benchmarks}

\begin{table*}[t]
\centering
\caption{Solar PPA test-set risk levels for the main hedging benchmarks.
Reductions are relative to the unhedged fair-strike PPA.}
\label{tab:solar-main-risk}
\footnotesize
\begin{tabular}{lrrrr}
\toprule
Strategy & Std. & 95\% CVaR loss & Std. red. & CVaR red. \\
 & (kEUR) & (kEUR) & (\%) & (\%) \\
\midrule
Unhedged fair-strike PPA & 246.1 & 505.1 & 0.0 & 0.0 \\
Static baseload forward & 144.5 & 297.8 & 41.3 & 41.0 \\
Static baseload plus peak forward & 139.2 & 287.4 & 43.5 & 43.1 \\
Front-month dynamic futures hedge & 91.1 & 184.7 & 63.0 & 63.4 \\
Multi-month dynamic futures hedge & 86.7 & 176.8 & 64.8 & 65.0 \\
Multi-month dynamic plus static claims & 9.1 & 18.4 & 96.3 & 96.3 \\
\bottomrule
\end{tabular}
\end{table*}

\Cref{tab:solar-main-risk} reports the same benchmark comparison as
\cref{tab:wind-main-risk}. Linear forward instruments are considerably more
effective for solar than for wind. The static baseload forward already removes
$41.3\%$ of the PPA dispersion, and adding the peak-block forward raises this
to $43.5\%$. The peak block produces no material improvement for wind at the reported
precision. For solar, its delivery hours overlap much of the production window,
so the two-forward hedge better matches the intraday production pattern. The
variance-optimal futures stack improves this further to $64.8\%$, against
$36.4\%$ for wind. The static claims provide the largest
additional reduction: they lower the post-dynamic
residual standard deviation from $86.7$ kEUR to $9.1$
kEUR, bringing the total standard-deviation and CVaR
reductions to $96.3\%$.
\Cref{fig:solar-pnl-residuals} shows the corresponding out-of-sample residual
distributions.

The two robustness checks give similar conclusions for solar. The continuous MCARMA--MNIG
variant gives a $65.6\%$ dynamic and $96.4\%$ semi-static standard-deviation
reduction, against $64.8\%$ and $96.3\%$ for the state-dependent-jump
specification. Rerunning the projection with 378 daily rebalancing times and
377 delivery blocks produces a $93.8\%$ reduction for the semi-static hedge and a
$56.2\%$ reduction for the dynamic futures hedge, below the monthly-roll figure
under the more heavily regularized daily estimator. Under this daily estimator,
within-delivery rebalancing does not improve the futures hedge. The static claims
nevertheless reduce most of the residual risk left by both rebalancing schemes.

\begin{figure*}[!t]
\centering
\begin{subfigure}[t]{.49\textwidth}
\centering
\includegraphics[width=\linewidth,height=.31\textheight,keepaspectratio]{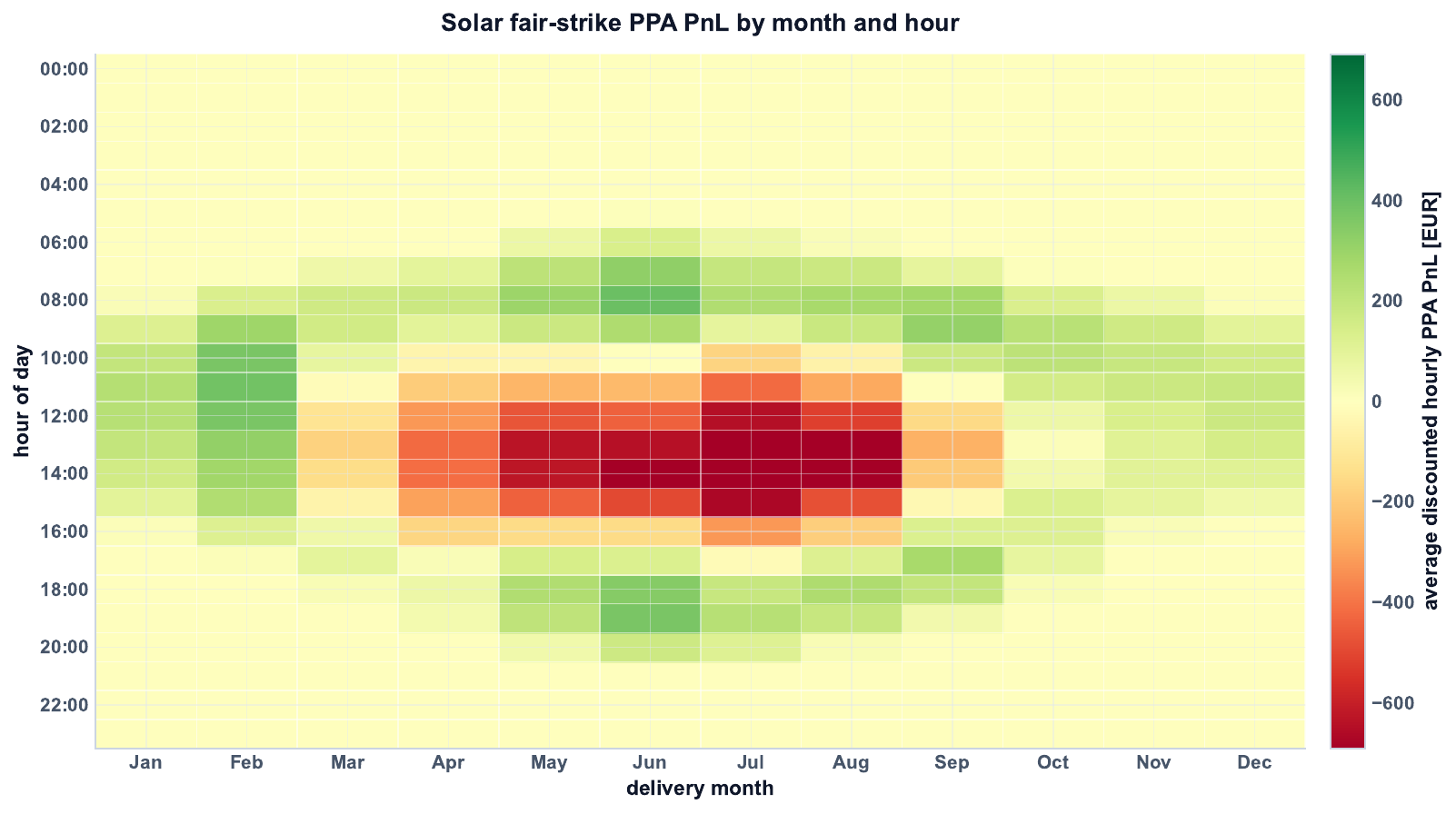}
\caption{Mean fair-strike PPA payoff by delivery month and hour.}
\end{subfigure}\hfill
\begin{subfigure}[t]{.49\textwidth}
\centering
\includegraphics[width=\linewidth,height=.31\textheight,keepaspectratio]{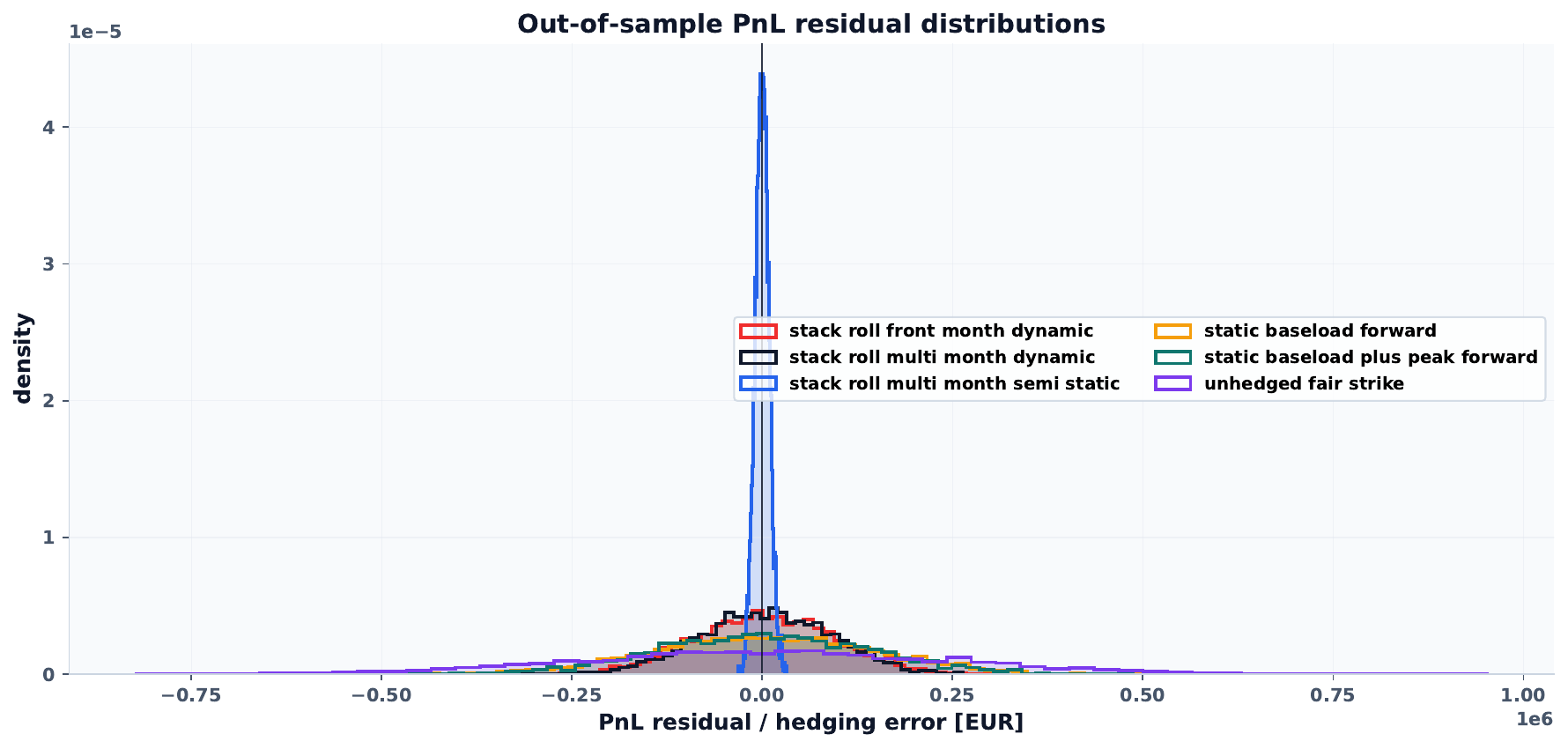}
\caption{Out-of-sample residual distributions across hedge designs.}
\end{subfigure}
\caption{Solar PPA contributions by delivery month and hour, together with hedging residuals.
The heatmap shows the negative midday contribution, while the distributional panel
shows the incremental effect of dynamic futures and the static completion.}
\label{fig:solar-hour-month-ppa}
\label{fig:solar-pnl-residuals}
\end{figure*}

\subsubsection{Naive volume deltas versus variance-optimal positions}
\label{subsubsec:solar-delta}

\begin{table*}[t]
\centering
\caption{Solar test-set hedging effectiveness of front-month delta rules
versus the variance-optimal projection. Reductions are relative to the
unhedged fair-strike PPA, and $\rho(H,G)$ is the test correlation between the
centered PPA payoff and the hedge gain.}
\label{tab:solar-delta-effectiveness}
\footnotesize
\begin{tabular}{lrrrrr}
\toprule
Hedge & Std. & 95\% CVaR loss & Std. red. & CVaR red. & $\rho(H,G)$ \\
 & (kEUR) & (kEUR) & (\%) & (\%) &  \\
\midrule
Unhedged fair-strike PPA & 246.1 & 505.1 & 0.0 & 0.0 & -- \\
Front-month naive model delta & 87.2 & 178.2 & 64.6 & 64.7 & 0.94 \\
Front-month naive realized delta & 90.8 & 186.3 & 63.1 & 63.1 & 0.93 \\
Front-month variance-optimal $\theta$ & 87.4 & 178.8 & 64.5 & 64.6 & 0.94 \\
Multi-month dynamic futures hedge & 86.7 & 176.8 & 64.8 & 65.0 & 0.94 \\
\bottomrule
\end{tabular}
\end{table*}

\begin{table*}[t]
\centering
\caption{Solar front-month variance-optimal futures positions versus naive
volume deltas. PPA values are realized January--December 2025 values.}
\label{tab:solar-delta-diagnostics}
\scriptsize
\renewcommand{\arraystretch}{0.88}
\begin{tabular}{lrrrrr}
\toprule
Month & PPA & Naive delta & VO delta & Adjustment & Ratio \\
 & (kEUR) & (MW) & (MW) & (MW) &  \\
\midrule
2025-01 & 57.8 & 1.26 & 1.29 & 0.03 & 1.03 \\
2025-02 & 92.2 & 2.28 & 2.35 & 0.07 & 1.03 \\
2025-03 & -38.0 & 3.98 & 3.82 & -0.15 & 0.96 \\
2025-04 & -160.3 & 5.81 & 5.81 & 0.00 & 1.00 \\
2025-05 & -260.0 & 7.84 & 7.73 & -0.12 & 0.99 \\
2025-06 & -297.4 & 9.02 & 9.14 & 0.13 & 1.01 \\
2025-07 & -11.7 & 8.49 & 8.49 & 0.00 & 1.00 \\
2025-08 & -135.0 & 7.37 & 6.97 & -0.40 & 0.95 \\
2025-09 & -63.5 & 5.74 & 5.80 & 0.06 & 1.01 \\
2025-10 & 12.6 & 3.16 & 3.05 & -0.10 & 0.97 \\
2025-11 & 37.8 & 1.53 & 1.60 & 0.07 & 1.04 \\
2025-12 & 29.9 & 0.91 & 1.23 & 0.32 & 1.36 \\
\bottomrule
\end{tabular}
\end{table*}

Unlike for wind, the solar variance-optimal and expected-volume positions
are very similar
(\cref{tab:solar-delta-effectiveness,tab:solar-delta-diagnostics}). For solar,
the covariance adjustment to the naive expected-volume delta
$\bar q^r\,\widehat{\E}[\bar C_j]$ is small in every month: the ratio of the
variance-optimal position $\theta_j^\star$ to the naive delta lies between
$0.95$ and $1.36$, against $0.50$ to $1.34$ for wind. The adjustment is small
through most high-volume months and largest in December, when expected solar
volume is lowest. The front-month rules have nearly identical test
performance: the naive model delta attains a $64.6\%$ test
standard-deviation reduction, the variance-optimal front-month position
$64.5\%$, and the full multi-month stack $64.8\%$. For
a solar PPA in this model-based backtest, the expected-volume forward sale performs
about as well as the variance-optimal solar hedge; for wind, it achieves
roughly four percentage points less standard-deviation reduction. For solar, the
joint model contributes mainly to the valuation and selection of
the static claims rather than to the estimated futures position.

\subsubsection{The efficient frontier in the number of static claims}
\label{subsubsec:solar-frontier}

\Cref{fig:solar-efficient-frontier-static-claims,fig:solar-claim-count-risk-reduction}
trace the semi-static frontier in the number of static claims added on top of
the multi-month dynamic hedge. As for wind, risk
falls quickly for the first selected claims and
more slowly after approximately twenty to thirty
claims. The selected solar portfolio achieves a larger incremental reduction than the selected wind portfolio: the validation-selected greedy
overlay removes $82.4\%$ of the post-dynamic residual standard deviation,
against $76.5\%$ for wind, so the solar residual left by the futures stack, although
smaller in absolute terms, is more fully hedged by the option catalog.

\begin{figure*}[!t]
\centering
\begin{subfigure}[t]{.78\textwidth}
\centering
\includegraphics[width=\linewidth,height=.31\textheight,keepaspectratio]{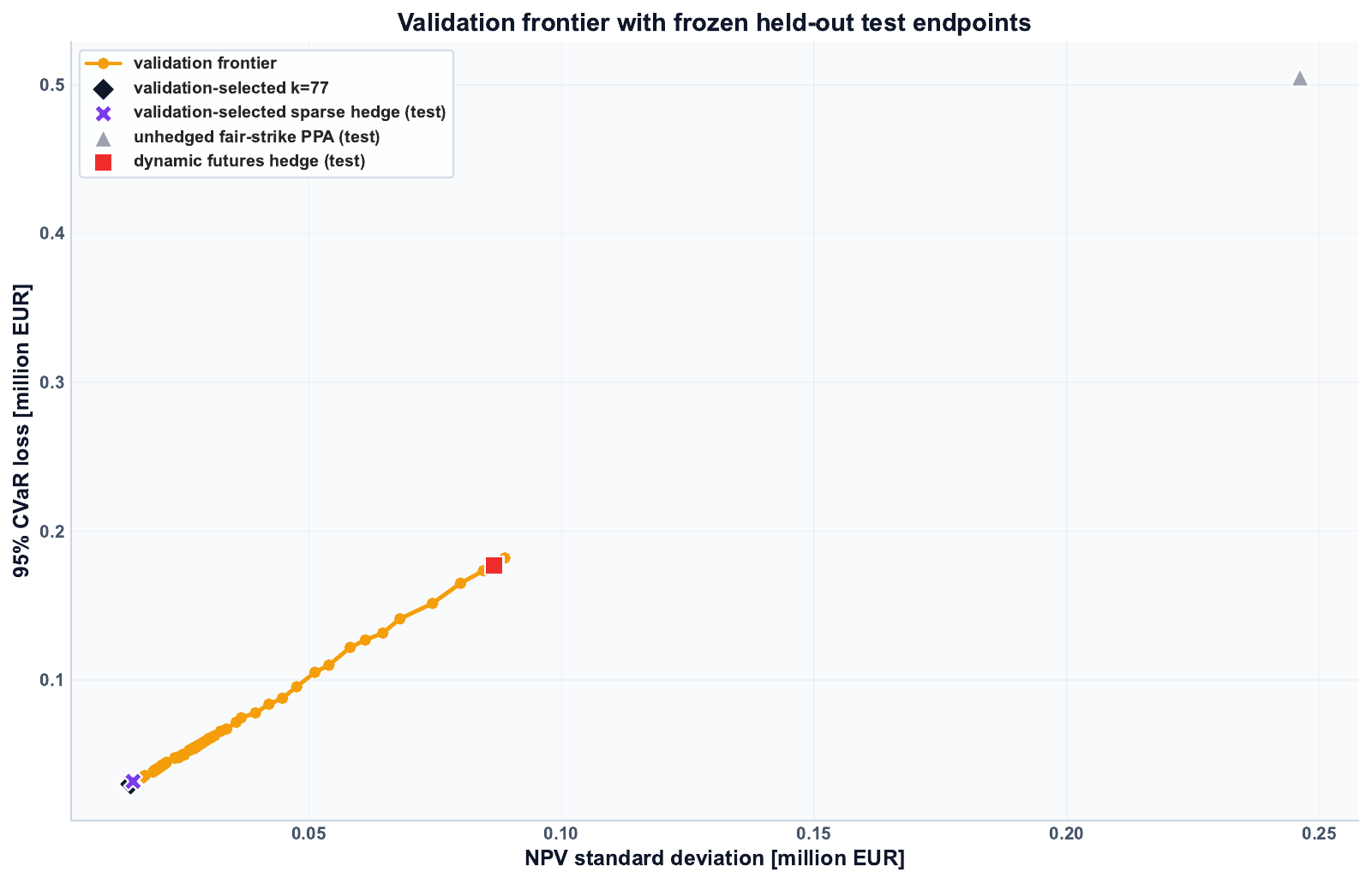}
\caption{Validation frontier in standard deviation and CVaR, with the selected portfolio evaluated once on the test set.}
\end{subfigure}
\caption{Solar efficient frontier in the number of static claims.}
\label{fig:solar-efficient-frontier-static-claims}
\label{fig:solar-claim-count-risk-reduction}
\end{figure*}

\subsubsection{Sparse static portfolios: selection, composition and leverage}
\label{subsubsec:solar-sparse}

\begin{table*}[t]
\centering
\caption{Solar sparse static portfolios selected from the all-product catalog.
Risk reductions are relative to the multi-month dynamic futures hedge and to
the unhedged fair-strike PPA, respectively.}
\label{tab:solar-sparse-risk}
\footnotesize
\begin{tabular}{llrrrrrr}
\toprule
Method & Constraint & $k$ & Std. & 95\% CVaR & Red. vs dyn. & Red. vs unhedged & Gross $|\nu|$ \\
 & & & (kEUR) & (kEUR) & (\%) & (\%) &  \\
\midrule
Greedy forward selection & Long-only & 16 & 51.9 & 95.0 & 40.18 & 78.93 & 27.10 \\
Greedy forward selection & Unconstrained & 77 & 15.2 & 31.8 & 82.44 & 93.81 & 74.38 \\
LASSO coordinate descent & Long-only & 33 & 51.8 & 93.6 & 40.30 & 78.97 & 31.02 \\
LASSO coordinate descent & Unconstrained & 79 & 22.8 & 46.5 & 73.69 & 90.73 & 50.82 \\
\bottomrule
\end{tabular}
\end{table*}

\begin{table*}[t]
\centering
\caption{Composition of the solar sparse static portfolios. Entries are counts
of nonzero static claims in each family.}
\label{tab:solar-sparse-composition}
\footnotesize
\begin{tabular}{llrrrrrr}
\toprule
Method & Constraint & Power & Solar & Quantos & Capture spread & Net $\nu$ & Gross $|\nu|$ \\
\midrule
Greedy forward selection & Long-only & 0 & 1 & 4 & 11 & 27.10 & 27.10 \\
Greedy forward selection & Unconstrained & 25 & 8 & 18 & 26 & -1.21 & 74.38 \\
LASSO coordinate descent & Long-only & 1 & 5 & 15 & 12 & 31.02 & 31.02 \\
LASSO coordinate descent & Unconstrained & 12 & 10 & 33 & 24 & 4.29 & 50.82 \\
\bottomrule
\end{tabular}
\end{table*}

\Cref{tab:solar-sparse-risk,tab:solar-sparse-composition} report the
corresponding solar comparison. The unconstrained greedy portfolio has the lowest
test-set variance among the sparse portfolios: with $k=77$ selected products it reaches
a $15.2$ kEUR test standard deviation and a $31.8$ kEUR 95\% CVaR loss, a
$93.8\%$ standard-deviation reduction relative to the unhedged PPA. The
unconstrained LASSO portfolio selects 79 claims with higher risk but a
$32\%$ lower gross absolute weight, $50.82$ against $74.38$, consistent with the
$\ell^1$ penalty reducing gross leverage in this application. The two long-only
methods produce nearly identical risk levels, at about $51.8$ kEUR
standard deviation and $94$ kEUR CVaR loss. The gap between the long-only
and unconstrained frontiers is wide, $40\%$ against $82\%$ of
the post-dynamic residual: the unconstrained solar portfolio uses short option
positions more extensively. In the power-vanilla family, short calls combined
with long puts create a payoff similar to a short forward, with additional
nonlinear exposure. Such a payoff cannot be formed under the long-only constraint; indeed the
greedy long-only selection includes no power vanillas and LASSO selects only one.

\begin{table*}[t]
\centering
\caption{Solar single-family sparse frontiers at their selected endpoints.
Reductions in test standard deviation are relative to the multi-month dynamic
futures hedge ($86.7$ kEUR standard deviation, $176.8$ kEUR 95\% CVaR loss).}
\label{tab:solar-family-endpoints}
\footnotesize
\begin{tabular}{lrrrrrrrr}
\toprule
 & \multicolumn{4}{c}{Long-only} & \multicolumn{4}{c}{Unconstrained} \\
\cmidrule(lr){2-5}\cmidrule(lr){6-9}
Candidate family & $k$ & Std. & 95\% CVaR & Red. & $k$ & Std. & 95\% CVaR & Red. \\
 & & (kEUR) & (kEUR) & (\%) & & (kEUR) & (kEUR) & (\%) \\
\midrule
All families & 16 & 51.9 & 95.0 & 40.2 & 77 & 15.2 & 31.8 & 82.4 \\
Capture spread & 10 & 53.5 & 98.7 & 38.2 & 25 & 28.7 & 60.3 & 66.9 \\
Quanto orthant & 2 & 86.5 & 176.5 & 0.2 & 8 & 85.7 & 173.6 & 1.1 \\
Solar vanillas & 0 & 86.7 & 176.8 & 0.0 & 2 & 86.1 & 175.5 & 0.6 \\
Power vanillas & 0 & 86.7 & 176.8 & 0.0 & 2 & 86.7 & 176.4 & 0.1 \\
\bottomrule
\end{tabular}
\end{table*}

\begin{figure*}[p]
\centering
\begin{subfigure}[t]{.49\textwidth}
\centering
\includegraphics[width=\linewidth,height=.29\textheight,keepaspectratio]{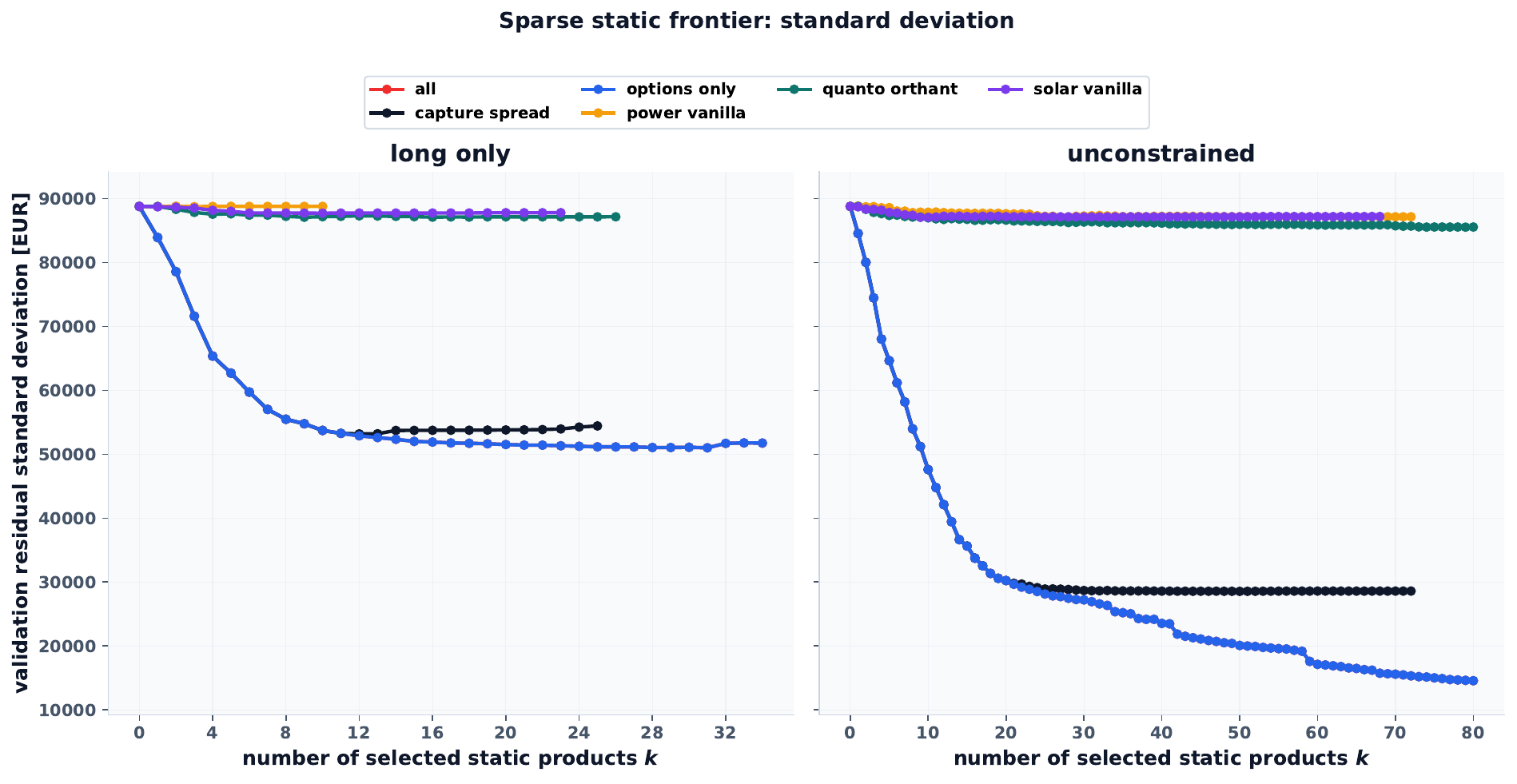}
\caption{Validation standard deviation.}
\end{subfigure}\hfill
\begin{subfigure}[t]{.49\textwidth}
\centering
\includegraphics[width=\linewidth,height=.29\textheight,keepaspectratio]{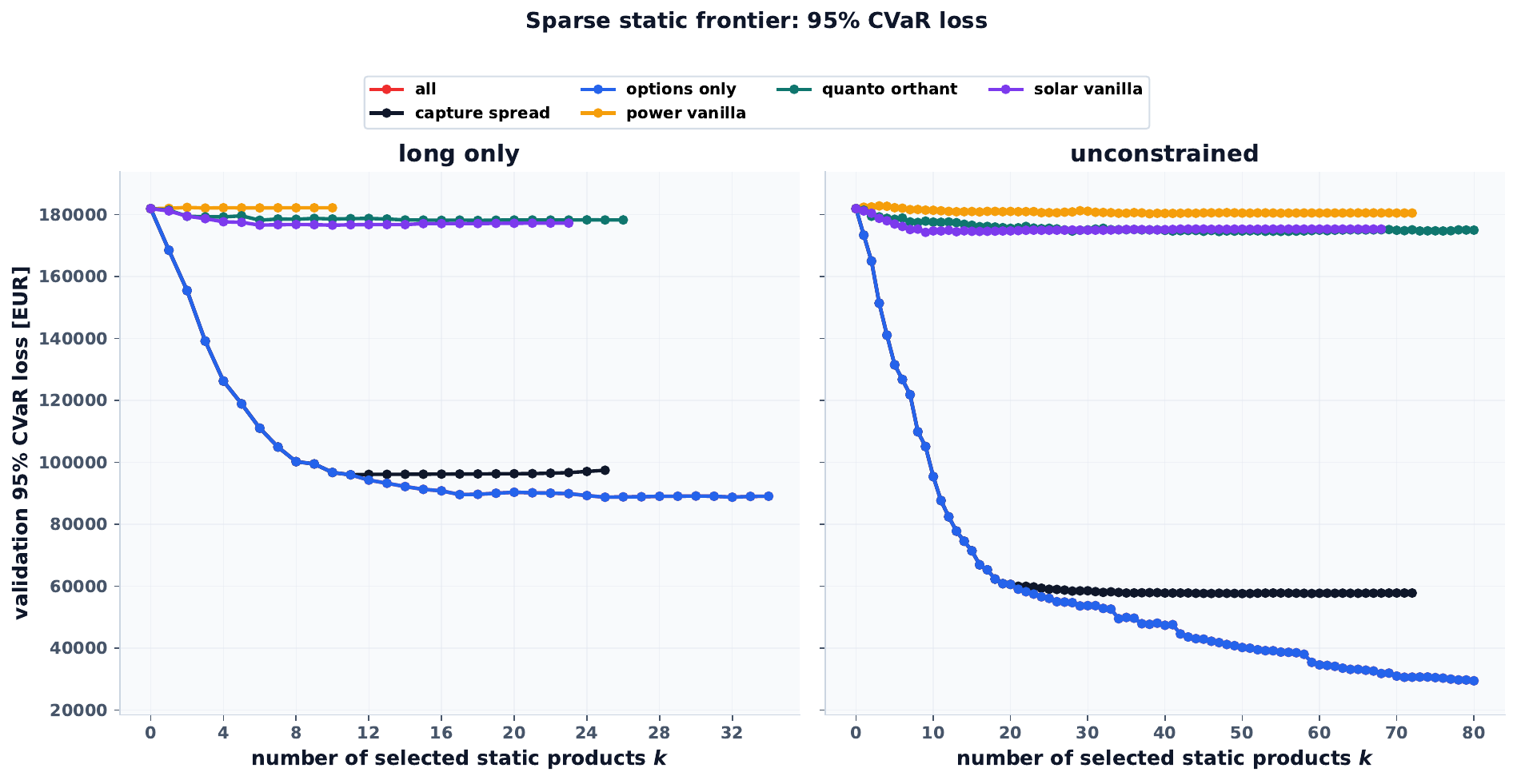}
\caption{Validation 95\% CVaR.}
\end{subfigure}\\[0.35em]
\begin{subfigure}[t]{.78\textwidth}
\centering
\includegraphics[width=\linewidth,height=.29\textheight,keepaspectratio]{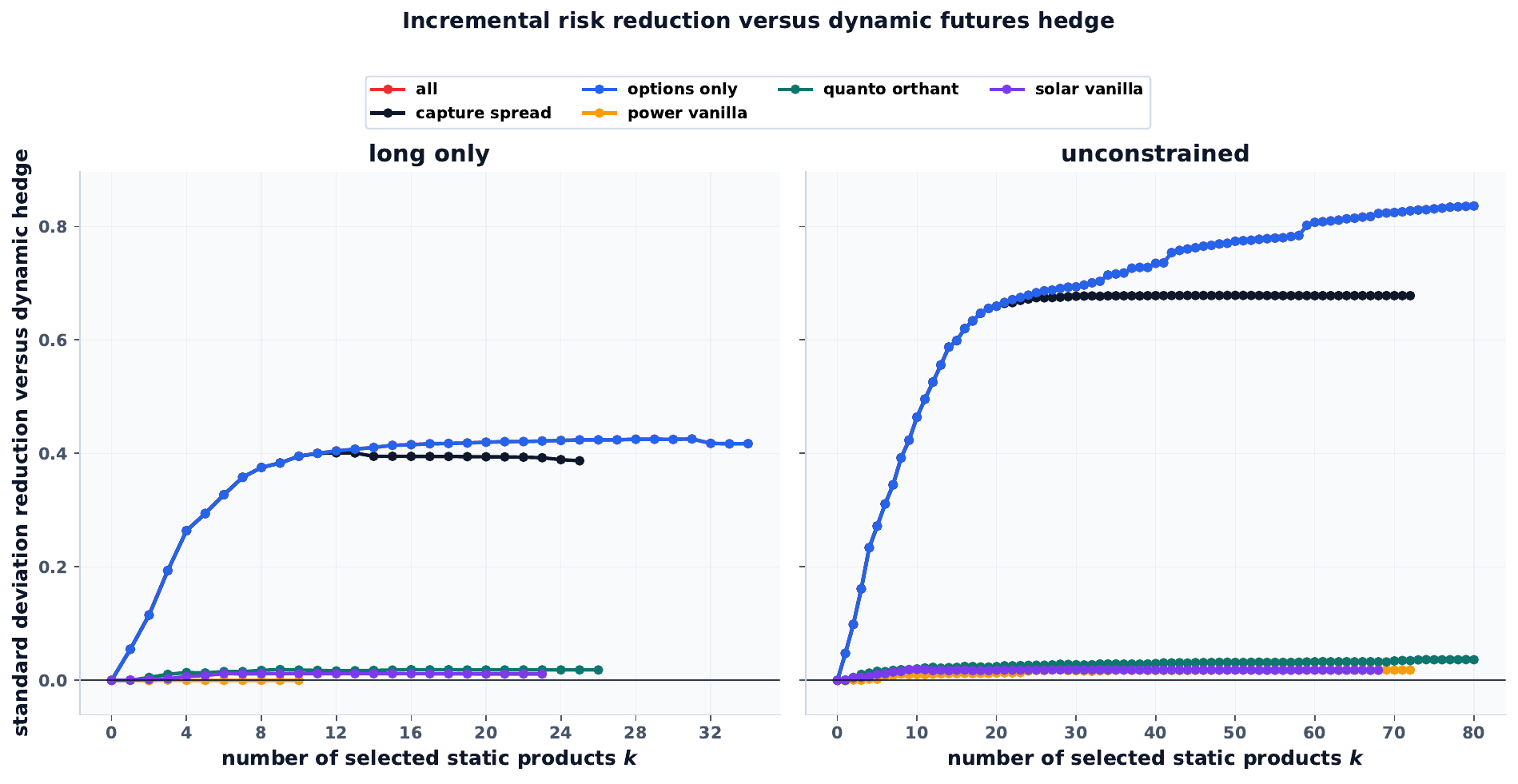}
\caption{Incremental reduction relative to the multi-month dynamic hedge.}
\end{subfigure}
\caption{Solar sparse semi-static frontiers for greedy and LASSO selection,
with and without long-only restrictions.}
\label{fig:solar-sparse-frontiers}
\label{fig:solar-sparse-incremental}
\end{figure*}

\Cref{fig:solar-sparse-frontiers,fig:solar-sparse-incremental} and
\cref{tab:solar-family-endpoints} disaggregate the frontier by candidate
family, and the ranking differs from wind. Capture-spread options
provide the largest stand-alone reduction for solar: alone they remove
$66.9\%$ of the post-dynamic residual dispersion (unconstrained), against
$45.6\%$ for wind, because the solar residual is concentrated in the monetary capture
discount $D_j$, which is the underlying variable of these options and equals the
gap between baseload and achieved price. Orthant quantos, which were the second-most effective family
for wind, contribute only about $1\%$ in isolation: with a small
stochastic covariance component, individual orthant payoffs explain little of the
post-dynamic residual variance when used alone, and the unconstrained quanto-only fit
uses extremely large offsetting long and short positions, indicating
that this localized payoff family is a poor stand-alone approximation
to the residual. Power vanillas alone add less than $0.1\%$ beyond the
futures stack, and the long-only selection procedure chooses no power
vanillas because every candidate long position increases training variance
after the futures hedge. As for wind, combining families
improves the reduction to $82.4\%$, compared with $66.9\%$ for the best
single-family portfolio.

\subsubsection{Convexity geometry of the static overlay}
\label{subsubsec:solar-convexity}

The convexity decomposition \eqref{eq:convexity-residual}, applied to the
delivery-horizon aggregates $(F_\Sigma,\bar C_\Sigma)$ of the solar backtest,
attributes a smaller share of the unhedged variance to curvature than for
wind: the convexity residual has a standard deviation of $144.2$ kEUR, compared with $246.1$
kEUR for the unhedged payoff, $34.3\%$ of the unhedged variance against $43.4\%$ for
wind. The dynamic futures stack attains an $86.7$ kEUR residual, noticeably
below the $144.2$ kEUR residual from the terminal linear regression. The two residuals were similar for wind. For solar, the dynamic
monthly hedge removes more risk because it provides a
separate linear hedge for each delivery month and is
re-estimated at every roll. Month-specific positions are
especially important because expected solar volume changes
by roughly a factor of seven from January to June.

\begin{table*}[t]
\centering
\caption{Solar convexity decomposition of the terminal unconstrained static
overlays on test paths. Payoff and convexity standard-deviation reductions
compare the PPA plus static overlay (without dynamic futures leg) with the
unhedged PPA; $\rho_\kappa$ is the correlation between the static and PPA
convexity residuals and $\widehat\beta_\kappa$ the cancellation slope
\eqref{eq:convexity-cancellation}.}
\label{tab:solar-convexity}
\footnotesize
\begin{tabular}{lrrrr}
\toprule
Candidate family & Payoff std.\ red. & Convexity std.\ red. & $\rho_\kappa$ & $\widehat\beta_\kappa$ \\
 & (\%) & (\%) & & \\
\midrule
All families & 67.70 & 76.73 & $-0.97$ & $-0.97$ \\
Capture spread & 5.89 & 18.46 & $-0.58$ & $-0.32$ \\
Power vanillas & 3.62 & 4.05 & $-0.34$ & $-0.05$ \\
Solar vanillas & $-0.00$ & $-0.06$ & $-0.02$ & $-0.00$ \\
Quanto orthant & 1.06 & 0.87 & $-0.13$ & $-0.02$ \\
\bottomrule
\end{tabular}
\end{table*}

\begin{figure*}[p]
\centering
\begin{subfigure}[t]{.96\textwidth}
\centering
\includegraphics[width=\linewidth,height=.33\textheight,keepaspectratio]{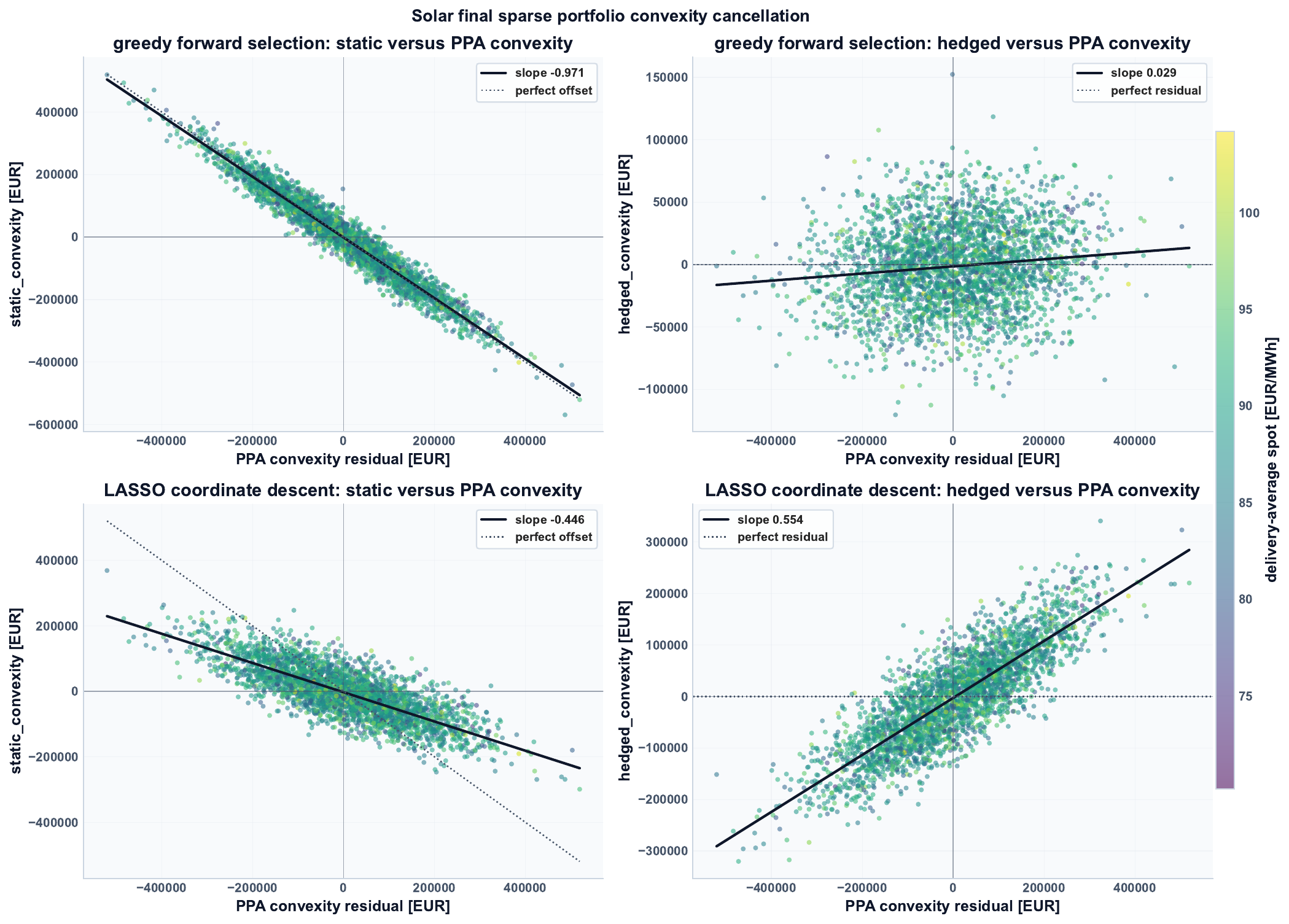}
\caption{Regression-based convexity cancellation.}
\end{subfigure}\\[0.35em]
\begin{subfigure}[t]{.96\textwidth}
\centering
\includegraphics[width=\linewidth,height=.33\textheight,keepaspectratio]{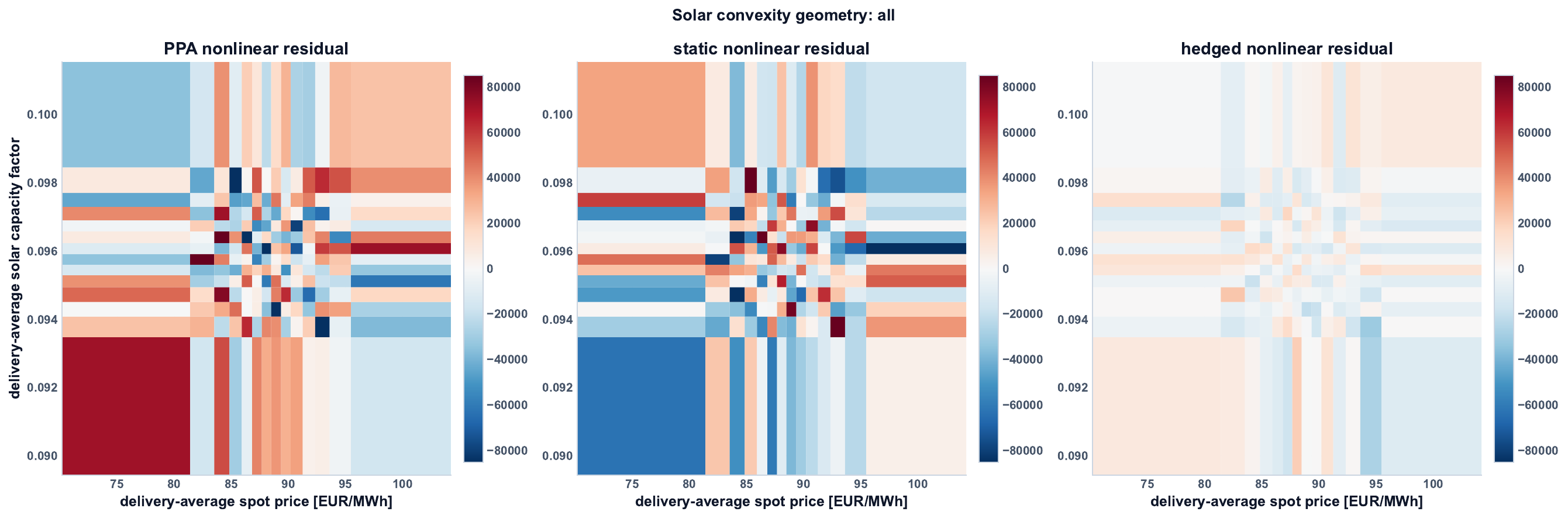}
\caption{Two-dimensional convexity heatmaps.}
\end{subfigure}
\caption{Solar convexity geometry of the sparse static overlay.}
\label{fig:solar-convexity-regression}
\label{fig:solar-convexity-triptych}
\end{figure*}

\Cref{tab:solar-convexity} and
\cref{fig:solar-convexity-regression,fig:solar-convexity-triptych} show that
the combined catalog strongly offsets the nonlinear residual: the selected
all-family overlay attains $\widehat\beta_\kappa=-0.97$ at correlation
$-0.97$, reducing payoff and convexity dispersion by $67.7\%$ and $76.7\%$.
The family-level
results are qualitatively similar to wind, with differences in magnitude.
Capture-spread options produce the largest single-family reduction in the convexity residual
($18.5\%$, $\widehat\beta_\kappa=-0.32$) at a small payoff reduction; power
vanillas remove only $3.6\%$ of payoff and $4.1\%$ of convexity dispersion;
the quanto book alone removes less than $1\%$ of convexity dispersion; and solar
vanillas have essentially zero stand-alone effect at the reported precision, although they
can still contribute conditionally when combined with the joint-state claims.
At the monthly level the all-family payoff reduction ranges from $11.7\%$ in
November to $85.8\%$ in April, while convexity reduction ranges from $66.0\%$
in December to $86.8\%$ in September. Convexity cancellation is therefore
more stable than total-payoff cancellation, but neither is uniform across the
delivery year; the month-specific strike grids reflect seasonality, but the remaining linear
payoff component still varies across months.

\subsubsection{Covariance replication by orthant quantos}
\label{subsubsec:solar-quanto-covariance}
\begin{figure*}[t]
\centering
\includegraphics[width=.92\textwidth]{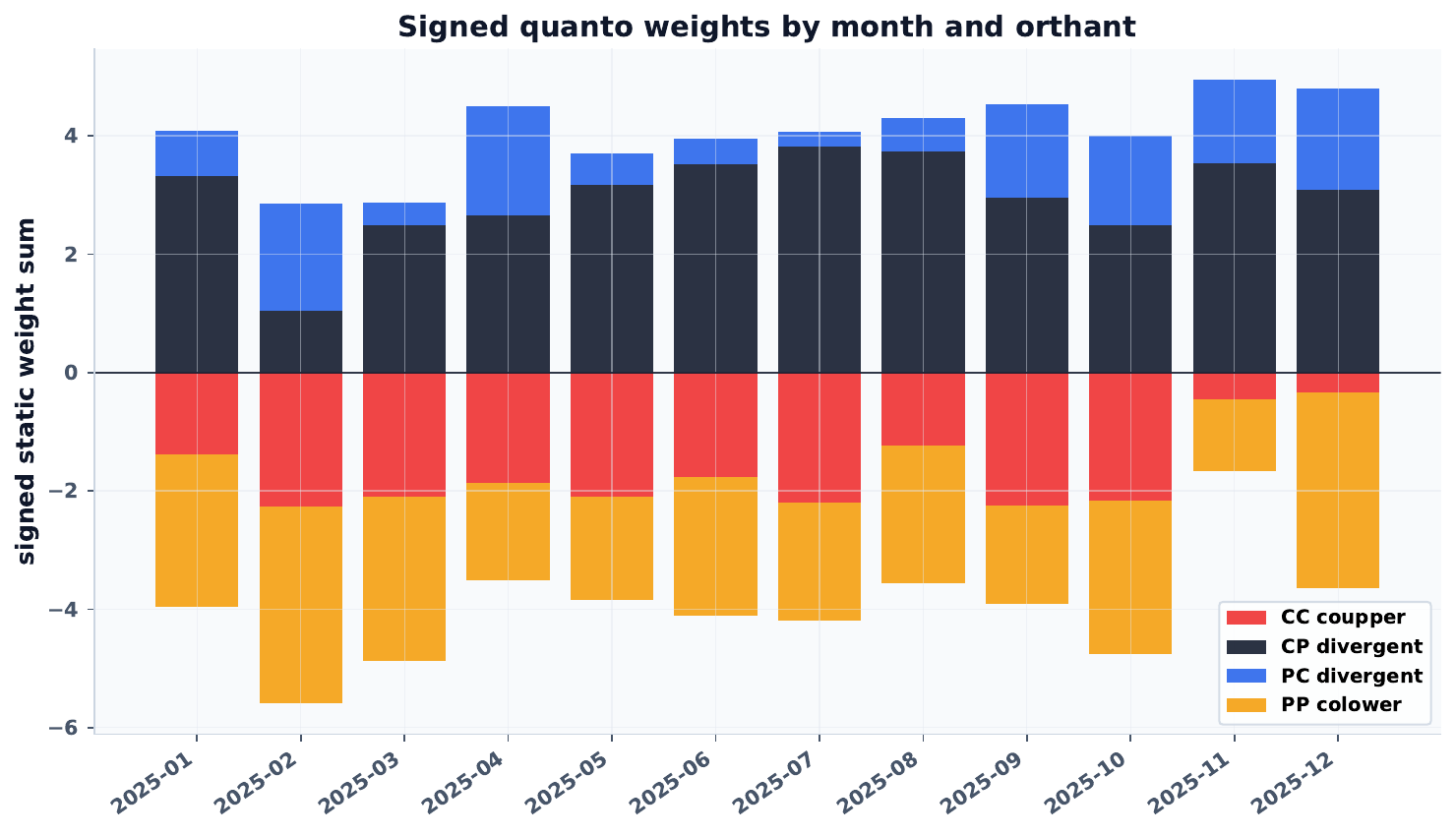}
\caption{Signed solar quanto weights by month and orthant. The aggregate
centered-product loading is negative in every month; exact orthant sign
matching holds in eleven months, with one March exception associated with
a near-zero principal-component coefficient.}
\label{fig:solar-quanto-weights}
\end{figure*}

The aggregate centered-product loading is negative in every solar
delivery month. Exact orthant sign matching holds in eleven of twelve months.
In March, $59.2\%$ of the absolute weight has the predicted
sign. The mismatch is driven by one near-zero coefficient whose
sign is unstable. Projecting the monthly quanto book onto the centered-product direction of
\eqref{eq:orthant-product} gives a negative loading in every month, ranging
from $-0.89$ to $-1.12$ without a monotone summer--winter pattern. This example
illustrates the caveat in Remark \ref{rem:orthant-sign-caveat}. The raw cross-path correlation
between the monthly settlement pair $(F_j,\bar C_j)$ is
\emph{positive} for solar, yet the residualized
weights have the negative sign pattern. The
relevant cross-moments are computed after futures projection
and therefore reflect within-month price-production dependence not
captured by monthly averages.

\subsubsection{Realized January--December 2025 performance}
\label{subsubsec:solar-realized}

\begin{table}[t]
\centering
\caption{Realized solar monthly state variables and PPA values,
January--December 2025. The capture discount is $D_j=F_j-A_j$.}
\label{tab:solar-realized-state}
\scriptsize
\begin{adjustbox}{max width=\columnwidth,max totalheight=.155\textheight,center}
\begin{tabular}{lrrrr}
\toprule
Month & $F_j$ & $\bar C_j$ & $D_j$ & PPA value \\
 & (EUR/MWh) &  & (EUR/MWh) & (kEUR) \\
\midrule
2025-01 & 114.14 & 0.0282 & -4.82 & 57.8 \\
2025-02 & 128.52 & 0.0535 & 13.36 & 92.2 \\
2025-03 & 94.61 & 0.1002 & 40.93 & -38.0 \\
2025-04 & 77.94 & 0.1446 & 44.83 & -160.3 \\
2025-05 & 67.34 & 0.1693 & 44.74 & -260.0 \\
2025-06 & 63.99 & 0.1940 & 42.67 & -297.4 \\
2025-07 & 87.80 & 0.1631 & 25.83 & -11.7 \\
2025-08 & 76.99 & 0.1559 & 36.38 & -135.0 \\
2025-09 & 83.51 & 0.0984 & 37.55 & -63.5 \\
2025-10 & 84.51 & 0.0546 & 14.40 & 12.6 \\
2025-11 & 101.88 & 0.0336 & 6.72 & 37.8 \\
2025-12 & 93.47 & 0.0244 & -3.39 & 29.9 \\
\bottomrule
\end{tabular}
\end{adjustbox}
\end{table}

The realized 2025 PPA payoff is strongly negative. Most of the loss is associated with realized capture prices falling far below baseload
(\cref{tab:solar-realized-state}). With
$H_j=R_j-\widehat K^\star V_j$ evaluated at the observed monthly settlements,
the total realized PPA value is $-735.7$ kEUR. The loss is principally a
capture-price event: realized capture prices fall substantially relative to baseload
from March through September, with capture
discounts $D_j$ of $40.9$ to $44.8$ EUR/MWh in March--June against
model-implied expectations of approximately $16$ to $23$ EUR/MWh for the same months. The
realized May capture price of $22.60$ EUR/MWh against a baseload of $67.34$
EUR/MWh, and June's $21.31$ against $63.99$, are severe
capture-discount outcomes of the type targeted by the auxiliary claims. January
shows the opposite winter configuration, a capture \emph{premium}
($D_j=-4.82$ EUR/MWh), because low winter solar output is concentrated in
the midday peak; December repeats this configuration with a
$3.39$ EUR/MWh capture premium.

Comparing the realized outcomes with the selected portfolios shows that several long option
positions finish in the money in the months for which they were selected. Capture-spread
calls, struck at training quantiles of $D_j$, finish deep in the money in
March--September; the $\mathrm{PC}$ orthant payoffs
are positive in high-volume, low-price late-spring
months, which also account for the large May and
June PPA losses; and long power puts pay in May--June
as baseload settles below the winter-anchored strikes. Unlike
in the wind backtest, the large solar losses occur in outcomes
covered by long capture-spread, orthant, and power-put positions.
Both long-only and unconstrained portfolios therefore earn positive realized
payoffs: the 2025 outcomes include large capture discounts
and high-production/low-price combinations to which both portfolios
have positive exposure. Accordingly, the greedy long-only and
unconstrained overlays contribute $809.2$ and $726.6$ kEUR, while their LASSO
counterparts contribute $748.2$ and $639.8$ kEUR, respectively.

\subsubsection{Wind versus solar: covariance risk against profile risk}
\label{subsubsec:wind-solar-comparison}

The wind and solar results show how the profile and covariance
components are reflected in hedge performance and portfolio
composition. The wind discount
to baseload is moderate ($8.9$ EUR/MWh) but substantially stochastic, driven
by the negative conditional price--wind covariance; the solar discount is
almost three times larger ($24.6$ EUR/MWh) but almost entirely a deterministic
profile effect. This difference is reflected in the futures positions, residual risks, and selected static claims.
The futures hedge reduces solar PPA standard
deviation by $64.8\%$, compared with $36.4\%$
for wind, and the expected-volume delta performs similarly to the variance-optimal
solar position but achieves about four percentage points less standard-deviation reduction for wind. Static
claims provide substantial additional risk reduction
for both technologies. For wind, orthant quantos and
capture-spread options contribute jointly because the
remaining payoff depends on both price and production.
For solar, capture-spread options dominate the
single-family results, while orthant quantos add
little when used alone. In both cases the centered-product
loading is negative throughout the delivery year, but its monthly
magnitude is non-monotone rather than a simple mirror of production
seasonality. These negative loadings are consistent with the price-production
covariance term $\Gamma_t^r$ in the fair-strike decomposition.

\endgroup

\newpage
\section{Conclusion}\label{sec:conclusion}

Pay-as-produced PPAs are claims on the interaction between power prices and
renewable output. Under this pricing convention, the fair strike is the
expected production-weighted spot price, and its difference from baseload
separates exactly into a deterministic profile correction and a stochastic
covariance correction. German data show these terms differ by technology. The
wind discount is driven mainly by adverse price--volume covariance, whereas
the larger solar discount arises primarily from the intraday generation
profile, with a smaller residual covariance contribution.

This distinction determines hedge design. Fixed-volume power futures hedge
much of the linear price and profile exposure, but cannot fully replicate the
nonlinear interaction between output and price. The GKW decomposition leads
to a two-stage strategy: dynamic positions in active futures first hedge the
traded component, after which static production-linked claims target the
remaining orthogonal residual. Capture-spread options respond to the gap
between baseload and realized capture prices, while orthant quantos isolate
adverse joint price--production states.

On held-out paths drawn independently from the fitted MCARMA model, dynamic
futures reduce the standard deviation of wind and solar PPA payoffs by
\(36.4\%\) and \(64.8\%\), respectively. Adding synthetic, model-priced static
claims raises the total reductions to \(87.3\%\) and \(96.3\%\), with similar
CVaR gains. These are within-model results, not an independent market
validation. They measure only model-implied hedgeability, not actual hedge
performance in traded markets. They provide an idealized frictionless
benchmark and likely overstate feasible risk reduction once executable
premiums, bid--ask spreads, margin requirements and transaction costs are
included. Sparse selection retains most of the reduction, and LASSO refitting
lowers gross absolute portfolio weight, but neither step includes these
trading frictions. Within the model, futures mainly hedge linear price risk,
while nonlinear auxiliary claims reduce residual covariance risk from joint
price and production outcomes.

Several extensions follow. Market applications should estimate risk premia,
specify the valuation measure, calibrate spot expectations to an hourly price
forward curve, and model load jointly. The framework could cover multi-site
portfolios of wind, solar and hybrid PPAs across maturities. A spatio-temporal
portfolio hedge would model weather, output, zonal prices and basis risk
jointly, while exploiting diversification across technologies and regions. Extending the framework to co-located renewable--BESS PPAs would also be an interesting next avenue, allowing battery dispatch and hedging decisions to be jointly optimized.

\section*{Acknowledgements}
The authors gratefully acknowledge funding from the University of Amsterdam’s interfaculty Research Priority Area \emph{Energy Transition through the Lens of Sustainable Development Goals} (ENLENS).
\FloatBarrier
\bibliographystyle{elsarticle-num-names}
\bibliography{references/refs,references/refs_extra}

@article{Hirth2013,
  author = {Hirth, Lion},
  title = {The market value of variable renewables},
  journal = {Energy Economics},
  volume = {38}, pages = {218--236}, year = {2013}}

@article{COULON2013976,
  author = {Coulon, Michael and Powell, Warren B. and Sircar, Ronnie},
  title = {A model for hedging load and price risk in the {Texas} electricity market},
  journal = {Energy Economics}, volume = {40}, pages = {976--988}, year = {2013}}

@article{RONCORONI2017415,
  author = {Roncoroni, Andrea and Id Brik, Rachid},
  title = {Hedging size risk: theory and application to the {US} gas market},
  journal = {Energy Economics}, volume = {64}, pages = {415--437}, year = {2017}}

@article{Pircalabu2016,
  author = {Pircalabu, Anca and Hvolby, Thomas and Jung, Jesper and H{\o}g, Esben},
  title = {Joint price and volumetric risk in wind power trading: a copula approach},
  journal = {Energy Economics}, volume = {62}, pages = {139--154}, year = {2017}}

@article{Pircalabu2017,
  author = {Pircalabu, Anca and Jung, Jesper},
  title = {A mixed {C}-vine copula model for hedging price and volumetric risk in wind power trading},
  journal = {Quantitative Finance}, volume = {17}, pages = {1583--1600}, year = {2017}}

@article{RowinskaVeraartGruet2021,
  author = {Rowi{\'n}ska, Paulina A. and Veraart, Almut E. D. and Gruet, Pierre},
  title = {A multi-factor approach to modelling the impact of wind energy on electricity spot prices},
  journal = {Energy Economics}, volume = {104}, pages = {105640}, year = {2021}}

@incollection{DeschatreVeraart2018,
  author = {Deschatre, Thomas and Veraart, Almut E. D.},
  title = {A joint model for electricity spot prices and wind penetration with dependence in the extremes},
  booktitle = {Renewable Energy: Forecasting and Risk Management},
  series = {Springer Proceedings in Mathematics and Statistics}, volume = {254},
  pages = {185--214}, publisher = {Springer}, year = {2018}}

@article{PENA2024101513,
  author = {Pe{\~n}a, Juan Ignacio and Rodr{\'i}guez, Rosa and Mayoral, Silvia},
  title = {Hedging renewable power purchase agreements},
  journal = {Energy Strategy Reviews}, volume = {55}, pages = {101513}, year = {2024}}

@article{Oum2006,
  author = {Oum, Yumi and Oren, Shmuel and Deng, Shijie},
  title = {Hedging quantity risks with standard power options in a competitive wholesale electricity market},
  journal = {Naval Research Logistics}, volume = {53}, pages = {697--712}, year = {2006}}

@article{Oum2010,
  author = {Oum, Yumi and Oren, Shmuel S.},
  title = {Optimal static hedging of volumetric risk in a competitive wholesale electricity market},
  journal = {Decision Analysis}, volume = {7}, pages = {107--122}, year = {2010}}

@book{Benth2008,
  author = {Benth, Fred Espen and Benth, Jurate Saltyte and Koekebakker, Steen},
  title = {Stochastic Modeling of Electricity and Related Markets},
  publisher = {World Scientific}, year = {2008}}

@article{BenthDetering2015,
  author = {Benth, Fred Espen and Detering, Nils},
  title = {Pricing and hedging {Asian}-style options on energy},
  journal = {Finance and Stochastics}, volume = {19}, pages = {849--889}, year = {2015}}

@article{Benth2019,
  author = {Benth, Fred Espen and Piccirilli, Marco and Vargiolu, Tiziano},
  title = {Mean-reverting additive energy forward curves in a {Heath-Jarrow-Morton} framework},
  journal = {Mathematics and Financial Economics}, volume = {13}, pages = {543--577}, year = {2019}}

@article{Benth2022,
  author = {Benth, Fred Espen and Deelstra, Griselda and K{\"o}zpinar, Sinem},
  title = {Pricing energy quanto options in the framework of {Markov}-modulated additive processes},
  journal = {IMA Journal of Management Mathematics}, volume = {34}, pages = {187--220}, year = {2022}}

@techreport{BenthLangeQuanto,
  author = {Benth, Fred Espen and Lange, Nina and Myklebust, Tor {\AA}ge},
  title = {Pricing and hedging quanto options in energy markets},
  institution = {SSRN id 2133935}, year = {2012}}

@article{SemistaticHedging,
  author = {Di Tella, Paolo and Haubold, Martin and Keller-Ressel, Martin},
  title = {Semi-static variance-optimal hedging in stochastic volatility models with {Fourier} representation},
  journal = {Journal of Applied Probability}, volume = {56}, pages = {787--809}, year = {2019}}

@article{semistaticsparse,
  author = {Di Tella, Paolo and Haubold, Martin and Keller-Ressel, Martin},
  title = {Semistatic and sparse variance-optimal hedging},
  journal = {Mathematical Finance}, volume = {30}, pages = {403--425}, year = {2020}}

@article{Schweizer1992,
  author = {Schweizer, Martin},
  title = {Mean-variance hedging for general claims},
  journal = {The Annals of Applied Probability}, volume = {2}, pages = {171--179}, year = {1992}}

@article{Hubalek.2006,
  author = {Hubalek, Friedrich and Kallsen, Jan and Krawczyk, Leszek},
  title = {Variance-optimal hedging for processes with stationary independent increments},
  journal = {The Annals of Applied Probability}, year = {2006}}

@article{Goutte2014,
  author = {Goutte, St{\'e}phane and Oudjane, Nadia and Russo, Francesco},
  title = {Variance optimal hedging for continuous time additive processes and applications},
  journal = {Stochastics}, volume = {86}, pages = {147--185}, year = {2014}}

@article{MarquardtStelzer2007,
  author = {Marquardt, Tina and Stelzer, Robert},
  title = {Multivariate {CARMA} processes},
  journal = {Stochastic Processes and their Applications}, volume = {117}, pages = {96--120}, year = {2007}}

@article{SchlemmStelzerQMLE2012,
  author = {Schlemm, Eckhard and Stelzer, Robert},
  title = {Quasi maximum likelihood estimation for strongly mixing state space models and multivariate {L}{\'e}vy-driven {CARMA} processes},
  journal = {Electronic Journal of Statistics}, volume = {6}, pages = {2185--2234}, year = {2012}}

@article{BrockwellSchlemm2013,
  author = {Brockwell, Peter J. and Schlemm, Eckhard},
  title = {Parametric estimation of the driving {L}{\'e}vy process of multivariate {CARMA} processes from discrete observations},
  journal = {Journal of Multivariate Analysis}, volume = {115}, pages = {217--251}, year = {2013}}

@article{BenthKarbach2023MCARMACones,
  author = {Benth, Fred Espen and Karbach, Sven},
  title = {Multivariate continuous-time autoregressive moving-average processes on cones},
  journal = {Stochastic Processes and their Applications}, volume = {162}, pages = {299--337}, year = {2023}}

@article{BessembinderLemmon2002,
  author = {Bessembinder, Hendrik and Lemmon, Michael L.},
  title = {Equilibrium pricing and optimal hedging in electricity forward markets},
  journal = {The Journal of Finance}, volume = {57}, pages = {1347--1382}, year = {2002}}

@article{LongstaffWang2004,
  author = {Longstaff, Francis A. and Wang, Ashley W.},
  title = {Electricity forward prices: a high-frequency empirical analysis},
  journal = {The Journal of Finance}, volume = {59}, pages = {1877--1900}, year = {2004}}

@article{Ketterer2014,
  author = {Ketterer, Janina C.},
  title = {The impact of wind power generation on the electricity price in {Germany}},
  journal = {Energy Economics}, volume = {44}, pages = {270--280}, year = {2014}}

@article{Clo2015,
  author = {Cl{\`o}, Stefano and Cataldi, Alessandra and Zoppoli, Pietro},
  title = {The merit-order effect in the {Italian} power market: the impact of solar and wind generation on national wholesale electricity prices},
  journal = {Energy Policy}, volume = {77}, pages = {79--88}, year = {2015}}

@article{Rintamaki2017,
  author = {Rintam{\"a}ki, Tuomas and Siddiqui, Afzal S. and Salo, Ahti},
  title = {Does renewable energy generation decrease the volatility of electricity prices? An analysis of {Denmark} and {Germany}},
  journal = {Energy Economics}, volume = {62}, pages = {270--282}, year = {2017}}

@article{kloster2025ambit,
  title={An Ambit Field Framework for the Full Panel of Day-ahead Electricity Prices},
  author={Kloster, Thomas K},
  journal={arXiv preprint arXiv:2509.17236},
  year={2025}
}

@article{GABRIELLI2022105980,
title = {Mitigating financial risk of corporate power purchase agreements via portfolio optimization},
journal = {Energy Economics},
volume = {109},
pages = {105980},
year = {2022},
issn = {0140-9883},
doi = {https://doi.org/10.1016/j.eneco.2022.105980},
url = {https://www.sciencedirect.com/science/article/pii/S0140988322001542},
author = {Paolo Gabrielli and Reyhaneh Aboutalebi and Giovanni Sansavini}
}

@misc{ENTSOE2025EnergyPrices,
  author       = {{European Network of Transmission System Operators
                   for Electricity (ENTSO-E)}},
  title        = {Energy Prices [12.1.D]},
  year         = {2025},
  howpublished = {\url{https://transparencyplatform.zendesk.com/hc/en-us/articles/16647234190100-Energy-Prices-12-1-D}},
}

@misc{ENTSOE2026ActualLoad,
  author       = {{European Network of Transmission System Operators
                   for Electricity (ENTSO-E)}},
  title        = {Actual Total Load and Day-Ahead Load per Bidding Zone
                   [6.1.A and 6.1.B]},
  year         = {2026},
  howpublished = {\href{https://transparencyplatform.zendesk.com/hc/en-us/articles/16647979768084-Actual-Total-Load-Day-ahead-Per-Bidding-Zone-6-1-A-6-1-B}
                   {\texttt{ENTSO-E load documentation}}},
}

@misc{ENWEX2026Data,
  author       = {{enwex GmbH}},
  title        = {Enwex Germany Historical Wind and Solar Index Data},
  year         = {2026},
  howpublished = {\url{https://enwex.com/getdataapitest/}},
}

@article{Schweizer1994,
  author = {Schweizer, Martin},
  title = {Approximating random variables by stochastic integrals},
  journal = {The Annals of Probability}, volume = {22}, number = {3}, pages = {1536--1575}, year = {1994}}

@incollection{Galchouk1976,
  author = {Galchouk, Leonid I.},
  title = {The structure of certain martingales},
  booktitle = {Proceedings of the School and Seminar on the Theory of Random Processes (Druskininkai, 1974), Part I},
  pages = {7--32},
  publisher = {Institute of Mathematics and Cybernetics, Academy of Sciences of the Lithuanian SSR},
  address = {Vilnius}, year = {1975}}

@article{KunitaWatanabe1967,
  author = {Kunita, Hiroshi and Watanabe, Shinzo},
  title = {On square integrable martingales},
  journal = {Nagoya Mathematical Journal}, volume = {30}, pages = {209--245}, year = {1967}}

@incollection{FollmerSondermann1985,
  author = {F{\"o}llmer, Hans and Sondermann, Dieter},
  title = {Hedging of non-redundant contingent claims},
  booktitle = {Contributions to Mathematical Economics},
  pages = {205--223}, publisher = {North-Holland}, address = {Amsterdam}, year = {1986}}

@article{Pham2000,
  author = {Pham, Huy{\^e}n},
  title = {On quadratic hedging in continuous time},
  journal = {Mathematical Methods of Operations Research}, volume = {51}, pages = {315--339}, year = {2000}}

@article{CernyKallsen2007,
  author = {{\v C}ern{\'y}, Ale{\v s} and Kallsen, Jan},
  title = {On the structure of general mean-variance hedging strategies},
  journal = {The Annals of Probability}, volume = {35}, number = {4}, pages = {1479--1531}, year = {2007}}

@article{KallsenPauwels2010,
  author = {Kallsen, Jan and Pauwels, Arnd},
  title = {Variance-optimal hedging in general affine stochastic volatility models},
  journal = {Advances in Applied Probability}, volume = {42}, number = {1}, pages = {83--105}, year = {2010}}

@incollection{CarrMadan2001,
  author = {Carr, Peter and Madan, Dilip B.},
  title = {Towards a theory of volatility trading},
  booktitle = {Option Pricing, Interest Rates and Risk Management},
  publisher = {Cambridge University Press}, pages = {458--476}, year = {2001}}

@article{CarrCorso2001,
  author = {Carr, Peter and Corso, Anthony},
  title = {Commodity covariance contracting},
  journal = {Energy and Power Risk Management}, volume = {4}, year = {2001}}

@article{Madan2021Pricing,
  author = {Madan, Dilip B. and Wang, King},
  title = {Pricing product options and using them to complete markets for functions of two underlying asset prices},
  journal = {Journal of Risk and Financial Management}, volume = {14}, number = {8}, pages = {1--19}, year = {2021}}

@article{chatziandreou2026semistaticvarianceoptimalhedgingcovariance,
      title={Semi-Static Variance-Optimal Hedging of Covariance Risk in Multi-Asset Derivatives}, 
      author={Konstantinos Chatziandreou and Sven Karbach},
      year={2026},
      eprint={2603.25320},
      archivePrefix={arXiv},
      primaryClass={q-fin.MF},
      url={https://arxiv.org/abs/2603.25320}, 
}

@article{Eberlein2010,
  author = {Eberlein, Ernst and Glau, Kathrin and Papapantoleon, Antonis},
  title = {Analysis of {Fourier} transform valuation formulas and applications},
  journal = {Applied Mathematical Finance}, volume = {17}, number = {3}, pages = {211--240}, year = {2010}}

@article{SchlemmStelzer2012,
  author = {Schlemm, Eckhard and Stelzer, Robert},
  title = {Multivariate {CARMA} processes, continuous-time state space models and complete regularity of the innovations of the sampled processes},
  journal = {Bernoulli}, volume = {18}, number = {1}, pages = {46--63}, year = {2012}}

@article{Brockwell2001CARMA,
  author = {Brockwell, Peter J.},
  title = {L{\'e}vy-driven {CARMA} processes},
  journal = {Annals of the Institute of Statistical Mathematics}, volume = {53}, pages = {113--124}, year = {2001}}

@phdthesis{Rowinska2020,
  author = {Rowi{\'n}ska, Paulina A.},
  title = {Stochastic modelling and statistical inference for electricity prices, wind energy production and wind speed},
  school = {Imperial College London}, year = {2020}}

@book{Bremaud1981,
  author = {Br{\'e}maud, Pierre},
  title = {Point Processes and Queues: Martingale Dynamics},
  publisher = {Springer}, address = {New York}, year = {1981}}

@article{BarndorffNielsen1997,
  author = {Barndorff-Nielsen, Ole E.},
  title = {Normal inverse {Gaussian} distributions and stochastic volatility modelling},
  journal = {Scandinavian Journal of Statistics}, volume = {24}, number = {1}, pages = {1--13}, year = {1997}}

@article{Paraschiv2015,
  author = {Paraschiv, Florentina and Fleten, Stein-Erik and Schürle, Michael},
  title = {A spot-forward model for electricity prices with regime shifts},
  journal = {Energy Economics}, volume = {47}, pages = {142--153}, year = {2015}}

@article{MATSUMOTO2021105101,
title = {Simultaneous hedging strategy for price and volume risks in electricity businesses using energy and weather derivatives1},
journal = {Energy Economics},
volume = {95},
pages = {105101},
year = {2021},
issn = {0140-9883},
doi = {https://doi.org/10.1016/j.eneco.2021.105101},
url = {https://www.sciencedirect.com/science/article/pii/S0140988321000062},
author = {Takuji Matsumoto and Yuji Yamada}
}

@Article{en19133044,
AUTHOR = {Matsumoto, Takuji and Yamada, Yuji},
TITLE = {Hedging Shape Risk in Renewable Energy Markets: Empirical Evidence on Hedging Effectiveness Using the Quality Factor (QF) Index},
JOURNAL = {Energies},
VOLUME = {19},
YEAR = {2026},
NUMBER = {13},
ARTICLE-NUMBER = {3044},
URL = {https://www.mdpi.com/1996-1073/19/13/3044},
ISSN = {1996-1073},
DOI = {10.3390/en19133044}
}

@article{LUCY2021105603,
title = {Analysis of fixed volume swaps for hedging financial risk at large-scale wind projects},
journal = {Energy Economics},
volume = {103},
pages = {105603},
year = {2021},
issn = {0140-9883},
doi = {https://doi.org/10.1016/j.eneco.2021.105603},
url = {https://www.sciencedirect.com/science/article/pii/S0140988321004710},
author = {Zachary Lucy and Jordan Kern}
}

\appendix

\clearpage
\onecolumn

\begingroup

\setlength{\parindent}{0pt}
\setlength{\parskip}{1.2pt}
\setlength{\tabcolsep}{3pt}
\renewcommand{\arraystretch}{0.92}

\captionsetup{
  font=scriptsize,
  labelfont=bf,
  skip=2pt,
  justification=justified,
  singlelinecheck=false
}

\section{Additional calibration and out-of-sample diagnostics}
\label{supp:additional-calibration-oos}

\vspace{-0.65em}

\begin{minipage}[t]{0.492\textwidth}
\scriptsize

\textbf{Wind tail dependence.}
The empirical and simulated conditional tail-dependence curves increase
monotonically in the quantile level $q$ in all four orthants. The simulated curves
reproduce the empirical ordering of the four quadrants. The low-wind/high-price and
high-wind/low-price quadrants are especially relevant because they correspond to large joint
price-volume deviations, whereas the high-wind/high-price and low-wind/low-price curves
describe the two concordant tails. The realized high-wind/low-price
curve is somewhat above the simulated median at high $q$. At high
quantiles, the simulated median understates the realized high-wind/low-price
dependence, although the realized curve remains within the simulated
uncertainty band.

\textbf{Wind out-of-sample dependence.}
The out-of-sample diagnostics cover January--December 2025, which is not part
of the calibration window. Out-of-sample simulations using the recovered
MNIG driver reproduce the main autocorrelation pattern. For both the latent wind
state and the full-level wind pressure, the realized 2025 autocorrelation
decays more slowly than the simulated median but remains largely within the simulated band. This
indicates greater persistence in 2025; separately, the year also had below-average wind levels. The realized and
simulated spot autocorrelations both show damped daily oscillations, although some realized
calendar peaks are sharper than the fitted median.

\end{minipage}
\hfill
\begin{minipage}[t]{0.492\textwidth}
\scriptsize

\textbf{Wind out-of-sample levels.}
The simulated spot-price bands contain the realized monthly and weekly means
for most of the delivery window, including the decline from the high winter
level to the lower spring and early-summer levels. The realized wind pressure
$\Pi_t^{\mathrm W}$ lies below the simulated median for much of 2025 and
frequently remains close to the lower edge of the band. The 2025 delivery
period therefore represents an observed low-wind stress year rather than
a large shift in mean spot prices. This matters because wind production
affects both delivered volume and production-weighted revenue.

\textbf{Wind capture-price fit.}
The simulated monthly wind capture-price median is close to the realized level, with median
values in the same range as the realized capture prices and pointwise bands
that contain the January--December variation. The realized wind value factor is
comparatively stable, mostly between $0.70$ and $0.95$, while the simulated
median remains near $0.88$--$0.90$. The fitted model reproduces the
approximate average wind capture discount, and its simulated bands cover much of the
monthly variation in the capture
wedge $CP_P^{\mathrm W}-BL_P$.

\end{minipage}

\vspace{0.35em}

\begin{minipage}{0.96\textwidth}
\raggedright
\scriptsize\textbf{(a) Out-of-sample autocorrelation fit}\\[-0.15em]
\centering
\includegraphics[
  width=\linewidth,
  height=0.205\textheight,
  keepaspectratio
]{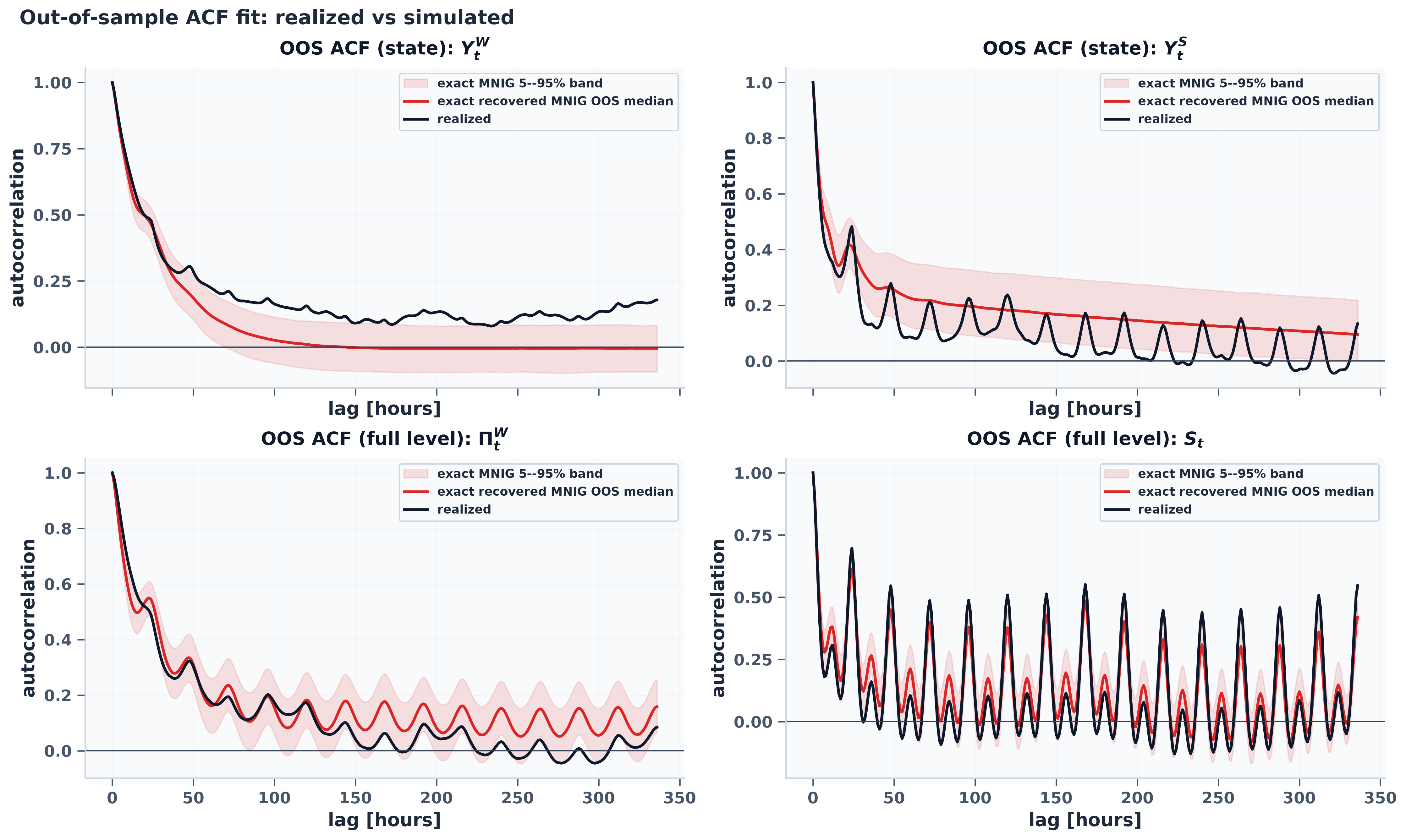}
\end{minipage}

\vspace{0.25em}

\begin{minipage}[t]{0.322\textwidth}
\raggedright
\scriptsize\textbf{(b) Monthly mean levels}\\[-0.15em]
\centering
\includegraphics[
  width=\linewidth,
  height=0.125\textheight,
  keepaspectratio
]{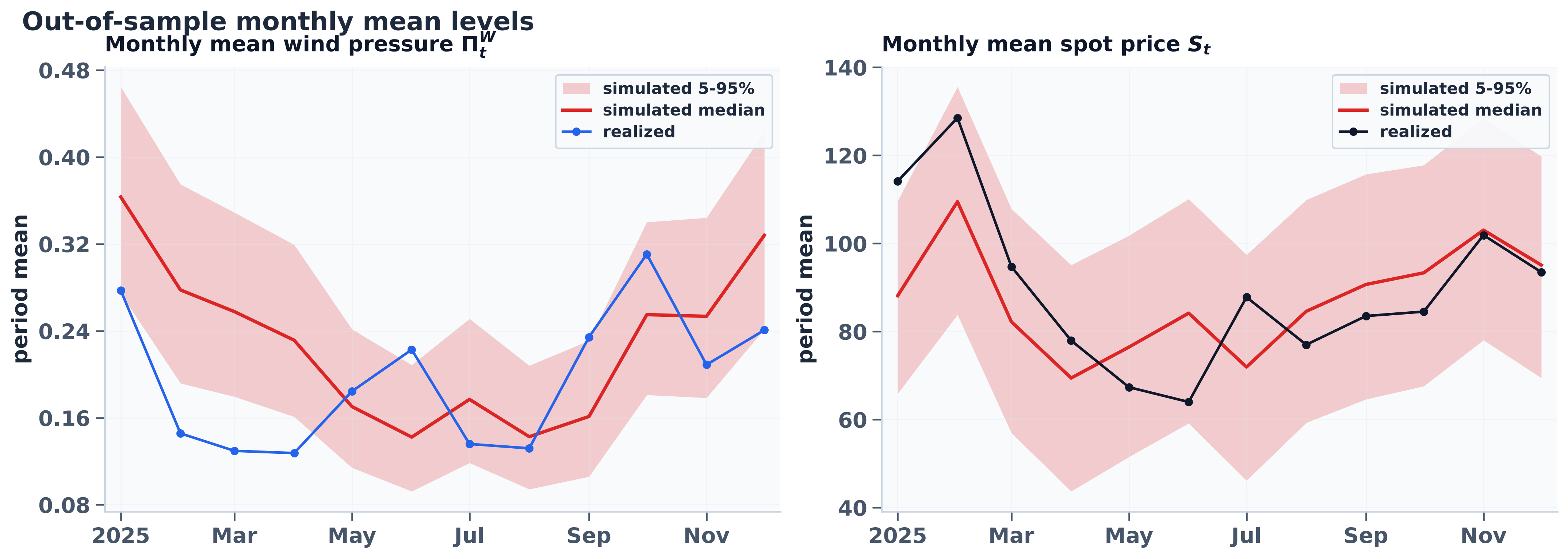}
\end{minipage}
\hfill
\begin{minipage}[t]{0.322\textwidth}
\raggedright
\scriptsize\textbf{(c) Weekly mean levels}\\[-0.15em]
\centering
\includegraphics[
  width=\linewidth,
  height=0.125\textheight,
  keepaspectratio
]{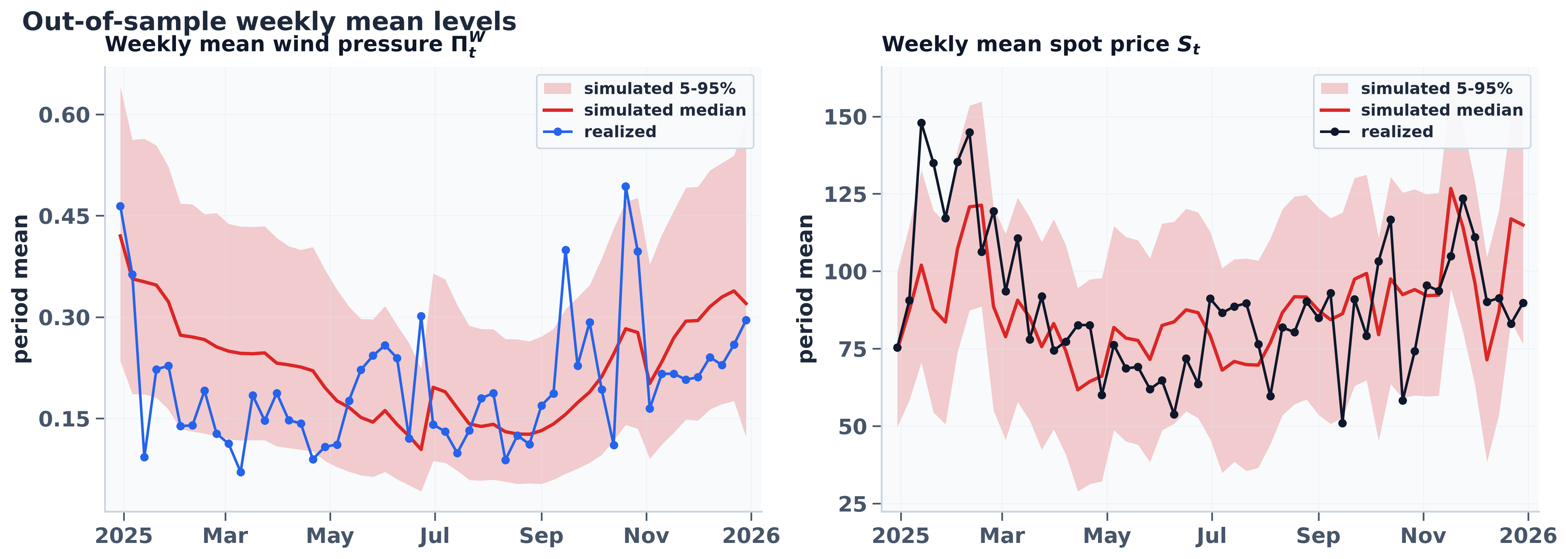}
\end{minipage}
\hfill
\begin{minipage}[t]{0.322\textwidth}
\raggedright
\scriptsize\textbf{(d) Capture price and value factor}\\[-0.15em]
\centering
\includegraphics[
  width=\linewidth,
  height=0.125\textheight,
  keepaspectratio
]{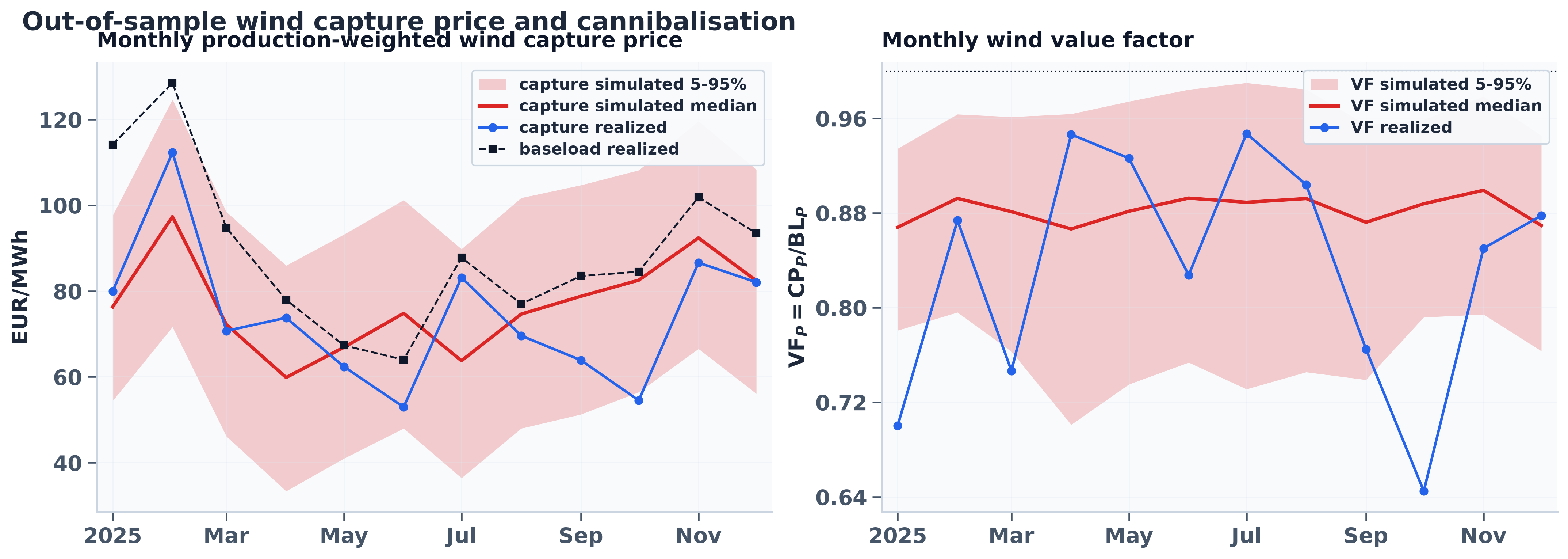}
\end{minipage}

\vspace{0.25em}

\begin{minipage}{0.96\textwidth}
\raggedright
\scriptsize\textbf{(e) Full-level quantile dependence}\\[-0.15em]
\centering
\includegraphics[
  width=\linewidth,
  height=0.145\textheight,
  keepaspectratio
]{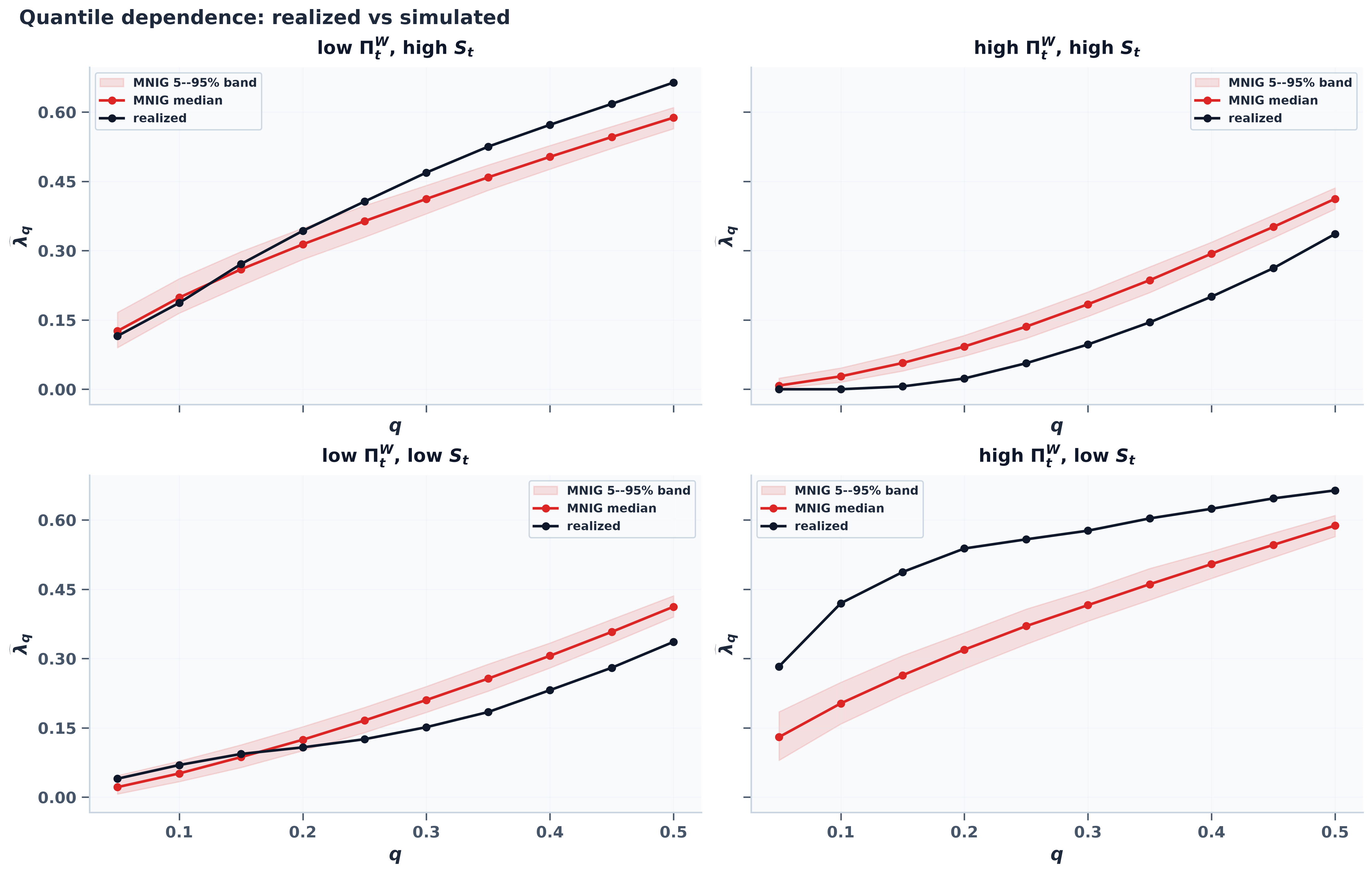}
\end{minipage}

\captionof{figure}{
Wind calibration: additional dependence and out-of-sample diagnostics.
Panel~(a) reports the out-of-sample autocorrelation comparison using simulations
from the recovered MNIG driver; the upper row is the deseasonalized state
$(Y_t^{\mathrm W},Y_t^S)$ and the lower row is the full-level pair
$(\Pi_t^{\mathrm W},S_t)$. Panels~(b) and~(c) compare realized and
simulated monthly and weekly mean levels over January--December 2025.
Panel~(d) reports the monthly production-weighted capture price and value
factor, with the parity level indicated in the value-factor panel.
Panel~(e) reports the full-level quantile dependence of
$(\Pi_t^{\mathrm W},S_t)$ across the four wind--price orthants, together
with the simulated MNIG $5$--$95\%$ bands.
}
\label{fig:cal-wind-oos-composite}
\label{fig:cal-wind-oos-acf}
\label{fig:cal-wind-oos-monthly}
\label{fig:cal-wind-oos-weekly}
\label{fig:cal-wind-oos-capture}
\label{fig:cal-wind-mnig-quantile}

\newpage

\subsection*{Solar calibration and out-of-sample diagnostics}

\vspace{-0.55em}

\begin{minipage}[t]{0.492\textwidth}
\scriptsize

\textbf{One-step physical-scale fit.}
After applying the decreasing inverse production map, the solar capacity
factor has an RMSE of $0.0168$, a correlation of $0.9931$, and a bias of
$-0.0003$. The spot-price equation has an RMSE of approximately
$11.80$~EUR/MWh and a correlation of $0.9730$. The similar realized and
fitted standard deviations show that the nonlinear reconstruction
preserves the marginal scale reasonably well on the physical
scale used for PPA valuation.

\textbf{Solar tail dependence.}
The fitted common-subordinator MNIG driver reproduces several features
of the empirical tail-dependence curves. In the low-solar/high-price and
low-solar/low-price orthants, the simulated median is close
to the realized curve beyond $q\approx0.25$, indicating a good
fit for these two lower-solar tail-dependence measures. In
the high-solar/high-price and high-solar/low-price orthants, the model
reproduces the increasing tail dependence and the realized curve lies close to the simulated band. These curves indicate that adverse solar PPA outcomes can
involve joint production and price movements rather than only extreme marginal prices.

\end{minipage}
\hfill
\begin{minipage}[t]{0.492\textwidth}
\scriptsize

\textbf{Solar out-of-sample dependence and levels.}
On the latent scale, the realized solar-shortfall state has sharper daily
autocorrelation peaks than the simulated median in 2025, indicating
stronger residual daily persistence. Once the deterministic clear-sky envelope is
restored, however, the fitted model closely matches the main full-level
solar autocorrelation pattern. The spot panels show a similar overall decay
and daily periodicity to the wind fit, with stronger realized peaks at
selected daily lags. Monthly and weekly realized means of
$C_t^{\mathrm S}$ closely follow the simulated median from winter through the
summer maximum and remain inside the simulated bands. The realized spot-price mean also
lies within the simulated band for most periods.

\textbf{Solar capture-price fit.}
The simulated median value factor declines from values close to one in winter
to approximately two thirds of baseload during summer. The realized 2025
value factor falls more sharply, reaching approximately one third in late
spring and early summer, partially rebounds in July, remains depressed through
September, and returns close to or above parity in December. The model reproduces the seasonal direction of
the solar capture discount but understates the severity of the realized spring--summer decline. The
realized summer discount lies in the adverse tail of the distribution
fitted to 2023--2024. The hedging experiment tests performance
under this unusually severe out-of-sample outcome.

\end{minipage}

\vspace{0.25em}

\begin{adjustbox}{
  max width=0.82\textwidth,
  max totalheight=0.105\textheight,
  center
}
\begin{minipage}{\textwidth}
\begingroup

\renewenvironment{table}[1][]{%
  \captionsetup{type=table}%
  \centering
  \scriptsize
}{}

\renewenvironment{table*}[1][]{%
  \captionsetup{type=table}%
  \centering
  \scriptsize
}{}

\setlength{\columnwidth}{\linewidth}

\begin{table}[!htbp]
\centering
\caption{Solar in-sample one-step fit on the full-level variables $(C_t^{\mathrm S},S_t)$.}
\label{tab:cal_solar-insample-one-step-fit}
\small
\begin{tabular}{lrrrrrrr}
\toprule
Component & $N$ & RMSE & MAE & Bias & Realized std. & Simulated std. & Correlation \\
\midrule
$C_t^{\mathrm S}$ & 17541.000 & 0.016751 & 0.008503 & -0.000289 & 0.143264 & 0.142669 & 0.993146 \\
$S_t$ & 17541.000 & 11.8046 & 7.25276 & -0.005173 & 50.9051 & 50.6319 & 0.972980 \\
\bottomrule
\end{tabular}
\end{table}

\endgroup
\end{minipage}
\end{adjustbox}

\vspace{0.25em}

\begin{minipage}{0.96\textwidth}
\raggedright
\scriptsize\textbf{(a) Out-of-sample autocorrelation fit}\\[-0.15em]
\centering
\includegraphics[
  width=\linewidth,
  height=0.185\textheight,
  keepaspectratio
]{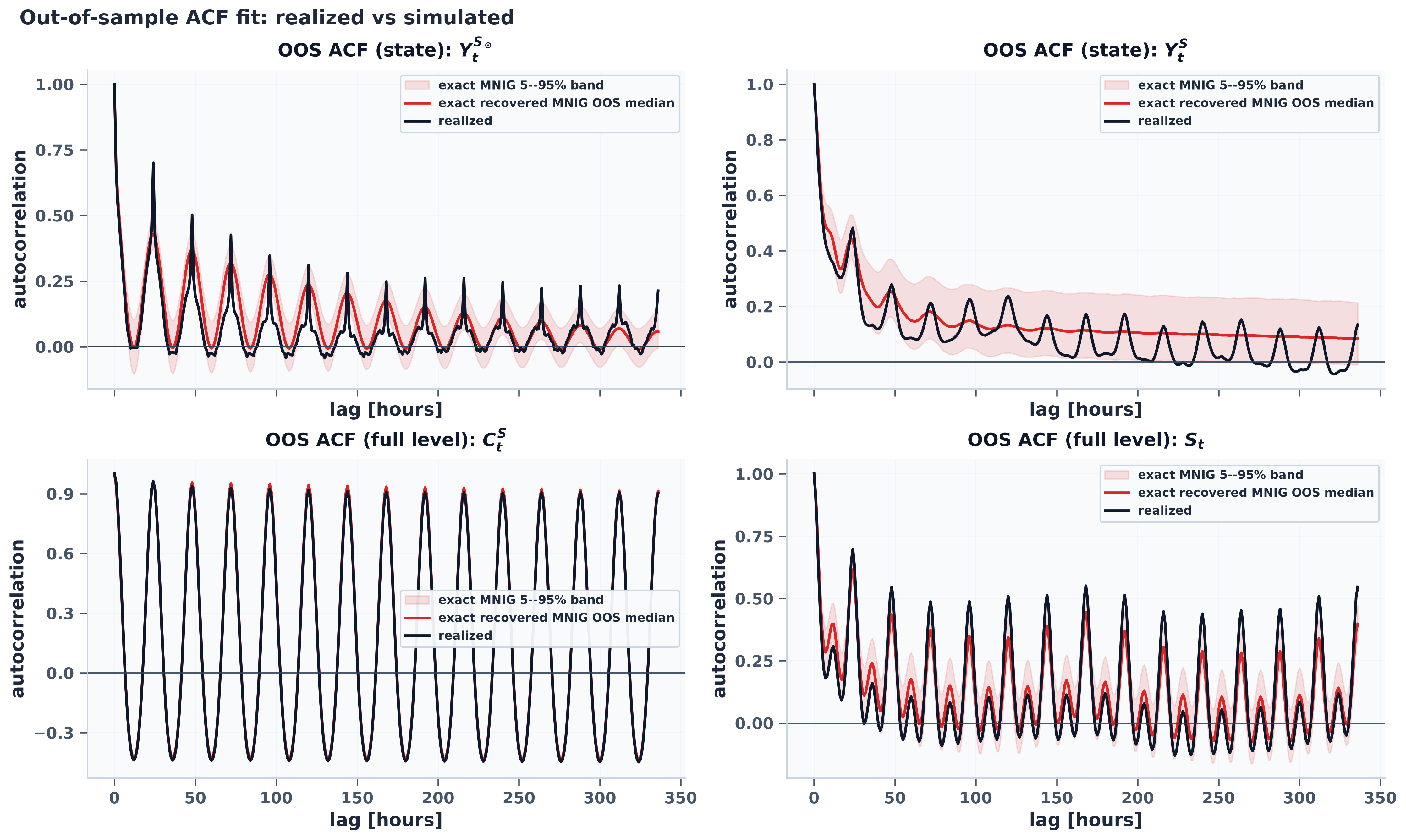}
\end{minipage}

\vspace{0.22em}

\begin{minipage}[t]{0.322\textwidth}
\raggedright
\scriptsize\textbf{(b) Monthly mean levels}\\[-0.15em]
\centering
\includegraphics[
  width=\linewidth,
  height=0.105\textheight,
  keepaspectratio
]{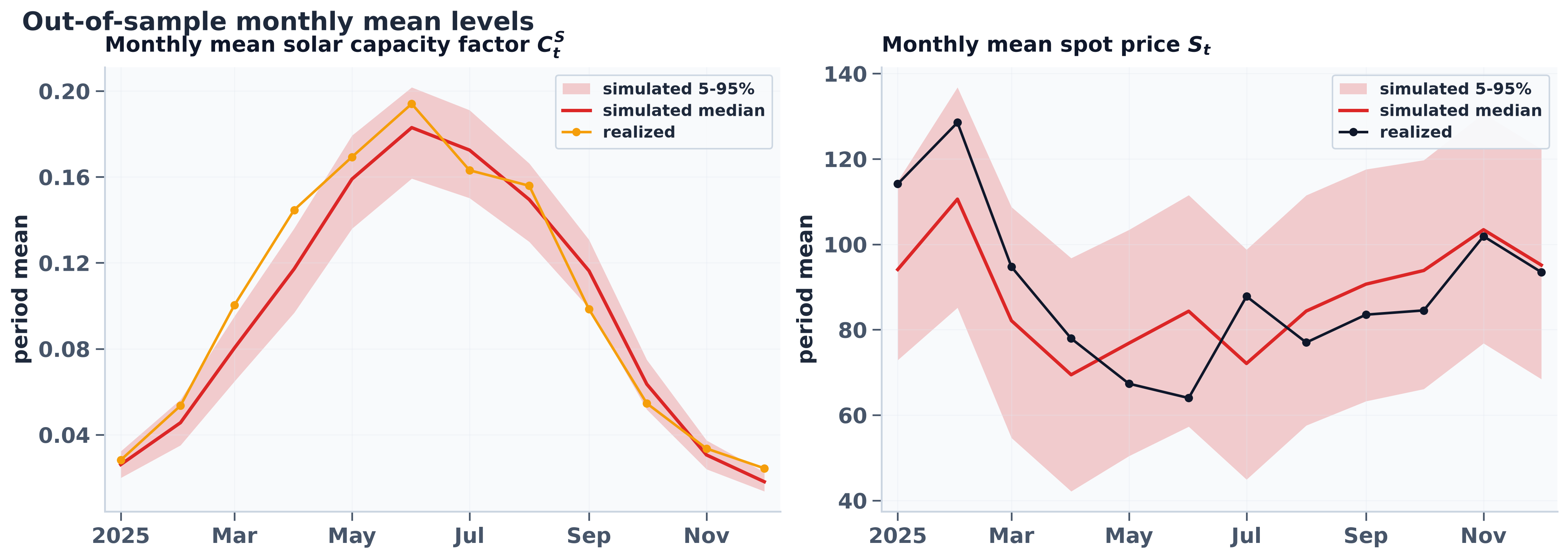}
\end{minipage}
\hfill
\begin{minipage}[t]{0.322\textwidth}
\raggedright
\scriptsize\textbf{(c) Weekly mean levels}\\[-0.15em]
\centering
\includegraphics[
  width=\linewidth,
  height=0.105\textheight,
  keepaspectratio
]{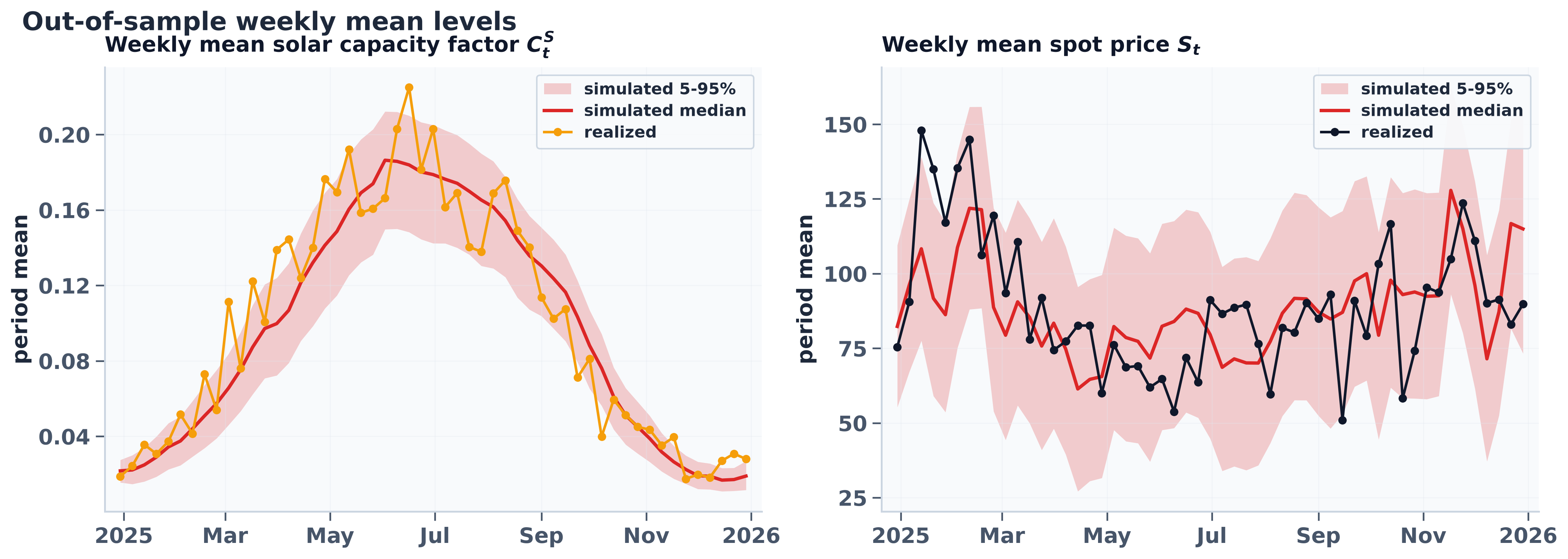}
\end{minipage}
\hfill
\begin{minipage}[t]{0.322\textwidth}
\raggedright
\scriptsize\textbf{(d) Capture price and value factor}\\[-0.15em]
\centering
\includegraphics[
  width=\linewidth,
  height=0.105\textheight,
  keepaspectratio
]{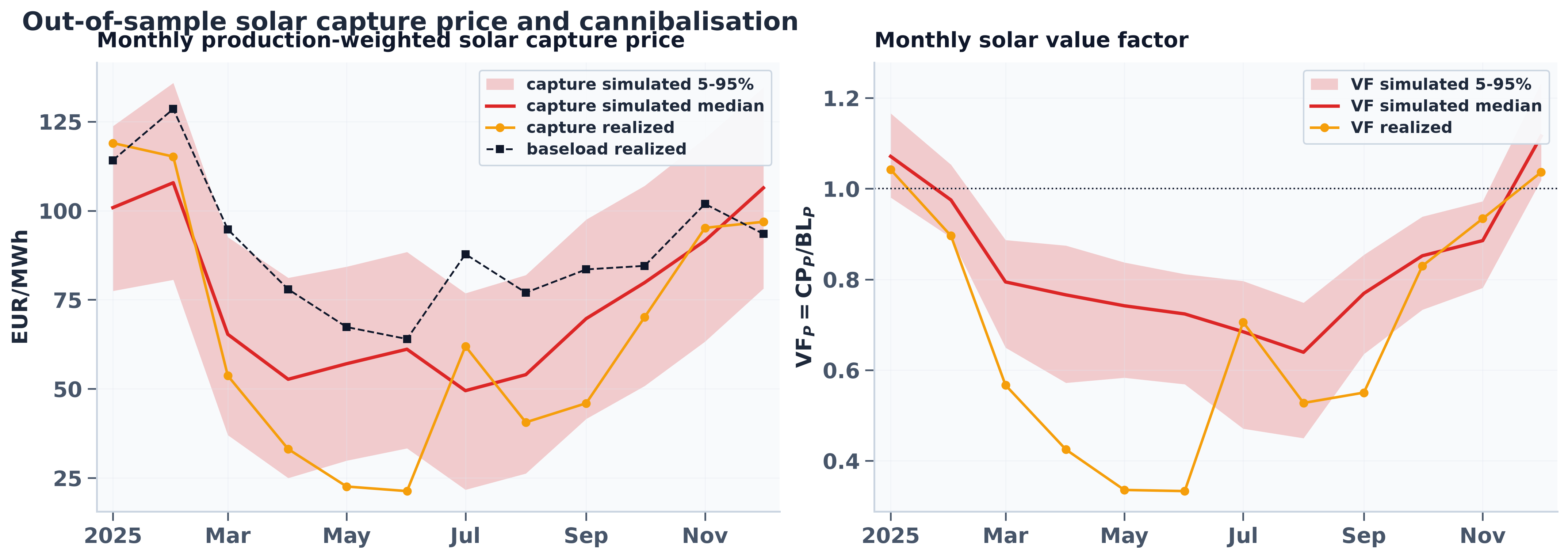}
\end{minipage}

\vspace{0.22em}

\begin{minipage}{0.96\textwidth}
\raggedright
\scriptsize\textbf{(e) Full-level quantile dependence}\\[-0.15em]
\centering
\includegraphics[
  width=\linewidth,
  height=0.125\textheight,
  keepaspectratio
]{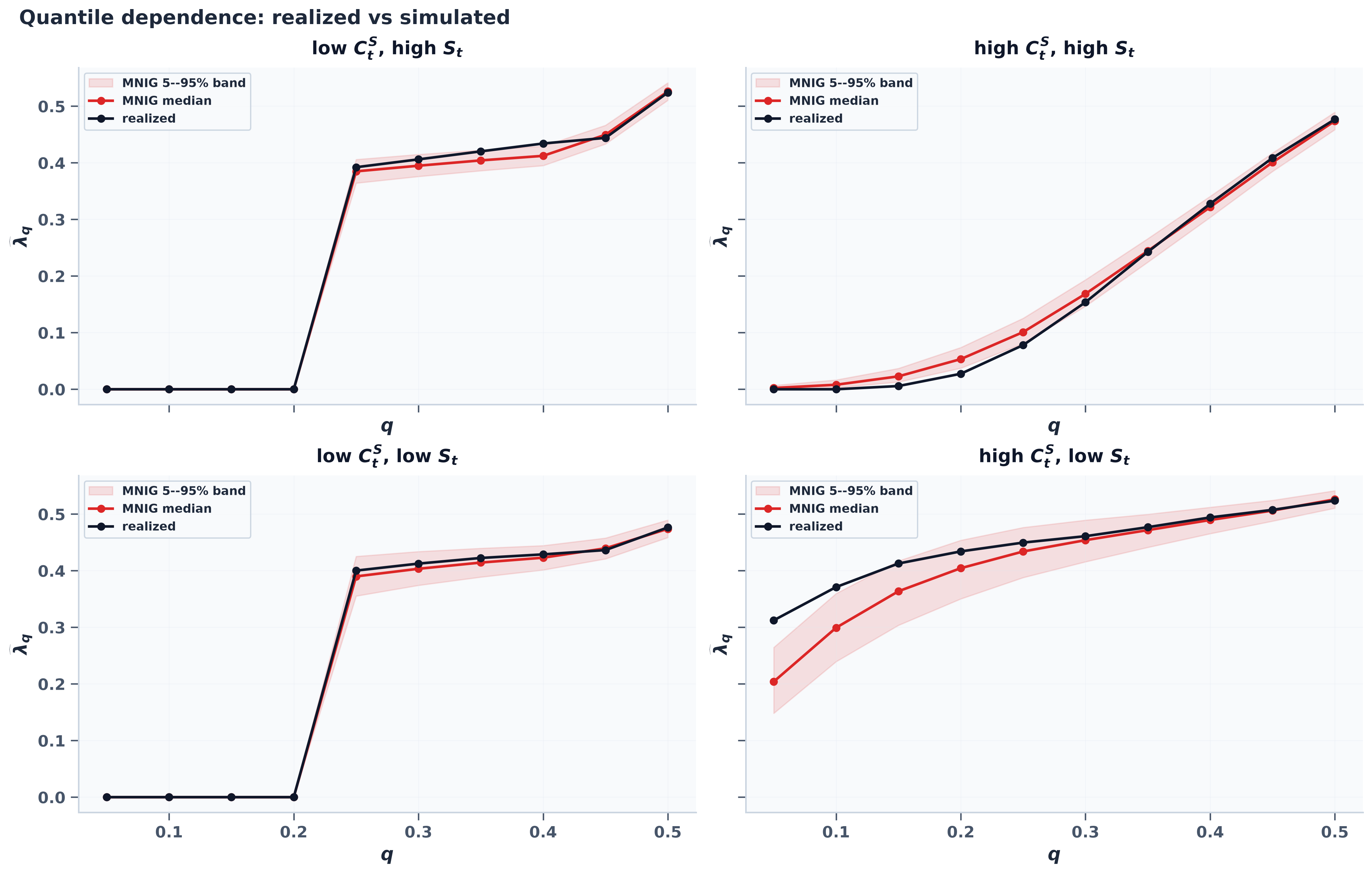}
\end{minipage}

\captionof{figure}{
Solar calibration: additional dependence and out-of-sample diagnostics.
Panel~(a) reports the out-of-sample autocorrelation fit under the exact
recovered MNIG driver; the upper row is the deseasonalized state
$(Y_t^{\mathrm S_\odot},Y_t^S)$ and the lower row is the full-level pair
$(C_t^{\mathrm S},S_t)$. Panels~(b) and~(c) compare realized and simulated
monthly and weekly mean levels over January--December 2025. Panel~(d) reports
the monthly production-weighted capture price and value factor; the realized
spring-summer value factor lies in the lower tail of the fitted
distribution and reflects unusually severe cannibalisation.
Panel~(e) reports the full-level quantile dependence of
$(C_t^{\mathrm S},S_t)$ across the four solar--price orthants.
}
\label{fig:cal-solar-oos-composite}
\label{fig:cal-solar-oos-acf}
\label{fig:cal-solar-oos-monthly}
\label{fig:cal-solar-oos-weekly}
\label{fig:cal-solar-oos-capture}
\label{fig:cal-solar-mnig-quantile}

\endgroup

\clearpage
\twocolumn

\end{document}